%% file: main.tex
\documentclass[10pt, aps,pra,onecolumn,superscriptaddress,nofootinbib]{revtex4-2}

\input{preamble.tex}
\input{figures.tex}

\usepackage{etoolbox}

\AtBeginDocument{%
  \setlength{\abovedisplayskip}{5pt}%
  \setlength{\belowdisplayskip}{5pt}%
  \setlength{\abovedisplayshortskip}{5pt}%
  \setlength{\belowdisplayshortskip}{5pt}%
}

\newcommand{\manualtocentry}[2]{%
  \noindent
  \begingroup
  \hypersetup{linkcolor=mycolr!50!black}%
  \hyperref[#1]{\ref*{#1}.~#2}%
  \nobreak{\dotfill}\nobreak%
  \pageref{#1}%
  \\[0.1cm]
  \endgroup
}

\newcommand{\manualtocsubentry}[2]{%
  \phantom{~}\hspace{0.75em}%
  \hyperref[#1]{\ref*{#1}.~#2}%
  \nobreak{\dotfill}\nobreak%
  \pageref{#1}%
  \\[0.1cm]
}

\newcommand{\altmanualtocentry}[2]{%
  \noindent
  \hyperref[#1]{#2}%
  \nobreak{\dotfill}\nobreak%
  \pageref{#1}%
  \\[0.1cm]
}
\newcommand{\manualtocentryapp}[2]{%
  \noindent%
  \hyperref[#1]{\ref*{#1}.~#2}%
  \nobreak{\dotfill}\nobreak%
  \pageref{#1}%
  \\[0.1cm]
}

\begin{document}
\title{On the Computational Complexity of Guided Berry Phase Estimation}
\author{Gabriel Waite}\email{gabe.waite@uts.edu.au}
\affiliation{CQSI, School of Computer Science, University of Technology Sydney, NSW 2007, Australia}
\begin{abstract}
    We prove that deciding the Berry phase for parameterised $2$-local qubit Hamiltonians is \clw{BQP}{complete} when presented with a classical description of a guiding state, promised to overlap with the ground state of the system.
    Our results extend to systems with weighted Heisenberg interactions and when restricted to a $2$D square or triangular lattice geometry.
    The techniques we develop leverage the Schrieffer--Wolff transformation, typically used in the construction of perturbative gadget reductions for local Hamiltonian problems, extending it to parameterised families of Hamiltonians.
    We demonstrate that there exists a choice of parameterised simulator Hamiltonians whose Berry phase well-approximates that of a parameterised target family.
    Using the perturbative gadget reduction framework of Oliveira and Terhal and of Schuch and Verstraete, we adapt the arguments to parameterised interactions and demonstrate the error bounds in the resulting simulation can be controlled.
    Additionally, we provide an explicit proof that families of $1$-local Hamiltonians have a Berry phase that can be efficiently computed to inverse-polynomial precision.
    This establishes a complexity transition between $1$-local and $2$-local Hamiltonian families.
\end{abstract}

\maketitle
\vspace{-2em}
\section*{Table of Contents}
{
\twocolumngrid
\begin{spacing}{1}
% \begin{center}
\raggedright
\setlength{\parindent}{0pt}
    \manualtocentry{sec:introduction}{Introduction}
      \manualtocsubentry{sec:prior-work}{Prior Work}
      \manualtocsubentry{sec:summary-of-techniques-and-results}{Summary of Techniques and Results}
      \manualtocsubentry{sec:organisation}{Organisation}
    \manualtocentry{sec:preliminaries}{Preliminaries}
      \manualtocsubentry{sec:liouville-superoperator}{Liouville Superoperator}
      \manualtocsubentry{sec:families-of-parameterised-hamiltonians}{Families of Parameterised Hamiltonians}
      \manualtocsubentry{sec:berry-phase-definition}{Berry Phase Definition}
      \manualtocsubentry{sec:computational-complexity}{Computational Complexity}
      \manualtocsubentry{sec:problem-statement}{Problem Statement}
      \manualtocsubentry{sec:analytic-preliminaries}{Analytic Preliminaries}
    \manualtocentry{sec:uniform-simulation}{Uniform Simulation of Parameterised Hamiltonian Families}
      \manualtocsubentry{sec:definitions-and-basic-properties}{Definitions and Basic Properties}
      \manualtocsubentry{sec:composition-of-simulations}{Composition of Simulations}
      \manualtocsubentry{sec:generalised-composition}{Generalised Composition}
    \manualtocentry{sec:sw-transformation}{Schrieffer-Wolff Transformation for Parameterised Hamiltonian Families}
      \manualtocsubentry{sec:simulator-hamiltonians}{Simulator Hamiltonians}
      \manualtocsubentry{sec:exact-sw-transformation}{Exact Schrieffer--Wolff Transformation}
      \manualtocsubentry{sec:perturbative-coefficients-and-truncation}{Perturbative Coefficients and Truncation}
      \manualtocsubentry{sec:uniform-control-exact-and-truncated-components}{Uniform Control of the Exact and Truncated Components}
      \manualtocsubentry{sec:tensor-product-penalties}{Tensor Product Penalties and Local Perturbations}
    \manualtocentry{sec:berry-phase-simulation}{Berry Phase Simulation}
      \manualtocsubentry{sec:setup-and-assumptions}{Setup and Assumptions}
      \manualtocsubentry{sec:comparing-ground-space-projectors}{Comparing the Ground-Space Projectors}
      \manualtocsubentry{sec:berry-phase-holonomy}{The Berry Phase as a Holonomy}
      \manualtocsubentry{sec:berry-phase-simulation-theorem}{The Berry Phase Simulation Theorem}
    \manualtocentry{sec:perturbative-gadget-reductions}{Perturbative Gadget Reductions}
      \manualtocsubentry{sec:target-simulator-hamiltonian-family-preliminaries}{Target and Simulator Preliminaries}
      \manualtocsubentry{sec:poly-control-criterion}{A Polynomial-Control Criterion}
      \manualtocsubentry{sec:first-order-reductions}{First-Order Reductions}
      \manualtocsubentry{sec:second-order-reductions}{Second-Order Reductions}
      \manualtocsubentry{sec:third-order-reductions}{Third-Order Reductions}
      \manualtocsubentry{sec:gadgetised-two-local-driving-terms}{Gadgetised Sums of Two-Local Driving Terms}
    \manualtocentry{sec:extended-hardness-results}{Extended Hardness Results}
      \manualtocsubentry{sec:preserving-parameters}{Preserving Problem Parameters}
      \manualtocsubentry{sec:improved-hardness-to-2-local}{Improved Hardness to 2-Local Hamiltonians}
    \manualtocentry{sec:berry-phase-estimation-parameterised-1-local}{Berry Phase Estimation for Parameterised Families of 1-Local Hamiltonians}
      \manualtocsubentry{sec:main-result-1l}{Main Result}
    \manualtocentry{sec:conclusion}{Conclusion}
      \manualtocsubentry{sec:open-questions}{Open Questions}
    \altmanualtocentry{sec:acknowledgments}{Acknowledgments}
    \altmanualtocentry{sec:bibliography}{References}
    \altmanualtocentry{app:toc}{Appendices}
% \end{center}
\end{spacing}
\onecolumngrid
}
\newpage
\input{sections/introduction.tex}
\input{sections/preliminaries.tex}
\input{sections/uniform-simulation.tex}
\input{sections/sw-transformation.tex}
\input{sections/berry-phase-simulation.tex}
\input{sections/perturbative-gadget-reductions.tex}
\begin{figure}[!ht]
    \centering
    \begin{tikzpicture}
        \pic{promiseinterval};
    \end{tikzpicture}
    \caption{A schematic diagram illustrating the preservation of the promise intervals under a perturbation of size $\upsilon$.
    The black intervals represent the original promise intervals $[a+\g,b-\g]_{2\pi}$ and $[b+\g,a-\g]_{2\pi}$, while the coloured intervals represent the new promise intervals $[a+\g',b-\g']_{2\pi}$ and $[b+\g',a-\g']_{2\pi}$.
    The arrows indicate the shifts in the endpoints of the intervals due to the perturbation $\upsilon$ and the promise gap $\g$.}
    \label{fig:interval-preservation}
\end{figure}
\input{sections/extended-hardness-results.tex}

\begin{figure}[!ht]
\centering
\begin{subfigure}{0.3\textwidth}
    \begin{tikzpicture}
        \pic{localisingvertex};
    \end{tikzpicture}
    \caption{Localising a vertex.}
    \label{fig:localising-vertex}
\end{subfigure}
\hfill
\begin{subfigure}{0.3\textwidth}
    \begin{tikzpicture}
        \pic{subdividedcrossing};
    \end{tikzpicture}
    \caption{A Subdivided crossing edge.}
    \label{fig:subdivided-crossing-edge}
\end{subfigure}
\hfill
\begin{subfigure}{0.3\textwidth}
    \begin{tikzpicture}
        \pic{localisecrossing};
    \end{tikzpicture}
    \caption{Localising a crossing.}
    \label{fig:localising-crossing}
\end{subfigure}
\caption{
    Gadget transformations used to localise vertices, subdivide crossing edges, and remove crossings in the geometric reduction.
    Revised from Ref.~\cite{oliveira2008complexity}.
}
\label{fig:gadgets-for-graph-elements}
\end{figure}

\input{sections/berry-phase-estimation-parameterised-1-local.tex}
\input{sections/conclusion.tex}

\section*{acknowledgments}\label{sec:acknowledgments}
We thank the following individuals for their contributions.
Yuval Sanders for helpful discussions and insights regarding the formalisation of our extended Schrieffer--Wolff transformation.
Jannis Ruh for explaining the polarisation identity and for proposing the connection between it and perturbation gadget decompositions.
Sam Elman and Michael Bremner for providing valuable feedback and discussions on the manuscript and suggestions for improvements on the presentation of the material.
GW is supported by a scholarship from the Sydney Quantum Academy. 
This work was supported by the ARC Centre of Excellence for Quantum Computation and Communication Technology (CQC2T), project number CE170100012.

\textbf{Use of Generative AI Tools.} In light of the rise in declarations regarding the use of generative AI tools in academic writing, we declare the following.
\emph{Authenticity of ideas:}---The ideas presented in this work are original (of human origin) and have been developed independently by the authors (who are human) based on their expertise and understanding of the subject matter.
\emph{Specific tools used:}---Generative AI tools \textbf{[}\textsl{GitHub Copilot, Claude Sonnet 4.6, GPT-5.6 Luna}\textbf{]} were used to assist with language and formatting, this included: grammar correction, sentence/statement restructuring, ensuring consistency in terminology throughout the manuscript, and assistance with summarising mathematical content.
\emph{Effect on the work:}---Suggested changes were used selectively and served predominantly as guidance.
\emph{Quantification of generative AI contribution:}---Relative to the human contribution, the use of generative AI tools for development of proofs and mathematical content was minimal and primarily focused on language and formatting assistance.

\bibliographystyle{apsrev4-2}
\bibliography{ref}\label{sec:bibliography}

\appendix

\section*{Table of Contents}\label{app:toc}

\begin{center}
\begin{minipage}{\textwidth}
\manualtocentryapp{app:preliminaries-proofs}{Proofs of Main Results in Section~\ref{sec:preliminaries} (Preliminaries)}
\manualtocentryapp{app:uniform-simulation-proofs}{Proofs of Main Results in Section~\ref{sec:uniform-simulation} (Uniform Simulation of Parameterised Hamiltonian Families)}
\manualtocentryapp{app:sw-transformation-proofs}{Proofs of Main Results in Section~\ref{sec:sw-transformation} (Schrieffer-Wolff Transformation for Parameterised Hamiltonian Families)}
\manualtocentryapp{app:berry-phase-simulation-proofs}{Proofs of Main Results in Section~\ref{sec:berry-phase-simulation} (Berry Phase Simulation)}
\manualtocentryapp{app:perturbative-gadget-reductions-proofs}{Proofs of Main Results in Section~\ref{sec:perturbative-gadget-reductions} (Perturbative Gadget Reductions)}
\manualtocentryapp{app:extended-hardness-results-proofs}{Proofs of Main Results in Section~\ref{sec:extended-hardness-results} (Extended Hardness Results)}
\manualtocentryapp{app:berry-phase-estimation-parameterised-1-local-proofs}{Proofs of Main Results in Section~\ref{sec:berry-phase-estimation-parameterised-1-local} (Berry Phase Estimation for Parameterised Families of 1-Local Hamiltonians)}
\manualtocentryapp{app:perturbative-gadget-extensions}{Perturbative Gadget Extensions}
\end{minipage}
\end{center}

\input{appendices/section-proofs/preliminaries-proofs.tex}
\input{appendices/section-proofs/uniform-simulation-proofs.tex}
\input{appendices/section-proofs/sw-transformation-proofs.tex}
\input{appendices/section-proofs/berry-phase-simulation-proofs.tex}
\input{appendices/section-proofs/perturbative-gadget-reductions-proofs.tex}
\input{appendices/section-proofs/extended-hardness-results-proofs.tex}
\input{appendices/section-proofs/berry-phase-estimation-parameterised-1-local-proofs.tex}
\input{appendices/perturbative-gadget-extensions.tex}
\end{document}

%% file: preamble.tex
\usepackage[margin=0.75in]{geometry}
\usepackage[T1]{fontenc}
\usepackage{newtxtext}
\usepackage{slantsc}
\usepackage{fix-cm}
\usepackage[verbose=false]{microtype}
\usepackage{setspace}
\usepackage{amsmath}
\usepackage{amssymb}
\usepackage{mathtools}
\usepackage{physics}
\usepackage{cancel}
\usepackage[frak=euler]{mathalpha}

\usepackage{bm}
\numberwithin{equation}{section}

\usepackage{graphicx}
\usepackage[table]{xcolor}
\usepackage{subcaption}
\usepackage{tikz}
\usepackage{pgfplots}
\usepackage[outline]{contour}
\usepackage{pifont}

\usepackage{paralist}
\usepackage{multirow}
\usepackage{tabularx}
\usepackage{booktabs}
\usepackage[shortlabels]{enumitem}

\usepackage{etoolbox}
\usepackage{hyperref}
\usepackage[capitalise]{cleveref}
\usepackage{amsthm}
\usepackage{thmtools}
\usepackage{thm-restate}

\definecolor{mycolr}{RGB}{71,115,117}
\definecolor{mycolrtwo}{RGB}{205,205,205}

\hypersetup{
    colorlinks=true,
    linkcolor=mycolr,
    citecolor=mycolr,
    urlcolor=mycolr
}

\makeatletter
\def\l@subsubsection#1#2{}
\makeatother

\usetikzlibrary{
    arrows.meta,
    angles,
    graphs,
    graphs.standard,
    fit,
    decorations.markings,
    decorations.pathmorphing,
    decorations.pathreplacing,
    decorations.shapes,
    decorations.text,
    patterns,
    patterns.meta,
    shadows,
    shapes.geometric,
    positioning,
    quotes,
    calc
}
\pgfplotsset{compat=1.17}
\contourlength{1.2pt}

\tikzset{bigarrow/.style={
    postaction={
        decorate,
        decoration={
            markings,
            mark=at position \pgfdecoratedpathlength-0.5pt with {\arrow[mycolrtwo,line width=#1] {>}; },
            mark=between positions 0 and \pgfdecoratedpathlength-8pt step 0.5pt with {
                \pgfmathsetmacro\myval{multiply(divide(
                    \pgfkeysvalueof{/pgf/decoration/mark info/distance from start}, \pgfdecoratedpathlength),100)};
                \pgfsetfillcolor{mycolrtwo!\myval!mycolr};
                \pgfpathcircle{\pgfpointorigin}{#1};
                \pgfusepath{fill};}
}}}}

\makeatletter
\input{t1ntxtlf.fd}
\DeclareFontShape{T1}{ntxtlf}{m}{sl}
    {<-> \ntx@scaled ptmro8t}{}
\DeclareFontShape{T1}{ntxtlf}{b}{sl}
    {<-> \ntx@scaled ptmbo8t}{}
\DeclareFontShape{T1}{ntxtlf}{bx}{sl}
    {<-> ssub * ntxtlf/b/sl}{}
\makeatother

\DeclareRobustCommand{\sbseries}{\fontseries{sb}\selectfont}
\DeclareTextFontCommand{\textsb}{\sbseries}

\DeclareMathAlphabet{\mathdcal}{U}{dutchcal}{m}{n}
\DeclareMathAlphabet{\mathdbcal}{U}{dutchcal}{b}{n}

\makeatletter
\DeclareFontEncoding{LS1}{}{}
\DeclareFontEncoding{LS2}{}{\noaccents@}
\DeclareFontSubstitution{LS1}{stix}{m}{n}
\DeclareFontSubstitution{LS2}{stix}{m}{n}
\makeatother
\DeclareMathAlphabet{\mathscr}{LS1}{stixscr}{m}{n}

\DeclareSymbolFont{rsfso}{U}{rsfso}{m}{n}
\DeclareSymbolFontAlphabet{\mathrsf}{rsfso}

\newtheorem{corollary}{Corollary}[section]
\newtheorem{lemma}{Lemma}[section]
\newtheorem{proposition}{Proposition}[section]
\newtheorem{fact}{Fact}[section]

\theoremstyle{definition}
\newtheorem{defn}{Definition}[section]

\newtheorem{assumption}{Assumption}[section]

\theoremstyle{remark}
\newtheorem{remark}{Remark}[section]
\AtEndEnvironment{remark}{\null\hfill\ensuremath{\diamond}}

\newenvironment{definition}[1]{%
    \pushQED{\hfill$\diamond$}%
    \begin{defn}#1%
}{%
    \popQED
    \end{defn}%
}

\newcommand{\tn}[1]{\textnormal{#1}}

\newcommand{\cl}[1]{\textnormal{\bfseries #1}}
\newcommand{\clw}[2]{\textnormal{\bfseries #1}\textnormal{-}\textnormal{#2}}
\renewcommand{\sc}[1]{\textnormal{\textsc{#1}}}

\newcommand{\Gate}[1]{{\fontfamily{cmr}\selectfont\textsc{#1}}}

\newcommand{\poly}[1]{\mathrm{poly}\mathopen{}\left(#1\right)}

\newcommand{\e}{\mathrm{e}}
\renewcommand{\i}{\mathrm{i}}
\newcommand{\g}{\textsl{g}}
\newcommand{\B}{\{0,1\}}
\newcommand{\ind}[1]{\textbf{1}[#1]}

\newcommand{\V}{\mathcal{V}}

\newcommand{\U}{\mathcal{U}}

\renewcommand{\H}{\mathcal{H}}
\newcommand{\Hil}{\mathcal{H}}
\newcommand{\D}{{\rm D}}

\newcommand{\w}{\mathdcal{W}}

\newcommand{\bmw}{\mathdbcal{W}}

\renewcommand{\L}{\mathrsf{L}}
\newcommand{\pattce}{\mathbin{\text{\ding{66}}}}

\renewcommand{\ket}[1]{\lvert #1\rangle}
\renewcommand{\bra}[1]{\langle #1\rvert}
\renewcommand{\braket}[2]{\langle #1\vert #2\rangle}
\renewcommand{\mel}[3]{\langle #1\rvert #2\lvert #3\rangle}

\renewcommand{\norm}[1]{\lVert #1\rVert}
\newcommand{\snorm}[1]{\lVert #1\rVert_{\sup}}
\newcommand{\hw}[1]{\norm{#1}_1}
\newcommand{\supp}[1]{{\mathsf S}\mathopen{}(#1)}
\newcommand{\unit}[1]{\bm{[}\,#1\,\bm{]}}

\newcommand{\wt}[1]{\widetilde{#1}}
\newcommand{\bs}[1]{\boldsymbol{#1}}
\newcommand{\ps}{\partial_s}

\newcommand{\problemdef}[3]{%
    \begin{center}
        \begin{tabularx}{\linewidth}{@{}l X@{}}
            \multicolumn{2}{@{}l}{#1:}
                \\ \addlinespace[2pt]
            \textit{Instance:}
                & {#2}
                \\ \addlinespace[1.5pt]
            \textit{Problem:}
                & {#3}
                \\
        \end{tabularx}
    \end{center}
}

%% file: figures.tex
\tikzset{
    localitytransition/.pic = {
        \path[bigarrow=1.5pt] (-1,-1) .. controls (5,-1) and (7,-1) .. (9,-1) node[right] {locality};
        \draw[thick, mycolr!50!black, fill=mycolr!50!black] (-0.5,-1) circle (0.1) node[below, yshift=-10pt, xshift=10pt, rotate=-30] {$1$-local} node[above, draw=black, yshift=8pt] {\small This work};
        \draw[thick, mycolr!50!black, fill=mycolr!50!black] (1.5,-1) circle (0.1) node[below, yshift=-10pt, xshift=10pt, rotate=-30] {$2$-local} node[above, draw=black, yshift=8pt] {\small This work};
        \draw[thick, black!20, fill=black!20] (3.5,-1) circle (0.1) node[below, yshift=-10pt, xshift=10pt, rotate=-30, black!20] {$3$-local};;
        \draw[thick, black!20, fill=black!20] (5.5,-1) circle (0.1) node[below, yshift=-10pt, xshift=10pt, rotate=-30, black!20] {$4$-local};
        \draw[thick, black, fill=black] (7.5,-1) circle (0.1) node[below, yshift=-10pt, xshift=10pt, rotate=-30] {$5$-local} node[above, draw=black, yshift=8pt] {\small Ref.~\cite{hayakawa2025computational}};

        \draw[dashed, thick] (0.5, -2.5) -- (0.5, 0);

        \draw[thick, decoration={brace,mirror,raise=0.5cm}, decorate] (0.75,-1.75) -- (8.75,-1.75) node [pos=0.5,anchor=north,yshift=-0.55cm] {\clw{BQP}{complete}}; 
        \draw[thick, decoration={brace,mirror,raise=0.5cm}, decorate] (-0.75,-1.75) -- (0.25,-1.75) node [pos=0.5,anchor=north,yshift=-0.55cm] {\cl{P}}; 
    },
    logicalflow/.pic = {
        \useasboundingbox (0.5, 3) rectangle (14.5, -7.5);
        \node[fill=mycolr!20!white, align=center, text width=4.5cm] (a) at (3,2) {Section~\ref{sec:uniform-simulation} \\[0.2cm] \footnotesize{Uniform Simulation of Parameterised Hamiltonian Families}};
        \node[fill=mycolr!20!white, align=center, text width=4.5cm] (b) at (12,0) {Section~\ref{sec:sw-transformation} \\[0.2cm] \footnotesize{Schrieffer-Wolff Transformation for Parameterised Hamiltonian Families}};
        \node[fill=mycolr!20!white, align=center, text width=4.5cm] (c) at (3,-3) {Section~\ref{sec:berry-phase-simulation} \\[0.2cm] \footnotesize{Berry Phase Simulation}};
        \node[fill=mycolr!20!white, align=center, text width=4.5cm] (d) at (12,-4) {Section~\ref{sec:perturbative-gadget-reductions} \\[0.2cm] \footnotesize{Perturbative Gadget Reductions}};
        \node[fill=mycolr!20!white, align=center, text width=4.5cm] (e) at (7.5,-6.5) {Section~\ref{sec:extended-hardness-results} \\[0.2cm] \footnotesize{Extended Hardness Results}};
        \path[very thick, -latex] (a.south) edge[out=-80, in=90, looseness=1.2] (b.north);
        \draw[very thick, -latex] (a.south) to[bend right=15] (c.north);
        \path[very thick, -latex] (a.south) edge[out=-80, in=120, looseness=1.2] (d.north);
        \draw[very thick, -latex] (b.south) to[bend left=15] (d.north);
        \path[very thick, -latex] (d.south) edge[out=-90, in=90, looseness=1.2] (e.north);
        \path[very thick, -latex] (c.south) edge[out=-90, in=90, looseness=1.2] (e.north);

        \draw[fill=white,thick] (3.15,-0.25) ellipse (1.35cm and 0.5cm) node[align=center] {\footnotesize Definition of \\[-0.15cm]\footnotesize uniform simulation};

        \draw[fill=white,thick] (12.35,-2) ellipse (1.15cm and 0.5cm) node[align=center] {\footnotesize Existence of \\[-0.15cm]\footnotesize simulator family};

        \draw[fill=white,thick] (4.5,-4.5) ellipse (1.15cm and 0.5cm) node[align=center] {\footnotesize Closeness of \\[-0.15cm]\footnotesize Berry phase};

        \draw[fill=white,thick] (10.5,-5.25) ellipse (1.5cm and 0.35cm) node[align=center] {\footnotesize \textsl{subdivision}, \textsl{fork}, \textsl{cross}};
    },
    berryloop/.pic = {
        \coordinate (A) at (0,0);\coordinate (B) at (1,0);\coordinate (C) at (2,1);\coordinate (D) at (3.75,0.5);\coordinate (E) at (2.5,1.5);\coordinate (F) at (1,2); 

        \draw[line width=1mm, mycolr!65] plot[smooth cycle, tension=1] coordinates {(A) (B) (C) (D) (E) (F)} node[xshift=1.5cm,above] {$C$};
        \draw[line width=0.35mm, white] plot[smooth cycle, tension=1] coordinates {(A) (B) (C) (D) (E) (F)};

        \draw[line width=0.35mm, -stealth, black!80] (A) -- +(45:0.3);
        \draw[line width=0.35mm, -stealth, black!75] (B) -- +(60:0.3);
        \draw[line width=0.35mm, -stealth, black!65] (C) -- +(85:0.3);
        \draw[line width=0.35mm, -stealth, black!60] (D) -- +(123:0.3);
        \draw[line width=0.35mm, -stealth, black!55] (E) -- +(277:0.3);
        \draw[line width=0.35mm, -stealth, black!50] (F) -- +(67:0.3); \draw[line width=0.35mm, -stealth, black!45] (A) -- +(105:0.3);

        \draw[black, opacity=0.6] (A) circle(0.3);
        \draw[black, opacity=0.6] (B) circle(0.3);
        \draw[black, opacity=0.6] (C) circle(0.3);
        \draw[black, opacity=0.6] (D) circle(0.3);
        \draw[black, opacity=0.6] (E) circle(0.3);
        \draw[black, opacity=0.6] (F) circle(0.3);

       \draw[fill=black,draw=black,thick] (A) circle(0.04); 
        \draw[fill=white,draw=black,thick] (B) circle(0.04);
        \draw[fill=white,draw=black,thick] (C) circle(0.04);
        \draw[fill=white,draw=black,thick] (D) circle(0.04);
        \draw[fill=white,draw=black,thick] (E) circle(0.04);
        \draw[fill=white,draw=black,thick] (F) circle(0.04);

        \draw[thick, ->] (1,1)++(-100:0.3) arc (-100:170:0.3) node[above, pos=0.5] {$s$};

        \coordinate (z) at (-0.25,1.85);
        \draw[thick, black, opacity=0.6] (z) circle(0.5);
        \draw[thick, mycolr] (z)++(45:0.25) arc (45:105:0.25) node[above, pos=0.5, yshift=-1pt, mycolr] {$\vartheta$};
        \draw[line width=0.35mm, -stealth, black!80] (z) -- +(45:0.5);
        \draw[line width=0.35mm, -stealth, black!45] (z) -- +(105:0.5);
        \draw[fill=black, draw=black,thick] (z) circle(0.045);

        \path (A) -- +(20:0.3) coordinate (a1);
        \path (A) -- +(170:0.3) coordinate (a2);
        \path (A) -- +(170:0.3) coordinate (a2);
        \path (z) -- +(0:0.5) coordinate (z1);
        \path (z) -- +(190:0.5) coordinate (z2);

        \draw[dashed, black, opacity=0.6] (a1) -- (z1);
        \draw[dashed, black, opacity=0.6] (a2) -- (z2);
    },
    gadgetrealisation/.pic = {
        \draw[thick] (0,0) -- (1,-1) node[above, pos=0.5, rotate=-45] {$V^{a,1}$};
        \draw[thick] (0,-2) -- (1,-1) node[above, pos=0.5, rotate=45] {$V^{a,2}$};
        \draw[thick] (2.5,-1) -- (1,-1) node[below, pos=0.5] {$V^{a,3}$};
        \draw[thick, fill=black] (0,0) circle (0.1) node[above, yshift=5pt] {$u$};
        \draw[thick, fill=black] (0,-2) circle (0.1) node[below, yshift=-5pt] {$v$};
        \draw[thick, fill=black] (2.5,-1) circle (0.1) node[below, yshift=-5pt] {$w$};
        \draw[thick, fill=mycolr, draw=mycolr] (1,-1) circle (0.1) node[below, yshift=-5pt, mycolr] {\small $m_a$};

        \node at (1.25,-3) {Gadget};
        \node at (4.5,-1) {\LARGE $\overset{\Omega^a}{\rightsquigarrow}$};
        \begin{scope}[xshift=6.5cm]
            \draw[thick, fill=black] (0,0) circle (0.1) node[above, yshift=5pt] {$u$};
            \draw[thick, fill=black] (0,-2) circle (0.1) node[below, yshift=-5pt] {$v$};
            \draw[thick, fill=black] (2.5,-1) circle (0.1) node[below, yshift=-5pt] {$w$};

            \draw[very thick] (0,0) -- (2.5,-1);
            \draw[very thick] (0,-2) -- (2.5,-1);
            \draw[thick, dashed] (0,0) -- (0,-2);
            \draw[thick, dashed] (0,0)  to[in=40,out=140,loop,distance=2cm] (0,0);
            \draw[thick, dashed] (0,-2)  to[in=-40,out=-140,loop,distance=2cm] (0,-2);
            \draw[thick, dashed] (2.5,-1)  to[in=-50,out=50,loop,distance=2cm] (2.5,-1);
            \node at (1.25,-3) {Realisation};
        \end{scope}
    },
    subdivision/.pic = {
        \draw[thick, mycolr!80] (-0.5,3.75) rectangle ++(4,-4.5);
        \begin{scope}[yshift=1.5cm]
            \draw[thick](0,0)--(3,0) node[midway, above, yshift=2.55pt] {$A_uB_v$}; 
            \draw[fill=black, thick] (0,0)circle(0.1) node[below, yshift=-5pt] {$u$};
            \draw[fill=black, thick] (3,0)circle(0.1) node[below, yshift=-5pt] {$v$};
        \end{scope}
        \begin{scope}[yshift=-5cm]
        \draw[thick, mycolr!80] (-0.5,3.75) rectangle ++(4,-4.5);
            \begin{scope}[yshift=1.5cm]
                \draw[thick](0,0)--(1.5,0) node[midway, above, yshift=2.55pt] {$A_uX_{m_a}$}; \draw[thick](1.5,0)--(3,0) node[midway, above, yshift=2.55pt] {$X_{m_a}B_v$}; 
                \draw[fill=black, thick] (0,0)circle(0.1) node[below, yshift=-5pt] {$u$};
                \draw[fill=mycolr, thick, draw=mycolr] (1.5,0)circle(0.1) node[mycolr, below, yshift=-5pt] {\small $m_a$};
                \draw[fill=black, thick] (3,0)circle(0.1) node[below, yshift=-5pt] {$v$};
            \end{scope}
        \end{scope}
    },
    cross/.pic = {
        \draw[thick, mycolr!80] (-0.5,3.75) rectangle ++(4,-4.5);
        \draw[thick](0,0)--(3,3) node[midway, above, rotate=45, yshift=7.5pt, xshift=-40pt, fill=white] {$C_wD_s$}; \draw[fill=white, draw=none] (1.5,1.5)circle(0.12);\draw[thick](0,3)--(3,0) node[midway, above, rotate=-45, yshift=7.5pt, xshift=40pt, fill=white] {$A_uB_v$}; 
        \draw[fill=black, thick] (0,0)circle(0.1) node[below, yshift=-5pt] {$w$};
        \draw[fill=black, thick] (3,0)circle(0.1) node[below, yshift=-5pt] {$v$};
        \draw[fill=black, thick] (0,3)circle(0.1) node[above, yshift=5pt] {$u$};
        \draw[fill=black, thick] (3,3)circle(0.1) node[above, yshift=5pt] {$s$};
        \begin{scope}[yshift=-5cm]
            \draw[thick, mycolr!80] (-0.5,3.75) rectangle ++(4,-4.5);
            \draw[thick](0,0)--(1.5,1.5) node[midway, above, rotate=45, yshift=2.5pt] {$C_wX_{m_a}$}; \draw[thick](1.5,1.5)--(3,3) node[midway, below, rotate=45, yshift=-2.5pt] {$X_{m_a}D_s$}; \draw[thick](0,3)--(1.5,1.5) node[midway, above, rotate=-45, yshift=2.5pt] {$A_uX_{m_a}$}; \draw[thick](1.5,1.5)--(3,0) node[midway, below, rotate=-45, yshift=-2.5pt] {$X_{m_a}B_v$}; 
            \draw[fill=black, thick] (0,0)circle(0.1) node[below, yshift=-5pt] {$w$};
            \draw[fill=black, thick] (3,0)circle(0.1) node[below, yshift=-5pt] {$v$};
            \draw[fill=black, thick] (0,3)circle(0.1) node[above, yshift=5pt] {$u$};
            \draw[fill=black, thick] (3,3)circle(0.1) node[above, yshift=5pt] {$s$};
            \draw[fill=mycolr, thick, draw=mycolr] (1.5,1.5)circle(0.1) node[mycolr, right, xshift=5pt] {\small $m_a$};
        \end{scope}
    },
    fork/.pic = {
        \draw[thick, mycolr!80] (-0.5,3.75) rectangle ++(4,-4.5);
        \begin{scope}[yshift=2.799038cm]    
            \draw[thick] (0,0)--(1.5,-{3*sqrt(3)}/2) node[midway, below, rotate=-60, yshift=-2.5pt] {$A_uB_v$};
            \draw[thick] (1.5,-{3*sqrt(3)}/2)--(3,0) node[midway, below, rotate=60, yshift=-2.5pt] {$B_vC_w$}; 
            \draw[fill=black, thick] (0,0)circle(0.1) node[above, yshift=5pt] {$u$};
            \draw[fill=black, thick] (3,0)circle(0.1) node[above, yshift=5pt] {$w$};
            \draw[fill=black, thick] (1.5,-{3*sqrt(3)}/2)circle(0.1) node[below, yshift=-5pt] {$v$};
        \end{scope}
        \begin{scope}[yshift=-5cm]
        \draw[thick, mycolr!80] (-0.5,3.75) rectangle ++(4,-4.5);
            \begin{scope}[yshift=2.799038cm]
                \draw[thick] (0,0)--(1.5,-{3*sqrt(3)}/2 + 1.5) node[midway, below, rotate=-40, yshift=-2.5pt] {$A_uX_{m_a}$};
                \draw[thick] (1.5,-{3*sqrt(3)}/2 + 1.5)--(3,0) node[midway, above, rotate=40, yshift=2.5pt] {$X_{m_a}B_v$}; 
                \draw[thick] (1.5,-{3*sqrt(3)}/2 + 1.5)--(1.5,-{3*sqrt(3)}/2) node[midway, above, rotate=-90, yshift=2.5pt] {$X_{m_a}B_v$}; 
                \draw[fill=black, thick] (0,0)circle(0.1) node[above, yshift=5pt] {$u$};
                \draw[fill=black, thick] (3,0)circle(0.1) node[above, yshift=5pt] {$w$};
                \draw[fill=black, thick] (1.5,-{3*sqrt(3)}/2)circle(0.1) node[below, yshift=-5pt] {$v$};
                \draw[fill=mycolr, thick, draw=mycolr] (1.5,-{3*sqrt(3)}/2 + 1.5)circle(0.1) node[mycolr, above, yshift=5pt] {\small $m_a$};
            \end{scope}
        \end{scope}
    },
    promiseinterval/.pic = {
        \path (0,0) -- +(17.5:2) coordinate (a); \path (0,0) -- +(30:2) coordinate (a1); \path (0,0) -- +(5:2) coordinate (a2);
        \path (0,0) -- +(132.5:2) coordinate (b); \path (0,0) -- +(120:2) coordinate (b1); \path (0,0) -- +(145:2) coordinate (b2);

        \path (0,0) -- +(25:2) coordinate (x1); \path (0,0) -- +(10:2) coordinate (x2);
        \path (0,0) -- +(125:2) coordinate (y1); \path (0,0) -- +(140:2) coordinate (y2);

        \draw[thick] (0,0) circle (2);
        \draw[thick, dashed, black!30] (-2,0) -- (2,0);

        \node[rotate around={120:(a1)}] at (a1) {\Large [}; \node[rotate around={95:(a2)}] at (a2) {\Large ]};
        \node[rotate around={30:(b1)}] at (b1) {\Large [}; \node[rotate around={55:(b2)}] at (b2) {\Large ]};
        \draw[thick, fill=black] (a) circle (0.02) node[right] {$a$}; \draw[thick, fill=black] (a1) circle (0.02); \draw[thick, fill=black] (a2) circle (0.02);
        \draw[thick, fill=black] (b) circle (0.02) node[above] {$b$}; \draw[thick, fill=black] (b1) circle (0.02); \draw[thick, fill=black] (b2) circle (0.02);

        \draw[thick, mycolr,fill=mycolr] (x1) circle (0.02); \draw[thick, mycolr,fill=mycolr] (x2) circle (0.02);
        \node[mycolr,rotate around={115:(x1)}] at (x1) {\Huge [}; \node[mycolr,rotate around={100:(x2)}] at (x2) {\Huge ]};
        \draw[thick, mycolr,fill=mycolr] (y1) circle (0.02); \draw[thick, mycolr,fill=mycolr] (y2) circle (0.02);
        \node[mycolr,rotate around={35:(y1)}] at (y1) {\Huge [}; \node[mycolr,rotate around={50:(y2)}] at (y2) {\Huge ]};

        \coordinate (z) at (0,0);
        \draw[thick, black!30, fill=black!30] (z) circle(0.02);
        \draw[thick,line width=3.5mm, opacity=0.2] (z)++(30:2) arc (30:120:2); \draw[thick,line width=3.5mm, opacity=0.2] (z)++(5:2) arc (5:-215:2);
        \draw[thick,line width=6.5mm, mycolr, opacity=0.3] (z)++(25:2) arc (25:125:2); \draw[thick,line width=6.5mm, mycolr, opacity=0.3] (z)++(10:2) arc (10:-220:2);

        \draw[thick, fill=black] (2,0) circle (0.02) node[below right] {$0$}; \draw[thick, fill=black] (-2,0) circle (0.02) node[below left] {$\pi$};

        \draw[opacity=0.6,thick] (b) circle(0.6);
        \path (z) -- +(115:2) coordinate (f); \path (z) -- +(150:2) coordinate (g);
        \draw[opacity=0.6,thick, dashed] (f) -- +(115:2); \draw[opacity=0.6,thick, dashed] (g) -- +(150:2);
        \draw[mycolr, opacity=0.3] (z) -- +(125:4); \draw[mycolr, opacity=0.3] (z) -- +(140:4);
        \draw[opacity=0.3] (z) -- +(120:4); \draw[opacity=0.3] (z) -- +(145:4);

        \path (0,0) -- +(132.5:4) coordinate (c);
        \draw[opacity=0.6,thick] (c) circle(1.2); 
        \path (z) -- +(115:4) coordinate (p); \path (z) -- +(150:4) coordinate (q);
        \draw[line width=0.6mm] (p) arc (115:150:4);
        \draw[thick, fill=black] (c) circle (0.05) node[above, yshift=0.2cm,xshift=-0.1cm] {\large $b$};

        \path (z) -- +(125:4) coordinate (u); 
        \path (z) -- +(140:4) coordinate (v); 
        \path (z) -- +(120:4) coordinate (m); 
        \path (z) -- +(145:4) coordinate (n); 
        
        \draw[-stealth, mycolr] (m) to [bend left=55] ( $ (u)!.02!(0,0) $ ) node[yshift=-0.1cm, xshift=0.3cm] {$\upsilon$}; \draw[-stealth, mycolr] (n) to [bend right=55] ( $ (v)!.02!(0,0) $ ) node[yshift=-0.3cm, xshift=0.05cm] {$\upsilon$};
        \draw[-stealth] (c) to [bend left=55] ( $ (0,0)!1.02!(m) $ ) node[above left, xshift=-0.1cm, yshift=-0.1cm] {$\g$}; \draw[-stealth] (c) to [bend right=55] ( $ (0,0)!1.02!(n) $ ) node[above right, xshift=0cm, yshift=0.4cm] {$\g$};
        \draw[thick, mycolr,fill=mycolr] (u) circle (0.05); \draw[thick, mycolr,fill=mycolr] (v) circle (0.05);
        \draw[thick, fill=black] (m) circle (0.05); \draw[thick, fill=black] (n) circle (0.05);
        \draw[thick, fill=black] (c) circle(0.02);
    },
    latticepaths/.pic = {
        \def\width{10}
        \def\step{0.5}
        \def\term{\width}
        \pgfmathsetmacro\term{\term-\step}

        \foreach \x/\y in {1/1.5, 2/8, 3/4, 7/7, 8.5/2, 9.5/9.5}{
            \draw[draw=none, fill=gray!45, draw=none] (\x-0.5,\y-0.5) rectangle (\x+0.5,\y+0.5);
        };
        
        \foreach \i in {0.5, 1, 1.5, ..., 9.5}{
            \draw[mycolr] (\i,0)--(\i,\width);
            \draw[mycolr!55!black] (0,\i)--(\width,\i);
        };
    
        \foreach \j in {0.5, 1, 1.5, ..., 10}{
            \draw[mycolr!55] (10-\j,0)--(10,\j);
            \draw[mycolr!55] (0,\j)--(10-\j,10);
        };
    
        \draw[gray, dashed] (1,1.5)--(8.5,2)--(9.5,9.5)--(7,7)--(3,4)--(2,8);
        \draw[gray, dashed] (1,1.5)--(3,4);
        \draw[gray, dashed] (7,7)--(8.5,2);
    
        \foreach \x/\y in {1/1.5, 2/8, 3/4, 7/7, 8.5/2, 9.5/9.5}{
            \draw[fill = black] (\x,\y) circle (2pt);
        };

        \draw[very thick] (1,1.5)--(4.5,1.5)--(5,2)--(8,2)--(8.5,2);
        \draw[very thick] (3,4)--(2,3)--(2,2.5)--(1,1.5);
        \draw[very thick] (3,4)--(3,5)--(2.5,5)--(2.5,7)--(2,7)--(2,8);
        \draw[very thick] (3,4)--(4,5)--(4.5,5)--(6,6.5)--(6.5,6.5)--(7,7);
        \draw[very thick] (8.5,2)--(8.5,2.5)--(8,2.5)--(8,4.5)--(7.5,4.5)--(7.5,6.5)--(7,6.5)--(7,7);
        \draw[very thick] (7,7)--(9.5,9.5);
        \draw[very thick] (8.5,2)--(9,2.5)--(9,7.5)--(9.5,8)--(9.5,9.5);
    
        \draw[thick, black] (0,0) rectangle (\width,\width);
    },
    localisingvertex/.pic = {
        \draw[thick, mycolr!80] (-0.5,3.75) rectangle ++(4,-4.5);
        \begin{scope}[yshift=1.5cm]
            \draw[fill=black, thick] (1.5,0)circle(0.1) node[below, yshift=-5pt] {$u$};
            \draw[thick](1.5,0)--(3,1); \draw[fill=black, thick] (3,1)circle(0.1);
            \draw[thick](1.5,0)--(0,-2); \draw[fill=black, thick] (0,-2)circle(0.1);
            \draw[thick](1.5,0)--(0,1.3); \draw[fill=black, thick] (0,1.3)circle(0.1);
            \draw[thick](1.5,0)--(1.2,1.8); \draw[fill=black, thick] (1.2,1.8)circle(0.1);
            \draw[thick](1.5,0)--(3,-1.2); \draw[fill=black, thick] (3,-1.2)circle(0.1);
        \end{scope}
        \begin{scope}[yshift=-5cm]
        \draw[thick, mycolr!80] (-0.5,3.75) rectangle ++(4,-4.5);
            \begin{scope}[yshift=1.5cm]
                \draw[fill=black, thick] (1.5,0)circle(0.1) node[below, yshift=-5pt] {$u$};
                \draw[thick](1.5,0)--(3,1); \draw[fill=black, thick] (3,1)circle(0.1); \draw[fill=mycolr, draw=mycolr, thick] (2.25,0.5)circle(0.1);
                \draw[thick](1.5,0)--(0,-2); \draw[fill=black, thick] (0,-2)circle(0.1); \draw[fill=mycolr, draw=mycolr, thick] (0.75,-1)circle(0.1);
                \draw[thick](1.5,0)--(0,1.3); \draw[fill=black, thick] (0,1.3)circle(0.1); \draw[fill=mycolr, draw=mycolr, thick] (0.75,0.65)circle(0.1);
                \draw[thick](1.5,0)--(1.2,1.8); \draw[fill=black, thick] (1.2,1.8)circle(0.1); \draw[fill=mycolr, draw=mycolr, thick] (1.35,0.9)circle(0.1);
                \draw[thick](1.5,0)--(3,-1.2); \draw[fill=black, thick] (3,-1.2)circle(0.1); \draw[fill=mycolr, draw=mycolr, thick] (2.25,-0.6)circle(0.1);
            \end{scope}
        \end{scope}
    },
    subdividedcrossing/.pic = {
        \draw[thick, mycolr!80] (-0.5,3.75) rectangle ++(4,-4.5);
        \draw[thick] (0.75,3)--(0.75,0); \draw[fill=white, draw=none] (0.75,1.5)circle(0.1); \draw[fill=black, thick] (0.75,0)circle(0.1); \draw[fill=black, thick] (0.75,3)circle(0.1);
        \draw[thick] (1.5,3)--(1.5,0); \draw[fill=white, draw=none] (1.5,1.5)circle(0.1); \draw[fill=black, thick] (1.5,0)circle(0.1); \draw[fill=black, thick] (1.5,3)circle(0.1);
        \draw[thick] (2.25,3)--(2.25,0); \draw[fill=white, draw=none] (2.25,1.5)circle(0.1); \draw[fill=black, thick] (2.25,0)circle(0.1); \draw[fill=black, thick] (2.25,3)circle(0.1);
        
        \draw[thick] (0,1.5)--(3,1.5);
        \draw[fill=black, thick] (0,1.5)circle(0.1); \draw[fill=black, thick] (3,1.5)circle(0.1);
        \begin{scope}[yshift=-5cm]
            \draw[thick, mycolr!80] (-0.5,3.75) rectangle ++(4,-4.5);
            \draw[thick] (0.75,3)--(0.75,0); \draw[fill=white, draw=none] (0.75,1.5)circle(0.1); \draw[fill=black, thick] (0.75,0)circle(0.1); \draw[fill=black, thick] (0.75,3)circle(0.1);
            \draw[thick] (1.5,3)--(1.5,0); \draw[fill=white, draw=none] (1.5,1.5)circle(0.1); \draw[fill=black, thick] (1.5,0)circle(0.1); \draw[fill=black, thick] (1.5,3)circle(0.1);
            \draw[thick] (2.25,3)--(2.25,0); \draw[fill=white, draw=none] (2.25,1.5)circle(0.1); \draw[fill=black, thick] (2.25,0)circle(0.1); \draw[fill=black, thick] (2.25,3)circle(0.1);
            
            \draw[thick] (0,1.5)--(3,1.5);
            \draw[fill=black, thick] (0,1.5)circle(0.1); \draw[fill=black, thick] (3,1.5)circle(0.1);
            \draw[fill=mycolr, draw=mycolr] (1.125,1.5)circle(0.1);
            \draw[fill=mycolr, draw=mycolr] (1.875,1.5)circle(0.1);
        \end{scope}
    },
    localisecrossing/.pic = {
        \draw[thick, mycolr!80] (-0.5,3.75) rectangle ++(4,-4.5);
        \draw[thick](0,0)--(3,3); \draw[fill=white, draw=none] (1.5,1.5)circle(0.12);\draw[thick](0,3)--(3,0); 
        \draw[fill=black, thick] (0,0)circle(0.1);
        \draw[fill=black, thick] (3,0)circle(0.1);
        \draw[fill=black, thick] (0,3)circle(0.1);
        \draw[fill=black, thick] (3,3)circle(0.1);
        \begin{scope}[yshift=-5cm]
            \draw[thick, mycolr!80] (-0.5,3.75) rectangle ++(4,-4.5);
            \draw[thick](0,0)--(3,3); \draw[fill=white, draw=none] (1.5,1.5)circle(0.12);\draw[thick](0,3)--(3,0); 
            \draw[fill=black, thick] (0,0)circle(0.1);
            \draw[fill=black, thick] (3,0)circle(0.1);
            \draw[fill=black, thick] (0,3)circle(0.1);
            \draw[fill=black, thick] (3,3)circle(0.1);

            \draw[fill=mycolr, draw=mycolr] (0.75,0.75)circle(0.1);
            \draw[fill=mycolr, draw=mycolr] (2.25,2.25)circle(0.1);
            \draw[fill=mycolr, draw=mycolr] (2.25,0.75)circle(0.1);
            \draw[fill=mycolr, draw=mycolr] (0.75,2.25)circle(0.1);
        \end{scope}
    },
}

%% file: sections/introduction.tex
\section{Introduction}
\label{sec:introduction}
The study and classification of topological phases of matter is a key area of modern condensed matter physics.
Topological invariants --- quantities preserved under continuous deformations --- play an important role in understanding phenomena such as the quantum Hall effect~\cite{thouless1982quantized} and topological insulators~\cite{hasan2010colloquium}.
A fundamental example is the Berry phase~\cite{pancharatnam1956generalized,berry1984quantal}: the geometric phase accumulated when a quantum state evolves adiabatically along a closed path in parameter space.
Such an evolution can be expressed as
\begin{equation}
    \ket{\psi(1)} = \e^{-\i \alpha} \e^{\i \vartheta} \ket{\psi(0)},
\end{equation}
where $\alpha$ is the dynamical phase and $\vartheta$ the Berry phase.
This geometric phase governs a wide range of physical effects emerging in the study of topological quantum matter~\cite{sprinkart2024tutorial}, electronic properties~\cite{xiao2010berry}, and polarisation phenomena~\cite{watanabe2018inequivalent}.

The robustness of geometric phases to local perturbations inspired holonomic quantum computation~\cite{zanardi1999holonomic, jones2000geometric}, where gates exploit adiabatic evolution in degenerate ground state manifolds.
This topological character parallels other fault-tolerant schemes that have been proposed~\cite{kitaev2003fault,freedman2002modular} and motivates a broader question: \emph{to what extent do topological properties of physical systems inherently encode complex computational problems?}

In this work, we prove that estimating the Berry phase to inverse-polynomial precision is \clw{BQP}{complete} for ${\rm C}^2$ families of $2$-local qubit Hamiltonians on square and triangular lattices.\footnote{A precise statement of this result is given in Section~\ref{sec:summary-of-techniques-and-results}, and the computational problem is defined in Section~\ref{sec:problem-statement}.}
The input includes a classical description of a guiding state promised to have non-negligible overlap with the ground state.
We also prove that the corresponding problem for $1$-local families lies in \cl{P}, even without a guiding state.

\subsection{Prior Work}
\label{sec:prior-work}
A central challenge in Hamiltonian complexity is to determine how physical constraints such as locality, dimensionality, and geometric structure affect the boundary between quantum and classical computation~\cite{kempe2006complexity,oliveira2008complexity,cubitt2018universal}.
Perturbative gadgets are a principal tool for constructing complexity-theoretic reductions between families of Hamiltonians, typically preserving computational hardness.
In this approach, a target interaction is reproduced in the low-energy sector of a simulator Hamiltonian by coupling the system to ancillary mediator qubits with a large energy penalty~\cite{kempe2006complexity,oliveira2008complexity}.
The Schrieffer--Wolff transformation provides a systematic expansion of the corresponding effective Hamiltonian~\cite{bravyi2011schrieffer}, while the simulation framework of Bravyi and Hastings formalises the resulting approximation and its composition across successive gadget reductions~\cite{bravyi2016complexity}.
These methods underpin locality reductions and lattice embeddings for many versions and variants of the \sc{Local Hamiltonian} problem~\cite{oliveira2008complexity,piddock2017complexity}.

Hayakawa, Sakamoto and Kiumi~\cite{hayakawa2025computational} considered three promise problem variants of Berry phase estimation distinguished by the auxiliary information supplied with the Hamiltonian path.
For the variant in which the input includes a guiding state with non-negligible ground-state overlap, \clw{BQP}{completeness} for parameterised families of $5$-local Hamiltonians was proven.
The \clw{BQP}{hardness} reduction utilised an adapted version of the Feynman--Kitaev circuit-to-Hamiltonian construction to design a closed (local Hamiltonian) family of the form $H_\zeta(s) = H + \zeta\w(s)$, where the static bulk Hamiltonian $H$ encodes the computation and $\w(s)$ drives the geometric phase.
In particular, the driving term may be chosen as
\begin{equation}\label{eq:driving-pauli-decomposition-intro}
    \w(s) = \frac{\cos(2\pi s)}{2}X_q + \frac{\sin(2\pi s)}{2}Y_q - \frac{\cos(2\pi s)}{2}X_qZ_c - \frac{\sin(2\pi s)}{2}Y_qZ_c,
\end{equation}
where $q$ is the output qubit and $c$ is the final clock qubit.
Although \cref{eq:driving-pauli-decomposition-intro} is already $2$-local, the bulk Hamiltonian is $5$-local, leaving open whether the computational hardness persists when the full family is $2$-local and geometrically constrained.

At first sight, this locality reduction appears to be an immediate application of existing gadgets.
Standard simulation guarantees can be applied pointwise: at each fixed $s$, they control the low-energy spectrum and eigenvectors of the simulator in terms of the errors $\eta$ and $\varepsilon$~\cite{bravyi2016complexity}.
Such control is sufficient for many static reductions and for tracking instantaneous low-energy information along a smooth path, but it does not by itself provide a way to compare the Berry phases of the two families.
A uniformly small error operator $E(s)$ can vary rapidly with $s$, though the Berry phase is sensitive to this variation even when the target and simulator ground states are close at every point.
Moreover, the isometry identifying the target space with the exact low-energy band can itself contribute a geometric phase, including when the pointwise Hamiltonian error vanishes.
Controlling the Berry phase therefore requires bounds on $\snorm{\ps E}$ and on the motion of the witnessing isometry, together with a periodic choice of that isometry at the endpoints.

A second obstruction appears when the gadget transformations themselves depend on $s$.
Applying a static Schrieffer--Wolff result separately at each parameter value does not supply a coherent, differentiable family of low-energy isometries or uniform derivative bounds for the exact and truncated series.
In a parallel gadget construction, the relevant low- and high-energy projectors are global, so a perturbation that is off-diagonal for one mediator need not be globally off-diagonal when another mediator is excited.
Consequently, the exact global Schrieffer--Wolff generator is not generally the sum of isolated single-gadget generators, and cross terms must be excluded or bounded before the familiar effective interactions can be used.

We therefore retain the algebraic gadget constructions of prior work, but augment them with a regular simulation theory for closed Hamiltonian families, a Berry phase simulation theorem, uniform parameter-dependent Schrieffer--Wolff estimates, and parameter-dependent perturbative gadget extensions.
These additions identify the precise hypotheses under which the standard reductions preserve the computational promise rather than assuming that pointwise low-energy simulation also preserves the Berry phase.

\subsection{Summary of Techniques and Results}
\label{sec:summary-of-techniques-and-results}
We partition our results into several contributions, which we summarise below.
For reference, we provide an informal definition of the \sc{Guided-State Berry Phase Estimation} problem (a formal definition can be found in \cref{sec:problem-statement}).

\problemdef{(Informal) \sc{Guided-State Berry Phase Estimation}}
{A family $\Upsilon = \{H(s)\}_{s \in [0,1]}$ of parameterised local Hamiltonians, a parameter $\textsl{g} \geq 1/\poly{n}$, reals $a < b$ with $b - a \geq 2\textsl{g}$, a parameter $\delta_0 \in [1/\poly{n}, 1 - 1/\poly{n}]$, and a classical description of a guiding state $\ket{\xi}$ which is promised to satisfy $\abs{\braket{\phi_0(0)}{\xi}}^2 \geq \delta_0$ (where $\ket{\phi_0(0)}$ is the ground state of $H(0)$).}
{Decide whether the Berry phase $\vartheta$ of $\Upsilon$ lies within $I_{\textnormal{in}} = [a + \textsl{g}, b - \textsl{g}]$ or $I_{\textnormal{out}} = [b + \textsl{g}, a - \textsl{g}]$.}

\cref{fig:locality-transition} illustrates the results of this work, for the problem we study, with respect to the locality of the Hamiltonian and previously known complexity results.

\subparagraph{Regular Uniform Simulation and Composition.}
We first extend the static simulation framework of Ref.~\cite{bravyi2016complexity} from a single Hamiltonian to a family $\Upsilon = \{H(s)\}_{s\in[0,1]}$.
A uniform simulation consists of a simulator family $\wt\Upsilon = \{\wt H(s)\}_{s\in[0,1]}$, a fixed encoding isometry $\V$, and a ${\rm C}^2$ witness family $\U(s)$ whose image is the exact low-energy band of $\wt H(s)$.
Uniformly in $s$, the transformed Hamiltonian $\U^\dagger(s)\wt H(s)\U(s)$ approximates $H(s)$ within $\varepsilon$, while $\U(s)$ remains within $\eta$ of the fixed encoding $\V$; see Definition~\ref{def:uniform-simulation}.
The distinction between the two isometries is essential for computational purposes: $\V$ is the efficiently described encoding used to transform the guiding state, whereas $\U(s)$ tracks the exact low-energy subspace used in the analysis.
A regular simulation additionally requires $\U(1) = \U(0)$ and $\snorm{\ps\U}\leq\kappa$; see Definition~\ref{def:regular-simulation}.

The usual eigenvalue and ground-state guarantees then hold uniformly along the path, with errors scaling as $O(\poly{\eta^{-1},\varepsilon^{-1}})$.
Composition requires additional care beyond a naive concatenation of simulations, introducing a parameter-dependent unitary rotation to align specific subspaces.
We show that a chain of compositions can have error parameters handled systematically by appropriate renormalisations of the individual simulation errors.

\subparagraph{Berry-Phase Stability.}
Using the newfound definition of regular uniform simulation, we can begin to analyse the difference in Berry phases between the target and simulator families.
For a regular simulation, set $\widehat H(s)=\U^\dagger(s)\wt H(s)\U(s)$ and $E(s) = H(s)-\widehat H(s)$.
We compare the target and transformed ground-state projectors and their Kato parallel transports, which avoids choosing eigenvector phases in the perturbation argument.
The resulting estimate in \cref{eq:main-result-bp-error} separates the Berry-phase error into four contributions.
The two static terms are $J_1=\gamma_\star^{-2}\max\{\snorm{\ps H},\snorm{\ps\widehat H}\}\varepsilon$ and $J_3=\gamma_\star^{-1}\varepsilon$, with the dynamic terms given by $J_2=\gamma_\star^{-1}\snorm{\ps E}$ and $J_4=\kappa$.
Consequently, we require small $\eta$, $\varepsilon$ and $\kappa$ to ensure accurate Berry-phase simulation.
Our main Berry phase simulation result is stated formally in \cref{thm:berry-phase-simulation}.

\subparagraph{Parameterised Schrieffer--Wolff Analysis and Gadgets.}
We next analyse simulator families of the form $\wt H(s)=\Delta H_0+V(s)$ with fixed penalty Hamiltonian $H_0$ and closed ${\rm C}^2$ perturbation $V(s)$.
The exact canonical Schrieffer--Wolff generator $S(s)$ defines the witness $\U(s)=\e^{-S(s)}\V$; it follows that the regularity and derivative characterisation of $S(s)$ directly determine $\kappa$.
Writing $v_0=\snorm{V}$ and $v_1=\snorm{\ps V}$, \cref{lemma:uniform-control-exact-truncated} gives uniform bounds $\snorm{S}=O(\Delta^{-1}v_0)$ and $\snorm{\ps S}=O(\Delta^{-1}v_1)$, together with derivative bounds for the truncated-series remainders.
When the leading block-off-diagonal perturbation is independent of $s$, \cref{cor:static-leading-sw-generator} improves the derivative scale to $\kappa=O(\Delta^{-2}v_0v_1)$.

For parallel gadgets, we perform the transformation with respect to the global penalty splitting and derive the corresponding generator and effective-Hamiltonian coefficients in \cref{cor:sw-generators-parallel,cor:heff-coefficients-parallel}.
The excitation-sector analysis in \cref{sec:parallel-effective-hamiltonian-decomposition} shows that, under standard mediator-support conditions, the required second- and third-order effective terms decompose into independent gadget contributions even though the exact generator is global.
We then treat the two scenarios needed to conclude the hardness reduction: 
\begin{inparaenum}[(1)]
    \item the driving terms may be carried unchanged through a gadget layer,
    \item a sum of $2$-local driving terms may itself be realised by a second-order parameter-dependent gadget.
\end{inparaenum}
In each case, the analysis tracks not only the effective-Hamiltonian error but also $\snorm{\ps E}$ and $\kappa$.
\cref{lem:poly-control-general} then presents a criterion for controlling the error contributions in order to achieve an inverse-polynomial error on the Berry-phase simulation.
That is, a sufficiently large polynomial scaling of the penalty $\Delta$ suffices to ensure all four contributions $J_1,\ldots,J_4$ are inverse-polynomially small.

\subparagraph{Two-Local Hardness.}
We apply these tools first to a spatially sparse version of the $5$-local construction of Ref.~\cite{hayakawa2025computational}, obtaining \cref{theorem:spatially-sparse-5-local}.
Starting from this family, \textsl{subdivision} and \textsl{3-to-2-local} gadgets reduce the bulk locality, while \textsl{fork}, \textsl{cross}, and further \textsl{subdivision} gadgets reduce the Pauli degree, remove non-planar crossings, and embed the interaction graph into a square or triangular lattice~\cite{oliveira2008complexity,piddock2017complexity}.
Only a constant number of gadget layers is required, so the regular-simulation composition bounds preserve inverse-polynomial accuracy with polynomial interaction strengths.

The reduction must also preserve the promise of the computational problem, not merely the low-energy spectrum.
If the Berry phase changes by at most $\upsilon$, we reduce the arc margin from $\g$ to $\g'=\g-\upsilon$; if the ground state changes by $\mu=O(\eta+\gamma_\star^{-1}\varepsilon)$, the guiding-state overlap remains at least $\delta_0'=(\sqrt{\delta_0}-\mu)^2$.
The local encoding appends mediator qubits in fixed product states, so the semi-classical encoded subset state remains efficiently preparable from its classical description in the sense of Ref.~\cite{waite2026physically}.
Combining these facts with the \cl{BQP} containment algorithm of Ref.~\cite{hayakawa2025computational} proves that $(2,2,\delta_0)$-\sc{GSBPE} is \clw{BQP}{complete} on square and triangular lattices for every $\delta_0\in[1/\poly{n},1-1/\poly{n}]$; see \cref{thm:main-theorem}.
We then further improve the hardness result by extending it to systems with weighted Heisenberg interactions, while maintaining the lattice embedding constraints, by employing the reduction techniques of Schuch and Verstraete~\cite{schuch2009computational}, adapted to parameterised interactions.

\subparagraph{The One-Local Algorithm.}
Finally, for a $1$-local family we decompose the ground-state projector as $P(s) =\bigotimes_{j=1}^nP_j(s)$ and approximate the Berry phase by the cyclic Bargmann product of the one-qubit projectors in \cref{eq:one-local-bargmann-invariant,eq:one-local-discrete-phase}.
Because the estimator is written entirely in terms of projectors, it is invariant under independent rephasings of the sampled eigenvectors.
By discretising the cyclic path into $N$ segments, we obtain an approximation $\vartheta_N$ to the Berry phase $\vartheta$ with an error (roughly) bounded by $1/N$.
Choosing $N = \poly{n}$ and evaluating the $2\times2$ projectors to logarithmic precision yields a polynomial-time algorithm, which does not require the use of the guiding state or its classical description.
Together with the $2$-local hardness result, \cref{thm:one-local-berry-phase-in-p} establishes the claimed locality transition for \sc{Guided-State Berry Phase Estimation}.
The corresponding locality thresholds for the other Berry phase estimation variants considered in Ref.~\cite{hayakawa2025computational} remain open.

\begin{figure}[!ht]
    \centering
    \begin{tikzpicture}
        \pic[scale=1.2]{localitytransition};
    \end{tikzpicture}
    \caption{Locality transition for the \sc{Guided-State Berry Phase Estimation} problem. The dashed line indicates the threshold between polynomial-time solvable and \clw{BQP}{complete} instances.}
    \label{fig:locality-transition}
\end{figure}

\subsection{Organisation}
\label{sec:organisation}
We collect the proofs of the main results, presented in the following sections, in the appendices.
Each section has its own appendix where the detailed proofs and supporting results are given.
Where appropriate, we provide a proof sketch in the main text to convey the key ideas, though, the majority of the proof details follow from the surrounding context.

The main body of this work is structured as follows.
Section~\ref{sec:preliminaries} provides the necessary background and notation.
Section~\ref{sec:uniform-simulation} introduces the concept of uniform simulation for parameterised Hamiltonian families and defines what it means to compose such simulations.
Section~\ref{sec:sw-transformation} presents the Schrieffer--Wolff transformation and its application to parameterised Hamiltonians, extending the analysis to the parallel setting.
Section~\ref{sec:berry-phase-simulation} discusses the error analysis for simulating the Berry phase using parameterised Hamiltonian families.
Section~\ref{sec:perturbative-gadget-reductions} introduces perturbative gadget reductions and their adaptation to two different families of parameterised Hamiltonians.
Section~\ref{sec:extended-hardness-results} presents extended hardness results for the \sc{Guided-State Berry Phase Estimation} problem.
Section~\ref{sec:berry-phase-estimation-parameterised-1-local} provides the analysis for the Berry phase estimation problem restricted to $1$-local Hamiltonian families.
Section~\ref{sec:conclusion} concludes with a summary of the results and potential directions for future work.

Figure~\ref{fig:logical-flow} gives a schematic overview of the logical flow of this work (excluding Sections~\ref{sec:preliminaries}, \ref{sec:berry-phase-estimation-parameterised-1-local}, and \ref{sec:conclusion}).
Arrows indicate (loosely) the dependencies and connections between sections.

\begin{figure}[!h]
    \centering
    \begin{tikzpicture}
        \pic[xscale=1, yscale=0.875]{logicalflow};
    \end{tikzpicture}
\caption{Logical flow of the sections and key concepts in this work. Arrows indicate dependencies between sections, and ellipses present a short summary of the connection.}
\label{fig:logical-flow}
\end{figure}

%% file: sections/preliminaries.tex
\section{Preliminaries}
\label{sec:preliminaries}
In this section, we fix the operator conventions, Hamiltonian-family assumptions, Berry-phase definition, computational problem, and analytic tools used throughout the paper.

For a finite-dimensional operator $X$, we write $\norm{X}$ for the spectral norm.
For an operator family $X(s)$ with $s \in [0,1]$, we write $\snorm{X} = \sup_{s \in [0,1]} \norm{X(s)}$ for its uniform spectral norm.
The standard norm axioms hold for $\snorm{\cdot}$:
\begin{inparaenum}[(i)]
    \item \emph{non-negativity:} $\snorm{X} \ge 0$ and $\snorm{X} = 0$ if and only if $X(s) = 0$ for all $s \in [0,1]$;
    \item \emph{absolute homogeneity:} $\snorm{\alpha X} = \abs{\alpha} \snorm{X}$ for any scalar $\alpha \in \mathbb{C}$;
    \item \emph{subadditivity:} $\snorm{X + Y} \le \snorm{X} + \snorm{Y}$ for any operators $X$ and $Y$.\footnote{A short proof of this fact is as follows.
    $\snorm{X + Y} = \sup_s \norm{X(s) + Y(s)} \le \sup_s (\norm{X(s)} + \norm{Y(s)}) \leq \snorm{X} + \snorm{Y}$.}
\end{inparaenum}
Note that $\snorm{X} \leq \epsilon > 0$ implies that $\norm{X(s)} \leq \epsilon$ for all $s \in [0,1]$.
We abuse derivative notation and write $\ps X$ for the derivative of $X(s)$ with respect to $s$, and $\ps^k X$ for the $k$-th derivative of $X(s)$ with respect to $s$.
For two operators $X$ and $Y$, we write $X \preceq Y$ if $Y - X$ is positive semidefinite, and $X \prec Y$ if $Y - X$ is positive definite.

For an operator $X$ acting on finitely many qubits, let $\supp{X}$ denote the set of qubits on which $X$ acts non-trivially.
For a collection $\{X_a\}_{a \in [A]}$, let $G$ denote the \emph{overlap graph} with vertex set $[A]$ and an edge between $a$ and $b$ if and only if $\supp{X_a} \cap \supp{X_b} \neq \emptyset$.
A tuple of indices is a \emph{connected cluster} if the corresponding vertices induce a connected subgraph of $G$.

We write $\unit{x}$ for the physical dimension of $x$, with $e$ the unit of energy and $1$ a dimensionless unit.
For any operator $X$, $\unit{\norm{X}} = \unit{X}$.

For a variable $x$, we define $\poly{x} = O(x^k)$ for some fixed integer constant $k \ge 0$; for a set of variables $\{x_i\}_{i \in I}$, we $\poly{\{x_i\}_{i \in I}}$ typically refers to $O\big(\sum_{i \in I} x_i^{k_i}\big)$ for some set of fixed integer constants $\{k_i\}_{i \in I}$.
Taking the maximum over the set of variables, we have $\poly{\{x_i\}_{i \in I}} = O\big(\max_{i \in I} x_i^{k_i}\big)$, i.e., the polynomial scaling is dominated by the largest single term among the variables.

\subsection{Liouville Superoperator}
\label{sec:liouville-superoperator}
We now define the Liouville superoperator and its Moore--Penrose pseudoinverse as well as collect useful properties for later use.

Let $X$ be a Hermitian operator and let $Y$ be a bounded operator on a finite-dimensional Hilbert space $\mathcal{H}$.
We define the Liouville superoperator $\L_X$ associated with $X$ by
\begin{equation}\label{eq:liouville_superoperator}
    \L_X(Y) = \comm{X}{Y} = XY - YX.
\end{equation}
Note that ${\rm ker}(\L_X) = \{Y \in \mathscr{B}(\mathcal{H}) : \comm{X}{Y} = 0\}$, where $\mathscr{B}(\mathcal{H})$ denotes the set of bounded operators on $\mathcal{H}$.
Observe that the identity operator $I$ and $X$ itself both belong to ${\rm ker}(\L_X)$, since $\comm{X}{I} = 0$ and $\comm{X}{X} = 0$.
Moreover, if $O \in \mathscr{B}(\mathcal{H})$ satisfies $\comm{X}{O} = Y$, then so does $O + K$ for any $K \in {\rm ker}(\L_X)$.

Suppose that $X = \sum_{\lambda \in \sigma(X)} \lambda P_\lambda$ is the spectral decomposition of $X$, where $\sigma(X)$ denotes the set of eigenvalues of $X$ and $P_\lambda$ is the projector onto the eigenspace corresponding to $\lambda$.
We denote the Moore--Penrose pseudoinverse~\cite{benisrael2003generalized} of $\L_X$ by $\L_X^{\pattce}$ and define it as
\begin{equation}\label{eq:liouville_pseudoinverse_def}
    \L_X^{\pattce}(Y) = \sum_{\substack{\lambda, \mu \in \sigma(X) \\ \lambda \neq \mu}} \frac{P_\lambda Y P_\mu}{\lambda - \mu}.
\end{equation}
Consequently, ${\rm Im}(\L_X) = \{Y : P_\lambda Y P_\lambda = 0 \text{ for all } \lambda \in \sigma(X)\}$ and $\L_X \L_X^{\pattce}(Y) = Y$ for all $Y \in {\rm Im}(\L_X)$.
We use the Liouville superoperator and its pseudoinverse throughout the Schrieffer--Wolff analysis of Section~\ref{sec:sw-transformation}.
Whenever $\L_X$ is restricted to a subspace on which it is invertible, the restriction of $\L_X^{\pattce}$ coincides with the ordinary inverse on that subspace.

We present two useful lemmas consolidating some of the properties of the Liouville superoperator and its pseudoinverse.

\begin{restatable}{lemma}{liouvilleProperties}\label{lem:liouville-properties}
    Let $\mathcal{H}$ be a finite-dimensional Hilbert space and $A$ be a finite index set.
    Suppose that $\{X_a, Y_a, Z_a\}_{a \in A}$ are sets of bounded operators on $\mathcal{H}$ for each $a \in A$.
    Define $\bs{O} = \sum_{a \in A} O_a$ for $O \in \{X, Y, Z\}$.
    For any $P, Q \in \mathscr{B}(\mathcal{H})$, let $\L_{P}(Q) = \comm{P}{Q}$.
    Then the following statements hold.
    \begin{enumerate}
        \item $\L_{\bs{X}}(\bs{Y}) = \sum_{a,b \in A} \L_{X_a}(Y_b)$;
        \item If $\supp{X_a} \cap \supp{Y_b} = \emptyset$ for $a \neq b$, then $\L_{\bs{X}}(\bs{Y}) = \sum_{a \in A} \L_{X_a}(Y_a)$;
        \item Suppose that $\L_{X_a}(Z_a) = Y_a$ for each $a \in A$ and that $\comm{X_a}{Z_b} = 0$ for all $a \neq b$, then $\L_{\bs{X}}(\bs{Z}) = \bs{Y}$;
        \item Let $X$ be fixed and let $\L_X^{\pattce}$ be the Moore-Penrose pseudoinverse of $\L_X$, then if $Y_a \in {\rm Im}(\L_X)$ for each $a \in A$, we have that $\L_X^{\pattce}\left(\sum_{a \in A} Y_a\right) = \sum_{a \in A} \L_X^{\pattce}(Y_a)$.
    \end{enumerate}
\end{restatable}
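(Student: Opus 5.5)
The four items are elementary consequences of the bilinearity of the commutator and the linearity of the pseudoinverse, and I will verify them in order, each by a one- or two-line computation.

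\textbf{Item 1.} Expand $\L_{\bs{X}}(\bs{Y}) = \comm{\sum_a X_a}{\sum_b Y_b}$. Since the commutator is bilinear, this equals $\sum_{a,b}\comm{X_a}{Y_b} = \sum_{a,b}\L_{X_a}(Y_b)$.

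\textbf{Item 2.} Starting from item 1, I note that operators with disjoint supports act on different tensor factors of the qubit Hilbert space. Hence they commute, and $\L_{X_a}(Y_b)=0$ whenever $a\neq b$. Only the diagonal terms of the double sum survive, which is the claim. Here $\supp{\cdot}$ is taken in the qubit tensor-product sense fixed in the preliminaries. Strictly, this uses operators of the form $X_a\otimes I$ and $I\otimes Y_b$, which commute.

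\textbf{Item 3.} By item 1, $\L_{\bs{X}}(\bs{Z}) = \sum_a \L_{X_a}(Z_a) + \sum_{a\neq b}\comm{X_a}{Z_b}$. The off-diagonal terms vanish by hypothesis, and the diagonal terms give $\sum_a Y_a = \bs{Y}$.

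\textbf{Item 4.} The defining formula \cref{eq:liouville_pseudoinverse_def} is a finite sum of maps $Y\mapsto P_\lambda Y P_\mu/(\lambda-\mu)$. Each of these is linear in $Y$, so $\L_X^{\pattce}$ is linear on all of $\mathscr{B}(\mathcal{H})$, and additivity over the finite sum follows at once. The hypothesis $Y_a\in{\rm Im}(\L_X)$ is not actually needed for additivity. It guarantees additionally that $\sum_a Y_a\in{\rm Im}(\L_X)$, because ${\rm Im}(\L_X)$ is a subspace, so that $\L_X\L_X^{\pattce}$ acts as the identity on both sides. I will remark on this to connect with later use.

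No step is a genuine obstacle. The only point needing care is the justification in item 2 that disjoint support implies commutation, which I will state explicitly via the tensor-product structure.
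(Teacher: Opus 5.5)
Your proposal is correct and follows the paper's proof essentially line for line. You use bilinearity of the commutator for Item 1, the tensor-product factorisation of disjointly supported operators to remove the off-diagonal commutators in Items 2 and 3, and linearity of the defining spectral formula for the pseudoinverse in Item 4. Your remark that the image hypothesis in Item 4 is not needed for additivity is also accurate, and the paper's argument likewise never uses it.
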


\begin{restatable}{lemma}{liouvillePseudoinverseBound}\label{lem:liouville-pseudoinverse-bound}
    Let $A$ be a finite index set and, for each $a\in A$, let $p_a$ and $q_a$ be complementary single-qubit
    projectors. 
    Define $X_a \coloneqq q_a$ for each $a \in A$ and $\bs{X} = \sum_{a\in A}X_a$.
    Let $\bs{P}=\bigotimes_{a\in A}p_a$, $\bs{Q}=I-\bs{P}$, and suppose that $O$ is block-off-diagonal with respect to $\bs{P}\oplus\bs{Q}$, so that $O = \bs{P}O\bs{Q}+\bs{Q}O\bs{P}$.
    Then, for every $\Delta>0$,
    \begin{equation}\label{eq:liouville-pseudoinverse-bound}
        \norm{\L_{\Delta\bs{X}}^{\pattce}(O)} \leq \Delta^{-1}\norm{O}.
    \end{equation}
\end{restatable}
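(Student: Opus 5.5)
The plan is to diagonalise $\bs{X}$ explicitly and read off the pseudoinverse block by block. Each $q_a$ is a single-qubit projector on a distinct qubit, so the $q_a$ commute. Their joint spectral decomposition is indexed by subsets $T\subseteq A$ via $\Pi_T=\bigotimes_{a\in T}q_a\otimes\bigotimes_{a\notin T}p_a$ (identity elsewhere). This gives
\begin{equation*}
    \Delta\bs{X}=\sum_{k=0}^{\abs{A}}\Delta k\,R_k,\qquad R_k=\sum_{\abs{T}=k}\Pi_T .
\end{equation*}
In particular $R_0=\bs{P}$ and $\bs{Q}=\sum_{k\ge1}R_k$. I will plug this into the definition \eqref{eq:liouville_pseudoinverse_def}.

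First, since $O=\bs{P}O\bs{Q}+\bs{Q}O\bs{P}$, every block $R_jOR_k$ with $j,k\ge1$ vanishes, and $R_0OR_0=0$. The only surviving blocks are $R_0OR_k$ and $R_kOR_0$ with $k\ge1$, so
\begin{equation*}
    \L^{\pattce}_{\Delta\bs{X}}(O)=-\bs{P}\,O\,G+G\,O\,\bs{P},\qquad G\coloneqq\sum_{k\ge1}\frac{R_k}{\Delta k}.
\end{equation*}
The operator $G$ is the inverse of $\Delta\bs{X}$ restricted to $\bs{Q}\mathcal{H}$, and it satisfies $0\preceq G\preceq\Delta^{-1}\bs{Q}$, hence $\norm{G}\le\Delta^{-1}$.

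Next, write $B=\bs{P}O\bs{Q}$, so that $\bs{Q}O\bs{P}=B'$, the other off-diagonal block of $O$. The result is then the block-off-diagonal operator $-BG+GB'$. Its two blocks map between the orthogonal subspaces $\bs{Q}\mathcal{H}$ and $\bs{P}\mathcal{H}$ in opposite directions. For such an operator the norm is the maximum of the two block norms, not their sum: $Y^\dagger Y$ is block diagonal with blocks $(BG)^\dagger BG$ and $(GB')^\dagger GB'$. Therefore
\begin{equation*}
    \norm{\L^{\pattce}_{\Delta\bs{X}}(O)}=\max\{\norm{BG},\norm{GB'}\}\le\Delta^{-1}\max\{\norm{B},\norm{B'}\}=\Delta^{-1}\norm{O},
\end{equation*}
where the final equality applies the same block argument to $O$ itself.

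The only step needing care is this block-norm identity, since a naive triangle inequality would lose a factor of $2$. I would verify it via $\norm{Y}^2=\norm{Y^\dagger Y}$. The rest is bookkeeping: confirming that the eigenvalue differences entering the surviving blocks are exactly $\pm\Delta k$ with $k\ge1$, which follows because $\bs{P}$ is precisely the zero eigenspace of $\bs{X}$.
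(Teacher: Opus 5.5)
Your proposal is correct and follows essentially the same route as the paper. Both decompose $\bs{X}$ into excitation-number sectors, write $\L^{\pattce}_{\Delta\bs{X}}(O)=GO\bs{P}-\bs{P}OG$ with $G$ the reduced inverse on $\bs{Q}\mathcal{H}$, and conclude using the fact that a block-off-diagonal operator's norm is the maximum of its two block norms. The only cosmetic difference is that you absorb the factor $\Delta^{-1}$ into $G$, whereas the paper factors it out and bounds $\norm{G}\le1$.
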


\begin{restatable}{proposition}{moorePenrosePseudoinverseBlockDiagonalOperator}\label{prop:moore-penrose-pseudoinverse-bdop}
    Let $X$ be a Hermitian operator defined on a Hilbert space $\mathcal{H} = \mathcal{P} \oplus \mathcal{Q}$.
    Suppose that $X$ is block-diagonal with respect to this decomposition, so that $X = PXP + QXQ$ with $P$ and $Q$ the projectors onto $\mathcal{P}$ and $\mathcal{Q}$, respectively. 
    Assume further that $PXP = 0$ and $QXQ \succeq Q$.
    Define the operator 
    \begin{equation}
        \Sigma = Q \big( X\bigr|_{{\rm Im}(Q)} \big)^{-1} Q.
    \end{equation}
    Then $\Sigma$ is the Moore--Penrose pseudoinverse of $X$ on $\mathcal{H}$.
\end{restatable}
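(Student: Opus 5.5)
The plan is to check the four Penrose conditions for the pair $(X,\Sigma)$ directly, after writing $X = QXQ$. This form follows from block-diagonality together with $PXP = 0$.

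First, let $X_Q = X|_{{\rm Im}(Q)}$ denote the restriction of $X$ to $\mathcal{Q}$. It is Hermitian on $\mathcal{Q}$. The hypothesis $QXQ \succeq Q$ says that $\langle v, X_Q v\rangle \geq \norm{v}^2$ for every $v \in \mathcal{Q}$, so every eigenvalue of $X_Q$ is at least $1$. Hence $X_Q$ is invertible on $\mathcal{Q}$, with $\norm{X_Q^{-1}} \leq 1$, and $\Sigma = QX_Q^{-1}Q$ is well defined. Here $\Sigma$ means $X_Q^{-1}$ extended by zero on $\mathcal{P}$.

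The key identities are $X\Sigma = \Sigma X = Q$. To see this, note that $X$ maps $\mathcal{Q}$ into $\mathcal{Q}$ and kills $\mathcal{P}$, since $XP = PXP + QXQP = 0$. Therefore $X\Sigma = QX_QX_Q^{-1}Q = Q$, and similarly $\Sigma X = QX_Q^{-1}X_QQ = Q$.

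The Penrose conditions then follow at once:
\begin{itemize}
    \item $X\Sigma X = QX = X$;
    \item $\Sigma X\Sigma = Q\Sigma = \Sigma$;
    \item $(X\Sigma)^\dagger = Q^\dagger = Q = X\Sigma$, and likewise $(\Sigma X)^\dagger = \Sigma X$.
\end{itemize}
Uniqueness of the Moore--Penrose pseudoinverse then gives the claim.

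As a cross-check, one can compare with the spectral form used in \cref{eq:liouville_pseudoinverse_def}. The spectrum of $X$ is $\{0\}$ on $\mathcal{P}$ together with eigenvalues $\geq 1$ on $\mathcal{Q}$. The pseudoinverse inverts the nonzero eigenvalues and sends the kernel to zero, which is exactly $\Sigma$. One caveat: if $X_Q$ had zero eigenvalues, the kernel of $X$ would be larger than $\mathcal{P}$; the hypothesis $QXQ \succeq Q$ rules this out.

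There is no real obstacle in this argument. The only point needing care is justifying invertibility of the restriction from $QXQ \succeq Q$, and being precise that $Q(\cdot)^{-1}Q$ denotes the inverse on $\mathcal{Q}$ extended by zero.
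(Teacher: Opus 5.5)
Your proof is correct and follows the paper's argument. Both write $X = 0\oplus X_Q$ and $\Sigma = 0\oplus X_Q^{-1}$ and check the four Penrose conditions directly; your identities $X\Sigma = \Sigma X = Q$ are exactly the paper's products $0\oplus I$. You also make explicit two points the paper leaves implicit: that $QXQ\succeq Q$ forces $X_Q$ to be invertible, and that uniqueness of the pseudoinverse completes the argument. One small citation slip: your spectral cross-check refers to \cref{eq:liouville_pseudoinverse_def}, but that equation is the pseudoinverse of the superoperator $\L_X$, not of $X$ itself.
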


Proofs of these results are given in Appendix~\ref{app:preliminaries-proofs}.

\subsection{Families of Parameterised Hamiltonians}
\label{sec:families-of-parameterised-hamiltonians}
We now fix the parameter domain and the regularity, closure, and gap assumptions imposed on every Hamiltonian family.
We identify the control manifold with $S^1 \cong \mathbb{R}/\mathbb{Z}$ and parameterise one traversal of the circle by $s\in[0,1]$, where the endpoints represent the same point. 
Accordingly, every Hamiltonian family considered below is \emph{closed}, with $H(0) = H(1)$.

Let $\Upsilon \coloneqq \{H(s)\}_{s\in [0,1]}$ be a family of $k$-local Hamiltonians on $n$ qubits, where $H\in{\rm C}^2\bigl([0,1];{\rm Herm}(\mathcal{H})\bigr)$ with $\mathcal{H}=(\mathbb{C}^2)^{\otimes n}$.
For each $s\in [0,1]$, suppose that
\begin{equation}
    H(s) = \sum_{j=1}^{m} h_j(s),
\end{equation}
where $m = \poly{n}$ and each $h_j(s)$ acts non-trivially on at most $k$ qubits.

For each $s\in [0,1]$, let $\{\lambda_j(s),\ket{\phi_j(s)}\}_{j=0}^{2^n-1}$ be an eigensystem of $H(s)$, ordered such that $\lambda_0(s)<\lambda_1(s)\leq\cdots\leq\lambda_{2^n-1}(s)$.
We assume that the ground state is non-degenerate for every $s$ and define the spectral gap by $\gamma(s)\coloneqq\lambda_1(s)-\lambda_0(s)$ with $\gamma_\star\coloneqq\inf_{s\in [0,1]}\gamma(s)$.

We call a family \emph{poly-gapped} if $\gamma_\star\geq1/\poly{n}$ and throughout we assume all families are uniformly poly-gapped, unless stated otherwise.
We further assume the uniform bounds $\snorm{\partial_s^r H}\leq\poly{n}$ for $r=0,1,2$.

Since $H$ is ${\rm C}^2$, the ground state is non-degenerate, and the gap remains open along the path, the ground-state projector $P(s) \coloneqq \ketbra{\phi_0(s)}$ is a ${\rm C}^2$ family. 
Moreover, $H(0) = H(1)$ and uniqueness of the ground state imply that $P(0) = P(1)$. 
A normalised ground-state eigenvector may be chosen ${\rm C}^2$ along $[0,1]$, but its phase need not return to its initial value. 
In general, there exists some $\alpha\in\mathbb{R}$ such that $\ket{\phi_0(1)}=\e^{\i\alpha}\ket{\phi_0(0)}$.

We denote a perturbed family by $\wt{\Upsilon} \coloneqq \{\wt{H}(s)\}_{s\in [0,1]}$ acting on a Hilbert space $\wt{\mathcal{H}}$, with eigensystem $\{\wt{\lambda}_j(s),\ket{\wt{\phi}_j(s)}\}_j$.
Unless stated otherwise, $\wt{\Upsilon}$ satisfies the analogous regularity, closure and spectral assumptions.

\subsection{Berry Phase Definition}
\label{sec:berry-phase-definition}
The Berry phase $\vartheta$ is the geometric phase acquired by $\ket{\phi_0(s)}$ as $s$ traverses a closed loop $C$ adiabatically~\cite{pancharatnam1956generalized, berry1984quantal}; see \cref{fig:berry-phase} for an illustration.
Choose a normalised ${\rm C}^1$ ground-state section satisfying the periodic-gauge condition $\ket{\phi_0(1)}=\ket{\phi_0(0)}$.
Such a section always exists for a non-degenerate closed path.
Then
\begin{equation}\label{eq:berry-phase}
    \vartheta = \int_0^1 \dd{s}\, \i\mel{\phi_0(s)}{\partial_s}{\phi_0(s)} \pmod{2\pi} \in \mathbb{R}/2\pi\mathbb{Z},
\end{equation}
defines the Berry phase $\vartheta$.
The integrand of \cref{eq:berry-phase} defines the \emph{Berry connection} $A(s)$; we take the representative $\vartheta \in [0, 2\pi)$.
For a non-periodic section one must add the endpoint term $\arg\braket{\phi_0(0)}{\phi_0(1)}$ to the integral.
Equivalently, the phase is the holonomy of Kato parallel transport, as proved in \cref{prop:berry-holonomy}.
The periodic-gauge expression \cref{eq:berry-phase} is invariant modulo $2\pi$ under every periodic gauge change (\cref{prop:berry-phase-gauge-invariance} in Appendix~\ref{app:preliminaries-proofs}).

The computation of $\vartheta$ rests on the adiabatic theorem, which we briefly recall.
A system initialised in the ground state of $H(0)$ remains close to the instantaneous ground state of a slowly varying $H(s)$, provided the gap $\gamma(s)$ is bounded away from zero~\cite{jansen2007bounds}.
Writing $U(T)$ for the evolution generated by $\dv{t}U(t) = -\i H(t/T)U(t)$ over total time $T$, the adiabatic error is $\norm{U(T)\ket{\phi_0(0)} - \ket{\phi_0(1)}} \le \varepsilon$.
Ref.~\cite{ambainis2006elementary} shows that if the first two derivatives of $H(s)$ and the inverse minimum gap are $\poly{n}$, then some $T = \poly{n, 1/\varepsilon}$ achieves adiabatic error at most $\varepsilon$.

\begin{figure}[!ht]
    \centering
    \begin{tikzpicture}
        \pic[scale=1.75]{berryloop};
    \end{tikzpicture}
    \caption{
        A visualisation of the Berry phase $\vartheta$ acquired by the ground state $\ket{\phi_0(s)}$ as $s$ is varied adiabatically from $0$ to $1$ along a closed loop $C$. 
        The arrows represent a parallel-transported ground-state vector, with the dynamical phase removed, at various points along the path.
        Since the projector path is closed, the initial and final transported vectors differ by the holonomy angle $\vartheta$.
    }
    \label{fig:berry-phase}    
\end{figure}

\subsection{Computational Complexity}
\label{sec:computational-complexity}
We consider only \emph{promise-problem} variants throughout and drop the ``promise'' prefix; a notion of hardness or completeness for the classes we consider is therefore understood in the promise sense.
We identify a circuit with the partial Boolean function it induces: for a family $\{A_n\}_{n\in\mathbb{N}}$ and input $x\in\B^*$, $A_{\abs{x}}(x)$ is the random variable obtained by running $A_{\abs{x}}$ on the standard initialisation determined by $x$ and measuring the output qubit, and $\Pr[A_{\abs{x}}(x) = \mathtt{1}]$ its probability of returning $\mathtt{1}$.
We set $n = \abs{x}$ and suppress the ancilla initialisation unless needed.
A family is \emph{polynomial-time generated} (uniform) if a deterministic polynomial-time Turing machine outputs a description of $A_n$ on input $1^n$.

\begin{definition}[\cl{BQP}]\label{def:bqp}
    Let $L = (L_{\sc{yes}}, L_{\sc{no}})$ be a promise problem and $a, b : \mathbb{N}\to[0,1]$.
    Then $L \in \cl{BQP}(a,b)$ if there is a polynomial-time generated quantum circuit family $Q = \{Q_n\}$ on $n + \poly{n}$ input qubits with one output qubit such that $\Pr[Q_n(x) = \mathtt{1}] \ge a(n)$ for $x\in L_{\sc{yes}}$ and $\Pr[Q_n(x) = \mathtt{1}] \le b(n)$ for $x\in L_{\sc{no}}$.
\end{definition}

We set $\cl{BQP} \coloneqq \cl{BQP}(2/3, 1/3)$; by repetition and majority voting, $\cl{BQP} = \cl{BQP}(1 - 2^{-q}, 2^{-q})$ for any polynomially bounded $q\ge 2$.
A problem is \clw{BQP}{hard} if every \cl{BQP} problem reduces to it under a polynomial-time (Karp) reduction, and \clw{BQP}{complete} if additionally in \cl{BQP}.
We take all \cl{BQP} circuits to be built from $2$-local gates in a finite universal set, e.g.\ $\{I, \Gate{Had}, T, \Gate{Cnot}\}$; the identity $I$ accommodates the pre-idle sequences of the history-state construction, and where it is omitted these are implemented by trivial-action sequences such as $\Gate{Had}^2$ or $T^8$~\cite{waite2026physically}.

\subsection{Problem Statement}
\label{sec:problem-statement}
We introduce the notion of a \emph{presentation} for a single local Hamiltonian which provides a classical description of the Hamiltonian.

\begin{definition}[Presentation of a Local Hamiltonian]
    Let $H = \sum_{j \in [m]} h_j$ be a $k$-local Hamiltonian on $n$ qubits, where each $h_j$ acts non-trivially on at most $k$ qubits.
    A \emph{presentation} of $H$ is a classical description of the Hamiltonian comprising a list of the $m$ local terms $h_j$ with their supports ${\sf S}_j \coloneqq \supp{h_j}$ and a specification of a qubit ordering on $\mathcal{H}$.
    Each list element is a pair $({\sf S}_j, \langle h_j \rangle)$, where ${\sf S}_j \subseteq [n]$ is a list of the qubits on which $h_j$ acts non-trivially, and the description $\langle h_j \rangle$ of $h_j$ constitutes a classical description of the $2^k \times 2^k$ matrix representing $h_j$ in the computational basis.
    Each component of the description $\langle h_j \rangle$ is represented using $\poly{n}$ classical bits in a known encoding.
\end{definition}

Given this description and indices $i,j \in \{0,1\}^n$, a polynomial-time classical algorithm can output a $\poly{n}$-bit representation of the matrix element $\mel{i}{H}{j}$~\cite{waite2026thesis}.
A $k$-local Hamiltonian is also $O(n^k)$-sparse in the computational basis~\cite{waite2026thesis} and is therefore row-sparse and row-computable, and hence simulable in the sense of Ref.~\cite{aharonov2003adiabatic}.
We extend the notion of a presentation to a family of parameterised Hamiltonians via access to a query model.
Specifically, on input $s\in[0,1]$ specified to $\poly{n}$ bits of precision, a presentation of $\Upsilon = \{H(s)\}_{s\in[0,1]}$ outputs a presentation of $H(s)$.
It also records the promised closure, gappedness, and ${\rm C}^r$ regularity of the family and supplies polynomial bounds on the first $r$ derivatives with respect to $s$.
We denote the presentation by $\mathcal{P}$.

We now formally define the main computational problem that we study in this work.

\problemdef
{\sc{$(r, k, \delta_0)$-Guided-State Berry Phase Estimation}} % name
{
A presentation $\mathcal{P}$ of a closed, poly-gapped and ${\rm C}^r$ family of $k$-local Hamiltonians $\Upsilon = \{H(s)\}_{s\in[0,1]}$ on $n$ qubits.
A separation parameter $\g\in(0,\pi)$ with $\g=1/\poly{n}$ specified in binary using $\poly{n}$ bits.
Two reals $a,b\in[0,2\pi)$, specified using $\poly{n}$ bits, that define the \emph{inner} arc $I_{\tn{in}}=[a+\g,\,b-\g]_{2\pi}$ and the \emph{outer} arc $I_{\tn{out}}=[b+\g,\,a-\g]_{2\pi}$, which are disjoint and separated by $2\g$, both specified using $\poly{n}$ bits.
An overlap parameter $\delta_0$ with $1/\poly{n}\le\delta_0<1$ specified in binary using $\poly{n}$ bits, and a polynomial-size classical guiding-state description $C_\xi$.
} % instance
{
Decide whether $\Upsilon$ has a Berry phase $\vartheta$ within the inner arc $I_{\tn{in}}$ (\sc{yes} case), or whether $\Upsilon$ has a Berry phase $\vartheta$ within the outer arc $I_{\tn{out}}$ (\sc{no} case), promised that one of these is the case and that the guiding state satisfies $|\braket{\phi_0(0)}{\xi}|^2\geq\delta_0$.
} % problem

We abbreviate the problem as $(r,k,\delta_0)$-\sc{GSBPE}.

We follow the requirements for guiding-state descriptions given in Ref.~\cite{waite2026physically} and briefly summarise the relevant points.
For the \cl{BQP} containment result of Ref.~\cite{hayakawa2025computational}, the classical description $C_\xi$ must support:
\begin{inparaenum}[(a)]
    \item uniform, polynomial-time preparation of the associated quantum state $\ket{\xi}$, and \label{item:uniform-preparation}
    \item polynomial-time classical compilation of the quantum circuit that prepares $\ket{\xi}$ from the classical description $C_\xi$.\footnote{Note that polynomial-time classical compilation implies a polynomial-space complexity as well.}\label{item:efficient-compilation}
\end{inparaenum}
The \clw{BQP}{hardness} construction of Ref.~\cite{hayakawa2025computational} can be constructed to use the so-called \emph{semi-classical subset states}~\cite{gharibian2023dequantizing} as guiding states.
These states are special cases of a related family known as semi-classical \emph{encoded} subset states~\cite{cade2023improved,waite2025guided}, which admit classical descriptions satisfying both Item~\ref{item:uniform-preparation} and Item~\ref{item:efficient-compilation}~\cite{waite2026physically}.
We present a formal definition of semi-classical encoded subset states below, which we will use in our hardness results.

\begin{definition}[Semi-Classical Encoded Subset States~\cite{cade2023improved,waite2025guided}]
    \label{def:semi-classical-encoded-subset-state}
    For any nonempty subset $S \subset \B^n$ such that $\abs{S} = O(\poly{n})$, let $\V = (V_j)_{j\in[n]}$ be an ordered set of isometries where, for each $j$ we have $V_j : \mathbb{C}^2 \to (\mathbb{C}^2)^{\otimes m_j}$ with $m_j = O(1)$.
    The semi-classical encoded subset state $\ket{S_{\V}}$ over $(S,\V)$ is defined as
    \begin{equation}
        \ket{S_{\V}} \coloneqq \frac{1}{\sqrt{\abs{S}}} \sum_{x \in S} \bigotimes_{j\in[n]}\, V_j\ket{x_j}.
    \end{equation}
    The components $\ket{x} = \bigotimes_{j\in[n]}\, \ket{x_j}$ are the standard basis states.
\end{definition}

A semi-classical subset state is a special case of a semi-classical encoded subset state where each $V_j$ is the identity operator.
A sufficient classical description of a semi-classical encoded subset state is a classical description of the subset $S$ and the isometries $\V$.
Given a secondary set of isometries $\mathcal{W}$ satisfying the same conditions as $\V$, we can update the semi-classical encoded subset state in classical polynomial time by composing the appropriate isometries; see Refs.~\cite{cade2023improved,waite2025guided,waite2026physically} for further details.
Throughout this work, when we reference a guiding state being a semi-classical encoded subset state, we implicitly assume that the classical description of the guiding state includes both the subset $S$ and the isometries $\V$, satisfying Items~\ref{item:uniform-preparation} and~\ref{item:efficient-compilation}.

We also note that the problem is monotone in all three parameters.
For $r'>r$, ${\rm C}^{r'}\subseteq {\rm C}^r$ gives \sc{$(r',k,\delta_0)$-GSBPE} $\subseteq$ \sc{$(r,k,\delta_0)$-GSBPE}; for $k'\leq k$, \sc{$(r,k',\delta_0)$-GSBPE} $\subseteq$ \sc{$(r,k,\delta_0)$-GSBPE}; and for $\delta_0'\geq\delta_0$, \sc{$(r,k,\delta_0')$-GSBPE} $\subseteq$ \sc{$(r,k,\delta_0)$-GSBPE}.
Ref.~\cite{hayakawa2025computational} gives a \cl{BQP} verification algorithm for \sc{GSBPE} assuming only $k$-locality and a guiding state satisfying Items~\ref{item:uniform-preparation} and~\ref{item:efficient-compilation}.
Containment in \cl{BQP} therefore holds for every family considered here.
Details of this algorithm can be found in Ref.~\cite{hayakawa2025computational}; see also Ref.~\cite{waite2026physically} for details on state preparation from a classical description of a semi-classical encoded subset state.
In \cref{sec:berry-phase-estimation-parameterised-1-local} we give an explicit ``guiding-state-free'' algorithm for one-local families, placing $(2,1,\delta_0)$-\sc{GSBPE} in \cl{P} for all $\delta_0$.
For formality and completeness, we recall the hardness result of Ref.~\cite{hayakawa2025computational}.

\begin{restatable}[Ref.~\cite{hayakawa2025computational}]{theorem}{bqphardness}\label{thm:hayakawa-bqp-hardness}
    The $(2, 5, \delta_0)$-\sc{GSBPE} problem is \clw{BQP}{hard} for any $\delta_0 \in [1/\poly{n}, 1 - 1/\poly{n}]$ and when the guiding state is a semi-classical encoded subset state.
\end{restatable}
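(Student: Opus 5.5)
This is a recalled result of Ref.~\cite{hayakawa2025computational}, so the plan is to reconstruct their reduction in the present notation and check that it meets the stated parameter regime. Start from an arbitrary $\cl{BQP}$ promise problem, amplified so the circuit $Q_n$ has acceptance probability at least $1-2^{-q}$ or at most $2^{-q}$. Pad it with idle steps, as permitted by the gate-set convention of \cref{sec:computational-complexity}. Then apply the Feynman--Kitaev construction with a unary clock. Locality is at most $5$: two-qubit gates plus three clock qubits per propagation term. This gives a static bulk $H$ whose unique ground state is the history state, with spectral gap $1/\poly{n}$ by the standard Kitaev analysis, after adding small input and clock penalties.

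The parameterised family is $H_\zeta(s)=H+\zeta\w(s)$ with $\w(s)$ as in \cref{eq:driving-pauli-decomposition-intro}. On the final clock step, $\w(s)$ acts as a rotating transverse field $\cos(2\pi s)X_q+\sin(2\pi s)Y_q$ on the output qubit. Its restriction to the history-state ground space is therefore a spin-$\frac12$ in a field rotating around the equator. The family is closed, smooth (hence ${\rm C}^2$), and has $\poly{n}$-bounded derivatives. Choose $\zeta=1/\poly{n}$ small compared with the gap. First-order perturbation theory, made rigorous by a Schrieffer--Wolff or resolvent bound, then shows that the ground state stays gapped and is close to the history state with its output-qubit component aligned to the rotating field.

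The key computation is the Berry phase itself. The weight on the final clock step is $1/(T+1)$, and the output qubit is $\ket{1}$ with amplitude $p_1=\Pr[Q_n(x)=\mathtt 1]$. The effective dynamics reduce to a two-level loop whose solid angle depends on $p_1$. For the \sc{yes} case, $p_1\approx 1$ and the phase is approximately $\pi$ (or a fixed value $\theta_{\sc{yes}}$); for the \sc{no} case it is approximately $0$. The two cases are separated by a constant up to $1/\poly{n}$ corrections. Choosing $a,b$ around these values with $\g=1/\poly{n}$ puts the promise in the problem's form.

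For the guiding state, take a semi-classical subset state: a uniform superposition over clock-valid basis states at time $0$, or a few clock times, with the output qubit in $\ket{+}$. Its overlap with the perturbed ground state is $\Omega(1/T)$, which is tunable by padding anywhere in $[1/\poly{n},1-1/\poly{n}]$. To reach overlaps near $1$, add more idle steps, weighting the guiding state's support. I expect the main obstacle to be showing rigorously that the perturbative Berry-phase error is $1/\poly{n}$. I would control it with projector-based holonomy bounds rather than eigenvector phases.
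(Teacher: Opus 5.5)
The paper does not prove this statement: it imports it from Ref.~\cite{hayakawa2025computational}. The only in-paper elaboration is the form $H_\zeta(s)=H+\zeta\w(s)$ with $\w$ as in \cref{eq:driving-pauli-decomposition-intro}, and the remark that pre-idled history states allow semi-classical subset states as guiding states. Your reconstruction has a genuine gap at its central step.

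You take $H$ with a unique history-state ground state $\ket{\eta}$ of gap $\gamma=1/\poly{n}$, and you take $\zeta\ll\gamma$. The ground space is then one-dimensional. So ``the restriction of $\w(s)$ to the history-state ground space'' is just the scalar $\mel{\eta}{\w(s)}{\eta}$, not a spin-$\frac12$ Hamiltonian. At the final clock step the output qubit is generically entangled with the rest of the register, and $X_q$, $Y_q$ map $\ket{\eta}$ largely out of the kernel of the propagation and input terms. It therefore cannot ``align with the rotating field''.

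Quantitatively, the perturbed ground state stays within $O(\zeta/\gamma)$ of the $s$-independent $\ket{\eta}$ and moves with speed $O(\zeta/\gamma)$. Its Berry phase is therefore $O((\zeta/\gamma)^2)$ modulo $2\pi$ in both the \sc{yes} and \sc{no} cases, so ``$\approx\pi$ versus $\approx0$'' is impossible in the regime you chose. Even granting an effective spin in an equatorial rotating field, the enclosed solid angle is $2\pi$ whatever $p_1$ is (and $p_1$ is a probability, not an amplitude). The dependence on the acceptance probability, which is the whole content of the hardness claim, is asserted rather than derived.

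In this regime $p_1$ can only enter at second order in $\zeta$. Write $\w(s)=\e^{2\pi\i s}\sigma^+_q\ketbra{1}_c+\e^{-2\pi\i s}\sigma^-_q\ketbra{1}_c$, and let $G$ be the reduced resolvent of $H$ at its ground energy. The perturbed ground state is $\ket{\eta}-\zeta(\e^{2\pi\i s}\ket{a}+\e^{-2\pi\i s}\ket{b})+O(\zeta^2/\gamma^2)$, with $\ket{a}=G\sigma^+_q\ketbra{1}_c\ket{\eta}$ and $\ket{b}=G\sigma^-_q\ketbra{1}_c\ket{\eta}$. The connection then integrates to $\vartheta\approx-2\pi\zeta^2(\norm{\ket{a}}^2-\norm{\ket{b}}^2)$. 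The two branches carry weights $p_0/(T+1)$ and $p_1/(T+1)$, so the \sc{yes}/\sc{no} separation is inverse-polynomial (at most of order $\zeta^2/(T\gamma^2)$), not constant. This is why the problem only requires $\g=1/\poly{n}$. A proof must also show that the resolvent weights of the two branches (not automatically equal) and the higher-order remainder are controlled well enough for the sign of $p_0-p_1$ to survive.

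Your guiding-state paragraph also needs repair. Components of $\ket{\eta}$ at non-idle clock times are not computational basis states, and putting the output qubit in $\ket{+}$ only halves the overlap. The standard route is:
\begin{itemize}
\item prefix $L$ identity gates;
\item take the uniform superposition of the basis states $\ket{x,0}\ket{t}_{\rm clock}$ for $t\le L$, which has overlap $(L+1)/(T+1)$ with $\ket{\eta}$;
\item tune $L$ to reach $1-1/\poly{n}$, with monotonicity in $\delta_0$ covering smaller values;
\item check that the $O(\zeta/\gamma)$ deformation of the ground state at $s=0$ is smaller than the remaining slack.
\end{itemize}
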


\subsection{Analytic Preliminaries}
\label{sec:analytic-preliminaries}
We collect the few notions from complex analysis used in the spectral-projector estimates. 
Throughout, a contour is a piecewise-${\rm C}^1$ closed curve in $\mathbb{C}$, oriented positively (counterclockwise), and $\abs{\mathscr{C}}$ denotes its arc length.

\subsubsection{Holomorphic functions and Cauchy's theorem}
A function $f:\mathcal{O}\to\mathbb{C}$ on an open set $\mathcal{O}\subseteq\mathbb{C}$ is \emph{holomorphic} (equivalently \emph{analytic}) if it is complex-differentiable at every point of $\mathcal{O}$. 
\emph{Cauchy's theorem} states that if $f$ is holomorphic on and inside a contour $\mathscr{C}$, then $\oint_\mathscr{C} f(z)\,\dd z = 0$.
A point $z_0$ where $f$ fails to be holomorphic may be a \emph{simple pole}, where $f(z)=\frac{c}{z-z_0}+(\text{holomorphic})$; its \emph{residue} is $c$.
The residue theorem gives $\frac{1}{2\pi\i}\oint_\mathscr{C} f(z)\,\dd z = \sum_{z_0 \text{ inside } \mathscr{C}} c$, where the sum ranges over the poles enclosed by $\mathscr{C}$.
In particular,
\begin{equation}
    \frac{1}{2\pi\i}\oint_\mathscr{C} \frac{\dd{z}}{z-\lambda} = 
    \begin{cases}
        1 & \lambda \text{ inside } \mathscr{C},\\ 
        0 & \lambda \text{ outside } \mathscr{C},
    \end{cases}
\end{equation}
which is the only instance we use: a pole inside $\mathscr{C}$ integrates to $1$, while an integrand holomorphic inside $\mathscr{C}$ (no enclosed pole) integrates to $0$.

\subsubsection{Resolvents and the contour estimate}
For a Hermitian operator $X$ with spectrum $\sigma(X)$, the \emph{resolvent} $R_X(z)=(z-X)^{-1}$ is defined for $z\notin\sigma(X)$ and satisfies $\|R_X(z)\| = {\rm dist}(z,\sigma(X))^{-1}$, where ${\rm dist}(z,S)=\inf_{\lambda\in S}\abs{z-\lambda}$. 
We repeatedly use the \emph{contour (ML) estimate}, that is, for any operator-valued $F$ continuous on $\mathscr{C}$,
\begin{equation}
    \left\|\frac{1}{2\pi\i}\oint_\mathscr{C} F(z)\,\dd z\right\| \leq \frac{\abs{\mathscr{C}}}{2\pi}\,\sup_{z\in\mathscr{C}}\|F(z)\|.
\end{equation}

\subsubsection{Continuity of the spectrum}\label{sec:continuity-of-spectrum}
The eigenvalues of a Hermitian operator depend continuously on the operator (in operator norm). 
Hence, if $s\mapsto H(s)$ is continuous and a contour $\mathscr{C}$ encloses exactly the eigenvalue $\lambda_0(s_0)$ of $H(s_0)$ and no other, then $\mathscr{C}$ continues to enclose exactly $\lambda_0(s)$ for all $s$ in a neighbourhood of $s_0$; we call such a contour \emph{admissible}. 
Since the Riesz projector $\frac{1}{2\pi\i}\oint_\mathscr{C} \dd{z}\, R(z,s)$ is independent of the admissible contour, we may hold $\mathscr{C}$ fixed on a neighbourhood of $s_0$.
We may then differentiate under the integral sign without a boundary contribution: $\partial_s\oint_\mathscr{C} \dd{z}\, R(z,s) = \oint_\mathscr{C} \dd{z}\, \ps R(z,s)$.
Differentiating $R (z - H) = I$ gives $\ps R(z,s) = R(z,s)\,(\ps H(s))\,R(z,s)$.

\subsubsection{Unimodular functions and continuous phase lifts}
A complex number is \emph{unimodular} if it has modulus $1$. 
A ${\rm C}^1$, nowhere-vanishing function $c:[0,1]\to\mathbb{C}$ admits a ${\rm C}^1$ \emph{phase lift}, i.e., a real-valued ${\rm C}^1$ function $\beta$ with $c(s)=\abs{c(s)}\e^{\i\beta(s)}$, unique up to an additive constant in $2\pi\mathbb{Z}$.
When $\abs{c}\equiv 1$ this reduces to $c(s)=\e^{\i\beta(s)}$.
 
\subsubsection{Lipschitz continuity}
A function $f:\mathcal{O}\to\mathbb{C}$ is \emph{Lipschitz continuous} if there exists a constant $L$ such that $\abs{f(z)-f(z')}\leq L\abs{z-z'}$ for all $z,z'\in\mathcal{O}$; the smallest such $L$ is the \emph{Lipschitz constant} of $f$. A Lipschitz continuous function is uniformly continuous, and in particular bounded on compact sets. 
If $f$ is holomorphic on a convex set $\mathcal{O}$ and $\sup_{z\in\mathcal{O}}\abs{f'(z)}\leq M$, then $f$ is Lipschitz continuous with Lipschitz constant at most $M$.
If the local terms of $\Upsilon$ have bounded derivative norms, then $H(s)$ is Lipschitz continuous with constant at most the sum of those bounds.
These analytic conventions will be used in the spectral-projector and Berry-phase estimates of Section~\ref{sec:berry-phase-simulation}.

\subsubsection{Weierstrass M-test}
The Weierstrass M-test provides a criterion for uniform convergence of series of functions. 
Suppose $\{f_j\}_{j\geq 1}$ is a sequence of functions defined on a set $\mathcal{O}$ and there exists a sequence of non-negative numbers $\{M_j\}_{j\geq 1}$ such that $\abs{f_j(z)} \leq M_j$ for all $z \in \mathcal{O}$ and all $j \geq 1$. 
If $\sum_{j\geq 1} M_j$ converges, then the series $\sum_{j\geq 1} f_j(z)$ converges uniformly on $\mathcal{O}$.

%% file: sections/uniform-simulation.tex
\section{Uniform Simulation of Parameterised Hamiltonian Families}
\label{sec:uniform-simulation}
In this section, we extend the static simulation framework of Bravyi and Hastings~\cite{bravyi2016complexity} to parameterised Hamiltonian families and prove that the resulting simulations compose.
The definition uses a fixed encoding isometry $\V$ for the full path and a parameter-dependent witness family of isometries $\{\U(s)\}_{s \in [0,1]}$.
Controlling the derivative of this witness family will later control the Berry-phase simulation error.

\subsection{Definitions and Basic Properties}
\label{sec:definitions-and-basic-properties}
Let $\Upsilon = \{H(s)\}_{s\in[0,1]}$ be a ${\rm C}^2$, poly-gapped, closed target family on a Hilbert space $\Hil$ of dimension $2^n$, and let $\wt{\Upsilon} = \{\wt{H}(s)\}_{s\in[0,1]}$ be a simulator family on $\wt{\Hil}$ of dimension $2^m$, where $m \ge n$.
Fix an isometry $\V : \Hil \to \wt{\Hil}$ embedding $\Hil$ into a $2^n$-dimensional subspace of $\wt{\Hil}$.
For each $s$, suppose the spectrum of $\wt{H}(s)$ splits into a low-energy band of the first $2^n$ eigenvalues and a high-energy band, with $\mathcal{E}_{2^n}(\wt{H}(s)) \subseteq \wt{\Hil}$ the low-energy subspace and $\Delta(s)$ the separating gap.
We call $\Delta_\star \coloneqq \inf_{s\in[0,1]}\Delta(s)$ the \emph{band gap} of the simulator path and assume $\Delta_\star > 0$.

We now define the notion of a uniform simulation of a parameterised family of Hamiltonians we use throughout this work.

\begin{definition}[Uniform simulation]\label{def:uniform-simulation}
    We say $(\wt{\Upsilon}, \V)$ \emph{uniformly simulates} $\Upsilon$ with error $(\eta, \varepsilon)$ if there is a family of ${\rm C}^2$ isometries $\Phi = \{\U(s)\}_{s\in[0,1]}$ such that, for every $s\in[0,1]$,
    \begin{enumerate}[leftmargin=1.25cm, label=\textup{(C\arabic*)}]
        \item the image of $\U(s)$ equals $\mathcal{E}_{2^n}(\wt{H}(s))$; \label{item:c1}
        \item $\norm{H(s) - \U^\dagger(s) \wt{H}(s) \U(s)} \le \varepsilon$; \label{item:c2}
        \item $\norm{\U(s) - \V} \le \eta$. \label{item:c3} \qedhere
    \end{enumerate}
\end{definition}

We write $E(s) \coloneqq H(s) - \U^\dagger(s)\wt{H}(s)\U(s)$ for the simulation error operator, and therefore $\snorm{E} \le \varepsilon$.
Also, note that $\unit{\eta} = 1$ and $\unit{\varepsilon} = e$.

The term \emph{uniform} emphasises that $\V$ and $(\eta, \varepsilon)$ are independent of $s$. 
The isometries $\U(s)$ are taken to be ${\rm C}^2$ for technical reasons related to the control of the Berry phase.
We say the simulation is \emph{witnessed by} $\Phi$ to emphasise that existence of a suitable collection $\{\U(s)\}$ is part of the definition, and we stress that the properties of the witness are not inherited by the fixed encoding $\V$.\footnote{Here, `witness' denotes the family $\{\U(s)\}$ satisfying the simulation conditions, not a witness in the sense of Merlin--Arthur complexity theory.}

The witness $\Phi$ need not be unique because Conditions~\ref{item:c1}--\ref{item:c3} leave gauge freedom within the low-energy band.
The Berry-phase analysis of Section~\ref{sec:berry-phase-simulation} requires a closed witness whose derivative is controlled along the path.
For $\Phi = \{\U(s)\}_{s\in[0,1]}$, define its \emph{derivative norm} by $\kappa(\Phi) = \snorm{\ps\U}$; since $s$ is dimensionless, $\unit{\kappa} = 1$.
This motivates the following definition which describes what we mean by a \emph{regular} simulation.

\begin{definition}[Regular simulation]\label{def:regular-simulation}
    A uniform simulation $(\wt{\Upsilon}, \V)$ of $\Upsilon$ is \emph{regular} with derivative norm $\kappa$ if there is a witness $\Phi = \{\U(s)\}_{s\in[0,1]}$ satisfying Definition~\ref{def:uniform-simulation} and, additionally,
    \begin{enumerate}[leftmargin=1.25cm, label=\textup{(C\arabic*)}]
        \setcounter{enumi}{3}
        \item $\U(1) = \U(0)$; \label{item:c4}
        \item $\snorm{\ps\U} \le \kappa$. \label{item:c5}
    \end{enumerate}
    We call such a $\Phi$ a \emph{regular witness family} and write that $(\wt{\Upsilon}, \V)$ is an $(\eta, \varepsilon, \kappa)$-simulation of $\Upsilon$.
\end{definition}

\begin{remark}\label{rem:kappa-witness-dependent}
    Different witnesses for the same pair $(\wt{\Upsilon}, \V)$ can yield different $\kappa$. 
    Hence, $\kappa$ is a property of the witness, not of the pair.
    From a reduction standpoint we never write $\U(s)$ explicitly; we need only that a witness with small $\kappa$ exists.
    In every construction below we exhibit the existence of a witness satisfying \cref{item:c1,item:c2,item:c3,item:c4,item:c5}.
\end{remark}

The eigenvalue- and ground-state-simulation lemmas of Ref.~\cite{bravyi2016complexity} extend to parameterised families by applying the static results pointwise in $s$.
We impose $2\varepsilon < \gamma_\star$ when ground-state non-degeneracy must be preserved in the low-energy band (i.e., to prevent a level crossing).

\begin{restatable}[Eigenvalue simulation]{lemma}{eigenvaluesimulation}\label{lem:eigenvalue-simulation}
    If $(\wt{\Upsilon}, \V)$ is an $(\eta, \varepsilon)$-simulation of $\Upsilon$, then for every $s\in[0,1]$ and $j\in\{0, \dots, 2^n-1\}$,
    \begin{equation}\label{eq:eigenvalue-simulation}
        |\lambda_j(s) - \wt{\lambda}_j(s)| \le \varepsilon.
    \end{equation}
\end{restatable}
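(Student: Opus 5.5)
The plan is to fix $s\in[0,1]$ and argue pointwise, as for the static eigenvalue-simulation lemma of Ref.~\cite{bravyi2016complexity}. By \ref{item:c1}, $\U(s)$ is an isometry $\Hil\to\wt{\Hil}$ whose image is $\mathcal{E}_{2^n}(\wt H(s))$, the span of the eigenvectors of $\wt H(s)$ for its $2^n$ lowest eigenvalues $\wt\lambda_0(s)\le\cdots\le\wt\lambda_{2^n-1}(s)$. Write $\Pi(s)=\U(s)\U^\dagger(s)$ for the projector onto this invariant subspace. Since the subspace is invariant under the Hermitian $\wt H(s)$, we have $\Pi\wt H\Pi=\wt H\Pi$. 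So the restriction of $\wt H(s)$ to the band has spectrum exactly $\{\wt\lambda_j(s)\}_{j<2^n}$, counted with multiplicity.

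The first step is to identify the spectrum of $\widehat H(s)\coloneqq\U^\dagger(s)\wt H(s)\U(s)$ on $\Hil$. Since $\U(s)$ is a unitary from $\Hil$ onto $\mathrm{Im}\,\U(s)$, the operator $\widehat H(s)$ is unitarily equivalent to the restriction $\wt H(s)|_{\mathcal{E}_{2^n}}$. Explicitly, if $\wt H\ket{\wt\phi_j}=\wt\lambda_j\ket{\wt\phi_j}$ with $\ket{\wt\phi_j}$ in the band, then $\U^\dagger\ket{\wt\phi_j}$ is an eigenvector of $\widehat H$ with the same eigenvalue, and these vectors form an orthonormal basis of $\Hil$. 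Hence the ordered eigenvalues of $\widehat H(s)$ are precisely $\wt\lambda_0(s),\dots,\wt\lambda_{2^n-1}(s)$.

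The second step applies Weyl's perturbation inequality to the Hermitian pair $H(s)$ and $\widehat H(s)$ on the $2^n$-dimensional space $\Hil$. It gives $|\lambda_j(H(s))-\lambda_j(\widehat H(s))|\le\norm{H(s)-\widehat H(s)}$ for the ordered eigenvalues. By \ref{item:c2}, the right-hand side is at most $\varepsilon$. Combining this with the first step yields \cref{eq:eigenvalue-simulation} for every $j$ and every $s$, uniformly, since $\varepsilon$ does not depend on $s$.

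No step is a genuine obstacle. The only point needing care is the first step, where the image condition \ref{item:c1} (rather than closeness to $\V$) is what makes the compressed spectrum coincide with the lowest $2^n$ simulator eigenvalues. Condition \ref{item:c3} on $\eta$ is not needed here. Weyl's inequality could instead be derived from the Courant--Fischer min-max characterisation if a self-contained argument is preferred.
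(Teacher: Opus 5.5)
Your proposal is correct and follows the same route as the paper. At fixed $s$, condition (C1) makes $\U^\dagger(s)\wt H(s)\U(s)$ unitarily equivalent to the restriction of $\wt H(s)$ to its low-energy band, so its ordered eigenvalues are $\wt\lambda_0(s),\dots,\wt\lambda_{2^n-1}(s)$, and Weyl's inequality together with (C2) then gives the bound. Your explicit eigenbasis $\U^\dagger(s)\ket{\wt\phi_j(s)}$ and your remark that (C3) is unused spell out what the paper's two-line proof leaves implicit.
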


A proof is given in Appendix~\ref{app:uniform-simulation-proofs}.

\begin{restatable}[Ground state simulation]{lemma}{groundstatesimulation}\label{lem:ground-state-simulation}
    Assume each $H(s) \in \Upsilon$ has a non-degenerate ground state $\ket{\phi_0(s)}$ with $\gamma(s) \ge \gamma_\star > 0$, and let $(\wt{\Upsilon}, \V)$ be an $(\eta, \varepsilon)$-simulation with $2\varepsilon < \gamma_\star$.
    Then for every $s\in[0,1]$, $\wt{H}(s)$ has a non-degenerate ground state and the phases of $\ket{\phi_0(s)}$ and $\ket{\wt{\phi}_0(s)}$ may be chosen so that
    \begin{equation}\label{eq:ground-state-simulation}
        \norm{\ket{\wt{\phi}_0(s)} - \V\ket{\phi_0(s)}} = O(\eta + \gamma_\star^{-1}\varepsilon).
    \end{equation}
\end{restatable}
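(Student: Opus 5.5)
Fix $s\in[0,1]$; the argument is pointwise in $s$ and every constant is uniform in $s$, since $\eta$, $\varepsilon$ and $\gamma_\star$ do not depend on $s$. Write $\widehat H(s)=\U^\dagger(s)\wt H(s)\U(s)$, an operator on $\Hil$. By \ref{item:c1}, $\U(s)$ is a unitary from $\Hil$ onto the low-energy band $\mathcal{E}_{2^n}(\wt H(s))$, which is invariant under $\wt H(s)$. Hence the spectrum of $\widehat H(s)$ is exactly the first $2^n$ eigenvalues $\wt\lambda_0(s),\dots,\wt\lambda_{2^n-1}(s)$, and its eigenvectors are $\U^\dagger(s)\ket{\wt\phi_j(s)}$. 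By \cref{lem:eigenvalue-simulation}, $\abs{\lambda_j(s)-\wt\lambda_j(s)}\le\varepsilon$. It follows that
\[
\wt\lambda_1(s)-\wt\lambda_0(s)\ge\gamma(s)-2\varepsilon\ge\gamma_\star-2\varepsilon>0 .
\]
So $\wt H(s)$ has a non-degenerate ground state, and that ground state lies in the low band. Here we use that the band split $\Delta(s)>0$ places the ground state of the full $\wt H(s)$ among the first $2^n$ levels.

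The main step is a Davis--Kahan bound comparing the ground states of $H(s)$ and $\widehat H(s)$, which are separated by $\norm{E(s)}\le\varepsilon$. Let $\ket{\chi(s)}=\U^\dagger(s)\ket{\wt\phi_0(s)}$ be the ground state of $\widehat H(s)$, and let $P$, $\widehat P$ be the two ground-state projectors. The relevant gap is between $\lambda_0(s)$ and the rest of the spectrum of $\widehat H(s)$, which lies above $\lambda_1-\varepsilon$. This gives
\[
\norm{P-\widehat P}\le \frac{\varepsilon}{\gamma_\star-\varepsilon}\le\frac{2\varepsilon}{\gamma_\star},
\]
using $2\varepsilon<\gamma_\star$. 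An alternative route is the Riesz projector: take a circle of radius $\gamma_\star/2$ about $\lambda_0(s)$, expand with the resolvent identity, and apply the ML estimate. That route gives a bound of order $\varepsilon/\gamma_\star$, possibly with a worse constant; this is harmless inside the $O(\cdot)$. For rank-one projectors, $\norm{P-\widehat P}$ is the sine of the angle between the two vectors. Choose the phase of $\ket{\chi(s)}$ so that $\braket{\phi_0(s)}{\chi(s)}\ge 0$, which fixes the phase of $\ket{\wt\phi_0(s)}$. Then
\[
\norm{\ket{\chi(s)}-\ket{\phi_0(s)}}\le\sqrt2\,\norm{P-\widehat P}=O(\gamma_\star^{-1}\varepsilon).
\]

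To finish, use $\ket{\wt\phi_0(s)}=\U(s)\ket{\chi(s)}$, which holds because the ground state lies in the image of $\U(s)$ and $\U(s)\U^\dagger(s)$ is the projector onto that image. The triangle inequality then gives
\[
\norm{\ket{\wt\phi_0}-\V\ket{\phi_0}}\le\norm{\U(\ket{\chi}-\ket{\phi_0})}+\norm{(\U-\V)\ket{\phi_0}}\le O(\gamma_\star^{-1}\varepsilon)+\eta,
\]
using that $\U(s)$ is an isometry and \ref{item:c3}. The only subtle points are the phase choice, which is made pointwise as the statement allows, and checking that the ground state of $\wt H$ is not in the high band. I expect the spectral-gap bookkeeping in the Davis--Kahan step to be the main point requiring care.
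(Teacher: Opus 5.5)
Your proposal is correct and follows the same route as the paper's proof. Both arguments use \cref{lem:eigenvalue-simulation} for non-degeneracy, apply a Davis--Kahan bound with a pointwise phase choice to compare the ground states of $H(s)$ and $\U^\dagger(s)\wt{H}(s)\U(s)$, and finish with the triangle inequality and \ref{item:c3}; yours simply makes explicit the constants (the gap $\gamma_\star-\varepsilon$ and the factor $\sqrt{2}$) that the paper absorbs into the $O(\cdot)$.
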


A proof is given in Appendix~\ref{app:uniform-simulation-proofs}.

\subsection{Composition of Simulations}
\label{sec:composition-of-simulations}
We now compose a simulation of the target family $\Upsilon$ by an intermediate family $\wt{\Upsilon}_1$ with a simulation of that intermediate family by a final family $\wt{\Upsilon}_2$.
Unfortunately, the naive approach of composing two simulations by simply concatenating the isometries may not always yield a valid simulation, as the low-energy subspace of the final simulator may not align with that of the intermediate one~\cite{bravyi2016complexity}.
We therefore insert a unitary rotation that aligns these subspaces and control both its norm and its derivative.

\begin{restatable}[Parameterised unitary rotation]{lemma}{parameterisedunitaryrotation}\label{lem:parameterised-unitary-rotation}
    Let $A(s), B(s)$ be closed ${\rm C}^2$ families of equal-rank orthogonal projectors with $\norm{A(s)-B(s)}\le\delta$ for all $s$, and suppose $4\delta\le1$.
    Then there is a closed ${\rm C}^2$ family of unitaries $U(s)$ and a universal constant $C > 0$ such that, for all $s$,
    \begin{enumerate}[label=\arabic*)]
        \item $U(s)A(s)U^\dagger(s) = B(s)$;
        \item $\norm{U(s) - I} \le C\delta$;
        \item $\norm{\ps U(s)} \le C(\norm{\ps A(s)} + \norm{\ps B(s)})$.
    \end{enumerate}
\end{restatable}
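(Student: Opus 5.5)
We use the standard Kato/Sz.-Nagy direct-rotation construction. For each $s$ set
\begin{equation}
    W(s) \coloneqq B(s)A(s) + \bigl(I-B(s)\bigr)\bigl(I-A(s)\bigr),
\end{equation}
which satisfies $W(s)A(s) = B(s)W(s)$. Writing $D(s)=A(s)-B(s)$, a direct computation gives
\begin{equation}
    W^\dagger(s) W(s) = I - D(s)^2,
\end{equation}
and $D(s)^2$ commutes with both $A(s)$ and $B(s)$. Since $\norm{D(s)}\le\delta\le 1/4$, the operator $W^\dagger W \succeq (1-\delta^2) I$ is invertible. We define
\begin{equation}
    U(s) \coloneqq W(s)\bigl(I-D(s)^2\bigr)^{-1/2},
\end{equation}
the unitary part of the polar decomposition of $W(s)$. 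Because $(I-D^2)^{-1/2}$ commutes with $A$, the relation $WA=BW$ gives $UA = BU$, i.e.\ $UAU^\dagger = B$. This proves item 1).

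Closedness follows since $U(s)$ is a fixed algebraic function of $(A(s),B(s))$, and $A(0)=A(1)$, $B(0)=B(1)$. Regularity follows because $W$ is polynomial in $A,B$, hence ${\rm C}^2$. The map $X\mapsto (I-X)^{-1/2}$ is real-analytic on $\{\norm{X}\le 1/16\}$: one can use the binomial power series, which converges uniformly by the Weierstrass M-test, or equivalently a Riesz-type contour integral around the spectrum. Composing these, $U$ is ${\rm C}^2$.

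For item 2), write $W - I = BA + (I-B)(I-A) - I = -(A-B)(2A-I)$ after expanding; more simply, $W-I = (B-A)A + (A-B)(I-A)$, so $\norm{W-I}\le 2\delta$. Using $\norm{(I-D^2)^{-1/2}-I}\le \delta^2/(1-\delta^2)$ and $\norm{W}\le\sqrt{1}\le 1$, we obtain $\norm{U-I}\le 2\delta + O(\delta^2)\le C\delta$.

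For item 3), we differentiate:
\begin{equation}
    \ps W = (\ps B)A + B(\ps A) - (\ps B)(I-A) - (I-B)(\ps A),
\end{equation}
so $\norm{\ps W}\le 2(\norm{\ps A}+\norm{\ps B})$. For $G(s)=(I-D^2)^{-1/2}$, we differentiate the power series $\sum_k c_k D^{2k}$ termwise, with $\abs{c_k}\le 1$. This gives
\begin{equation}
    \norm{\ps G}\le \sum_k 2k\,\abs{c_k}\,\delta^{2k-1}\norm{\ps D},
\end{equation}
which is bounded by a constant times $\norm{\ps A}+\norm{\ps B}$ for $\delta\le1/4$. The product rule, together with $\norm{W}\le 1$ and $\norm{G}\le (1-\delta^2)^{-1/2}$, then yields item 3).

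The main obstacle is the regularity and derivative control of the inverse square root uniformly in $s$. This is exactly where the hypothesis $4\delta\le 1$ enters: it keeps $\norm{D^2}\le 1/16$, so the series and its termwise derivatives converge with universal constants. A secondary care point is verifying the identity $W^\dagger W = I - D^2$ and the commutation $[D^2,A]=0$, which are routine expansions using $A^2=A$ and $B^2=B$.
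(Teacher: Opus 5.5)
Your proposal is correct and follows essentially the same route as the paper: the direct-rotation operator $BA+(I-B)(I-A)$, its polar unitary, the intertwining relation transferred through the commuting inverse square root, and closure and ${\rm C}^2$ regularity inherited pointwise from $A$ and $B$. The differences are only technical. You use the exact identity $W^\dagger W = I-(A-B)^2$ together with the binomial series for $(I-D^2)^{-1/2}$. The paper instead uses the cruder estimate $X^\dagger X = I+F$ with $\norm{F}\le 3\delta$ and the integral representation $M^{-1/2}=\pi^{-1}\int_0^\infty t^{-1/2}(M+tI)^{-1}\,\dd{t}$. Your version makes the commutation with $A$ immediate and gives slightly sharper constants.
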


A proof is given in Appendix~\ref{app:uniform-simulation-proofs}.

\begin{restatable}[Composition of two simulations]{lemma}{compositionofsimulations}\label{lem:composition-of-simulations}
    Let $(\wt\Upsilon_1, \V_1)$ be an $(\eta_1, \varepsilon_1, \kappa_1)$-simulation of $\Upsilon$ witnessed by $\Phi_1 = \{\U_1(s)\}$, and $(\wt\Upsilon_2, \V_2)$ an $(\eta_2, \varepsilon_2, \kappa_2)$-simulation of $\wt\Upsilon_1$ witnessed by $\Phi_2 = \{\U_2(s)\}$, with band gaps $\Delta_1, \Delta_2$.
    Suppose $8\varepsilon_2 \le \Delta_1$ and $\varepsilon_1, \varepsilon_2 < \snorm{H}$.
    Then there is a ${\rm C}^2$ family of unitary rotations $T(s)$ such that $\Phi=\{T(s)\U_2(s)\circ\U_1(s)\}$ witnesses an $(\eta,\varepsilon,\kappa)$-simulation of $\Upsilon$ with
    \begin{align}
        \eta &= \eta_1 + \eta_2 + O(\Delta_1^{-1}\varepsilon_2), &
        \varepsilon &= \varepsilon_1 + \varepsilon_2 + O(\Delta_1^{-1}\varepsilon_2\snorm{H}),
    \end{align}
    and derivative norm
    \begin{equation}
        \kappa = O(\kappa_1 + \kappa_2 + \Delta_1^{-1}\max\{\snorm{\ps\wt{H}_1},\snorm{\ps\wt{H}_2}\}).
    \end{equation}
    If both simulations are closed, so is the composition.
\end{restatable}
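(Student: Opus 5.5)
The plan follows the static composition argument of Bravyi and Hastings, carried out pointwise in $s$ and then supplemented by derivative control.

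\emph{Setup and the misalignment.} Write $\wt H_1(s)$ for the intermediate family on $\wt\Hil_1$. Let $P_1(s)$ be the projector onto its low band $\mathcal{E}_{2^n}(\wt H_1(s))$, so $P_1(s)=\U_1(s)\U_1^\dagger(s)$. The naive composed isometry $\U_2(s)\U_1(s)$ maps $\Hil$ into $\operatorname{Im}\U_2(s)=\mathcal{E}_{2^{m_1}}(\wt H_2(s))$. However, its image need not equal the low $2^n$-band $\mathcal{E}_{2^n}(\wt H_2(s))$. Define $\widehat H_1(s)=\U_2^\dagger\wt H_2\U_2$ on $\wt\Hil_1$, which satisfies $\norm{\widehat H_1-\wt H_1}\le\varepsilon_2$. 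By \cref{lem:eigenvalue-simulation}, the low $2^n$ eigenvalues of $\wt H_2$ restricted to $\operatorname{Im}\U_2$ coincide with those of $\widehat H_1$. Since $8\varepsilon_2\le\Delta_1$, the band gap of $\widehat H_1$ stays at least $\Delta_1-2\varepsilon_2$.

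Let $\widehat P_1(s)$ be the spectral projector of $\widehat H_1(s)$ onto its low $2^n$-band. A Riesz-projector contour estimate, with a fixed admissible contour of length $O(\snorm{H})$ around the band at distance $\Delta_1/2$ (this is where $\varepsilon_1<\snorm{H}$ enters), gives
\begin{equation}
\norm{\widehat P_1(s)-P_1(s)}\le\delta \coloneqq O(\Delta_1^{-1}\varepsilon_2)\le 1/4.
\end{equation}
Note that $\U_2\widehat P_1\U_2^\dagger$ is exactly the projector onto $\mathcal{E}_{2^n}(\wt H_2(s))$, because $\operatorname{Im}\U_2$ is an invariant low subspace of $\wt H_2$.

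\emph{Alignment.} Apply \cref{lem:parameterised-unitary-rotation} with $A=\U_2P_1\U_2^\dagger$ and $B=\U_2\widehat P_1\U_2^\dagger$. These are closed ${\rm C}^2$ families of equal rank within distance $\delta$. The lemma yields a closed ${\rm C}^2$ unitary $T(s)$ with $TAT^\dagger=B$. Then $\U(s)=T\U_2\U_1$ has image $\mathcal{E}_{2^n}(\wt H_2(s))$, which gives \ref{item:c1}. Closure \ref{item:c4} is inherited from $\U_1$, $\U_2$ and $T$.

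\emph{Error bounds.} For \ref{item:c3}, the triangle inequality gives
\begin{equation}
\norm{T\U_2\U_1-\V_2\V_1}\le\norm{T-I}+\norm{\U_2-\V_2}+\norm{\U_1-\V_1}\le \eta_1+\eta_2+C\delta .
\end{equation}
For \ref{item:c2}, split
\begin{equation}
H-\U^\dagger\wt H_2\U = (H-\U_1^\dagger\wt H_1\U_1)+\U_1^\dagger(\wt H_1-\widehat H_1)\U_1+\U_1^\dagger\U_2^\dagger(\wt H_2-T^\dagger\wt H_2T)\U_2\U_1 .
\end{equation}
The first two terms contribute at most $\varepsilon_1+\varepsilon_2$. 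The third term restricted to the low band is $O(\norm{T-I}\cdot\snorm{\wt H_2|_{\rm low}})$, and the low-band energy is $O(\snorm{H})$ since $\varepsilon_1,\varepsilon_2<\snorm{H}$. This gives the stated $O(\Delta_1^{-1}\varepsilon_2\snorm{H})$.

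\emph{Derivative norm.} By the product rule,
\begin{equation}
\norm{\ps\U}\le\norm{\ps T}+\kappa_2+\kappa_1 .
\end{equation}
\cref{lem:parameterised-unitary-rotation} gives $\norm{\ps T}\le C(\norm{\ps A}+\norm{\ps B})$. Differentiating $A$ and $B$ produces $\kappa_2$ terms plus $\norm{\ps P_1}$ and $\norm{\ps\widehat P_1}$. These projector derivatives are estimated by differentiating under the fixed contour (\cref{sec:continuity-of-spectrum}), using $\ps R=R(\ps X)R$. This yields bounds of the form $O(\Delta_1^{-1}\snorm{\ps\wt H_1})$ and $O(\Delta_1^{-1}\snorm{\ps\widehat H_1})$. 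Finally, $\ps\widehat H_1$ is bounded by $\snorm{\ps\wt H_2}$ together with $\kappa_2$-terms multiplied by $\snorm{\wt H_2|_{\rm low}}$.

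\emph{Main obstacle.} I expect the hardest step to be this contour-derivative estimate. The contour length scales with the band width rather than the gap, so one factor of $\Delta_1^{-1}$ must be obtained carefully, for instance via the pseudoinverse form $\ps P=\L^{\pattce}([\ps X,P])$, which gives a single $\Delta_1^{-1}$. I would also need to check that the stray $\kappa_2\snorm{H}$ contributions can be absorbed into $O(\kappa_2)$ under the normalisation used.
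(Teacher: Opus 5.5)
Your architecture matches the paper's: the same $A=\U_2P_1\U_2^\dagger$ and the same $B$ (your $\U_2\widehat P_1\U_2^\dagger$ is exactly the paper's $P_{N,2}$). You also use the same rotation lemma and the same triangle-inequality and product-rule steps. Conditions \ref{item:c1}, \ref{item:c3}, \ref{item:c4} and your three-term splitting for \ref{item:c2} go through. The last works because $\wt H_2\U_2\U_1=\U_2\widehat H_1\U_1$ and $\wt H_2T\U_2\U_1=\wt H_2BT\U_2\U_1$ both have norm $O(\snorm{H})$.

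\emph{First gap: the projector difference.} An ML estimate over a contour enclosing the whole low band gives $\frac{\abs{\mathscr{C}}}{2\pi}\cdot O(\Delta_1^{-2})\,\varepsilon_2$. Here $\abs{\mathscr{C}}$ is at least of order the band width plus $\Delta_1$, so the bound is $O(\Delta_1^{-1}\varepsilon_2+\Delta_1^{-2}\varepsilon_2\snorm{H})$. The hypotheses $8\varepsilon_2\le\Delta_1$ and $\varepsilon_i<\snorm{H}$ do not control the second term; you would need something like $\snorm{H}\lesssim\Delta_1$. As written, you therefore obtain neither the stated $\eta$ nor the condition $\delta\le 1/4$ that \cref{lem:parameterised-unitary-rotation} requires. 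This is the same band-width problem you flag for the derivative, and it already bites here.

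The paper instead uses the band-width-independent bound $\norm{P_{N,1}-\U_2^\dagger P_{N,2}\U_2}\le 2\Delta_1^{-1}\varepsilon_2$ of \cite[Lemma~3.1]{bravyi2011schrieffer}. Equivalently, with $\widehat Q_1=I-\widehat P_1$, the operator $Z=\widehat Q_1P_1$ solves $\widehat H_1Z-Z\wt H_1=\widehat Q_1(\widehat H_1-\wt H_1)P_1$. The two relevant spectra lie on opposite sides of a gap of width at least $\Delta_1-\varepsilon_2$, so $\norm{P_1-\widehat P_1}=\norm{Z}\le\varepsilon_2/(\Delta_1-\varepsilon_2)$.

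\emph{Second gap: the derivative of $B$.} Bounding $\ps\widehat P_1$ through $\norm{\ps\widehat H_1}$ leaves terms of size $\kappa_2\norm{\wt H_2\U_2}/\Delta_1$. But ${\rm Im}\,\U_2$ is the whole $2^{m_1}$-dimensional band of $\wt H_2$, on which $\wt H_2$ is unitarily equivalent to $\widehat H_1\approx\wt H_1$. Hence $\norm{\wt H_2\U_2}\approx\snorm{\wt H_1}$, not $\snorm{H}$. In the gadget layers this is of order $\Delta_1$ times the number of terms, so the stray contribution is not $O(\kappa_2)$ and your route does not close. There are two fixes:
\begin{enumerate}
    \item Bound $\ps B=\ps P_{N,2}$ directly from $\ps\wt H_2$, using that $\wt H_2$ itself has separation at least $3\Delta_1/4$ above its low $2^n$-band (the paper's route).
    \item Use the invariance $\wt H_2\U_2=\U_2\widehat H_1$ to write $\ps\widehat H_1=[\widehat H_1,K]+\U_2^\dagger(\ps\wt H_2)\U_2$ with $K=\U_2^\dagger\ps\U_2$ anti-Hermitian. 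The commutator then contributes exactly $-\widehat Q_1K\widehat P_1$ plus its adjoint to $\ps\widehat P_1$, of norm at most $\kappa_2$ and with no energy factor.
\end{enumerate}

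In both projector estimates your pseudoinverse idea is the right tool, and sharper than the paper's bare appeal to the Riesz formula. What makes it work, though, is the separated-spectra Sylvester bound $\norm{Z}\le\norm{Y}/g$ for $X_QZ-ZX_P=Y$ with $X_P\preceq c$ and $X_Q\succeq c+g$, not the defining formula for $\L^{\pattce}$ alone.
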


A proof is given in Appendix~\ref{app:uniform-simulation-proofs}.
Note that the fixed encoding of the composed simulation is $\V = \V_2 \circ \V_1$.
Now we can iterate this composition to a chain of simulations, as described in the next subsection.

\subsection{Generalised Composition}
\label{sec:generalised-composition}
We iterate the composition to a chain of $M$ simulations, each simulating the previous.
For notational convenience write $\wt{\Upsilon}_0 = \Upsilon$ and $\wt{H}_0 = H$.
For $m = 1, \dots, M$, let $\wt{\Upsilon}_m$ simulate $\wt{\Upsilon}_{m-1}$ and write $\wt{H}_m$ for its Hamiltonian.

\begin{corollary}\label{cor:composition-of-simulations}
    Let $(\wt\Upsilon_i, \V_i)$ be an $(\eta_i, \varepsilon_i, \kappa_i)$-simulation of $\wt{\Upsilon}_{i-1}$ witnessed by $\Phi_i = \{\U_i(s)\}$, with band gap $\Delta_i$, for $i = 1, \dots, M$.
    Suppose $8\varepsilon_i \le \Delta_{i-1}$ for $i = 2, \dots, M$ and $\varepsilon_i < \snorm{\wt{H}_0}$ for all $i$.
    Then there are ${\rm C}^2$ unitary rotations $\{T_i(s)\}_{i=2}^M$ such that $\Phi = \{T_M(s)\,\U_M \circ \cdots \circ T_2(s)\,\U_2 \circ \U_1(s)\}$ witnesses an $(\eta, \varepsilon, \kappa)$-simulation of $\Upsilon$ with
    \begin{align}
        \eta &= \sum_{i=1}^{M}\eta_i + O\Big(\sum_{i=2}^{M}\Delta_{i-1}^{-1}\varepsilon_i\Big), &
        \varepsilon &= \sum_{i=1}^{M}\varepsilon_i + O\Big(\sum_{i=2}^{M}\Delta_{i-1}^{-1}\varepsilon_i\snorm{\wt{H}_0}\Big),
    \end{align}
    and derivative norm
    \begin{equation}
        \kappa = O\Big(\sum_{i=1}^{M}\kappa_i + \sum_{i=2}^{M}\Delta_{i-1}^{-1}\max\{\snorm{\ps\wt{H}_{i-1}}, \snorm{\ps\wt{H}_i}\}\Big).
    \end{equation}
    If all simulations are closed, so is the composition.
    Taking the worst stage, with $\chi_* = \max_i \chi_i$ for $\chi\in\{\eta, \varepsilon, \kappa\}$, $\Delta_* = \min_i\Delta_i$, and $\snorm{\ps\wt{H}_*} = \max_i\snorm{\ps\wt{H}_i}$,
    \begin{align}
        \eta &= O\big(M(\eta_* + \Delta_*^{-1}\varepsilon_*)\big), &
        \varepsilon &= O\big(M(\varepsilon_* + \Delta_*^{-1}\varepsilon_*\snorm{H})\big), &
        \kappa &= O\big(M(\kappa_* + \Delta_*^{-1}\snorm{\ps\wt{H}_*})\big).
    \end{align}
\end{corollary}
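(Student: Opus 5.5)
The plan is induction on $M$, using \cref{lem:composition-of-simulations} as the inductive step. The base case $M=1$ is the hypothesis on $(\wt\Upsilon_1,\V_1)$, with no rotation needed. For the inductive step, suppose $\Phi^{(M-1)}=\{T_{M-1}(s)\U_{M-1}\circ\cdots\circ\U_1(s)\}$ witnesses an $(\eta^{(M-1)},\varepsilon^{(M-1)},\kappa^{(M-1)})$-simulation of $\Upsilon$ by $(\wt\Upsilon_{M-1},\V_{M-1}\circ\cdots\circ\V_1)$ with the claimed sums truncated at $M-1$. We then apply \cref{lem:composition-of-simulations} with this composite as the first simulation and $(\wt\Upsilon_M,\V_M)$ as the second.

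The hypotheses of the lemma concern only the intermediate family $\wt\Upsilon_{M-1}$. Its band gap is $\Delta_{M-1}$, so the condition $8\varepsilon_M\le\Delta_{M-1}$ is exactly what we assume. The condition $\varepsilon_M<\snorm{H}$ is also assumed. The lemma further asks that $\varepsilon^{(M-1)}<\snorm{H}$. This does not follow from the stated hypotheses. I would handle it either by reading the $O(\cdot)$ statements as vacuous when $\varepsilon^{(M-1)}\ge\snorm{H}$, or, more cleanly, by checking where the lemma uses $\varepsilon_1<\snorm{H}$. It presumably only bounds $\snorm{\wt H_1}\le\snorm{H}+\varepsilon_1=O(\snorm{H})$, and the same bound holds with $\varepsilon^{(M-1)}$ absorbed into the constant.

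Next I track the error terms. The lemma gives an additive $O(\Delta_{M-1}^{-1}\varepsilon_M)$ in $\eta$, an additive $O(\Delta_{M-1}^{-1}\varepsilon_M\snorm{\wt H_0})$ in $\varepsilon$, and an additive $O(\Delta_{M-1}^{-1}\max\{\snorm{\ps\wt H_{M-1}},\snorm{\ps\wt H_M}\})$ in $\kappa$. These are exactly the $i=M$ summands. The main obstacle is the constants: hidden constants compounding through $M$ applications would give $C^M$ growth. The $\eta$ and $\varepsilon$ recurrences are additive with coefficient $1$ in the lemma, so they sum cleanly. For $\kappa$ the lemma states only $O(\kappa_1+\kappa_2+\cdots)$, so I would go into its proof. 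There $\ps(T\U_2\U_1)$ is expanded by the product rule, and since $\norm{T}=\norm{\U_i}=1$ this gives $\kappa\le\kappa_1+\kappa_2+\snorm{\ps T}$. By \cref{lem:parameterised-unitary-rotation}, $\snorm{\ps T}$ is bounded by $C$ times the projector derivatives. Those derivatives are in turn $O(\Delta_{M-1}^{-1}\snorm{\ps\wt H})$ by the Riesz-projector derivative formula of \cref{sec:continuity-of-spectrum}, applied on a contour at distance of order $\Delta_{M-1}$ from the spectrum. So the coefficient on the previous $\kappa$ is exactly $1$, and the universal constant multiplies only the new term. Since $M$ is a constant in applications, crude compounding would also be harmless, and I would note this as a fallback.

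Closedness follows inductively, since the lemma preserves closedness at each step; the rotation family is built closed in \cref{lem:parameterised-unitary-rotation}. The encoding is $\V_M\circ\cdots\circ\V_1$, and regularity of each $T_i$ is ${\rm C}^2$, so the witness is ${\rm C}^2$. Finally, the worst-stage bounds follow by replacing each summand with its maximum and each $\Delta_{i-1}^{-1}$ with $\Delta_*^{-1}$, which gives $M$ copies of each term. For the $\kappa$ bound, $\max\{\snorm{\ps\wt H_{i-1}},\snorm{\ps\wt H_i}\}\le\snorm{\ps\wt H_*}$ provided the maximum defining $\snorm{\ps\wt H_*}$ ranges over $i=0,\dots,M$.
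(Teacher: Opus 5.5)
Your induction is the left-to-right iteration of \cref{lem:composition-of-simulations} that the paper intends; the paper gives no further argument. Your observation that the previously composed witness enters $\ps\U$ with coefficient exactly $1$, so that constants do not compound, is also correct.

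\textbf{The gap.} It is the sentence identifying the band gap with $\Delta_{M-1}$. In \cref{lem:composition-of-simulations}, $\Delta_1$ is the band gap of the \emph{first} simulation, namely the gap of the intermediate Hamiltonian above its lowest $N=\dim\Hil$ levels. This is what makes $\mathcal{E}_N(\wt H_2)$ well defined and what controls $\norm{A-B}$, $\norm{\ps A}$ and $\norm{\ps B}$. In your inductive step the first simulation is the composite simulation of $\Upsilon$ by $\wt\Upsilon_{M-1}$. The quantity you need is therefore
$g_{M-1}\coloneqq\inf_s\big[\lambda_N(\wt H_{M-1}(s))-\lambda_{N-1}(\wt H_{M-1}(s))\big]$.
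By contrast, $\Delta_{M-1}$ is the band gap of stage $M-1$ as a simulation of $\wt\Upsilon_{M-2}$, i.e.\ the gap above the lowest $\dim\wt\Hil_{M-2}$ levels. The two agree only for $M=2$.

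The same substitution is hidden in your $\kappa$ step. The Riesz bounds there concern the projectors onto the lowest $N$ levels of $\wt H_{M-1}$ and $\wt H_M$, so they cost $g_{M-1}^{-1}$, not $\Delta_{M-1}^{-1}$. Relatedly, the bound you want is on $\wt H_1$ restricted to its lowest $N$ levels, not on $\snorm{\wt H_1}$. Since $\varepsilon^{(M-1)}$ can be of order $M\snorm{H}$, the resulting constant depends on $M$.

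\textbf{Why this matters.} The hypothesis $8\varepsilon_i\le\Delta_{i-1}$ does not control $g_{i-1}$, and the per-stage bounds as written can fail. Take static families with all $\eta_i,\kappa_i,\varepsilon_1,\varepsilon_2$ equal to zero:
\begin{itemize}
\item $\wt H_1=\V_1H\V_1^\dagger+c(I-\V_1\V_1^\dagger)$ with $c=\lambda_{\max}(H)+\Delta_1$ and $\Delta_1$ tiny;
\item $\wt H_2=\V_2\wt H_1\V_2^\dagger+D(I-\V_2\V_2^\dagger)$ with $D$ huge;
\item $\wt H_3=\V_3(\wt H_2+W)\V_3^\dagger+D'(I-\V_3\V_3^\dagger)$, where $\norm{W}=\varepsilon_3=\snorm{H}/2$ and $W$ couples $\V_2\V_1$ applied to a top eigenvector of $H$ to a level of $\wt H_2$ at energy $c$ orthogonal to ${\rm Im}(\V_2\V_1)$.
\end{itemize}
Every hypothesis of the statement holds. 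Yet $\mathcal{E}_N(\wt H_3)$ contains a vector at roughly $45^\circ$ to ${\rm Im}(\V_3\V_2\V_1)$, so $\eta=\Omega(1)$, while the claimed bound is $O(\Delta_2^{-1}\varepsilon_3)$.

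\textbf{What your induction does prove.} Applying \cref{lem:eigenvalue-simulation} stage by stage gives $g_1=\Delta_1$ and $g_i\ge g_{i-1}-2\varepsilon_i$. Under an extra hypothesis such as $16\sum_{i\ge2}\varepsilon_i\le\Delta_1$, every step then has $g_{i-1}\ge 7\Delta_1/8\ge 8\varepsilon_i$. You obtain the stated bounds with $\Delta_1$ in place of each $\Delta_{i-1}$. Since $\Delta_*\le\Delta_1$, the worst-stage bounds (the form used later) follow for $M=O(1)$.
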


Note that the fixed encoding of the composed simulation is $\V = \V_M \circ \cdots \circ \V_2 \circ \V_1$.
Since $\unit{\Delta_i}$, $\unit{\snorm{H}}$, $\unit{\snorm{\ps\wt{H}_i}}$, $\unit{\varepsilon} = e$ and $\unit{\eta}$, $\unit{\kappa}$, $\unit{M} = 1$, each expression is dimensionally consistent.
We always take $\Delta_* \gg M\max\{\eta_*^{-1}\varepsilon_*,\snorm{H},\kappa_*^{-1}\snorm{\ps\wt{H}_*}\}$, so that $\chi \approx M\chi_*$ for $\chi\in\{\eta, \varepsilon, \kappa\}$.
In the sequel, the number of composition stages is constant, $M = O(1)$, reflecting the idea that we only need a fixed number of simulation layers in the reductions.
Equivalently, this corresponds to performing a constant number of perturbative gadget constructions in \emph{series}.
The accumulated simulation errors and derivative bounds therefore remain within constant factors of the corresponding worst-stage quantities, which is the form used in the reductions of Section~\ref{sec:extended-hardness-results}.

%% file: sections/sw-transformation.tex
\section{Schrieffer--Wolff Transformation for Parameterised Hamiltonian Families}
\label{sec:sw-transformation}
In this section, we develop the Schrieffer--Wolff transformation for parameterised families of Hamiltonians and its application to perturbative gadget reductions that will be used later in this work to extend the \clw{BQP}{hardness} results for the \sc{GSBPE} problem.
Our treatment follows the standard framework of Refs.~\cite{datta1996low,bravyi2008quantum,schuch2009computational,bravyi2011schrieffer,bravyi2016complexity,piddock2017complexity}, adopting the corresponding setting and conventions for the relevant components of the simulator Hamiltonian.

Passing from a static Hamiltonian to a parameterised family introduces two new requirements. 
First, the Schrieffer--Wolff transformation and the resulting effective Hamiltonian must be defined coherently as functions of the parameter. 
Second, the estimates required for the simulation must hold uniformly over the parameter domain. 
At each fixed parameter value, the perturbative coefficients should reproduce the algebraic structure of the static construction. 
We verify this structure explicitly and then establish bounds that hold uniformly along the path.
We verify this explicitly and consequently find that our parameterised construction retains the familiar static simulator Hamiltonian arguments.

Throughout this section, we develop the Schrieffer--Wolff framework from the perspective of its intended application to parameterised perturbative gadget reductions, rather than as a purely standalone theoretical construct.
We first treat the single-target parameterised setting and then extend the analysis to what is known as the \emph{parallel} setting. 
In the latter case, we prove that the relevant contributions decompose into a sum of corresponding single-target contributions, thus mirroring both the structure found in the single-target setting and the familiar parallel static case.

The underlying strategy is to impose a large energy penalty parameter so that the low-energy behaviour of a carefully constructed simulator Hamiltonian approximates that of a chosen target system. 
Given a target family $\Upsilon = \{H(s)\}_{s\in[0,1]}$ in a class $\mathcal{F}$, one constructs a simulator family $\wt{\Upsilon} = \{\wt{H}(s)\}_{s\in[0,1]}$ in a more restricted class $\mathcal{F}' \subseteq \mathcal{F}$, such as a class of lower-locality Hamiltonians. 
The Schrieffer--Wolff transformation block-diagonalises the simulator relative to its unperturbed low- and high-energy subspaces, thereby defining an exact low-energy effective Hamiltonian and providing the isometry $\U$ that witnesses the simulation. 
Truncating the associated perturbative expansion then produces an approximate effective Hamiltonian that can be compared directly with the target.

For the complexity-theoretic reductions considered later in this work, identifying the formal structure of the effective Hamiltonian is not by itself sufficient. 
The penalty scale, perturbation strengths, and truncation order must be chosen so that all relevant simulation errors remain uniformly controlled while the resulting construction retains the required efficiency and locality.

We begin by specifying the parameterised perturbed Hamiltonians used in our later gadget constructions and recording the basic properties of their low- and high-energy block structure. 
We then introduce the Schrieffer--Wolff transformation and the corresponding exact effective Hamiltonian, before deriving its perturbative expansion and truncated approximation. 
Next, we establish uniform control of the transformation and of the exact and truncated effective Hamiltonians over the parameter domain. 
Finally, we treat parallel perturbative reductions and prove the decomposition into their constituent single-target contributions.

\subsection{Simulator Hamiltonians}
\label{sec:simulator-hamiltonians}
Let $\wt{\H} = \H_- \oplus \H_+$ and define $\Pi_-$, $\Pi_+$ to be the orthogonal projectors onto $\H_-$ and $\H_+$, respectively.
We refer to $\H_-$ as the low-energy subspace and $\H_+$ as the high-energy subspace.
For an operator $O$ on $\wt{\H}$, write $O_{\alpha\beta} = \Pi_\alpha O \Pi_\beta$ for $\alpha, \beta \in \{-, +\}$ and define 
\begin{align}
    O_\tn{d} &\coloneqq O_{--} + O_{++}, &
    O_\tn{od} &\coloneqq O_{-+} + O_{+-},
\end{align}
as the \emph{block-diagonal} and \emph{block-off-diagonal} parts of $O$, respectively.
We define $\D_\tn{d} = \{O : O = O_\tn{d}\}$ as the set of block-diagonal operators on $\wt{\H}$ and $\D_\tn{od} = \{O : O = O_\tn{od}\}$ as the set of block-off-diagonal operators on $\wt{\H}$.

We recall some simple properties of block-diagonal and block-off-diagonal operators.

\begin{lemma}[\cite{waite2026thesis}]
    \label{lem:block-commutators}
    If $A \in \D_\tn{d}$ and $B \in \D_\tn{od}$, then $\comm{A}{B}$ is block-off-diagonal.
    If $B_1, B_2 \in \D_\tn{od}$, then $\comm{B_1}{B_2}$ is block-diagonal.
\end{lemma}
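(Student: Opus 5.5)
The plan is to expand each operator in its four blocks and use the orthogonality relations $\Pi_\alpha\Pi_\beta = \delta_{\alpha\beta}\Pi_\alpha$ together with $\Pi_-+\Pi_+=I$. Since $A\in\D_\tn{d}$, we have $A = \Pi_-A\Pi_- + \Pi_+A\Pi_+$, so $\Pi_\alpha A = A\Pi_\alpha = \Pi_\alpha A\Pi_\alpha$ for each $\alpha\in\{-,+\}$; that is, $A$ commutes with both projectors. For $B\in\D_\tn{od}$ the analogous statement is $\Pi_\alpha B\Pi_\alpha = 0$, equivalently $\Pi_\alpha B = \Pi_\alpha B\Pi_{\bar\alpha}$, where $\bar\alpha$ denotes the complementary index.

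For the first claim, it suffices to show that $\Pi_\alpha\comm{A}{B}\Pi_\alpha = 0$ for both $\alpha$. This is because any operator $O$ satisfies $O = O_\tn{d} + O_\tn{od}$, so vanishing diagonal blocks means $O = O_\tn{od}$. Using the fact that $A$ commutes with $\Pi_\alpha$, I would compute
\[
\Pi_\alpha(AB-BA)\Pi_\alpha = A\,\Pi_\alpha B\Pi_\alpha - \Pi_\alpha B\Pi_\alpha\,A = 0 .
\]

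For the second claim, I would show that $\Pi_\alpha\comm{B_1}{B_2}\Pi_{\bar\alpha} = 0$. Inserting $I=\Pi_-+\Pi_+$ between the factors gives
\[
\Pi_\alpha B_1B_2\Pi_{\bar\alpha} = \Pi_\alpha B_1\Pi_\alpha B_2\Pi_{\bar\alpha} + \Pi_\alpha B_1\Pi_{\bar\alpha}B_2\Pi_{\bar\alpha}.
\]
Both terms vanish, since the first contains the diagonal block $\Pi_\alpha B_1\Pi_\alpha=0$ and the second contains $\Pi_{\bar\alpha}B_2\Pi_{\bar\alpha}=0$. The same argument applies to $B_2B_1$, so the off-diagonal blocks of the commutator vanish and the commutator is block-diagonal.

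There is no real obstacle here. The only point needing care is to state explicitly that block-diagonality of $A$ is equivalent to $\comm{A}{\Pi_\pm}=0$, because that equivalence is what makes the first computation immediate.
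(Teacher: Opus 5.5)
Your proof is correct. Block-diagonal $A$ commutes with $\Pi_\pm$, and block-off-diagonal $B$ satisfies $\Pi_\alpha B\Pi_\alpha=0$, so the diagonal blocks of $[A,B]$ vanish; inserting $I=\Pi_-+\Pi_+$ then kills the off-diagonal blocks of $B_1B_2$ and $B_2B_1$. The paper gives no proof of its own (it cites the author's thesis), and your blockwise verification is the standard argument for this fact.
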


\begin{corollary}\label{cor:L-preserves-block-structure}
    Let $A \in \D_{\tn{d}}$, then $\L_{A}$ preserves $\D_{\tn{d}}$ and $\D_{\tn{od}}$.
    That is, $\L_A(\D_{\tn{d}}) \subseteq \D_{\tn{d}}$ and $\L_A(\D_{\tn{od}}) \subseteq \D_{\tn{od}}$.
\end{corollary}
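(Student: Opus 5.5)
The claim follows almost immediately from \cref{lem:block-commutators}, so the plan is a short case split on the block type of the argument. Fix $A \in \D_{\tn{d}}$ and recall that $\L_A(Y) = \comm{A}{Y}$ by \cref{eq:liouville_superoperator}.

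\emph{Off-diagonal inputs.} For $Y \in \D_{\tn{od}}$, the first statement of \cref{lem:block-commutators}, applied with the pair $(A, Y)$, gives directly that $\comm{A}{Y}$ is block-off-diagonal. Hence $\L_A(\D_{\tn{od}}) \subseteq \D_{\tn{od}}$.

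\emph{Diagonal inputs.} For $Y \in \D_{\tn{d}}$, \cref{lem:block-commutators} does not cover this case, so I would argue from the projector identities. Since $\Pi_-\Pi_+ = \Pi_+\Pi_- = 0$, we have
\begin{equation}
    AY = (A_{--}+A_{++})(Y_{--}+Y_{++}) = A_{--}Y_{--} + A_{++}Y_{++},
\end{equation}
and this lies in $\D_{\tn{d}}$. The same computation shows $YA \in \D_{\tn{d}}$. Because $\D_{\tn{d}}$ is a linear subspace, the difference $\comm{A}{Y}$ is also block-diagonal. Equivalently, one can check that $\Pi_\alpha \comm{A}{Y} \Pi_\beta = 0$ for $\alpha \neq \beta$, using $\Pi_\alpha A = A_{\alpha\alpha} = A\Pi_\alpha$ and the analogous identity for $Y$. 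This gives $\L_A(\D_{\tn{d}}) \subseteq \D_{\tn{d}}$.

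There is no real obstacle here. The only point needing care is that the diagonal–diagonal case is not supplied by \cref{lem:block-commutators}, so the two-line projector computation above has to be included. Together with linearity of $\L_A$, the two cases show that $\L_A$ respects the decomposition $\mathscr{B}(\wt{\H}) = \D_{\tn{d}} \oplus \D_{\tn{od}}$.
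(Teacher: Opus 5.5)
Your proof is correct and follows the route the paper intends: the paper states this corollary without a separate proof, as an immediate consequence of \cref{lem:block-commutators}, which handles the off-diagonal inputs exactly as you do. Your short projector computation for the diagonal--diagonal case supplies the one step that the lemma does not state explicitly, so nothing is missing.
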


The families of simulator Hamiltonians we construct are of the standard form found in the literature~\cite{kempe2006complexity,oliveira2008complexity,biamonte2008realizable,schuch2009computational,cubitt2016complexity,piddock2017complexity,bravyi2016complexity,cubitt2018universal,waite2025complexitya}, extended to the parameter-dependent setting.
Specifically, we consider Hamiltonians structured as
\begin{equation}
    \wt{H}(s) = \Delta H_0 + V(s),
\end{equation}
where $H_0$ is a block-diagonal Hamiltonian with unit gap and $\Delta H_0$ has band gap $\Delta$, while $V(s)$ is a perturbation with block-diagonal and block-off-diagonal components carrying the parameter dependence.
We focus on this form because it is the one used by the perturbative gadgets developed later.
Section~\ref{sec:perturbative-gadget-reductions} further specialises $V(s)$ for first-, second-, and third-order reductions.

In this work we always choose $H_0$ to be independent of the parameter $s$ and 
\begin{align}\label{eq:H0-block-structure}
    (H_0)_{--} &= 0, &
    (H_0)_{++} &\succeq \Pi_+.
\end{align}
Notice that $H_0$ has no ordinary inverse on the full Hilbert space, since $(H_0)_{--} = 0$. 
We define the quantity 
\begin{equation}\label{eq:Sigma-definition}
    \Sigma \coloneqq \Pi_+ \left( H_0 \bigr|_{{\rm Im}(\Pi_+)} \right)^{-1} \Pi_+,
\end{equation}
which is precisely the Moore--Penrose pseudoinverse $H_0^{\pattce}$.
From this definition, it follows that
\begin{equation}\label{eq:Sigma-inverse-identity}
    (H_0)_{++} \Sigma = \Sigma (H_0)_{++} = \Pi_+.
\end{equation}
We also assume throughout that $2\snorm{V} < \Delta$ to avoid closing the band gap of $H_0$ (and for analytic reasons we explore momentarily).

In the case where $H_0 = \Pi_+$, we characterise useful facts about the map $\L_{\Delta H_0}$ that will be applied in subsequent analyses.

\begin{restatable}{lemma}{LInverse}\label{lem:L-inverse}
    Let $\Delta > 0$ and let $H_0 = \Pi_+$ and therefore satisfy \cref{eq:H0-block-structure}.
    Then $\L_{\Delta H_0}$ restricted to $\D_\tn{od}$ is a linear bijection of $\D_\tn{od}$ onto itself, whose inverse is the restriction of the pseudoinverse \cref{eq:liouville_pseudoinverse_def} to $\D_\tn{od}$:
    \begin{equation}\label{eq:L-inverse}
        \L_{\Delta H_0}^{\pattce}(X) = \Delta^{-1} \big( \Sigma X_{+-} - X_{-+} \Sigma \big),
        \qquad X \in \D_\tn{od}.
    \end{equation}
    Consequently, for every $X \in \D_\tn{od}$ the equation $\L_{\Delta H_0}(S) = X$ has a unique solution $S \in \D_\tn{od}$, namely $S = \L_{\Delta H_0}^{\pattce}(X)$.
\end{restatable}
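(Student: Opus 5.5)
The plan is to compute $\L_{\Delta H_0}$ explicitly on block-off-diagonal operators when $H_0 = \Pi_+$, and then check the claimed formula directly.

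First, with $H_0 = \Pi_+$ the Moore--Penrose object of \cref{eq:Sigma-definition} is $\Sigma = \Pi_+$, since $(H_0)_{++} = \Pi_+$ is the identity on $\H_+$. This is consistent with \cref{prop:moore-penrose-pseudoinverse-bdop}, and \cref{eq:Sigma-inverse-identity} holds trivially.

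\textbf{Action on off-diagonal blocks.} For $X \in \D_\tn{od}$, write $X = X_{-+} + X_{+-}$. Using $\Pi_+ X_{+-} = X_{+-}$, $X_{+-}\Pi_+ = 0$, $\Pi_+ X_{-+} = 0$ and $X_{-+}\Pi_+ = X_{-+}$, we get
\begin{equation*}
\L_{\Delta H_0}(X) = \Delta(\Pi_+ X - X\Pi_+) = \Delta\big(X_{+-} - X_{-+}\big).
\end{equation*}
By \cref{cor:L-preserves-block-structure}, or directly from this formula, $\L_{\Delta H_0}$ maps $\D_\tn{od}$ into itself. It acts on the two blocks as multiplication by $+\Delta$ and $-\Delta$ respectively, so it is injective on $\D_\tn{od}$. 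It is also surjective onto $\D_\tn{od}$, with inverse $Y \mapsto \Delta^{-1}(Y_{+-} - Y_{-+})$.

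\textbf{Matching the stated formula.} Since $\Sigma = \Pi_+$, we have $\Sigma X_{+-} = X_{+-}$ and $X_{-+}\Sigma = X_{-+}$. The expression in \cref{eq:L-inverse} is therefore exactly this inverse.

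To confirm it also agrees with the pseudoinverse in \cref{eq:liouville_pseudoinverse_def}, take $X = \Delta\Pi_+$. Its eigenvalues are $\Delta$ (projector $\Pi_+$) and $0$ (projector $\Pi_-$). The pseudoinverse sum then gives
\begin{equation*}
\frac{\Pi_+ Y \Pi_-}{\Delta} + \frac{\Pi_- Y \Pi_+}{-\Delta} = \Delta^{-1}\big(Y_{+-} - Y_{-+}\big),
\end{equation*}
which is the same map.

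\textbf{Uniqueness.} Existence and uniqueness of $S \in \D_\tn{od}$ solving $\L_{\Delta H_0}(S) = X$ follow from bijectivity. Note that uniqueness holds only within $\D_\tn{od}$: $\ker \L_{\Delta H_0}$ consists of block-diagonal operators, and $\ker\L_{\Delta H_0} \cap \D_\tn{od} = \{0\}$ by the computation above.

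No step is a real obstacle. The only care needed is the sign and ordering bookkeeping, namely that $\Pi_+ X_{-+} = 0$ and $X_{+-}\Pi_+ = 0$.
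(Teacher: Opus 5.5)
Your proposal is correct and follows the paper's proof essentially step for step. You compute $\L_{\Delta H_0}(X)=\Delta(X_{+-}-X_{-+})$ on $\D_{\tn{od}}$, invert it blockwise, use $\Sigma=\Pi_+$ to match \cref{eq:L-inverse}, check agreement with the two-term spectral sum of \cref{eq:liouville_pseudoinverse_def}, and get uniqueness from injectivity. The only cosmetic difference is that the paper certifies bijectivity via $\L_{\Delta H_0}^2=\Delta^2$ on $\D_{\tn{od}}$, whereas you read it directly off the $\pm\Delta$ action on the two off-diagonal blocks.
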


A proof is given in Appendix~\ref{app:sw-transformation-proofs}.

Having specified the simulator families, we now define the exact Schrieffer--Wolff transformation and its effective Hamiltonian.

\subsection{Exact Schrieffer--Wolff Transformation}
\label{sec:exact-sw-transformation}
Take $\wt{\Pi}_-(s)$ to be the projector onto the low-energy subspace of the perturbed Hamiltonian $\wt{H}(s)$.
When the perturbation $V(s)$ is sufficiently small compared to the band gap $\Delta$, there exists a unique anti-Hermitian and block-off-diagonal operator $S(s)$, with $\snorm{S} < \pi/2$~\cite{bravyi2011schrieffer}, such that
\begin{equation}
    \wt{\Pi}_-(s) = \e^{-S(s)} \Pi_- \e^{S(s)}.
\end{equation}
The operator $S(s)$ is the exact Schrieffer--Wolff generator that block-diagonalises the perturbed Hamiltonian $\e^{S(s)} \wt{H}(s) \e^{-S(s)}$.
Restricting to the low-energy subspace defines the \emph{effective} Hamiltonian
\begin{equation}\label{eq:exact-effective-hamiltonian}
    H_{\tn{eff}}(s) = \Pi_- \e^{S(s)} \wt{H}(s) \e^{-S(s)} \Pi_-.
\end{equation}
Therefore, $H_{\tn{eff}}(s)$ is exactly unitarily equivalent to the low-energy block of the perturbed Hamiltonian $\wt{H}(s)$.

Let $H(s)$ be the target Hamiltonian and let $\V : \Hil \to \wt{\Hil}$ be a fixed isometry with $\V\V^\dagger = \Pi_-$.
Write
\begin{equation}\label{eq:encoded-target-hamiltonian}
    \overline{H}(s) = \V H(s) \V^\dagger
\end{equation}
as the \emph{encoded} target Hamiltonian in the unperturbed low-energy subspace ${\rm Im}(\Pi_-)$.
The exact Schrieffer--Wolff isometry is taken to be
\begin{equation}
    \U(s) = \e^{-S(s)} \V.
\end{equation}
It follows that $\U(s)\U(s)^\dagger = \wt{\Pi}_-(s)$ and so $\U(s)$ has exactly the image required of a simulation witness (cf. Section~\ref{sec:uniform-simulation}).
Moreover, 
\begin{equation}
    \U(s)^\dagger \wt{H}(s) \U(s) = \V^\dagger H_{\tn{eff}}(s) \V.
\end{equation}

These observations reduce the simulation proof to showing that the exact effective Hamiltonian in \cref{eq:exact-effective-hamiltonian} is close in operator norm to the encoded target Hamiltonian in \cref{eq:encoded-target-hamiltonian}.
In the next subsection, we will develop perturbative expansions for the exact generator $S(s)$ and the exact effective Hamiltonian, which we subsequently truncate to obtain approximate expressions; the exact objects are generally not available in closed form.
Our applications of these ideas, in the computational reduction for the \sc{GSBPE} problem, will require proving that the Berry phase of the simulator system is close to the Berry phase of the target Hamiltonian.

Before proceeding to the perturbative expansions, we establish some differentiability and closure properties of the exact canonical Schrieffer--Wolff transformation, assuming that the perturbation $V(s)$ is itself closed and ${\rm C}^2$.

\begin{restatable}{proposition}{regularityperiodicitycanonicalsw}\label{prop:regularity-periodicity-canonical-sw}
    Let $H_0$ be independent of $s$ and let $V\in{\rm C}^2$ be closed.
    Suppose that $2\snorm{V}<\Delta$, so that the low-energy band of $\widetilde{H}(s) = \Delta H_0+V(s)$ remains uniformly separated from the high-energy band and the canonical Schrieffer--Wolff transformation is well-defined for every $s\in[0,1]$.
    Then the exact canonical generator $S(s)$ and the unitary $\e^{-S(s)}$ are ${\rm C}^2$ families satisfying
    \begin{align}
        S(0) &= S(1), & 
        \e^{-S(0)} &= \e^{-S(1)}.
    \end{align}
    If $\V$ is independent of $s$, then the isometry $\U(s)=\e^{-S(s)}\V$ is also ${\rm C}^2$ and closed.
\end{restatable}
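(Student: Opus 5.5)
The plan is to express the canonical generator as an explicit smooth function of the low-energy projector $\wt\Pi_-(s)$, and then to get the regularity of $\wt\Pi_-(s)$ from a Riesz contour integral. Closure will then follow from uniqueness of $S$.

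First, the low-energy projector. Since $H_0$ is fixed with $(H_0)_{--}=0$ and $(H_0)_{++}\succeq\Pi_+$, Weyl's inequality and $\norm{V(s)}\le\snorm{V}<\Delta/2$ put the spectrum of $\wt H(s)$ inside $[-\snorm{V},\snorm{V}]\cup[\Delta-\snorm{V},\infty)$. These two sets are separated by a gap at least $\Delta-2\snorm{V}>0$, uniformly in $s$. I will therefore take one fixed contour $\mathscr{C}$, the circle centred at $0$ of radius $\Delta/2$, which is admissible for every $s$ and keeps distance at least $\Delta/2-\snorm{V}>0$ from $\sigma(\wt H(s))$. The projector is then
\[
\wt\Pi_-(s)=\frac{1}{2\pi\i}\oint_{\mathscr{C}}(z-\wt H(s))^{-1}\dd{z}.
\]
Using $\ps R=R(\ps V)R$ and $\ps^2R=2R(\ps V)R(\ps V)R+R(\ps^2V)R$, both continuous in $(z,s)$, I differentiate under the integral sign as in Section~\ref{sec:continuity-of-spectrum}. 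This shows $\wt\Pi_-\in{\rm C}^2$. Because $V(0)=V(1)$, we also get $\wt H(0)=\wt H(1)$ and hence $\wt\Pi_-(0)=\wt\Pi_-(1)$.

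Second, the map from projector to generator. Following Ref.~\cite{bravyi2011schrieffer}, the canonical generator is defined through the direct rotation
\[
\e^{-S}=\sqrt{(I-2\wt\Pi_-)(I-2\Pi_-)},
\qquad
S=-\tfrac12\log\bigl[(I-2\wt\Pi_-)(I-2\Pi_-)\bigr],
\]
using the principal branch. This is well defined when $\norm{\wt\Pi_- -\Pi_-}<1$. By the Davis--Kahan/$\sin\Theta$ estimate, the hypothesis $2\snorm{V}<\Delta$ guarantees this condition, as in Ref.~\cite{bravyi2011schrieffer}. The unitary $W(s)=(I-2\wt\Pi_-(s))(I-2\Pi_-)$ is a ${\rm C}^2$ product of ${\rm C}^2$ factors, since $\Pi_-$ is constant. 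Its spectrum avoids $-1$ uniformly in $s$, with a margin controlled by $\snorm{\wt\Pi_--\Pi_-}<1$. The principal logarithm can then be written as a holomorphic functional calculus integral over a fixed contour $\Gamma$ enclosing the unit-circle arc that contains $\sigma(W(s))$ for all $s$ and avoiding the branch cut:
\[
\log W=\frac{1}{2\pi\i}\oint_\Gamma\log(z)\,(z-W)^{-1}\dd{z}.
\]
This expression is ${\rm C}^2$ in $s$ by the same differentiation-under-the-integral argument, so $S\in{\rm C}^2$. The map $S\mapsto\e^{-S}$ is entire, hence ${\rm C}^2$ composed with ${\rm C}^2$ gives that $\e^{-S}$ is ${\rm C}^2$. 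Anti-Hermiticity and block-off-diagonality of $S$ are those of the canonical generator and need no re-proof.

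Finally, closure. Both $S(s)$ and the unitary are explicit functions of $\wt\Pi_-(s)$ alone, so $\wt\Pi_-(0)=\wt\Pi_-(1)$ gives $S(0)=S(1)$ and $\e^{-S(0)}=\e^{-S(1)}$. This also agrees with the uniqueness of $S$ among block-off-diagonal anti-Hermitian generators with $\norm{S}<\pi/2$. If $\V$ is constant, then $\U(s)=\e^{-S(s)}\V$ inherits ${\rm C}^2$ regularity and closure directly.

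The main obstacle is securing the \emph{uniform} conditions needed for a single fixed contour in each step. These are the resolvent-distance bound for $\wt H(s)$, and, more delicately, the bound $\snorm{\wt\Pi_--\Pi_-}<1$ that keeps $\sigma(W(s))$ uniformly away from the branch cut. If the Davis--Kahan constant under $2\snorm{V}<\Delta$ alone turns out to be borderline, the fallback is compactness of $[0,1]$: the continuous function $s\mapsto\norm{\wt\Pi_-(s)-\Pi_-}$ attains its maximum, which is strictly below $1$. That suffices to fix $\Gamma$ globally.
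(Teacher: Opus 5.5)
Your proposal is correct and follows essentially the same route as the paper: a fixed Riesz contour gives $\wt\Pi_-\in{\rm C}^2$ with $\wt\Pi_-(0)=\wt\Pi_-(1)$. The canonical direct rotation is then an analytic function of $\wt\Pi_-(s)$ whenever $\norm{\wt\Pi_-(s)-\Pi_-}<1$, which yields regularity and closure of $S$, $\e^{-S}$ and $\U$. The main difference is that you write the principal logarithm of $(I-2\wt\Pi_-)(I-2\Pi_-)$ as an explicit contour integral, which gives ${\rm C}^2$ regularity of $S$ directly and obtains closure because $S$ depends on $\wt\Pi_-(s)$ alone. The paper instead asserts regularity of $\e^{-S}$ first and transfers it to $S$, and obtains closure, via uniqueness of the canonical block-off-diagonal logarithm with $\norm{S}<\pi/2$.
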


A proof is given in Appendix~\ref{app:sw-transformation-proofs}.

\subsection{Perturbative Coefficients and Truncation}
\label{sec:perturbative-coefficients-and-truncation}
Fix $s$ temporarily, introduce a formal coupling parameter $t$ and set $\wt{H}(t) = \Delta H_0 + t V$.
The exact Schrieffer--Wolff generator and exact effective Hamiltonian can be expanded as power series in $t$; these series are analytic in $t$ on the disc $\abs{t}<\Delta/(2\snorm{V})$, while \cite{bravyi2011schrieffer} guarantees that the series for the exact Schrieffer--Wolff generator and effective Hamiltonian converge absolutely on the disc $\abs{t}<\Delta/(16\snorm{V})$.
Specifically, we write
\begin{align}
    S(t) &= \sum_{j \geq 1} t^j S_j, &
    H_{\tn{eff}}(t) &= \sum_{j \geq 0} t^j K_j.
\end{align}
For brevity we have set $K_j = H_{\tn{eff},j}$, where $H_{\tn{eff},j}$ are the coefficients in the power series expansion of the exact effective Hamiltonian.
Note that since $(H_0)_{--} = 0$, we have $K_0 = 0$ and so the effective Hamiltonian series starts at order $t$.

At $t=1$, define the truncated generator and truncated effective Hamiltonian as
\begin{align}\label{eq:truncated-generator-and-effective-hamiltonian}
    S^{[p]} &\coloneqq \sum_{j=1}^{p} S_j, &
    H_{\tn{eff}}^{[p]} &\coloneqq \sum_{j=0}^{p} K_j,
\end{align}
and define their respective remainders as
\begin{align}\label{eq:remainder-terms-generator-and-effective-hamiltonian}
    R^S_{p+1} &\coloneqq S - S^{[p]} = \sum_{j \geq p+1} S_j, &
    R^{H_{\tn{eff}}}_{p+1} &\coloneqq H_{\tn{eff}} - H_{\tn{eff}}^{[p]} = \sum_{j \geq p+1} K_j.
\end{align}
The operator $H_{\tn{eff}}^{[p]}$ is obtained by truncating the Taylor-BCH expansion, \cref{eq:bch-expansion}, of the exact effective Hamiltonian at order $p$.
It is not, in general, the full untruncated expression $(\e^{S^{[p]}} \wt{H} \e^{-S^{[p]}})_{--}$ (\cref{eq:exact-effective-hamiltonian}).

The generator coefficients $S_j$ are obtained by cancelling the off-diagonal block of the BCH expansion, order-by-order~\cite{bravyi2011schrieffer}.
That is, by inserting $S(t) = \sum_{j\geq 1} t^j S_j$ into 
\begin{equation}\label{eq:bch-expansion}
    \e^{S(t)} (\Delta H_0 + t V) \e^{-S(t)} = \sum_{k \geq 0} \frac{1}{k!} \L^k_{S(t)}(\Delta H_0 + t V),
\end{equation}
and setting the block-off-diagonal part to zero at each order of $t$.

\subsubsection{Schrieffer--Wolff generator coefficients}
We collect the first three generator coefficients explicitly in the following proposition.
Note that we re-introduce the parameter $s$ in this result for consistency, but continue to temporarily suppress it in the subsequent discussions.

\begin{restatable}{proposition}{swcoefficients}\label{prop:sw-coefficients}
    The first three Schrieffer--Wolff generator coefficients are given by
    \begin{align}
        S_1(s) &= \L_{\Delta H_0}^{\pattce} \big( \big( V(s) \big)_{\tn{od}} \big), \label{eq:sw1} \\
        S_2(s) &= \L_{\Delta H_0}^{\pattce} \big( \comm{S_1(s)}{\big( V(s) \big)_{\tn{d}}} \big), \label{eq:sw2} \\
        S_3(s) &= \L_{\Delta H_0}^{\pattce}\big( \comm{S_2(s)}{\big( V(s) \big)_{\tn{d}}} \big) - \frac{1}{3} \L_{\Delta H_0}^{\pattce}\big( \comm{S_1(s)}{\comm{S_1(s)}{\comm{S_1(s)}{\Delta H_0}}} \big). \label{eq:sw3}
    \end{align}
\end{restatable}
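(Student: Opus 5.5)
The plan is to derive the coefficients by matching orders in the BCH expansion \cref{eq:bch-expansion}. At fixed $s$ I would insert $S(t)=\sum_{j\ge1}t^jS_j$, with each $S_j$ anti-Hermitian and block-off-diagonal (the canonical condition inherited from the exact generator of Ref.~\cite{bravyi2011schrieffer}). I then collect the coefficient of $t^k$ for $k=1,2,3$ and impose that its block-off-diagonal part vanishes.

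Two structural facts are used throughout.
\begin{itemize}
    \item By \cref{lem:block-commutators} and \cref{cor:L-preserves-block-structure}, commutators of an off-diagonal with a diagonal operator are off-diagonal, and commutators of two off-diagonal operators are diagonal. This lets me read off the off-diagonal part of each nested commutator from its block type.
    \item $\L_{\Delta H_0}$ restricted to $\D_\tn{od}$ is a bijection with inverse $\L^{\pattce}_{\Delta H_0}$. For $H_0=\Pi_+$ this is \cref{lem:L-inverse}. For general block-diagonal $H_0$ satisfying \cref{eq:H0-block-structure}, the same holds because the spectra of $(H_0)_{--}=0$ and $(H_0)_{++}\succeq\Pi_+$ are disjoint, so no $P_\lambda(\cdot)P_\lambda$ term can mix the two blocks.
\end{itemize}
Hence an equation $\L_{\Delta H_0}(S_k)=Y$ with $Y\in\D_\tn{od}$ has the unique off-diagonal solution $S_k=\L^{\pattce}_{\Delta H_0}(Y)$.

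Using $\comm{S}{\Delta H_0}=-\L_{\Delta H_0}(S)$, the three orders go as follows.
\begin{itemize}
    \item \emph{Order $t$.} The terms are $\comm{S_1}{\Delta H_0}+V$, whose off-diagonal part is $-\L_{\Delta H_0}(S_1)+V_\tn{od}$. Setting this to zero gives \cref{eq:sw1}, and it also records the identity $\comm{S_1}{\Delta H_0}=-V_\tn{od}$.
    \item \emph{Order $t^2$.} The terms are $\comm{S_2}{\Delta H_0}+\comm{S_1}{V}+\tfrac12\comm{S_1}{\comm{S_1}{\Delta H_0}}$. Here $\comm{S_1}{V_\tn{od}}$ is diagonal, and $\comm{S_1}{\comm{S_1}{\Delta H_0}}=-\comm{S_1}{V_\tn{od}}$ is also diagonal. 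The only off-diagonal survivors are therefore $-\L_{\Delta H_0}(S_2)+\comm{S_1}{V_\tn{d}}$, giving \cref{eq:sw2}. This also records $\comm{S_2}{\Delta H_0}=-\comm{S_1}{V_\tn{d}}$.
    \item \emph{Order $t^3$.} The terms are
    \[
    \comm{S_3}{\Delta H_0}+\comm{S_2}{V}+\tfrac12\bigl(\comm{S_1}{\comm{S_2}{\Delta H_0}}+\comm{S_2}{\comm{S_1}{\Delta H_0}}\bigr)+\tfrac12\comm{S_1}{\comm{S_1}{V}}+\tfrac16\comm{S_1}{\comm{S_1}{\comm{S_1}{\Delta H_0}}}.
    \]
    By the recorded identities, both mixed terms in the bracket are commutators of two off-diagonal operators, hence diagonal. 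The off-diagonal part of $\comm{S_2}{V}$ is $\comm{S_2}{V_\tn{d}}$. The off-diagonal part of $\tfrac12\comm{S_1}{\comm{S_1}{V}}$ is $\tfrac12\comm{S_1}{\comm{S_1}{V_\tn{od}}}=-\tfrac12\comm{S_1}{\comm{S_1}{\comm{S_1}{\Delta H_0}}}$. Collecting terms gives
    \[
    \L_{\Delta H_0}(S_3)=\comm{S_2}{V_\tn{d}}-\bigl(\tfrac12-\tfrac16\bigr)\comm{S_1}{\comm{S_1}{\comm{S_1}{\Delta H_0}}}.
    \]
    Applying $\L^{\pattce}_{\Delta H_0}$ and using linearity (\cref{lem:liouville-properties}, item 4) yields \cref{eq:sw3}.
\end{itemize}

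The main obstacle is the third-order bookkeeping: expanding $\tfrac1{k!}\L^k_{S(t)}$ correctly to order $t^3$ and classifying each nested commutator by block type. The sign conventions between $\comm{S}{\Delta H_0}$ and $\L_{\Delta H_0}(S)$ matter here, and the lower-order identities must be substituted in to merge $\tfrac12\comm{S_1}{\comm{S_1}{V_\tn{od}}}$ with the $\tfrac16$ term into the coefficient $\tfrac13$.

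I would also remark on two further points. First, every argument fed to $\L^{\pattce}_{\Delta H_0}$ is verified to lie in $\D_\tn{od}$, which justifies inverting. Second, reinstating $s$ is immediate, since the derivation is pointwise.
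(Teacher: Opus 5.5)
Your proposal is correct and follows the paper's proof essentially line for line. It uses the same BCH coefficients at orders $t,t^2,t^3$, the same block-type classification via \cref{lem:block-commutators}, and the same substitution $\comm{S_1}{\Delta H_0}=-V_\tn{od}$ to merge the $\tfrac12$ and $\tfrac16$ terms into the coefficient $\tfrac13$. Your separate check that $\L_{\Delta H_0}$ is invertible on $\D_\tn{od}$ for every $H_0$ satisfying \cref{eq:H0-block-structure} is a small improvement, since the paper simply cites \cref{lem:L-inverse}, which is stated only for $H_0=\Pi_+$.
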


A proof is given in Appendix~\ref{app:sw-transformation-proofs}.

\subsubsection{Effective Hamiltonian coefficients}
Now that we have explicit expressions for the first three generator coefficients $S_1, S_2, S_3$, we can use them to construct the corresponding effective Hamiltonian coefficients $K_1, K_2, K_3$.
We again re-introduce the parameter $s$ in the effective Hamiltonian coefficients for the following proposition.

\begin{restatable}{proposition}{heffcoefficients}\label{prop:heff-coefficients}
    Let $\Sigma = H_0^{\pattce}$.
    The first three nonzero coefficients of the effective Hamiltonian are given by
    \begin{align}
        K_1(s) &= \big( V(s) \big)_{--}, \label{eq:K1}\\
        K_2(s) &= - \Delta^{-1} \big( V(s) \big)_{-+} \Sigma \big( V(s) \big)_{+-}, \label{eq:K2}\\
        K_3(s) &= \Delta^{-2} \left( \big( V(s) \big)_{-+} \Sigma \big( V(s) \big)_{++} \Sigma \big( V(s) \big)_{+-} - \frac{1}{2} \acomm{\big( V(s) \big)_{--}}{\big( V(s) \big)_{-+} \Sigma^2 \big( V(s) \big)_{+-}} \right). \label{eq:K3}
    \end{align}
\end{restatable}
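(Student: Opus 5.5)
The plan is to substitute the expansion $S(t)=\sum_{j\ge1}t^jS_j$ into the BCH series \cref{eq:bch-expansion}, collect the coefficients of $t$, $t^2$, $t^3$, and project each onto the $(--)$ block. By definition \cref{eq:exact-effective-hamiltonian}, $K_j$ is the $(--)$ block of the order-$t^j$ coefficient. Fix $s$ and suppress it throughout. Write $V_\tn{d}$ and $V_\tn{od}$ for the block parts. I will use three facts repeatedly:
\begin{itemize}
\item each $S_j$ is block-off-diagonal (\cref{prop:sw-coefficients} together with \cref{lem:L-inverse});
\item the block rules of \cref{lem:block-commutators};
\item $(H_0)_{--}=0$, so $\Pi_- H_0=H_0\Pi_-=0$.
\end{itemize}

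\textbf{Order-$t^j$ coefficients.}
\begin{itemize}
\item Order $t$: the coefficient is $V+\comm{S_1}{\Delta H_0}$. The commutator is off-diagonal, and by \cref{eq:sw1} it cancels $V_\tn{od}$. What remains is $V_\tn{d}$, whose $(--)$ block gives $K_1=V_{--}$.
\item Order $t^2$: the coefficient is
\[
\comm{S_2}{\Delta H_0}+\comm{S_1}{V}+\tfrac12\comm{S_1}{\comm{S_1}{\Delta H_0}}.
\]
Using $\comm{S_1}{\Delta H_0}=-V_\tn{od}$, the diagonal part becomes $\comm{S_1}{V_\tn{od}}-\tfrac12\comm{S_1}{V_\tn{od}}=\tfrac12\comm{S_1}{V_\tn{od}}$.
\item Order $t^3$: the diagonal part is
\[
\comm{S_2}{V_\tn{od}}+\tfrac12\bigl(\comm{S_1}{\comm{S_2}{\Delta H_0}}+\comm{S_2}{\comm{S_1}{\Delta H_0}}\bigr)+\tfrac12\comm{S_1}{\comm{S_1}{V_\tn{d}}}+\tfrac16\comm{S_1}{\comm{S_1}{\comm{S_1}{\Delta H_0}}}.
\]
Any term containing an odd number of off-diagonal factors is off-diagonal and drops out. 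I will substitute $\comm{S_1}{\Delta H_0}=-V_\tn{od}$ and $\comm{S_2}{\Delta H_0}=-\comm{S_1}{V_\tn{d}}_\tn{od}$; the latter follows from the second-order cancellation condition, with the off-diagonal part of the $\comm{S_1}{V_\tn{d}}$ term.
\end{itemize}

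\textbf{Explicit blocks of $S_1$ and $S_2$.} Next I will write these in block form using \cref{eq:L-inverse}. I take that formula as valid for general $H_0$ satisfying \cref{eq:H0-block-structure}, with $\Sigma$ replacing the inverse on $\H_+$; the check is $\Delta(H_0S-SH_0)=X$ using \cref{eq:Sigma-inverse-identity}. This gives
\begin{align}
S_1 &= \Delta^{-1}\bigl(\Sigma V_{+-}-V_{-+}\Sigma\bigr),\\
S_2 &= \Delta^{-2}\bigl(\Sigma^2 V_{+-}V_{--}-\Sigma V_{++}\Sigma V_{+-}\bigr)-(\text{Hermitian-conjugate-type term}),
\end{align}
to be verified by direct substitution. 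Here, for $S_2$, one solves $\Delta(H_0 S_2)_{+-}=\comm{S_1}{V_\tn{d}}_{+-}=(S_1)_{+-}V_{--}-V_{++}(S_1)_{+-}$.

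\textbf{Reading off $K_2$ and $K_3$.} Projecting the order-$t^2$ diagonal part onto $(--)$:
\[
\tfrac12\Pi_-\comm{S_1}{V_\tn{od}}\Pi_- = \tfrac12\bigl((S_1)_{-+}V_{+-}-V_{-+}(S_1)_{+-}\bigr) = -\Delta^{-1}V_{-+}\Sigma V_{+-},
\]
which is \cref{eq:K2}. For order $t^3$ I will compute the $(--)$ block of each surviving term as products of the blocks above and collect. The $V_{-+}\Sigma V_{++}\Sigma V_{+-}$ pieces should combine with net coefficient $+1$. The $V_{--}$-dependent pieces should combine into $-\tfrac12\acomm{V_{--}}{V_{-+}\Sigma^2V_{+-}}$. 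The triple commutator in $\Delta H_0$ contributes nothing to the $(--)$ block: after substituting $V_\tn{od}$ it is a commutator with an odd number of off-diagonal factors, hence off-diagonal.

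\textbf{Main obstacle.} I expect the main difficulty to be the bookkeeping at third order. The nested commutators must be expanded with the correct signs, and the coefficients $1$, $\tfrac12$, $\tfrac16$ from the BCH expansion must combine exactly to the stated prefactors $1$ and $-\tfrac12$. I would cross-check the result against the static formulas of Ref.~\cite{bravyi2011schrieffer}: since $s$ only enters pointwise and $H_0$ is independent of $s$, the static expressions apply verbatim at each $s$.
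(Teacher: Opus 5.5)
Your proposal is correct and follows essentially the same route as the paper: expand the BCH series to third order, use $\comm{S_1}{\Delta H_0}=-V_{\tn{od}}$ and $\comm{S_2}{\Delta H_0}=-\comm{S_1}{V_{\tn{d}}}$ to reduce the relevant diagonal parts to $\tfrac12\comm{S_1}{V_{\tn{od}}}$ and $\tfrac12\comm{S_2}{V_{\tn{od}}}$, and then substitute the explicit blocks of $S_1$ and $S_2$ obtained from the pseudoinverse formula. Your check that this formula holds for general $H_0$ (with $\Sigma$ in place of $\Pi_+$), and your writing $S_2=Y-Y^\dagger$ with $Y=(S_2)_{+-}$, are slightly more careful than the paper, whose displayed $(S_2)_{-+}$ has its overall sign flipped; anti-Hermiticity forces $(S_2)_{-+}=\Delta^{-2}\bigl(V_{-+}\Sigma V_{++}\Sigma-V_{--}V_{-+}\Sigma^2\bigr)$, and it is this sign that produces the stated $K_3$.
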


A proof is given in Appendix~\ref{app:sw-transformation-proofs}.

From \cref{prop:heff-coefficients}, it can be shown that $\norm{K_j} = O(\Delta^{1-j} \norm{V}^j)$ for $j = 1, 2, 3$.
As we will later establish, the choice of perturbation $V$ will contain terms that scale with $\Delta$ (to some fractional power).
In what follows, an ``order-$p$'' reduction uses the perturbative coefficients through $K_p$ to generate the target interaction.
The resulting error is determined after the penalty scale $\Delta$ is fixed.

\begin{remark}\label{rmk:ignoring-third-order-terms}
    In third-order gadget design, the contribution $V_{-+}\Sigma V_{++}\Sigma V_{+-}$ is used to generate the target interaction.
    The anticommutator contribution in \cref{eq:K3} is instead absorbed into the correction and error terms identified in the later reduction.
    We therefore analyse the decomposition of the first contribution separately below.
\end{remark}
    
\subsection{Uniform Control of the Exact and Truncated Components}
\label{sec:uniform-control-exact-and-truncated-components}
The coefficients derived above in \cref{prop:sw-coefficients,prop:heff-coefficients} will be used to identify the interactions produced by a perturbative gadget reduction.
To turn such reductions into simulation results we must control the difference between the truncated and exact objects, uniformly along the entire parameter space of interest.
For a fixed order $p$, we truncate the generator and effective-Hamiltonian series and bound the resulting remainders uniformly in $s$.
The following lemma supplies the required bounds on the exact and truncated components.

\begin{restatable}{lemma}{uniformcontrolexacttruncated}\label{lemma:uniform-control-exact-truncated}
    Let $H_0$ be independent of $s$ and satisfy \cref{eq:H0-block-structure}.
    Let $V \in {\rm C}^2$ be closed and suppose 
    \begin{align}
        \snorm{V} &\leq v_0, &
        \snorm{\ps V} &\leq v_1.
    \end{align}
    Whenever $32 v_0 < \Delta$, the exact low-energy band of $\Delta H_0 + V(s)$ is obtained from $\Pi_-$ by a closed ${\rm C}^2$ Schrieffer--Wolff unitary $\e^{-S(s)}$.
    For every fixed $p \geq 1$, we have:
    \begin{align}
        \snorm{S} &= O(\Delta^{-1} v_0), & \snorm{\ps S} &= O(\Delta^{-1} v_1); \label{eq:uniform-control-s}\\
        \snorm{R^S_{p+1}} &= O(\Delta^{-(p+1)} v_0^{p+1}), & \snorm{\ps R^S_{p+1}} &= O(\Delta^{-(p+1)} v_1 v_0^{p}); \label{eq:uniform-control-remainder-s}\\
        \snorm{R^{H_\tn{eff}}_{p+1}} &= O(\Delta^{-p} v_0^{p+1}), & \snorm{\ps R^{H_\tn{eff}}_{p+1}} &= O(\Delta^{-p} v_1 v_0^{p}). \label{eq:uniform-control-remainder-heff}
    \end{align}
    The constants implicit in the asymptotic notation may depend on the fixed order $p$, but are independent of $s$, $\Delta$, $v_0$, and $v_1$.
    Moreover, if $\V$ is independent of $s$ and $\U(s) = \e^{-S(s)}\V$, then $\U$ is closed and ${\rm C}^2$, with 
    \begin{equation}\label{eq:kappa-bound}
        \kappa = \snorm{\ps \U} \leq \snorm{\ps S} = O(\Delta^{-1} v_1).
    \end{equation}
\end{restatable}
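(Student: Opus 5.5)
The plan is to treat, for each fixed $s$, the Schrieffer--Wolff objects as analytic functions of the formal coupling $t$. We then obtain uniform-in-$s$ bounds on the series coefficients and their $s$-derivatives from a Cauchy estimate on a disc of radius independent of $s$. Regularity and closure of $S(s)$ and $\e^{-S(s)}$ come directly from \cref{prop:regularity-periodicity-canonical-sw}, since $32v_0<\Delta$ implies $2\snorm{V}<\Delta$. The bound $\kappa\le\snorm{\ps S}$ then follows from $\ps\U=\ps(\e^{-S})\V$, $\norm{\V}=1$, and the Duhamel formula $\ps \e^{-S}=-\int_0^1 \e^{-uS}(\ps S)\e^{-(1-u)S}\dd u$. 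Because $S$ is anti-Hermitian, each $\e^{-uS}$ is unitary, which gives $\norm{\ps\e^{-S}}\le\norm{\ps S}$.

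For the coefficient bounds, fix $s$ and set $r=\Delta/(16 v_0)\ge 2$. By \cite{bravyi2011schrieffer}, $S(t)$ and $H_{\tn{eff}}(t)$ are analytic on $\abs{t}<r$, with $\norm{S(t)}=O(1)$ and $\norm{H_{\tn{eff}}(t)}=O(\abs{t}v_0)$ there. Cauchy's estimate on the circle $\abs{t}=r$ gives $\norm{S_j}\le C r^{-j}=C(16v_0/\Delta)^j$ and $\norm{K_j}\le C v_0 r^{1-j}\cdot O(1)=O(\Delta^{1-j}v_0^j)$. Since $r\ge2$, the tail at $t=1$ is a geometric series dominated by its first term. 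This yields $\snorm{S}=O(\Delta^{-1}v_0)$, $\snorm{R^S_{p+1}}=O((v_0/\Delta)^{p+1})$ and $\snorm{R^{H_\tn{eff}}_{p+1}}=O(\Delta^{-p}v_0^{p+1})$. All constants are uniform in $s$ because they depend only on $v_0$ and $\Delta$. The Weierstrass M-test then makes the series uniformly convergent in $s$, which will justify termwise differentiation once the derivative coefficients are controlled.

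The derivative bounds are the main obstacle, since we need a single power of $v_1$ and no loss of $\Delta$. The approach is a two-variable analytic continuation. Consider $V(s')$ as $s'$ varies near $s$, and introduce a second complex parameter $w$ with the perturbation $tV(s)+w\,\ps V(s)$. Both $S$ and $H_{\tn{eff}}$ are holomorphic in $(t,w)$ on the polydisc $\abs{t}<r/2$, $\abs{w}<\Delta/(64 v_1)$, because the total perturbation norm stays below $\Delta/32$. The chain rule identifies $\ps S_j(s)$ with the coefficient of $t^j$ in $\partial_w$ of the series for the perturbation $tV(s)+w\ps V(s)$, evaluated at $w=0$; more precisely, $\ps S_j=\partial_w[t^j]S$ by multilinearity of each $S_j$ in $V$. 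A mixed Cauchy estimate then gives $\norm{\ps S_j}\le C\,(v_1/\Delta)(v_0/\Delta)^{j-1}$ and $\norm{\ps K_j}\le C\,v_1 v_0^{j-1}\Delta^{1-j}$.

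To make this rigorous, I would use that each $S_j$ and $K_j$ is a homogeneous degree-$j$ polynomial in $V$, with each monomial containing $j$ copies of $V$ together with powers of $\Delta^{-1}\Sigma$ and projectors, as the recursion from \cref{eq:bch-expansion} and \cref{lem:L-inverse} shows. Its $s$-derivative is therefore the polarised $j$-linear form with one slot replaced by $\ps V$. Polarisation bounds this by a constant times the value of the form on $V+\ps V\cdot(v_0/v_1)$, scaled by $v_1/v_0$, which reduces the derivative bound to the previous Cauchy bound with $v_0$ replaced by $O(v_0)$. Summing from $j=1$ gives $\snorm{\ps S}=O(\Delta^{-1}v_1)$. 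Summing from $j=p+1$ gives $\snorm{\ps R^S_{p+1}}=O(\Delta^{-(p+1)}v_1v_0^{p})$ and $\snorm{\ps R^{H_\tn{eff}}_{p+1}}=O(\Delta^{-p}v_1v_0^{p})$. The margin $32v_0<\Delta$ keeps the enlarged perturbation inside the convergence disc of \cite{bravyi2011schrieffer}, and uniform convergence of the differentiated series justifies exchanging $\ps$ with the sum.
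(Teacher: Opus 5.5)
Your treatment of regularity and closure, of the undifferentiated coefficients, and of $\kappa$ via Duhamel matches the paper's: a Cauchy estimate on a circle $\abs{t}=R$ common to all $s$, geometric tails, and the M-test. The derivative bounds are where you genuinely differ.

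The paper differentiates the homogeneous degree-$j$ expressions for $S_j$ and $K_j$ monomial by monomial. This gives at most $j$ terms, each with one $V$ replaced by $\ps V$, and the paper reruns the static norm bounds of \cite{bravyi2011schrieffer} on them. That produces the factor $j$ in $\snorm{\ps S_j}\le Cj\alpha^j\Delta^{-j}v_1v_0^{j-1}$ at once, but tacitly requires those static bounds to be of majorant (term-by-term) type.

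Your route, holomorphy in an auxiliary direction $w\,\ps V$ followed by a Cauchy/polarisation estimate, needs only a sup bound on $S$ and $H_{\tn{eff}}$ over a ball of perturbations. So it does not depend on how the static bounds were proved, which is a real advantage. You should, however, say explicitly that the contour-integral argument behind \cite{bravyi2011schrieffer} uses only $\norm{W}$, so it covers the non-Hermitian $W=tV+w\,\ps V$ with independent complex $t,w$. Also, $\ps S_j$ is the coefficient of $t^{j-1}w$, not of $t^j$.

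As written, though, your radii leave no slack, so the constants are not uniform in $v_0/\Delta$ as the statement requires.

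\emph{First issue.} $r=\Delta/(16v_0)$ is the edge of the disc on which \cite{bravyi2011schrieffer} guarantees convergence, and nothing bounds $\norm{S(t)}$ there. You need $\abs{t}=\Delta/(\alpha v_0)$ with a fixed $\alpha\in(16,32)$, exactly as the paper does, giving ratio $q=\alpha v_0/\Delta<\alpha/32$.

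\emph{Second issue.} In the derivative step, your polydisc $\abs{t}<\Delta/(32v_0)$ produces the ratio $32v_0/\Delta$. (On that polydisc the perturbation norm reaches $3\Delta/64$, not below $\Delta/32$.) Your polarisation at the fixed radius $v_0/v_1$, which doubles $v_0$, produces the same ratio. The hypothesis $32v_0<\Delta$ does not keep $32v_0/\Delta$ away from $1$, so the sums and tails are not uniformly $O(1)$. Moreover, once you move strictly inside the convergence disc, fixed-radius polarisation gives ratio $2\alpha v_0/\Delta$, which can exceed $1$. That version therefore cannot be rescued just by shrinking radii.

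You can repair this in either of two ways.
\begin{itemize}
\item Split the slack unevenly, e.g.\ $\rho_t v_0=\Delta/24$ and $\rho_w v_1=\Delta/96$. The perturbation norm is then at most $5\Delta/96<\Delta/16$, and the ratio is $24v_0/\Delta<3/4$.
\item Polarise at the $j$-dependent radius $\rho_j=v_0/(jv_1)$. Combined with the homogeneous bound $\norm{S_j(W)}\le C(\alpha\norm{W}/\Delta)^j$, this gives $\norm{\ps S_j}\le C\,e\,j\,\alpha^j\Delta^{-j}v_1v_0^{j-1}$. That is the paper's bound, with a factor $j$ in place of a $2^j$ loss; the same works for $K_j$.
\end{itemize}
With either change your argument yields all the stated estimates.
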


A proof is given in Appendix~\ref{app:sw-transformation-proofs}.

The later reductions use \cref{lemma:uniform-control-exact-truncated} to construct exact simulation bounds from truncated conditions.
If $\snorm{\overline{H} - H_\tn{eff}^{[p]}} \leq \epsilon$, then
$\snorm{\overline{H} - H_\tn{eff}} \leq \epsilon + O(\Delta^{-p} v_0^{p+1})$.
The same lemma gives the encoding bound $\snorm{\U - \V} = O(\Delta^{-1} v_0)$.

In a number of cases we consider, the first Schrieffer--Wolff coefficient $S_1$ is static, i.e., independent of $s$.
The following corollary addresses this case.

\begin{restatable}[Static leading Schrieffer--Wolff generator]{corollary}{staticleadingswgenerator}\label{cor:static-leading-sw-generator}
    Under the hypotheses of
    \cref{lemma:uniform-control-exact-truncated}, suppose additionally that the block-off-diagonal part of the perturbation is independent of $s$, i.e., $\ps \big( V(s) \big)_{\tn{od}} = 0$.
    Then the first Schrieffer--Wolff coefficient $S_1$ is independent of $s$. 
    Consequently,
    \begin{align}
        \snorm{S} &= O(\Delta^{-1}v_0), &
        \snorm{\ps S} &= O(\Delta^{-2}v_0v_1).
    \end{align}
    If $\U(s) = \e^{-S(s)}\V$, where $\V$ is independent of $s$, then its derivative norm satisfies
    \begin{equation}
        \kappa = \snorm{\ps \U} = O(\Delta^{-2}v_0v_1).
    \end{equation}
\end{restatable}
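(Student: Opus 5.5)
Under the hypothesis $\ps(V(s))_{\tn{od}}=0$, \cref{eq:sw1} of \cref{prop:sw-coefficients} gives $S_1(s)=\L_{\Delta H_0}^{\pattce}(V_{\tn{od}})$. Since $H_0$ is independent of $s$, so is the pseudoinverse, and hence $S_1$ is independent of $s$ and $\ps S_1=0$. We then split
\begin{equation*}
    S(s)=S_1+R^S_2(s),
\end{equation*}
using the $p=1$ remainder of \cref{eq:remainder-terms-generator-and-effective-hamiltonian}. This yields $\ps S=\ps R^S_2$.

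The bound $\snorm{S}=O(\Delta^{-1}v_0)$ is exactly \cref{eq:uniform-control-s} and needs no further argument. For the derivative, we invoke \cref{eq:uniform-control-remainder-s} with $p=1$, which gives
\begin{equation*}
    \snorm{\ps R^S_2}=O(\Delta^{-2}v_1v_0).
\end{equation*}
This is precisely the claimed scale, and the constant is independent of $s$, $\Delta$, $v_0$ and $v_1$ by \cref{lemma:uniform-control-exact-truncated}.

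The only subtlety is that the lemma's derivative bound uses $v_1$ as a bound on the full $\ps V$. One might hope for something sharper, such as $v_0\snorm{\ps V_{\tn{d}}}$, but we only need the stated form, and $\snorm{\ps V_{\tn{d}}}\le v_1$ in any case. I would check that the lemma's proof is term-by-term: each $S_j$ with $j\ge2$ is a multilinear expression in $V$ with at least one factor carrying $s$-dependence, and the series is dominated via Weierstrass M-test on $\abs{t}<\Delta/(16v_0)$. On that reading, invoking it as a black box is legitimate. This is the step I expect to need the most care, because if the lemma's remainder derivative bound were proved by Cauchy estimates in $t$ instead, it would still hold. Either way no new estimate is required.

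Finally, for $\U(s)=\e^{-S(s)}\V$ we use the Duhamel formula
\begin{equation*}
    \ps\e^{-S}=-\int_0^1\e^{-uS}(\ps S)\e^{-(1-u)S}\,\dd u.
\end{equation*}
Since $S$ is anti-Hermitian, each exponential is unitary, so $\norm{\ps\U}\le\norm{\ps S}\norm{\V}=\norm{\ps S}$. This gives $\kappa=O(\Delta^{-2}v_0v_1)$. Closure and ${\rm C}^2$ regularity are inherited directly from \cref{lemma:uniform-control-exact-truncated} and \cref{prop:regularity-periodicity-canonical-sw}.
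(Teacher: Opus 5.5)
Your proposal is correct and follows the paper's proof step for step: $S_1=\L^{\pattce}_{\Delta H_0}(V_{\tn{od}})$ is static, so $\ps S=\ps R^S_2$, the $p=1$ remainder-derivative bound of \cref{lemma:uniform-control-exact-truncated} gives $O(\Delta^{-2}v_0v_1)$, and the Duhamel formula with anti-Hermitian $S$ passes this bound to $\kappa$. Your aside that every $S_j$ with $j\ge2$ contains an $s$-dependent factor is not quite right, since the $\comm{S_1}{\comm{S_1}{\comm{S_1}{\Delta H_0}}}$ part of $S_3$ is built from $V_{\tn{od}}$ alone; such terms just have zero derivative, and like the paper you only use the lemma as a black box, so nothing is affected.
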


A proof is provided in Appendix~\ref{app:sw-transformation-proofs}.

\subsection{Tensor-Product Penalties and Local Perturbations}
\label{sec:tensor-product-penalties}
We now extend the preceding analysis to a sum of single-qubit penalties coupled to a sum of local perturbations.
This setting models the parallel application of perturbative gadgets and requires exact \emph{global} blocks defined by the joint low- and high-energy splitting.
We first derive the global coefficients and then state the additional mediator assumptions under which the effective-Hamiltonian contributions decompose into single-gadget terms.

\subsubsection{The parallel penalty Hamiltonian}
We consider a penalty Hamiltonian that is a sum of pairwise-disjoint single-qubit penalty terms
\begin{equation}\label{eq:parallel-penalty-hamiltonian}
    \bs{H}_0 = \sum_{a\in[A]} H_0^a.
\end{equation}
For each $a\in[A]$, let $p_a$ be the ground-space projector of $H_0^a$ and let $q_a = I-p_a$. 
We assume for all $a\in[A]$ that the following conditions hold:
\begin{align}\label{eq:single-qubit-penalty-conditions}
    p_a H_0^a p_a &= 0, & 
    q_a H_0^a q_a &\succeq q_a, & 
    {\sf S}(H_0^a)\cap{\sf S}(H_0^{a'}) &=\emptyset \quad \text{for } a\neq a' .
\end{align}
Each $H_0^a$ acts on a single penalty qubit $m_a$ and there is some $e_a \geq 1$ such that $H_0^a = e_a q_a$. 
Note that we use the terminology ``mediator qubits'' interchangeably with ``penalty qubits'' throughout this work.
For simplicity, we take the unit penalty $H_0^a = q_a$.
We also record that the inverse of each penalty term restricted to its excited subspace is $\Sigma^a = (H_0^a)^{\pattce} = q_a$.

The full low- and high-energy projectors of $\bs{H}_0$ are
\begin{align}
    \bs{\Pi}_- &= \bigotimes_{a\in[A]}p_a, & 
    \bs{\Pi}_+ &= I-\bs{\Pi}_-.
\end{align}
We refer to $\bs{\Pi}_{\pm}$ as the \emph{global} projectors.
Here and throughout, we suppress identities on the unpenalised system qubits whenever they are clear from context.
The spectral decomposition of $\bs{H}_0$ is 
\begin{equation}\label{eq:H0_spectral_decomposition}
    \bs{H}_0 = \sum_{x \in \{0,1\}^{A} \setminus 0^A} \hw{x} \ketbra{x},
\end{equation}
where $\hw{x}$ denotes the Hamming weight of the bit string $x \in \{0,1\}^{A}$ and $0^A$ is the all-zero string.

There exists a simple correspondence between subsets of $[A]$ and bit strings in $\{0,1\}^{A}$ given by $X \subseteq [A] \mapsto x \in \{0,1\}^{A}$ with $x_a = 1$ if and only if $a \in X$.
Observe that $\{ \{a\} : a \in [A] \} \cong \{ x : \hw{x} = 1 \}$.\footnote{We use the notation $\cong$ to denote the correspondence between subsets of $[A]$ and bit strings in $\{0,1\}^{A}$.}
As a short example, take $A = 2$, then, $\emptyset \cong 00$, $\{1\} \cong 10$, $\{2\} \cong 01$, and $\{1,2\} \cong 11$.

For $X\subseteq[A]$, define the excitation-sector projector and its excitation energy by
\begin{align}
    Q_X &\coloneqq \bigotimes_{a\in X}q_a\bigotimes_{a\notin X}p_a, & 
    E_X &\coloneqq |X| \cong \hw{x}.
\end{align}
The projector $Q_X$ selects the subspace in which precisely the penalty qubits indexed by $X$ are excited. 
In particular, 
\begin{align}
    \bs{\Pi}_- &\eqqcolon Q_\emptyset, &
    \bs{\Pi}_+ &\eqqcolon \sum_{\emptyset\neq X\subseteq[A]}Q_X.
\end{align}

We now prove some basic properties of the excitation-sector decomposition and its relation to the penalty Hamiltonian.

\begin{restatable}{proposition}{excitationsectordecomposition}
\label{clm:excitation-sector-decomposition}
    For every $X\subseteq[A]$,
    \begin{equation}\label{eq:H0-sector-eigenvalue}
        \bs{H}_0 Q_X = \abs{X}\, Q_X .
    \end{equation}
    Consequently, the spectral gap of $\bs{H}_0$ above ${\rm Im}(\bs{\Pi}_-)$ is one, and the
    reduced inverse
    \begin{equation}
        \bs{\Sigma} = \bs{\Pi}_+\left(\bs{H}_0\big|_{{\rm Im}(\bs{\Pi}_+)}\right)^{-1}\bs{\Pi}_+
                    = \bs{H}_0^{\pattce}
    \end{equation}
    has the excitation-sector decomposition
    \begin{equation}
        \bs{\Sigma} = \sum_{\emptyset\neq X\subseteq[A]}\bs{\Sigma}_X ,
    \end{equation}
    where $\bs{\Sigma}_X = \abs{X}^{-1}Q_X$.
    Moreover, $\norm{\bs{\Sigma}_X} = \abs{X}^{-1}$ and $\norm{\bs{\Sigma}} = 1$.
\end{restatable}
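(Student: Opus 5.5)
The plan is to work directly with the single-qubit resolutions of the identity $p_a + q_a = I$ on each mediator qubit $m_a$. Expanding $I = \bigotimes_{a\in[A]}(p_a+q_a)$ gives $\sum_{X\subseteq[A]} Q_X = I$. The projectors $Q_X$ are mutually orthogonal: for $X\neq Y$ there is some $a$ in their symmetric difference, and on qubit $m_a$ one factor is $p_a$ and the other is $q_a$, with $p_aq_a=0$. This orthogonal decomposition, which also yields the stated expressions for $\bs{\Pi}_\pm$, is the backbone of the argument.

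\textbf{Eigenvalue identity.} With the unit penalty $H_0^a = q_a$ acting on the single qubit $m_a$, we have $H_0^a Q_X = (q_a\cdot r_a)\otimes(\text{rest})$, where $r_a\in\{p_a,q_a\}$ is the factor of $Q_X$ on $m_a$. Since $q_ap_a=0$ and $q_aq_a=q_a$, this equals $\ind{a\in X}\,Q_X$. Here we use disjointness of supports from \cref{eq:single-qubit-penalty-conditions}, so the other tensor factors are untouched. Summing over $a$ gives $\bs{H}_0Q_X = \abs{X}\,Q_X$, which is \cref{eq:H0-sector-eigenvalue}. Combining with $\sum_X Q_X = I$ gives the spectral decomposition $\bs{H}_0 = \sum_X \abs{X}\,Q_X$, consistent with \cref{eq:H0_spectral_decomposition}.

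\textbf{Gap and pseudoinverse.} The eigenvalue $0$ occurs only for $X=\emptyset$, and every nonempty $X$ has $\abs{X}\geq 1$, with equality attained (assuming $A\geq1$). Hence the gap above ${\rm Im}(\bs{\Pi}_-)$ is exactly one. The hypotheses of \cref{prop:moore-penrose-pseudoinverse-bdop} then hold with $P=\bs{\Pi}_-$ and $Q=\bs{\Pi}_+$: $\bs{H}_0$ is block diagonal, $\bs{\Pi}_-\bs{H}_0\bs{\Pi}_-=0$, and $\bs{\Pi}_+\bs{H}_0\bs{\Pi}_+=\sum_{X\neq\emptyset}\abs{X}Q_X\succeq\bs{\Pi}_+$. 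Therefore $\bs{\Sigma}=\bs{H}_0^{\pattce}$. Inverting the diagonal action on ${\rm Im}(\bs{\Pi}_+)=\bigoplus_{X\neq\emptyset}{\rm Im}(Q_X)$ gives $\bs{\Sigma}=\sum_{X\neq\emptyset}\abs{X}^{-1}Q_X$. One can also read this off directly from \cref{eq:liouville_pseudoinverse_def}-style spectral inversion, which inverts nonzero eigenvalues and kills the kernel.

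\textbf{Norms.} We have $\norm{\bs{\Sigma}_X}=\abs{X}^{-1}\norm{Q_X}=\abs{X}^{-1}$, since $Q_X$ is a nonzero projector; each factor $p_a,q_a$ is a rank-one qubit projector, so the product is nonzero. For the full operator, $\bs{\Sigma}$ is a sum of scalars times orthogonal projectors, so its norm is $\max_{X\neq\emptyset}\abs{X}^{-1}=1$, attained at singletons. No step is a serious obstacle. The only point requiring care is orthogonality and completeness of the $Q_X$ with identities on the system qubits suppressed, together with the implicit assumption $A\ge1$ needed for $\bs{\Pi}_+\neq0$.
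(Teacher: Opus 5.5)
Your proposal is correct and follows essentially the same argument as the paper. Both proofs establish orthogonality and completeness of the $Q_X$ from the symmetric difference and the expansion $\bigotimes_a(p_a+q_a)=I$, use $q_aQ_X=\ind{a\in X}Q_X$ to obtain the spectral decomposition, and take the norm of a weighted sum of orthogonal projectors; the only minor differences are that you identify $\bs{\Sigma}$ with $\bs{H}_0^{\pattce}$ via \cref{prop:moore-penrose-pseudoinverse-bdop} where the paper directly checks $\bs{\Sigma}\bs{H}_0=\bs{H}_0\bs{\Sigma}=\bs{\Pi}_+$ and $\bs{\Sigma}\bs{\Pi}_-=0$, and that you make explicit two points the paper leaves implicit ($Q_X\neq0$, and $A\geq1$ for $\norm{\bs{\Sigma}}=1$).
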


A proof is given in Appendix~\ref{app:sw-transformation-proofs}.

Notice when $X=\{a\}$ we are in the singleton sector in which only the penalty qubit $m_a$ is excited, then
\begin{equation}\label{eq:singleton-sigma}
    \bs{\Sigma}_{\{a\}} = Q_{\{a\}} = Q_{\{a\}}\,\Sigma^a\, Q_{\{a\}} = Q_{\{a\}}\,(H_0^a)^{\pattce}\,Q_{\{a\}}.
\end{equation}

\subsubsection{The parallel simulator Hamiltonian family}
The perturbed Hamiltonian family is now defined to have the form
\begin{equation}
    \wt{\bs{H}}(s) = \Delta\bs{H}_0 + \bs{V}(s),
\end{equation}
where $\bs{V}(s) = \sum_{a\in[A]}V^a(s)$ and each $V^a(s)$ is a sum of local perturbations acting on a constant-size subset of system qubits together with the penalty qubit $m_a$.
Perturbative constructions may also contain system-only terms, but the decomposition below concerns only terms involving a single penalty qubit.

The terms $V^a(s)$ are not assumed to have disjoint supports globally, but restricted to the mediator qubits they are disjoint.
More concretely, let ${\sf S}\bigr|_{{\rm med}}(X) \coloneqq {\sf S}(X) \cap {\rm med}$ denote the restriction of the support of $X$ to the mediator qubits, then 
\begin{equation}\label{eq:med-support-disjoint}
    {\sf S}\bigr|_{{\rm med}} \big( V^a(s) \big) \cap {\sf S}\bigr|_{{\rm med}} \big( V^{b}(s) \big) = \emptyset
\end{equation}
for $a \neq b$.
We remark that each $V^a$ can be of the form
\begin{equation}
    V^a(s) = \sum_{\ell=1}^{{\sf d}_a} V^{a,\ell}(s),
\end{equation}
where each $V^{a,\ell}(s)$ acts on a constant-size subset of the system qubits $R_{a,\ell}$ and the penalty qubit $m_a$.

The supports of the terms $V^a$ define an interaction hypergraph. 
This support-overlap structure will determine which mixed perturbative contributions can occur at each order in the effective Hamiltonian coefficients.

\subsubsection{The global Schrieffer--Wolff transformation}\label{sec:global-sw-transformation}
We now apply the Schrieffer--Wolff construction outlined above with respect to the global splitting $\bs{\Pi}_-\oplus\bs{\Pi}_+$.
For any operator $\bs{O}$, define $\bs{O}_{\alpha\beta} = \bs{\Pi}_\alpha\bs{O}\bs{\Pi}_\beta$ for $\alpha,\beta\in\{-,+\}$.
The global block-off-diagonal and block-diagonal components are
\begin{align}
    \bs{O}_{\tn{od}} &= \bs{O}_{-+}+\bs{O}_{+-}, & 
    \bs{O}_{\tn{d}} &= \bs{O}_{--}+\bs{O}_{++},
\end{align}
and we define the corresponding operator spaces by 
\begin{align}
    \bs{\D}_{\tn{od}} &= \{\bs{O} : \bs{O}_{--}=\bs{O}_{++}=0\}, &
    \bs{\D}_{\tn{d}} &= \{\bs{O} : \bs{O}_{-+}=\bs{O}_{+-}=0\}.
\end{align}

We can extend \cref{lem:L-inverse} to the case of the parallel penalty Hamiltonian via the following corollary.

\begin{restatable}{corollary}{linverseparallel}\label{cor:L-inverse-parallel}
    Let $\Delta>0$ and let $\bs{H}_0$ be the parallel penalty Hamiltonian of \cref{eq:parallel-penalty-hamiltonian}.
    Then $\L_{\Delta\bs{H}_0}$ restricted to $\bs{\D}_{\tn{od}}$ is a linear bijection of $\bs{\D}_{\tn{od}}$ onto itself, whose inverse is the restriction of the pseudoinverse
    \cref{eq:liouville_pseudoinverse_def} to $\bs{\D}_{\tn{od}}$:
    \begin{align}\label{eq:L-inverse-parallel}
        \L_{\Delta\bs{H}_0}^{\pattce}(\bs{X}) &= \Delta^{-1}\big( \bs{\Sigma}\,\bs{X}_{+-} - \bs{X}_{-+}\,\bs{\Sigma} \big), &
        \bs{X} &\in \bs{\D}_{\tn{od}}.
    \end{align}
    Consequently, for every $\bs{X}\in \bs{\D}_{\tn{od}}$ the equation $\L_{\Delta\bs{H}_0}(\bs{S})=\bs{X}$ has a unique solution $\bs{S}\in \bs{\D}_{\tn{od}}$, namely $\bs{S}=\L_{\Delta\bs{H}_0}^{\pattce}(\bs{X})$.
\end{restatable}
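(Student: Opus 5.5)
We follow the proof of \cref{lem:L-inverse}, replacing $\Pi_\pm$ and $\Sigma$ by the global objects $\bs{\Pi}_\pm$ and $\bs{\Sigma}$. The spectral input comes from \cref{clm:excitation-sector-decomposition}: $\bs{H}_0$ is block-diagonal with $(\bs{H}_0)_{--}=0$, and $(\bs{H}_0)_{++}=\sum_{X\neq\emptyset}\abs{X}Q_X\succeq\bs{\Pi}_+$. Hence \cref{prop:moore-penrose-pseudoinverse-bdop} identifies $\bs{\Sigma}$ with $\bs{H}_0^{\pattce}$. We also have $(\bs{H}_0)_{++}\bs{\Sigma}=\bs{\Sigma}(\bs{H}_0)_{++}=\bs{\Pi}_+$, and $\bs{H}_0\bs{\Pi}_-=\bs{\Pi}_-\bs{H}_0=0$.

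The first step is to show that $\L_{\Delta\bs{H}_0}$ maps $\bs{\D}_{\tn{od}}$ into itself. For $\bs{X}\in\bs{\D}_{\tn{od}}$,
\begin{equation*}
    \L_{\Delta\bs{H}_0}(\bs{X})=\Delta\big((\bs{H}_0)_{++}\bs{X}_{+-}-\bs{X}_{-+}(\bs{H}_0)_{++}\big),
\end{equation*}
since the $\bs{H}_0\bs{X}_{-+}$ and $\bs{X}_{+-}\bs{H}_0$ terms vanish. The first term lies in the $+-$ block and the second in the $-+$ block, so the image is again block-off-diagonal. This is also an instance of \cref{lem:block-commutators}.

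The second step defines $\mathcal{M}(\bs{X})=\Delta^{-1}(\bs{\Sigma}\bs{X}_{+-}-\bs{X}_{-+}\bs{\Sigma})\in\bs{\D}_{\tn{od}}$ and checks that it is a two-sided inverse. Using $(\bs{H}_0)_{++}\bs{\Sigma}=\bs{\Pi}_+$ and $\bs{\Pi}_+\bs{X}_{+-}=\bs{X}_{+-}$, we get $\L_{\Delta\bs{H}_0}\mathcal{M}(\bs{X})=\bs{X}_{+-}+\bs{X}_{-+}=\bs{X}$. In the other order, $\mathcal{M}\L_{\Delta\bs{H}_0}(\bs{X})=\bs{\Sigma}(\bs{H}_0)_{++}\bs{X}_{+-}+\bs{X}_{-+}(\bs{H}_0)_{++}\bs{\Sigma}=\bs{X}$. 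So $\L_{\Delta\bs{H}_0}$ is a bijection of $\bs{\D}_{\tn{od}}$ with inverse $\mathcal{M}$.

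The third step identifies $\mathcal{M}$ with the pseudoinverse \cref{eq:liouville_pseudoinverse_def} restricted to $\bs{\D}_{\tn{od}}$. The spectral projectors of $\Delta\bs{H}_0$ are $P_0=\bs{\Pi}_-$, with eigenvalue $0$, and $P_{\Delta k}=\sum_{\abs{X}=k}Q_X$, with eigenvalue $\Delta k$ for $k\geq1$. For $\bs{X}\in\bs{\D}_{\tn{od}}$, the only pairs of eigenvalues that contribute are $(\Delta k,0)$ and $(0,\Delta k)$. The pseudoinverse therefore equals
\begin{equation*}
    \sum_{k\ge1}\frac{P_{\Delta k}\bs{X}P_0}{\Delta k}-\sum_{k\ge1}\frac{P_0\bs{X}P_{\Delta k}}{\Delta k}.
\end{equation*}
Since $\sum_k k^{-1}P_{\Delta k}=\bs{\Sigma}$ by \cref{clm:excitation-sector-decomposition}, this is exactly $\mathcal{M}(\bs{X})$. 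Uniqueness of the solution $\bs{S}\in\bs{\D}_{\tn{od}}$ of $\L_{\Delta\bs{H}_0}(\bs{S})=\bs{X}$ then follows from bijectivity.

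The main obstacle is this third step. The earlier lemma only dealt with a two-level spectrum, whereas here there are several high-energy eigenvalues, and the bookkeeping must confirm that the $1/(\lambda-\mu)$ weights reassemble into $\bs{\Sigma}$. The excitation-sector decomposition is what makes this work.
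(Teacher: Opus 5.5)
Your proof is correct, and it takes a different route from the paper's. The paper's proof is one line: it observes from \cref{clm:excitation-sector-decomposition} that $\bs{H}_0$ satisfies \cref{eq:H0-block-structure} for the global splitting, and then says \cref{lem:L-inverse} applies under $H_0\mapsto\bs{H}_0$, $\Pi_\pm\mapsto\bs{\Pi}_\pm$, $\Sigma\mapsto\bs{\Sigma}$. That lemma, however, is stated and proved only for $H_0=\Pi_+$. Its bijectivity argument uses $\L_{\Delta H_0}^2=\Delta^2$ on $\D_{\tn{od}}$, and its pseudoinverse identification uses the two-point spectrum $\{0,\Delta\}$. For $A\geq 2$ neither holds for $\bs{H}_0$: one gets $\L_{\Delta\bs{H}_0}^2(\bs{X})=\Delta^2(\bs{H}_0^2\bs{X}_{+-}+\bs{X}_{-+}\bs{H}_0^2)$, and the spectrum is $\{0,\Delta,\dots,A\Delta\}$. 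So the substitution is not literally covered by the lemma. Your argument supplies exactly what is missing. You exhibit the two-sided inverse directly, using only $(\bs{H}_0)_{++}\bs{\Sigma}=\bs{\Sigma}(\bs{H}_0)_{++}=\bs{\Pi}_+$. You then check that the weights $1/(\lambda-\mu)$ from the pairs $(\Delta k,0)$ and $(0,\Delta k)$ reassemble into $\bs{\Sigma}=\sum_k k^{-1}P_{\Delta k}$, with the pairs of two nonzero eigenvalues vanishing because $\bs{X}_{++}=0$. The paper's route buys brevity. Yours is self-contained and in effect proves the general form of \cref{lem:L-inverse}: it goes through verbatim for any $H_0$ satisfying \cref{eq:H0-block-structure}, including the non-unit penalties $H_0^a=e_aq_a$ the paper mentions, once the sector sum in your third step is replaced by $\Sigma=\sum_{\lambda>0}\lambda^{-1}P_\lambda$. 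This general form is what the paper's one-line reduction tacitly relies on.
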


A proof is provided in Appendix~\ref{app:sw-transformation-proofs}.

By \cref{lem:liouville-pseudoinverse-bound}, $\norm{\L_{\Delta\bs{H}_0}^{\pattce}(\bs{X})}\leq\Delta^{-1}\norm{\bs{X}}$ for every $\bs{X}\in \bs{\D}_{\tn{od}}$.

All blocks in this subsection are defined with respect to the global splitting.
A local term may be block-off-diagonal relative to one splitting $p_a\oplus q_a$ while having a nonzero global $(++)$ block. 
For example, on a state in which another penalty qubit is already excited, the same term may map one vector in ${\rm Im}(\bs{\Pi}_+)$ to another vector in ${\rm Im}(\bs{\Pi}_+)$.
Consequently, the global Schrieffer--Wolff generator is not generally obtained by assigning an isolated generator to each local perturbation and summing the results.

Assume that $2\snorm{\bs{V}}<\Delta$, and let $\bs{S}(s)$ be the exact canonical Schrieffer--Wolff generator of $\wt{\bs{H}}(s)$.
If $\bs{\V}$ is the fixed isometry with image ${\rm Im}(\bs{\Pi}_-)$, then
\begin{equation}
    \bs{\U}(s) = \e^{-\bs{S}(s)}\bs{\V}
\end{equation}
has image equal to the exact low-energy band of $\wt{\bs{H}}(s)$. 
This exact global isometry is the witnessing family used later in the Hamiltonian and Berry-phase simulation results.

The corresponding exact effective Hamiltonian is
\begin{equation}
    \bs{H}_{\tn{eff}}(s) = \bs{\Pi}_-\e^{\bs{S}(s)}\wt{\bs{H}}(s)\e^{-\bs{S}(s)}\bs{\Pi}_- .
\end{equation}
Both $\bs{S}(s)$ and $\bs{H}_{\tn{eff}}(s)$ are therefore defined globally with respect to the joint splitting $\bs{\Pi}_-\oplus\bs{\Pi}_+$.

\subsubsection{Coefficients of the exact global components}
To derive the perturbative coefficients of the exact global generator and effective Hamiltonian, we again introduce the formal parameter $t$ and follow \cref{sec:perturbative-coefficients-and-truncation}.
The derivation uses \cref{cor:L-inverse-parallel} in place of \cref{lem:L-inverse}.

We first record the global generator coefficients.

\begin{restatable}{corollary}{swgeneratorsparallel}\label{cor:sw-generators-parallel}
    The first three Schrieffer--Wolff generator coefficients of $\wt{\bs{H}}(s)$ with respect to the
    global splitting $\bs{\Pi}_-\oplus\bs{\Pi}_+$ are
    \begin{align}
        \bs{S}_1(s) &= \L_{\Delta\bs{H}_0}^{\pattce}\big( \big(\bs{V}(s)\big)_{\tn{od}} \big), \label{eq:S1-parallel}\\
        \bs{S}_2(s) &= \L_{\Delta\bs{H}_0}^{\pattce}\big( \comm{\bs{S}_1(s)}{\big(\bs{V}(s)\big)_{\tn{d}}} \big), \label{eq:S2-parallel}\\
        \bs{S}_3(s) &= \L_{\Delta\bs{H}_0}^{\pattce}\Big( \comm{\bs{S}_2(s)}{\big(\bs{V}(s)\big)_{\tn{d}}} - \tfrac{1}{3}\comm{\bs{S}_1(s)}{\comm{\bs{S}_1(s)}{\comm{\bs{S}_1(s)}{\Delta\bs{H}_0}}} \Big). \label{eq:S3-parallel}
    \end{align}
\end{restatable}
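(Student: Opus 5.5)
The plan is to repeat the order-by-order derivation behind \cref{prop:sw-coefficients}, now with the global splitting $\bs{\Pi}_-\oplus\bs{\Pi}_+$ in place of $\Pi_-\oplus\Pi_+$ and \cref{cor:L-inverse-parallel} in place of \cref{lem:L-inverse}. The earlier derivation used only three structural facts: the penalty $\Delta H_0$ is block-diagonal with $(H_0)_{--}=0$; commutators obey the block rules of \cref{lem:block-commutators}; and $\L_{\Delta H_0}$ is invertible on the off-diagonal space, with inverse given by the pseudoinverse. All three hold globally. By \cref{clm:excitation-sector-decomposition}, $\bs{H}_0=\sum_X\abs{X}Q_X$ commutes with $\bs{\Pi}_\pm$, satisfies $\bs{\Pi}_-\bs{H}_0\bs{\Pi}_-=0$, and has $\bs{\Pi}_+\bs{H}_0\bs{\Pi}_+\succeq\bs{\Pi}_+$. \cref{lem:block-commutators} and \cref{cor:L-preserves-block-structure} are purely algebraic statements about an arbitrary orthogonal splitting, so they apply verbatim to the bold blocks. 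Invertibility is exactly \cref{cor:L-inverse-parallel}. With these in hand, the proof is a formal transcription.

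First, fix $s$ and expand $\bs{S}(t)=\sum_{j\ge1}t^j\bs{S}_j$ with every $\bs{S}_j\in\bs{\D}_{\tn{od}}$. This is the canonical choice: uniqueness of the canonical generator forces it, and it is consistent with the convergence statement of \cite{bravyi2011schrieffer}, which applies because $2\snorm{\bs V}<\Delta$. Substitute into the BCH expansion \cref{eq:bch-expansion} with $\Delta\bs{H}_0+t\bs{V}$ and collect the coefficients of $t^1,t^2,t^3$.

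At order $t$, the off-diagonal part is $\bs{V}_{\tn{od}}+\comm{\bs{S}_1}{\Delta\bs{H}_0}$. Setting it to zero gives $\L_{\Delta\bs{H}_0}(\bs{S}_1)=\bs{V}_{\tn{od}}$. Since $\bs{V}_{\tn{od}}\in\bs{\D}_{\tn{od}}$, \cref{cor:L-inverse-parallel} yields \cref{eq:S1-parallel}.

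At order $t^2$, the terms are $\comm{\bs{S}_2}{\Delta\bs{H}_0}+\comm{\bs{S}_1}{\bs{V}}+\tfrac12\comm{\bs{S}_1}{\comm{\bs{S}_1}{\Delta\bs{H}_0}}$. By the block rules, $\comm{\bs{S}_1}{\bs{V}_{\tn{od}}}$ is diagonal and $\comm{\bs{S}_1}{\comm{\bs{S}_1}{\Delta\bs{H}_0}}=-\comm{\bs{S}_1}{\bs{V}_{\tn{od}}}$ is also diagonal. The only off-diagonal contribution besides $\comm{\bs{S}_2}{\Delta\bs H_0}$ is therefore $\comm{\bs{S}_1}{\bs{V}_{\tn{d}}}$. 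Cancelling it and inverting with \cref{cor:L-inverse-parallel} gives \cref{eq:S2-parallel}.

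At order $t^3$, I collect $\comm{\bs{S}_3}{\Delta\bs{H}_0}$, $\comm{\bs{S}_2}{\bs V}$, $\comm{\bs{S}_1}{\bs{V}}$-type terms from the mixed second-order BCH terms $\tfrac12(\comm{\bs S_1}{\comm{\bs S_2}{\Delta\bs H_0}}+\comm{\bs S_2}{\comm{\bs S_1}{\Delta\bs H_0}})$, together with $\tfrac12\comm{\bs S_1}{\comm{\bs S_1}{\bs V}}$ and $\tfrac16\comm{\bs S_1}{\comm{\bs S_1}{\comm{\bs S_1}{\Delta\bs H_0}}}$. The next step uses the lower-order equations $\comm{\bs S_1}{\Delta\bs H_0}=-\bs V_{\tn{od}}$ and $\comm{\bs S_2}{\Delta\bs H_0}=-\comm{\bs S_1}{\bs V_{\tn d}}$, and discards the diagonal pieces. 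The off-diagonal remainder should then collapse to $\comm{\bs S_2}{\bs V_{\tn d}}-\tfrac13\comm{\bs S_1}{\comm{\bs S_1}{\comm{\bs S_1}{\Delta\bs H_0}}}$, exactly as in the single-target computation of \cref{prop:sw-coefficients}; in particular $\tfrac12\comm{\bs S_1}{\comm{\bs S_1}{\bs V_{\tn{od}}}}$ combines with the cubic term to produce the coefficient $-\tfrac13$. Applying $\L^{\pattce}_{\Delta\bs H_0}$ then gives \cref{eq:S3-parallel}.

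The main obstacle is this third-order bookkeeping, specifically verifying that the cancellation leading to the $-\tfrac13$ coefficient uses only block-parity rules and never any special property of $H_0=\Pi_+$ that fails for $\bs H_0$, which has several excitation levels. I would check this by noting that the single-target proof only ever eliminates $\comm{\bs S_j}{\Delta\bs H_0}$ through the lower-order defining equations, never by an explicit formula for $\L^{\pattce}$, so the argument transfers unchanged. I would also stress that no decomposition into per-gadget generators is claimed at this stage.
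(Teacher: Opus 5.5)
Your proposal is correct and takes the same route as the paper. The paper's proof simply observes that the derivation of \cref{prop:sw-coefficients} transfers verbatim to the global splitting $\bs{\Pi}_-\oplus\bs{\Pi}_+$, using the block rules of \cref{lem:block-commutators} and the invertibility of $\L_{\Delta\bs{H}_0}$ on $\bs{\D}_{\tn{od}}$ from \cref{cor:L-inverse-parallel}. Your explicit third-order bookkeeping, in which $\tfrac12\comm{\bs{S}_1}{\comm{\bs{S}_1}{\bs{V}_{\tn{od}}}} = -\tfrac12\comm{\bs{S}_1}{\comm{\bs{S}_1}{\comm{\bs{S}_1}{\Delta\bs{H}_0}}}$ combines with the $\tfrac16$ cubic term to give $-\tfrac13$, matches the paper's single-target computation, and you rightly note that only the lower-order defining equations (never the special form $H_0=\Pi_+$) are used to eliminate the terms $\comm{\bs{S}_j}{\Delta\bs{H}_0}$.
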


A proof is provided in Appendix~\ref{app:sw-transformation-proofs}.

Next, we turn to the perturbative coefficients of the exact global effective Hamiltonian.

\begin{restatable}{corollary}{heffcoefficientsparallel}\label{cor:heff-coefficients-parallel}
    Let $\bs{\Sigma} = \bs{H}_0^{\pattce}$ be as in \cref{clm:excitation-sector-decomposition}.
    The first three nonzero coefficients of the effective Hamiltonian $\bs{H}_{\tn{eff}}(s)$,
    taken with respect to the global splitting $\bs{\Pi}_-\oplus\bs{\Pi}_+$, are
    \begin{align}
        \bs{K}_1(s) &= \big( \bs{V}(s) \big)_{--}, \label{eq:K1-parallel}\\
        \bs{K}_2(s) &= -\Delta^{-1}\,\big( \bs{V}(s) \big)_{-+}\bs{\Sigma}\big( \bs{V}(s) \big)_{+-}, \label{eq:K2-parallel}\\
        \bs{K}_3(s) &= \Delta^{-2}\left( \big( \bs{V}(s) \big)_{-+}\bs{\Sigma}\big( \bs{V}(s) \big)_{++}\bs{\Sigma}\big( \bs{V}(s) \big)_{+-} - \tfrac{1}{2}\acomm{\big( \bs{V}(s) \big)_{--}}{\big( \bs{V}(s) \big)_{-+}\bs{\Sigma}^2\big( \bs{V}(s) \big)_{+-}} \right). \label{eq:K3-parallel}
    \end{align}
\end{restatable}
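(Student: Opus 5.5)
The plan is to re-run the derivation behind \cref{prop:heff-coefficients} verbatim, now with respect to the global splitting $\bs{\Pi}_-\oplus\bs{\Pi}_+$. The single-splitting derivation used only three facts: the block structure \cref{eq:H0-block-structure} of the penalty, the inverse formula for $\L_{\Delta H_0}$ on off-diagonal operators (\cref{lem:L-inverse}), and the identity $(H_0)_{++}\Sigma=\Sigma(H_0)_{++}=\Pi_+$. I will check each of these for $\bs{H}_0$, after which the coefficients follow by the same algebra.

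\textbf{Step 1 (structure of $\bs{H}_0$).} By \cref{clm:excitation-sector-decomposition}, $\bs{H}_0Q_X=|X|Q_X$. Hence $(\bs{H}_0)_{--}=0$ and $(\bs{H}_0)_{++}\succeq\bs{\Pi}_+$, and $\bs{\Sigma}=\bs{H}_0^{\pattce}$ satisfies $(\bs{H}_0)_{++}\bs{\Sigma}=\bs{\Sigma}(\bs{H}_0)_{++}=\bs{\Pi}_+$. Also $\bs{\Sigma}$ commutes with $\bs{H}_0$ and is block diagonal. \cref{cor:L-inverse-parallel} then supplies
\[
\L_{\Delta\bs{H}_0}^{\pattce}(\bs{X})=\Delta^{-1}\bigl(\bs{\Sigma}\bs{X}_{+-}-\bs{X}_{-+}\bs{\Sigma}\bigr),
\]
and \cref{cor:sw-generators-parallel} gives $\bs{S}_1,\bs{S}_2,\bs{S}_3$. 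Explicitly,
\[
\bs{S}_1=\Delta^{-1}\bigl(\bs{\Sigma}\bs{V}_{+-}-\bs{V}_{-+}\bs{\Sigma}\bigr).
\]
One caveat: \cref{lem:L-inverse} was stated for $H_0=\Pi_+$, whereas $(\bs{H}_0)_{++}$ is not a multiple of $\bs{\Pi}_+$. I rely on the fact that \cref{cor:L-inverse-parallel} already absorbs this through $\bs{\Sigma}$. The point is that $\comm{\bs{H}_0}{\bs{\Sigma}\bs{X}_{+-}}=\bs{\Pi}_+\bs{X}_{+-}$, using $\bs{H}_0\bs{\Sigma}=\bs{\Pi}_+$ and $\bs{X}_{+-}\bs{H}_0=\bs{X}_{+-}\bs{H}_0\bs{\Pi}_-=0$.

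\textbf{Step 2 (expansion).} Insert $\bs{S}(t)=\sum_j t^j\bs{S}_j$ into \cref{eq:bch-expansion} and take the $(--)$ block order by order.
\begin{itemize}
\item Order $1$: $\Pi_-\bigl(\bs{V}+\comm{\bs{S}_1}{\Delta\bs{H}_0}\bigr)\Pi_-$. The commutator is off-diagonal, so only $\bs{V}_{--}$ survives, giving $\bs{K}_1$.
\item Order $2$: use $\comm{\bs{S}_1}{\Delta\bs{H}_0}=-\bs{V}_{\tn{od}}$, which eliminates the $\bs{S}_2$ term. The remainder collapses to $\tfrac12\comm{\bs{S}_1}{\bs{V}_{\tn{od}}}_{--}$. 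Substituting $\bs{S}_1$ gives $-\Delta^{-1}\bs{V}_{-+}\bs{\Sigma}\bs{V}_{+-}$, since $\bs{\Sigma}\bs{\Pi}_-=0$ kills the cross terms.
\item Order $3$: collect $\comm{\bs{S}_1}{\bs{V}_{\tn{d}}}$-type terms together with $\tfrac12\comm{\bs{S}_2}{\bs{V}_{\tn{od}}}$, $\tfrac12\comm{\bs{S}_1}{\comm{\bs{S}_1}{\bs{V}_{\tn{d}}}}$ and $\tfrac16\L^3_{\bs{S}_1}(\Delta\bs{H}_0)$. Reduce each of these with the defining relations $\L_{\Delta\bs{H}_0}(\bs{S}_2)=-\comm{\bs{S}_1}{\bs{V}_{\tn{d}}}_{\tn{od}}$ and the analogous one for $\bs{S}_1$. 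Off-diagonal pieces have vanishing $(--)$ block and can be dropped.
\end{itemize}

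\textbf{Step 3 (order-three bookkeeping, the main obstacle).} The order-three computation is where care is needed. One must expand $\bs{S}_2=\Delta^{-2}(\cdots)$ explicitly, with terms such as $\bs{\Sigma}\bigl(\bs{\Sigma}\bs{V}_{+-}\bs{V}_{--}-\bs{V}_{++}\bs{\Sigma}\bs{V}_{+-}\bigr)$. The $\bs{\Sigma}^2$ factor arises from $\L^{\pattce}$ acting on $\bs{\Sigma}\bs{V}_{+-}\bs{V}_{--}$; here I use $\L^{\pattce}_{\Delta\bs{H}_0}$ on $\bs{\Sigma}Y$ with $\bs{\Sigma}$ commuting with $\bs{H}_0$.

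I will then verify that the combination reproduces $\bs{V}_{-+}\bs{\Sigma}\bs{V}_{++}\bs{\Sigma}\bs{V}_{+-}-\tfrac12\acomm{\bs{V}_{--}}{\bs{V}_{-+}\bs{\Sigma}^2\bs{V}_{+-}}$. Two facts about $\bs{\Sigma}$ enter here. First, it is a spectral function of $\bs{H}_0$, so $\bs{\Sigma}$ and $\bs{\Sigma}^2$ commute with $\bs{\Pi}_\pm$ and with $\bs{H}_0$. Second, unlike in \cref{prop:heff-coefficients}, $\bs{\Sigma}^2\neq\bs{\Sigma}$ in general, so every place where the single-gadget proof might have used $\Sigma^2=\Sigma$ must retain the square. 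Because I will write the argument purely in terms of $\bs{\Sigma}$ and the block identities of Step 1, no sector-by-sector decomposition is needed.
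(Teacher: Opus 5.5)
This is correct and is the paper's argument. The paper proves the corollary by noting that the derivation of \cref{prop:heff-coefficients} uses only the block structure of the penalty, \cref{lem:block-commutators}, and the inversion of $\L_{\Delta\bs{H}_0}$ on $\bs{\D}_{\tn{od}}$, which \cref{clm:excitation-sector-decomposition} and \cref{cor:L-inverse-parallel} supply for $\bs{H}_0$. Your direct check that the inverse formula still holds when $\bs{H}_0\neq\bs{\Pi}_+$ fills in a point the paper glosses over.

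One sign to fix: the order-two condition gives $\L_{\Delta\bs{H}_0}(\bs{S}_2)=+\comm{\bs{S}_1}{\bs{V}_{\tn{d}}}$, consistent with \cref{eq:S2-parallel}, not $-\comm{\bs{S}_1}{\bs{V}_{\tn{d}}}_{\tn{od}}$. Taken literally, your relation would flip the sign of $\bs{K}_3$, although the explicit $(+-)$ block of $\bs{S}_2$ you write in Step~3 already has the correct sign.
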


A proof is provided in Appendix~\ref{app:sw-transformation-proofs}.
Note that we refer to the first group of terms in $\bs{K}_3$ as ``$\bs{K}_3^{(1)}$'' and additionally follow the comments in \cref{rmk:ignoring-third-order-terms}.

Although the parallel formulas mirror the single-block formulas, the global splitting can introduce mixed contributions between local perturbations.
The decomposition below identifies conditions on the mediator couplings intended to remove such cross-gadget terms from the effective-Hamiltonian coefficients.
Section~\ref{sec:perturbative-gadget-reductions} then uses this structure in the parallel gadget constructions.

\subsubsection{Decompositions of the parallel effective Hamiltonian coefficients}
\label{sec:parallel-effective-hamiltonian-decomposition}
We now discuss how the parallel effective Hamiltonian coefficients can be decomposed into contributions from individual sets of penalty qubits, recovering the standard form readily assumed in the literature~\cite{piddock2017complexity}.
Recall that we are only concerned with $\bs{K}_1(s)$ (of \cref{eq:K1-parallel}), $\bs{K}_2(s)$ (of \cref{eq:K2-parallel}) and the first group of terms in $\bs{K}_3(s)$, denoted $\bs{K}_3^{(1)}(s)$ (of \cref{eq:K3-parallel}) (cf. \cref{rmk:ignoring-third-order-terms}).

By linearity of the projection onto $\bs{\Pi}_-$, the first-order coefficient decomposes as
\begin{equation}
    \bs{K}_1(s) = \sum_{a\in[A]} \bs{\Pi}_- V^a(s) \bs{\Pi}_- .
\end{equation}
The following lemma establishes analogous decompositions for the second- and third-order terms (with an additional assumption on each local perturbation $V^a(s)$).

\begin{restatable}{lemma}{ParallelEffectiveHamiltonianDecomposition}\label{lem:parallel-effective-hamiltonian-decomposition}
    Let $\bs{K}_2(s)$ and $\bs{K}_3^{(1)}(s)$ be the second- and (first group of) third-order parallel effective Hamiltonian coefficients of \cref{eq:K2-parallel,eq:K3-parallel}, respectively. 
    Suppose that the perturbation $\bs{V}(s)$ decomposes as $\bs{V}(s) = \sum_{a\in[A]} V^a(s)$ and for each $a\in[A]$, there exists a constant ${\sf d}_a$ such that
    \begin{equation}
        V^a(s) \coloneqq \sum_{\ell \in [{\sf d}_a]} V^{a,\ell}(s),
    \end{equation}
    where each $V^{a,\ell}(s)$ acts on penalty qubit $m_a$ and a constant number of system qubits.
    Take $\big( V^a(s) \big)_{\alpha\beta} \coloneqq \bs{\Pi}_\alpha V^a(s) \bs{\Pi}_\beta$ for $\alpha, \beta \in \{-,+\}$ and define the operators
    \begin{align}
        \begin{split}\label{eq:Omega-a}
            \Omega^a(s) &\coloneqq \big( V^a(s) \big)_{-+} (H_0^a)^{\pattce} \big( V^a(s) \big)_{+-} \\
            &= \sum_{\ell, \ell' \in [{\sf d}_a]} \big( V^{a,\ell}(s) \big)_{-+} (H_0^a)^{\pattce} \big( V^{a,\ell'}(s) \big)_{+-}, 
        \end{split} \\[0.3em]
        \begin{split}\label{eq:Psi-a}
            \Psi^a(s) &\coloneqq \big( V^a(s) \big)_{-+} (H_0^a)^{\pattce} \big( V^a(s) \big)_{++} (H_0^a)^{\pattce} \big( V^a(s) \big)_{+-} \\
            &= \sum_{\ell, \ell', \ell'' \in [{\sf d}_a]} \big( V^{a,\ell}(s) \big)_{-+} (H_0^a)^{\pattce} \big( V^{a,\ell'}(s) \big)_{++} (H_0^a)^{\pattce} \big( V^{a,\ell''}(s) \big)_{+-}.
        \end{split}
    \end{align}
    Then, 
    \begin{equation}\label{eq:K2-decomposition}
        \bs{K}_2(s) = -\Delta^{-1}\sum_{a \in [A]} \Omega^a(s).
    \end{equation}
    If, additionally, $\bs{\Pi}_-V^a(s)\bs{\Pi}_-=0$ for every $a\in[A]$ and $s\in[0,1]$, then
    \begin{equation}    \label{eq:K3-decomposition}
        \bs{K}_3^{(1)}(s) = \Delta^{-2}\sum_{a \in [A]} \Psi^a(s). 
    \end{equation}
    In particular, there are no cross-terms between contributions acting on different penalty qubits.
\end{restatable}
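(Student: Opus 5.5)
The plan is to expand each global coefficient by bilinearity (resp. trilinearity) in the sum $\bs{V}=\sum_a V^a$. We then use the excitation-sector decomposition of $\bs{\Sigma}$ from \cref{clm:excitation-sector-decomposition} to show that every cross term vanishes. The surviving diagonal terms then collapse onto the single-mediator reduced inverse $(H_0^a)^{\pattce}=q_a$.

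\textbf{Second order.} Write
\begin{equation}
    \bs{K}_2=-\Delta^{-1}\sum_{a,b}\bs{\Pi}_-V^a\bs{\Sigma}V^b\bs{\Pi}_-.
\end{equation}
The first step is a selection rule. Each $V^a$ acts on the mediator qubits only through $m_a$, so it changes the excitation pattern only at $m_a$. Hence
\begin{equation}
    Q_XV^aQ_\emptyset=0 \quad\text{unless } X\in\{\emptyset,\{a\}\}.
\end{equation}
Inserting $\bs{\Sigma}=\sum_{X\neq\emptyset}\abs{X}^{-1}Q_X$ therefore leaves only $X=\{a\}$ on the right factor. By the same rule on the left, $Q_\emptyset V^bQ_{\{a\}}=0$ unless $b=a$, so all cross terms vanish. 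For the diagonal term we use \cref{eq:singleton-sigma}, namely $Q_{\{a\}}=q_a\otimes\bigotimes_{a'\neq a}p_{a'}$ with $\abs{X}^{-1}=1$. Because $V^a$ acts trivially on the other mediators, the factors $p_{a'}$ can be absorbed into the outer $\bs{\Pi}_-$. This gives
\begin{equation}
    \bs{\Pi}_-V^aQ_{\{a\}}V^a\bs{\Pi}_-=(V^a)_{-+}q_a(V^a)_{+-}=\Omega^a,
\end{equation}
since $(V^a)_{+-}=\bs{\Pi}_+V^a\bs{\Pi}_-=Q_{\{a\}}V^a\bs{\Pi}_-$ by the selection rule. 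The stated expansion of $\Omega^a$ over $\ell,\ell'$ then follows by linearity.

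\textbf{Third order.} Expand
\begin{equation}
    \bs{K}_3^{(1)}=\Delta^{-2}\sum_{a,b,c}\bs{\Pi}_-V^a\bs{\Sigma}\bs{\Pi}_+V^b\bs{\Pi}_+\bs{\Sigma}V^c\bs{\Pi}_-.
\end{equation}
As before, the right factor forces the sector $\{c\}$ and the left factor forces the sector $\{a\}$. The middle factor $Q_{\{a\}}V^bQ_{\{c\}}$ must then map sector $\{c\}$ to sector $\{a\}$ while flipping only $m_b$. Working through the cases:
\begin{itemize}
    \item If $a\neq c$, the sector changes at both $m_a$ and $m_c$, which a single $V^b$ cannot do, so the term is zero.
    \item If $a=c$ and $b=a$, we obtain the diagonal term.
    \item If $a=c$ and $b\neq a$, the middle factor $Q_{\{a\}}V^bQ_{\{a\}}$ has $m_b$ in state $p_b$ on both sides, so it equals $Q_{\{a\}}\,(p_bV^bp_b)\,Q_{\{a\}}$.
\end{itemize}
The extra hypothesis $\bs{\Pi}_-V^b\bs{\Pi}_-=0$ is what removes this last case. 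Since $V^b$ acts on mediators only through $m_b$, the block $p_bV^bp_b$ is an operator on the system qubits, and it is exactly the object appearing in $\bs{\Pi}_-V^b\bs{\Pi}_-=0$. Hence it vanishes identically, independent of the state of $m_a$. The diagonal term reduces to $\Psi^a$ by the same absorption of the $p_{a'}$ factors used at second order, with $(V^a)_{++}$ restricted to sector $\{a\}$ and $\bs{\Sigma}_{\{a\}}=q_a$.

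\textbf{Main obstacle.} I expect the hardest step to be this $b\neq a$ third-order case. It needs a careful argument that $p_bV^bp_b$ factorises as a system operator tensored with identity on all other mediators, so that vanishing on $\bs{\Pi}_-$ implies vanishing on $Q_{\{a\}}$. I plan to prove this by writing $V^b=\sum_j A_j\otimes B_j$, where the $B_j$ act on $m_b$ and the $A_j$ act on system qubits only.
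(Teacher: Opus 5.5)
Your proposal is correct and follows essentially the same route as the paper. You expand $\bs K_2$ and $\bs K_3^{(1)}$ over gadget labels. You then use the excitation-sector selection rule ($Q_X V^a \bs\Pi_- = 0$ unless $X\subseteq\{a\}$) together with $\bs\Sigma_{\{a\}} = Q_{\{a\}}$ to kill the cross terms, and you invoke $\bs\Pi_- V^b\bs\Pi_- = 0$ exactly to remove the $a=c\neq b$ middle factor. Your factorisation $\bs\Pi_-V^b\bs\Pi_- = (p_bV^bp_b)\otimes\bigotimes_{a'\neq b}p_{a'}$ makes that last step more explicit than the paper's own argument.

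The only slip is cosmetic. In the second-order paragraph you swap the labels $a$ and $b$ relative to your displayed sum. This is harmless because the double sum is symmetric.
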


A proof is given in Appendix~\ref{app:sw-transformation-proofs}.

The operator $\Omega^a$ collects the second-order contribution to $\bs{K}_2$ associated with penalty qubit $m_a$.
It still contains mixed terms between local components $\ell \neq \ell'$ of the same gadget; \cref{fig:omega-example} illustrates the interactions these terms may generate.
Similarly, $\Psi^a$ collects the corresponding third-order contribution to $\bs{K}_3^{(1)}$ under the same assumption that $\bs{\Pi}_-V^a(s)\bs{\Pi}_- = 0$ for every $a\in[A]$ and $s\in[0,1]$.
Note that this assumption is reflected in the constructed third-order simulator Hamiltonians in Section~\ref{sec:third-order-reductions}.

For simplicity, assume the locality of each $V^{a,\ell}$ is $2$, and that the degree of $m_a$ is bounded by some constant $c$.
The largest possible local term that can emerge in $\Omega^a$ is of locality $c+1$ (projected onto the low-energy subspace this is locality $c$).
However, in all applications of perturbative gadgets we consider, these cross-terms are either intentional or the unwanted components can be trivially cancelled.
That is, we choose the perturbations to couple to specific mediator qubits to create the desired effective interactions rather than there being a case where there is an accumulation of unwanted, higher-locality terms.
This is outlined in Ref.~\cite{oliveira2008complexity} whose gadget constructions we extend in this work.
In Section~\ref{sec:perturbative-gadget-reductions} we discuss in more detail how the components at a given order in the perturbative expansion behave.

\begin{figure}[!ht]
    \centering
    \begin{tikzpicture}
        \pic{gadgetrealisation};
    \end{tikzpicture}
    \caption{
        A schematic illustration of a gadget and the interactions generated through $\Omega^a$, using the \textsl{fork} gadget as an example.
        The thick lines on the right denote desired interactions, the dashed lines denote unwanted interactions, and the loops denote self-interactions.
    }
    \label{fig:omega-example}
\end{figure}

%% file: sections/berry-phase-simulation.tex
\section{Berry Phase Simulation}
\label{sec:berry-phase-simulation}
In this section, we establish a Berry-phase analogue of the eigenvalue- and ground-state-simulation results extended to parameterised families in Section~\ref{sec:uniform-simulation}; see \cref{lem:eigenvalue-simulation,lem:ground-state-simulation} and Ref.~\cite{bravyi2016complexity}.
Those results compare spectral data pointwise in the parameter.
The Berry phase, in contrast, depends on the full path traced by the ground state.

More precisely, if two Hamiltonians are close at each parameter value and the relevant spectral gaps remain (sufficiently) large, then their eigenvalues and ground spaces are close pointwise. 
Pointwise control alone does not imply Berry-phase control as two families may remain uniformly close in operator norm while accumulating different geometric phases if their difference varies sufficiently rapidly along the path. 
We therefore require control of both the simulation error and its variation with the parameter (see Definitions~\ref{def:uniform-simulation} and \ref{def:regular-simulation}). 
This is reflected in the ${\rm C}^2$ regularity conditions of Assumption~\ref{assump:berry} and in the appearance of both $\varepsilon$ and $\snorm{\ps E}$ in the final bound. 
These assumptions are incorporated into the promises of valid instances of \sc{GSBPE}; accordingly, our result concerns the promised class of regular, uniformly gapped families rather than arbitrary Hamiltonian paths.

The proof proceeds by expressing the Berry phase as a holonomy and comparing the resulting ground-state projector paths. 
We first bound the spectral projectors of Hamiltonians that are close in operator norm and then use Kato's parallel-transport construction to identify the Berry phase with the holonomy of the ground-state path. 
The comparison is carried out through a family $\widehat{\Upsilon}$ we introduce in \cref{eq:transformed-family}, which acts on the target space and can therefore be compared directly with $\Upsilon$.
This separates the argument into two steps: 
\begin{inparaenum}[(i)]
    \item the comparison between $\vartheta$ and $\widehat{\vartheta}$, controlled by the simulation error and its variation, and 
    \item the comparison between $\widehat{\vartheta}$ and $\wt{\vartheta}$, controlled by the parameter dependence of the witnessing isometry.
\end{inparaenum}
These estimates are combined to yield the final Berry phase simulation result.

\subsection{Setup and Assumptions}
\label{sec:setup-and-assumptions}
Throughout this section, $\Upsilon = \{H(s)\}_{s\in[0,1]}$ is a target family and $(\wt{\Upsilon}, \V)$ is an $(\eta, \varepsilon, \kappa)$-regular simulation of $\Upsilon$ (Definitions~\ref{def:uniform-simulation} and \ref{def:regular-simulation}), witnessed by the isometry family $\Phi = \{\U(s)\}_{s\in[0,1]}$ with $\snorm{\ps\U}\leq\kappa$.

We pull the simulator family back onto the target space and define the \emph{transformed family} $\widehat{\Upsilon} = \{\widehat{H}(s)\}_{s\in[0,1]}$ by
\begin{equation}\label{eq:transformed-family}
    \widehat{H}(s) \coloneqq \U(s)^\dagger\, \wt{H}(s)\, \U(s).
\end{equation}

Recall that the \emph{simulation error operator} $E(s) \coloneqq H(s) - \widehat{H}(s)$ satisfies $\norm{E(s)} \le \varepsilon$ by Condition~\ref{item:c2}.
Since the image of $\U(s)$ is $\mathcal{E}_{2^n}(\wt{H}(s))$, the operator $\widehat{H}(s)$ is unitarily equivalent to the restriction of $\wt{H}(s)$ to its low-energy band; in particular, writing $\ket{\wt{\phi}_0(s)}$ for the ground state of $\wt{H}(s)$, the state
\begin{equation}\label{eq:transformed-ground-state}
    \ket{\psi_0(s)} \coloneqq \U(s)^\dagger\ket{\wt{\phi}_0(s)}
\end{equation}
is the ground state of $\widehat{H}(s)$, with ground energy $\mu_0(s)$ equal to that of $\wt{H}(s)$.
We write $P(s) = \ketbra{\phi_0(s)}$ and $\widehat{P}(s) = \ketbra{\psi_0(s)}$ for the ground-space projectors of $H(s)$ and $\widehat{H}(s)$.

Before proceeding with the analysis, we state some assumptions on the regularity, accuracy, and closure of the families involved, motivated by the valid instances of the \sc{GSBPE} problem.

\begin{assumption}\label{assump:berry}
    The following assumptions are made throughout this section.
    \begin{enumerate}[label=\textup{(A\arabic*)}]
           \item \emph{Regularity:} $H,\wt{H}\in{\rm C}^2$ and $\U$ is chosen ${\rm C}^2$; hence $\widehat H,E,P,\widehat P\in{\rm C}^2$.
            We additionally assume that $\snorm{\ps H}$, $\snorm{\wt H}$, $\snorm{\ps\wt H}$, $\snorm{\ps\widehat H}$ are bounded above by polynomials in $n$ and that $\snorm{\ps E}$ and $\snorm{\ps\U}$ are bounded above by polynomials in $1/n$. \label{item:a1}
        \item \emph{Accuracy:} $\varepsilon < \gamma_\star/4$. \label{item:a2}
        \item \emph{Closure:} $H(1) = H(0)$, $\wt{H}(1) = \wt{H}(0)$, $\U(1) = \U(0)$; hence $P, \widehat{P}$ are closed. \label{item:a3}
    \end{enumerate}
\end{assumption}

We also remark that Assumption~\ref{item:a2} and Weyl's inequality imply that the gap of $\widehat{H}(s)$ is at least $\gamma_\star - 2\varepsilon > \gamma_\star/2$, ensuring that the ground-space projectors $P$ and $\widehat{P}$ are well-defined throughout.

All quantities are dimensionless except energies:
$\unit{\varepsilon}, \unit{\gamma_\star}, \unit{\snorm{H}}, \unit{\snorm{\ps H}} = e$ and 
$\unit{s}, \unit{\eta}, \unit{\kappa}, \unit{P} = 1$.
The bounds below are dimensionally consistent ratios of these quantities.
We omit the routine dimensional checks (though it can be instructive to verify them).

\subsection{Comparing the Ground-Space Projectors}
\label{sec:comparing-ground-space-projectors}

We now study how close the spectral projectors $P(s)$ and $\widehat{P}(s)$ are to each other along the path.
Importantly, we must obtain bounds for both the difference of the projectors themselves and the difference of their derivatives with respect to the path parameter $s$.
To do so, we relate the projectors to the resolvents of the Hamiltonians via the Riesz formula.

For each $s\in[0,1]$, let $\mathscr{C}(s)$ be the positively oriented circle of radius $\gamma_\star/2$ centred at $\lambda_0(s)$, so that $\abs{\mathscr{C}(s)} = \pi\gamma_\star$, and write
\begin{align}\label{eq:resolvents}
    R(z,s) &\coloneqq \big(z - H(s)\big)^{-1}, &
    \widehat{R}(z,s) &\coloneqq \big(z - \widehat{H}(s)\big)^{-1},
\end{align}
for the resolvents of $H(s)$ and $\widehat{H}(s)$, respectively.
The choice of radius is a consequence of Assumption~\ref{item:a2}: with $\varepsilon < \gamma_\star/4$, the same contour encircles the ground energy of both $H(s)$ and $\widehat{H}(s)$ and no other eigenvalue of either, uniformly in $s$.

The projectors defined in Section~\ref{sec:setup-and-assumptions} are then recovered as
\begin{align}\label{eq:riesz}
    P(s) &= \frac{1}{2\pi\i}\oint_{\mathscr{C}(s)} \dd{z}\, R(z,s), &
    \widehat{P}(s) &= \frac{1}{2\pi\i}\oint_{\mathscr{C}(s)} \dd{z}\, \widehat{R}(z,s).
\end{align}
See \cref{rem:riesz-equivalence} for the equivalence of this definition with the standard spectral projector definition.

We now consider the distance between the spectral projectors $P(s)$ and $\widehat{P}(s)$ across the entire path $s\in[0,1]$.

\begin{restatable}[Projector difference]{proposition}{spectralprojectorbound}\label{prop:spectral-projector-bound}
    For all $s\in[0,1]$, $\ \|P(s) - \widehat{P}(s)\| = O(\gamma_\star^{-1}\varepsilon)$.
\end{restatable}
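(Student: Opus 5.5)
The plan is to compare the two Riesz integrals in \cref{eq:riesz} directly, using the second resolvent identity. Fix $s$ and write $E(s) = H(s) - \widehat{H}(s)$, so that $\norm{E(s)}\le\varepsilon$ by Condition~\ref{item:c2}. For $z$ off both spectra,
\begin{equation}
    R(z,s) - \widehat{R}(z,s) = R(z,s)\big(H(s) - \widehat{H}(s)\big)\widehat{R}(z,s) = R(z,s)\,E(s)\,\widehat{R}(z,s).
\end{equation}
Subtracting the two Riesz formulas over the common contour $\mathscr{C}(s)$ gives
\begin{equation}
    P(s)-\widehat{P}(s) = \frac{1}{2\pi\i}\oint_{\mathscr{C}(s)}\dd{z}\,R(z,s)E(s)\widehat{R}(z,s).
\end{equation}
The ML estimate then yields the bound
\begin{equation}
    \norm{P(s)-\widehat{P}(s)} \le \frac{\pi\gamma_\star}{2\pi}\,\varepsilon\,\sup_{z\in\mathscr{C}(s)}\norm{R(z,s)}\norm{\widehat{R}(z,s)}.
\end{equation}

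The remaining work is to bound the resolvents uniformly on the contour, using $\norm{R_X(z)} = {\rm dist}(z,\sigma(X))^{-1}$.

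For $H(s)$, the circle has radius $\gamma_\star/2$ about $\lambda_0(s)$. Every other eigenvalue lies at distance at least $\gamma_\star$ from $\lambda_0(s)$, so ${\rm dist}(z,\sigma(H(s)))\ge\gamma_\star/2$ and $\norm{R}\le 2/\gamma_\star$.

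For $\widehat{H}(s)$, Weyl's inequality (equivalently \cref{lem:eigenvalue-simulation}) gives $\abs{\mu_j(s)-\lambda_j(s)}\le\varepsilon<\gamma_\star/4$ by Assumption~\ref{item:a2}. Hence the ground energy $\mu_0(s)$ lies within $\gamma_\star/4$ of the centre, and its distance to the circle is at least $\gamma_\star/4$. Every excited eigenvalue of $\widehat{H}(s)$ lies at distance at least $\gamma_\star - \gamma_\star/4$ from the centre, hence at least $\gamma_\star/4$ outside the circle. So $\norm{\widehat{R}}\le 4/\gamma_\star$.

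The same estimate confirms that the contour encloses exactly one eigenvalue of each operator, so both Riesz formulas are valid with the same $\mathscr{C}(s)$. Combining the bounds,
\begin{equation}
    \norm{P(s)-\widehat{P}(s)} \le \frac{\gamma_\star}{2}\cdot\varepsilon\cdot\frac{8}{\gamma_\star^2} = \frac{4\varepsilon}{\gamma_\star} = O(\gamma_\star^{-1}\varepsilon),
\end{equation}
with a constant independent of $s$.

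The only delicate point is justifying that the single contour centred at $\lambda_0(s)$ is admissible for $\widehat{H}(s)$ with a margin proportional to $\gamma_\star$. This is where Assumption~\ref{item:a2} ($\varepsilon<\gamma_\star/4$) is essential rather than the weaker $2\varepsilon<\gamma_\star$. I would handle it by the explicit distance bookkeeping above rather than by continuity arguments. No regularity in $s$ is needed, since the statement is pointwise.
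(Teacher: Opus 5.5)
Your proposal is correct and matches the paper's proof: the second resolvent identity $R-\widehat{R}=RE\widehat{R}$, the ML estimate over the common circle of length $\pi\gamma_\star$, and the uniform bounds $\norm{R}\le 2/\gamma_\star$, $\norm{\widehat{R}}\le 4/\gamma_\star$ (from Weyl's inequality and Assumption~\ref{item:a2}) together give $4\varepsilon/\gamma_\star$. Your explicit distance bookkeeping is what the paper places in its auxiliary resolvent-bounds lemma.
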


A proof is given in Appendix~\ref{app:berry-phase-simulation-proofs}.

In addition to the difference of the projectors themselves, we also consider the difference of their derivatives with respect to the path parameter $s$.

\begin{restatable}[Difference of projector derivatives]{proposition}{derivativeProjectorDifference}\label{prop:derivative-projector-difference}
    For all $s\in[0,1]$,
    \begin{equation}\label{eq:derivative-projector-difference}
        \|\partial_s P(s) - \partial_s \widehat{P}(s)\| = O\big(\gamma_\star^{-2}\max\{\norm{\ps H(s)}, \norm{\ps\widehat{H}(s)}\}\,\varepsilon + \gamma_\star^{-1} \norm{\ps E(s)}\big).
    \end{equation}
\end{restatable}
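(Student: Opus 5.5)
We will work directly from the Riesz representation \cref{eq:riesz} and the derivative-under-the-integral rule of \cref{sec:continuity-of-spectrum}. Fix $s_0\in[0,1]$. Since the circle $\mathscr{C}(s)$ moves with $\lambda_0(s)$, we first freeze the contour at $\mathscr{C}(s_0)$. By continuity of the spectrum and Assumption~\ref{item:a2}, this frozen contour remains admissible for both $H(s)$ and $\widehat H(s)$ on a neighbourhood of $s_0$. Differentiating there gives
\begin{equation*}
    \ps P(s_0)-\ps\widehat P(s_0) = \frac{1}{2\pi\i}\oint_{\mathscr{C}(s_0)}\dd{z}\,\Big(R\,(\ps H)\,R-\widehat R\,(\ps\widehat H)\,\widehat R\Big)(z,s_0).
\end{equation*}
On $\mathscr{C}(s_0)$ the distance from $z$ to $\sigma(H(s_0))$ is at least $\gamma_\star/2$. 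Because $\norm{E}\le\varepsilon<\gamma_\star/4$ and Weyl's inequality applies, the distance to $\sigma(\widehat H(s_0))$ is at least $\gamma_\star/4$. Hence $\norm{R},\norm{\widehat R}=O(\gamma_\star^{-1})$ uniformly on the contour. The contour also has length $\pi\gamma_\star$, so the ML estimate costs a factor $O(\gamma_\star)$.

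Next we telescope the integrand into three pieces:
\begin{equation*}
    R(\ps H)R-\widehat R(\ps\widehat H)\widehat R = (R-\widehat R)(\ps H)R + \widehat R(\ps H-\ps\widehat H)R + \widehat R(\ps\widehat H)(R-\widehat R).
\end{equation*}
The second resolvent identity gives $R-\widehat R = R(H-\widehat H)\widehat R = R\,E\,\widehat R$, so $\norm{R-\widehat R}=O(\gamma_\star^{-2}\varepsilon)$. The middle term uses $\ps H-\ps\widehat H=\ps E$. Bounding each term then gives:
\begin{enumerate}[label=(\roman*)]
    \item the first term is $O(\gamma_\star^{-3}\norm{\ps H}\varepsilon)$;
    \item the third term is $O(\gamma_\star^{-3}\norm{\ps\widehat H}\varepsilon)$;
    \item the middle term is $O(\gamma_\star^{-2}\norm{\ps E})$.
\end{enumerate}
Multiplying by the contour factor $O(\gamma_\star)$ yields exactly $O\big(\gamma_\star^{-2}\max\{\norm{\ps H},\norm{\ps\widehat H}\}\varepsilon+\gamma_\star^{-1}\norm{\ps E}\big)$ at $s_0$. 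Since $s_0$ was arbitrary and the constants are universal, the bound holds for all $s$.

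The main obstacle is justifying the frozen contour rigorously. We must check that for $s$ near $s_0$ the fixed circle of radius $\gamma_\star/2$ about $\lambda_0(s_0)$ still encloses only the ground energies of both operators. This needs a margin: near $s_0$, $\lambda_0(s)$ and $\mu_0(s)$ stay within a small distance of $\lambda_0(s_0)$ (by continuity, and by $|\lambda_0-\mu_0|\le\varepsilon$ from \cref{lem:eigenvalue-simulation}). The excited levels stay at distance at least $\gamma_\star-2\varepsilon>\gamma_\star/2$. If the margin at radius exactly $\gamma_\star/2$ is tight, we would shrink the radius to, say, $3\gamma_\star/8$, which changes only constants. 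We will also use that $\widehat H\in{\rm C}^2$ (Assumption~\ref{item:a1}), so that differentiation under the integral is legitimate. The endpoint $s_0\in\{0,1\}$ is handled by one-sided derivatives or by closure (Assumption~\ref{item:a3}).
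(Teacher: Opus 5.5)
Your proposal is correct and follows essentially the same route as the paper. Both differentiate the Riesz integral and split $R(\ps H)R-\widehat R(\ps\widehat H)\widehat R$ into the same three terms. Both bound these using $R-\widehat R=RE\widehat R$ together with the resolvent bounds $2/\gamma_\star$ and $4/\gamma_\star$ on the contour, and then multiply by $|\mathscr{C}|/2\pi=\gamma_\star/2$.

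Your extra discussion of freezing the contour is the justification the paper leaves to its continuity-of-spectrum remark. The margin at $s_0$ is already strictly positive, so you never need to shrink the radius.
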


A proof is given in Appendix~\ref{app:berry-phase-simulation-proofs}.

The later holonomy comparison uses both projector bounds above.
In particular, \cref{eq:derivative-projector-difference} depends on $\norm{\ps E(s)}$, which need not be small just because $\varepsilon$ is small.
Section~\ref{sec:perturbative-gadget-reductions} therefore establishes separate control of $\ps E(s)$ for the gadget families used in the hardness reduction.

\subsection{The Berry Phase as a Holonomy}
\label{sec:berry-phase-holonomy}
We recall Kato's construction of adiabatic parallel transport~\cite{kato1950adiabatic}.
Define the \emph{transport generator} ${\rm G}(s) \coloneqq \comm{\partial_s P(s)}{P(s)}$.
Since $P(s)$ and $\partial_s P(s)$ are Hermitian, ${\rm G}(s)$ is anti-Hermitian; it depends only on the projector family and is therefore gauge invariant.
By Assumption~\ref{item:a1}, ${\rm G}(s)$ is continuous, so the initial value problem
\begin{equation}\label{eq:kato-ivp}
    \ps T(s) = {\rm G}(s)\, T(s), \qquad T(0) = I,
\end{equation}
has a unique solution, and anti-Hermiticity of ${\rm G}(s)$ makes $T(s)$ unitary for every $s$.

\begin{restatable}[Intertwining property~\cite{kato1950adiabatic}]{proposition}{katoIntertwining}\label{prop:kato-intertwining}
    For all $s\in[0,1]$, $\ P(s) = T(s)\, P(0)\, T(s)^\dagger$.
\end{restatable}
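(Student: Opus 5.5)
The plan is to show that $Y(s) \coloneqq T(s)P(0)T(s)^\dagger$ and $P(s)$ solve the same linear initial value problem; uniqueness of solutions then forces them to coincide. Both agree at $s=0$, since $T(0)=I$.

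First, differentiate $Y$ using \cref{eq:kato-ivp}. Because ${\rm G}$ is anti-Hermitian, $\ps T^\dagger = T^\dagger {\rm G}^\dagger = -T^\dagger{\rm G}$, and therefore
\begin{equation*}
    \ps Y(s) = {\rm G}(s)Y(s) - Y(s){\rm G}(s) = \comm{{\rm G}(s)}{Y(s)}.
\end{equation*}
So $Y$ solves the linear matrix ODE $\ps X = \comm{{\rm G}(s)}{X}$ with $X(0)=P(0)$. The coefficient map $X\mapsto\comm{{\rm G}(s)}{X}$ is linear, with coefficients continuous in $s$ by Assumption~\ref{item:a1}. Standard Picard--Lindelöf uniqueness for linear ODEs on a finite-dimensional space therefore applies.

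Second, verify that $P$ satisfies the same equation, i.e.\ $\ps P = \comm{{\rm G}}{P}$. Differentiating $P^2=P$ gives $\ps P = (\ps P)P + P(\ps P)$. Multiplying by $P$ on both sides then gives $P(\ps P)P = 2P(\ps P)P$, so $P(\ps P)P=0$. Now expand
\begin{equation*}
    \comm{{\rm G}}{P} = \comm{\comm{\ps P}{P}}{P} = (\ps P)P - 2P(\ps P)P + P(\ps P) = (\ps P)P + P(\ps P) = \ps P,
\end{equation*}
where we used $P^2=P$ and $P(\ps P)P=0$. This is the only genuinely computational step. The expected obstacle is bookkeeping: getting the sign convention ${\rm G}=\comm{\ps P}{P}$ consistent with the ordering in the double commutator. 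I would check this directly by expanding.

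Third, conclude. Both $Y$ and $P$ are ${\rm C}^1$ solutions of the same initial value problem with the same initial datum $P(0)$, so $Y\equiv P$ on $[0,1]$. This gives $P(s)=T(s)P(0)T(s)^\dagger$. Unitarity of $T(s)$, already noted after \cref{eq:kato-ivp}, is used only to rewrite $T^{-1}$ as $T^\dagger$.

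As an alternative for the last step, one could instead show that $Z(s)\coloneqq T(s)^\dagger P(s) T(s)$ is constant. Its derivative is $T^\dagger\big(-{\rm G}P + \ps P + P{\rm G}\big)T = T^\dagger\big(\ps P - \comm{{\rm G}}{P}\big)T = 0$ by the second step. This avoids invoking ODE uniqueness separately.
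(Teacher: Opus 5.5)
Your proposal is correct and follows the paper's proof: you derive $P(\ps P)P=0$ and hence $\comm{{\rm G}}{P}=\ps P$ exactly as the paper does, and the paper then concludes by your ``alternative'' route, showing that $T(s)^\dagger P(s)T(s)$ has zero derivative and so equals $P(0)$. Your primary ODE-uniqueness conclusion is just an equivalent repackaging of that final step.
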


By Assumption~\ref{item:a3}, the path is closed, so \cref{prop:kato-intertwining} gives $T(1)P(0)T(1)^\dagger = P(1) = P(0)$.
Thus, $T(1)$ commutes with the rank-one projector $P(0)$ and acts unitarily on the line $\mathbb{C}\ket{\phi_0(0)}$ by a phase:
\begin{align}\label{eq:holonomy-phase}
    T(1)\ket{\phi_0(0)} &= \e^{\i\alpha}\ket{\phi_0(0)}, & \Tr[P(0)\, T(1)] &= \mel{\phi_0(0)}{T(1)}{\phi_0(0)} = \e^{\i\alpha}.
\end{align}

Define the transported state $\ket{\xi(s)} \coloneqq T(s)\ket{\phi_0(0)}$, a ${\rm C}^1$ unit-norm ground eigenvector of $H(s)$ along the path with $\ket{\xi(0)} = \ket{\phi_0(0)}$.
It follows that for all $s\in[0,1]$, \cref{prop:kato-intertwining} gives
\begin{equation}\label{eq:xi-in-image}
    P(s)\ket{\xi(s)} = P(s)T(s)\ket{\phi_0(0)} = T(s)P(0)\ket{\phi_0(0)} = T(s)\ket{\phi_0(0)} = \ket{\xi(s)}.
\end{equation}
Thus, $\ket{\xi(s)}$ lies in the image of $P(s)$.
Moreover, for all $s\in[0,1]$, using \cref{eq:kato-ivp,eq:xi-in-image}, we have
\begin{equation}\label{eq:parallel-transport}
    \mel{\xi}{{\rm G}}{\xi} = \mel{\xi}{(\partial_s P)P}{\xi} - \mel{\xi}{P(\partial_s P)}{\xi} = \mel{\xi}{\partial_s P}{\xi} - \mel{\xi}{\partial_s P}{\xi} = 0,
\end{equation}
and therefore $\ket{\xi(s)}$ satisfies the parallel transport condition along the path.

A \emph{${\rm C}^1$ gauge} for $\Upsilon$ is a ${\rm C}^1$ family $\ket{\phi_0(s)}$ of normalised ground eigenvectors of $H(s)$; one exists by Assumption~\ref{item:a1} and $\gamma_\star > 0$ (for instance $\ket{\xi(s)}$ above).
The \emph{Berry connection} of the gauge is
\begin{equation}\label{eq:berry-connection}
    A(s) \coloneqq \i\mel{\phi_0(s)}{\partial_s}{\phi_0(s)},
\end{equation}
and we set $\varTheta \coloneqq \int_0^1 A(s)\,\dd{s} \in \mathbb{R}$.
Differentiating $\braket{\phi_0}{\phi_0} = 1$ shows $\mel{\phi_0}{\partial_s}{\phi_0}$ is purely imaginary, so $A(s)$ and $\varTheta$ are real.
The connection integral depends on a general gauge; the gauge-invariant phase is
$\varTheta+\arg\braket{\phi_0(0)}{\phi_0(1)}$ modulo $2\pi$.
In a periodic gauge the endpoint term vanishes and $\vartheta\equiv\varTheta\pmod{2\pi}$.
The following proposition identifies this gauge-independent quantity with $\Tr[P(0)T(1)]$.

\begin{restatable}[Berry phase as holonomy]{lemma}{berryphaseholonomy}\label{prop:berry-holonomy}
    Let $P(s) = \ketbra{\phi_0(s)}$ be a ${\rm C}^1$ family of rank-one spectral projectors over a closed path $P(1) = P(0)$, let ${\rm G}(s), T(s)$ be as in \cref{eq:kato-ivp}, and let $\ket{\phi_0(s)}$ be any ${\rm C}^1$ gauge.
    Then
    \begin{equation}\label{eq:berry-holonomy-identity}
        \Tr[P(0)\, T(1)] = \exp\Big(\i \int_0^1 \dd{s}\, \i\mel{\phi_0(s)}{\partial_s}{\phi_0(s)}\Big) \braket{\phi_0(0)}{\phi_0(1)} = \e^{\i \varTheta}\, \braket{\phi_0(0)}{\phi_0(1)}.
    \end{equation}
\end{restatable}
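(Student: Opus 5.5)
We will compare the Kato-transported vector $\ket{\xi(s)} = T(s)\ket{\phi_0(0)}$ with the arbitrary ${\rm C}^1$ gauge $\ket{\phi_0(s)}$. Both are normalised ${\rm C}^1$ vectors in the rank-one image of $P(s)$: for $\ket{\xi(s)}$ this is \cref{eq:xi-in-image}, which relies on \cref{prop:kato-intertwining}. Hence there is a ${\rm C}^1$ unimodular function $c(s) = \braket{\phi_0(s)}{\xi(s)}$ with $\ket{\xi(s)} = c(s)\ket{\phi_0(s)}$ and $c(0) = 1$. By the phase-lift fact from the analytic preliminaries we may write $c(s) = \e^{\i\beta(s)}$ with $\beta\in{\rm C}^1$ and $\beta(0)=0$.

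Next we differentiate. Since $\ps\ket{\xi} = \i\beta'\e^{\i\beta}\ket{\phi_0} + \e^{\i\beta}\ps\ket{\phi_0}$, we get
\begin{equation*}
    \braket{\xi}{\ps\xi} = \i\beta'(s) + \mel{\phi_0}{\ps}{\phi_0} = \i\beta'(s) - \i A(s).
\end{equation*}
The left-hand side vanishes by the parallel-transport condition. To see this, use $\ps\ket{\xi} = {\rm G}\ket{\xi}$ from \cref{eq:kato-ivp}, so that $\braket{\xi}{\ps\xi} = \mel{\xi}{{\rm G}}{\xi} = 0$ by \cref{eq:parallel-transport}. It follows that $\beta'(s) = A(s)$, and therefore $\beta(1) = \int_0^1 A\,\dd{s} = \varTheta$, which gives
\begin{equation*}
    T(1)\ket{\phi_0(0)} = \ket{\xi(1)} = \e^{\i\varTheta}\ket{\phi_0(1)}.
\end{equation*}

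Finally, we evaluate the trace. Because $P(0)$ is rank one, $\Tr[P(0)T(1)] = \mel{\phi_0(0)}{T(1)}{\phi_0(0)} = \e^{\i\varTheta}\braket{\phi_0(0)}{\phi_0(1)}$, which is the claimed identity. We use $P(1)=P(0)$ only to interpret this expression as the holonomy phase $\e^{\i\alpha}$ of \cref{eq:holonomy-phase}; the identity itself needs only that $P(0)=\ketbra{\phi_0(0)}$.

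The main obstacle is justifying that $\ket{\xi(s)}$ is ${\rm C}^1$ and lies in the image of $P(s)$, since this is what makes the proportionality function $c$ differentiable and unimodular. Both facts follow from \cref{prop:kato-intertwining} together with continuity of ${\rm G}$ under Assumption~\ref{item:a1}. A second point needing care is that the computation must not presuppose a periodic gauge; the endpoint factor $\braket{\phi_0(0)}{\phi_0(1)}$ handles a non-periodic gauge automatically.
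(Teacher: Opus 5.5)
Your proposal is correct and follows the paper's proof essentially step for step. It uses the same Kato-transported vector $T(s)\ket{\phi_0(0)}$, the same ${\rm C}^1$ phase lift $\beta$ with $\beta(0)=0$, and the same differentiation giving $\partial_s\beta = A(s)$; the only difference is that you use $\braket{\xi}{\partial_s\xi}=0$ where the paper uses the equivalent $\mel{\phi_0}{\partial_s}{\xi}=0$. Your remark that the closure $P(1)=P(0)$ is not needed for the identity itself, only for reading it as the holonomy phase, is also correct.
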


In a \emph{periodic gauge} $\ket{\phi_0(1)}=\ket{\phi_0(0)}$, so \cref{eq:berry-holonomy-identity} becomes
\begin{align}
    \Tr[P(0)T(1)] &= \e^{\i\varTheta}, &
    \arg\Tr[P(0)T(1)] &\equiv \varTheta \equiv \vartheta\pmod{2\pi}.
\end{align}
A periodic gauge always exists along a closed path.
For the remainder of the paper, we therefore adopt the identification $\vartheta = \arg\Tr[P(0)T(1)]$.
The same convention applies to $\widehat{\Upsilon}$ and $\wt{\Upsilon}$, which are closed by Assumption~\ref{item:a3}.

Since Berry phases are defined modulo $2\pi$, we compare them in the circular metric $d_{2\pi}(a, b) \coloneqq \min_{k\in\mathbb{Z}}\abs{a - b - 2\pi k}$.

\subsection{The Berry Phase Simulation Theorem}
\label{sec:berry-phase-simulation-theorem}
We now collect the above ideas to prove that the Berry phase of the simulator family can be made arbitrarily close to that of the target family under suitable simulation conditions.
Let ${\rm G} = \comm{\partial_s P}{P}$, $\widehat{{\rm G}} = \comm{\partial_s\widehat{P}}{\widehat{P}}$ be the transport generators of $\Upsilon, \widehat{\Upsilon}$, with transport unitaries $T, \widehat{T}$ solving $\partial_s T = {\rm G}T$, $\partial_s\widehat{T} = \widehat{{\rm G}}\widehat{T}$, $T(0) = \widehat{T}(0) = I$.
Since both generators are anti-Hermitian, the transport unitaries they generate differ by at most the integral of the generator difference; via \cref{prop:duhamel}, $\norm{T(s) - \widehat{T}(s)}$ inherits the bound of \cref{lem:generator-comparison}.

We conclude the following lemma bounding the difference between the Berry phases of the target family $\vartheta$ and the transformed family $\widehat{\vartheta}$.

\begin{restatable}{lemma}{berryPhaseComparison}\label{lem:berry-phase-comparison}
    With $\vartheta, \widehat{\vartheta}$ the Berry phases of $\Upsilon, \widehat{\Upsilon}$,
    \begin{equation}
        d_{2\pi}(\vartheta, \widehat{\vartheta}) = O\big( \gamma_\star^{-2}\max\{\snorm{\ps H}, \snorm{\ps\widehat{H}}\}\,\varepsilon + \gamma_\star^{-1} \snorm{\ps E} + \gamma_\star^{-1}\varepsilon \big).
    \end{equation}
\end{restatable}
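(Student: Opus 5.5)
We use the holonomy identification $\vartheta = \arg\Tr[P(0)T(1)]$ and $\widehat\vartheta = \arg\Tr[\widehat P(0)\widehat T(1)]$ from \cref{prop:berry-holonomy}. The plan is to compare the two complex numbers $z \coloneqq \Tr[P(0)T(1)]$ and $\widehat z \coloneqq \Tr[\widehat P(0)\widehat T(1)]$, both of unit modulus. For unimodular $z,\widehat z$ one has $d_{2\pi}(\arg z,\arg\widehat z) \le \tfrac{\pi}{2}\abs{z-\widehat z}$. It therefore suffices to bound $\abs{z-\widehat z}$ by the three terms in the statement.

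\textbf{Splitting.} We write
\begin{equation}
    z-\widehat z = \Tr\big[(P(0)-\widehat P(0))T(1)\big] + \Tr\big[\widehat P(0)(T(1)-\widehat T(1))\big].
\end{equation}
The first term is a trace of an operator of rank at most two. It is bounded by $2\norm{P(0)-\widehat P(0)} = O(\gamma_\star^{-1}\varepsilon)$ using \cref{prop:spectral-projector-bound}. This is the source of the $\gamma_\star^{-1}\varepsilon$ term. Since $\widehat P(0)$ is rank one, the second term is at most $\norm{T(1)-\widehat T(1)}$.

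\textbf{Transport comparison.} We bound $\norm{T(1)-\widehat T(1)}$ by Duhamel's formula (\cref{prop:duhamel}). Because $\widehat T(s)^\dagger T(s)$ has derivative $\widehat T^\dagger({\rm G}-\widehat{\rm G})T$, and the transports are unitary,
\begin{equation}
    \norm{T(1)-\widehat T(1)} \le \int_0^1 \norm{{\rm G}(s)-\widehat{\rm G}(s)}\,\dd s.
\end{equation}
For the generator difference (the content of \cref{lem:generator-comparison}) we expand
\begin{equation}
    {\rm G}-\widehat{\rm G} = [\ps P - \ps\widehat P, P] + [\ps\widehat P, P-\widehat P].
\end{equation}
This gives
\begin{equation}
    \norm{{\rm G}-\widehat{\rm G}} \le 2\norm{\ps P-\ps\widehat P} + 2\norm{\ps\widehat P}\,\norm{P-\widehat P}.
\end{equation}
The first piece is controlled directly by \cref{prop:derivative-projector-difference}, giving $O(\gamma_\star^{-2}\max\{\snorm{\ps H},\snorm{\ps\widehat H}\}\varepsilon + \gamma_\star^{-1}\snorm{\ps E})$.

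\textbf{The main obstacle.} The remaining cross term $\norm{\ps\widehat P}\,\norm{P-\widehat P}$ must be absorbed into the stated bound. For $\norm{\ps\widehat P}$ we differentiate the Riesz formula \cref{eq:riesz}. The gap of $\widehat H$ exceeds $\gamma_\star/2$ by Assumption~\ref{item:a2}, so each resolvent on the contour has norm $O(\gamma_\star^{-1})$. The ML estimate over a contour of length $\pi\gamma_\star$ then gives $\norm{\ps\widehat P} = O(\gamma_\star^{-1}\snorm{\ps\widehat H})$. Multiplying by \cref{prop:spectral-projector-bound} yields $O(\gamma_\star^{-2}\snorm{\ps\widehat H}\varepsilon)$, which is dominated by the first term of the claim.

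One subtlety needs care: the contour centred at $\lambda_0(s)$ moves with $s$. Differentiation is justified by locally freezing an admissible contour, as in Section~\ref{sec:continuity-of-spectrum}. The resolvent bound $\mathrm{dist}(z,\sigma(\widehat H))\ge \gamma_\star/4$ must hold on the circle of radius $\gamma_\star/2$; this follows from $\abs{\mu_0-\lambda_0}\le\varepsilon<\gamma_\star/4$ and Weyl's inequality.

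Summing the contributions and taking suprema over $s$ gives the claimed bound on $\abs{z-\widehat z}$, and hence on $d_{2\pi}(\vartheta,\widehat\vartheta)$.
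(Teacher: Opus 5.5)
Your proposal is correct and follows the paper's proof essentially step for step: holonomy identification via \cref{prop:berry-holonomy}, a telescoping split of the trace difference with the rank-one and rank-two trace bounds, Duhamel's formula, the same commutator decomposition of the generator difference, the Riesz-formula bound on $\norm{\ps\widehat P}$, and the circular-distance claim. The only differences are cosmetic. You put the hats on the other factors in the telescoping split, and you inline the arguments that the paper packages as \cref{lem:generator-comparison} and \cref{prop:derivative-spectral-projector-bound}.
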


It remains to compare $\widehat{\Upsilon}$ with $\wt{\Upsilon}$.
Fix a ${\rm C}^1$ periodic gauge $\ket{\wt{\phi}_0(s)}$ for $\wt{H}(s)$; by Assumption~\ref{item:a3} the induced gauge $\ket{\psi_0(s)} = \U(s)^\dagger\ket{\wt{\phi}_0(s)}$ for $\widehat{\Upsilon}$ is ${\rm C}^1$ and periodic, with connections $\wt{A}(s) = \i\mel{\wt{\phi}_0(s)}{\partial_s}{\wt{\phi}_0(s)}$ and $\widehat{A}(s) = \i\mel{\psi_0(s)}{\partial_s}{\psi_0(s)}$ and phases $\wt{\vartheta}, \widehat{\vartheta}$.

\begin{restatable}{lemma}{pullbackVsSimulator}\label{lem:pullback-vs-simulator}
    With the conventions and assumptions detailed in Sections~\ref{sec:setup-and-assumptions} and \ref{sec:berry-phase-simulation-theorem},
    \begin{equation}\label{eq:pullback-vs-simulator}
        d_{2\pi}(\widehat\vartheta,\wt\vartheta) \le \snorm{\ps\U}.
    \end{equation}
\end{restatable}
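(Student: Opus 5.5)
The plan is to compare the two Berry connections pointwise through the product rule and then integrate. Both gauges are periodic: $\ket{\wt{\phi}_0(s)}$ by choice, and $\ket{\psi_0(s)} = \U(s)^\dagger\ket{\wt\phi_0(s)}$ by Assumption~\ref{item:a3}. By the discussion following \cref{prop:berry-holonomy}, we therefore have $\wt\vartheta \equiv \int_0^1 \wt A$ and $\widehat\vartheta \equiv \int_0^1 \widehat A \pmod{2\pi}$. Consequently
\begin{equation*}
    d_{2\pi}(\widehat\vartheta,\wt\vartheta) \le \abs{\int_0^1 \big(\widehat A(s) - \wt A(s)\big)\,\dd{s}} \le \sup_s \abs{\widehat A(s) - \wt A(s)},
\end{equation*}
and it suffices to bound the connection difference pointwise by $\norm{\ps\U(s)}$.

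First I would check that $\ket{\psi_0(s)}$ is a genuine unit ${\rm C}^1$ section. This holds because $\ket{\wt\phi_0(s)}$ lies in ${\rm Im}\,\U(s) = \mathcal{E}_{2^n}(\wt H(s))$ by \ref{item:c1}. Hence $\U\U^\dagger\ket{\wt\phi_0} = \ket{\wt\phi_0}$, so $\norm{\psi_0}=1$ and $\U(s)\ket{\psi_0(s)} = \ket{\wt\phi_0(s)}$.

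Next, I would differentiate $\ket{\wt\phi_0} = \U\ket{\psi_0}$ rather than $\ket{\psi_0} = \U^\dagger\ket{\wt\phi_0}$, since this form is cleaner. This gives $\ps\ket{\wt\phi_0} = (\ps\U)\ket{\psi_0} + \U\,\ps\ket{\psi_0}$. Taking the inner product with $\bra{\wt\phi_0} = \bra{\psi_0}\U^\dagger$ and using $\U^\dagger\U = I$ yields
\begin{equation*}
    \wt A(s) = \i\mel{\psi_0}{\U^\dagger\ps\U}{\psi_0} + \widehat A(s).
\end{equation*}
Therefore $\abs{\wt A - \widehat A} = \abs{\mel{\psi_0}{\U^\dagger \ps\U}{\psi_0}} \le \norm{\U^\dagger}\,\norm{\ps\U} \le \norm{\ps\U(s)}$.

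Note that $\U^\dagger\ps\U$ is anti-Hermitian, because differentiating $\U^\dagger\U=I$ gives $\U^\dagger\ps\U = -(\ps\U)^\dagger \U$. Hence the correction term is real, which is consistent with both connections being real. Integrating over $[0,1]$ and taking the supremum then gives \cref{eq:pullback-vs-simulator}.

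The only delicate point is the modulo-$2\pi$ bookkeeping. The gauge-invariant phase is $\varTheta + \arg\braket{\phi_0(0)}{\phi_0(1)}$, so I must make sure that both induced gauges are periodic with the same endpoint convention. For the simulator family this is by choice. For the pulled-back family, closure follows from $\U(1)=\U(0)$ together with $\ket{\wt\phi_0(1)}=\ket{\wt\phi_0(0)}$, so both endpoint terms vanish. Once this is settled, the difference of the two real integrals $\widehat\varTheta,\wt\varTheta$ bounds the circular distance directly, since $d_{2\pi}(a,b)\le\abs{a-b}$ for real representatives.

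I expect no other obstacle. The ${\rm C}^2$ regularity of $\U$ (Assumption~\ref{item:a1}) guarantees that everything above is differentiable.
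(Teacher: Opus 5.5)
Your proof is correct and is essentially the paper's argument: compare the two Berry connections pointwise via the product rule, bound the correction term by $\norm{\ps\U}$, and integrate using that both gauges are periodic. The only cosmetic difference is that you differentiate $\ket{\wt\phi_0} = \U\ket{\psi_0}$ rather than $\ket{\psi_0} = \U^\dagger\ket{\wt\phi_0}$. As a result, the term $\i\bra{\wt\phi_0}(I-\U\U^\dagger)\ps\ket{\wt\phi_0}$, which the paper kills explicitly, is instead absorbed into the identity $\U\U^\dagger\ket{\wt\phi_0} = \ket{\wt\phi_0}$ that you check at the outset.
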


This contribution depends only on the derivative norm $\kappa$ of the witness family, not on other simulation parameters $\varepsilon$ or $\eta$.
It may therefore remain large even when the pointwise simulation errors vanish.
The gadget analysis must consequently exhibit a witness family $\U(s)$ with controlled derivative norm; see Section~\ref{sec:uniform-simulation}.

We can now state the main result, which bounds the Berry phase of the target family in terms of the simulation parameters.

\begin{restatable}[Berry phase simulation]{theorem}{berryPhaseSimulation}\label{thm:berry-phase-simulation}
    Let $(\wt{\Upsilon}, \V)$ be an $(\eta, \varepsilon, \kappa)$-regular simulation of $\Upsilon$ witnessed by $\Phi = \{\U(s)\}_{s\in[0,1]}$ with $\kappa = \snorm{\ps\U}$, satisfying Assumptions~\ref{item:a1}, \ref{item:a2}, and \ref{item:a3}.
    Let $\vartheta$ and $\wt{\vartheta}$ be the Berry phases of $\Upsilon$ and $\wt{\Upsilon}$, respectively.
    Then
    \begin{equation}\label{eq:main-result-bp-error}
        d_{2\pi}(\vartheta, \wt{\vartheta}) 
        = O(
            \gamma_\star^{-2}\max\{\snorm{\ps H}, \snorm{\ps\widehat{H}}\}\,\varepsilon
            + \gamma_\star^{-1} \snorm{\ps E}
            + \gamma_\star^{-1}\varepsilon + \kappa 
        ).
    \end{equation}
\end{restatable}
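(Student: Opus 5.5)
The plan is to split the comparison through the transformed family $\widehat{\Upsilon}$ of \cref{eq:transformed-family}, which lives on the target space, and apply the triangle inequality for the circular metric:
\begin{equation*}
    d_{2\pi}(\vartheta,\wt\vartheta) \le d_{2\pi}(\vartheta,\widehat\vartheta) + d_{2\pi}(\widehat\vartheta,\wt\vartheta).
\end{equation*}
Since $d_{2\pi}$ is a genuine metric on $\mathbb{R}/2\pi\mathbb{Z}$ (the quotient of the absolute-value metric by the translation action of $2\pi\mathbb{Z}$), the triangle inequality holds without further comment. The two terms on the right are exactly the quantities controlled by \cref{lem:berry-phase-comparison} and \cref{lem:pullback-vs-simulator}, respectively.

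For the first term, I would invoke \cref{lem:berry-phase-comparison}. Its hypotheses are guaranteed as follows:
\begin{enumerate}
    \item Assumption~\ref{item:a2} ensures that the contour $\mathscr{C}(s)$ of radius $\gamma_\star/2$ isolates the ground energies of both $H(s)$ and $\widehat H(s)$. Hence \cref{prop:spectral-projector-bound,prop:derivative-projector-difference} apply uniformly in $s$.
    \item Assumption~\ref{item:a1} makes $P$ and $\widehat P$ both ${\rm C}^2$.
    \item Assumption~\ref{item:a3}, together with \ref{item:c4}, makes both projector paths closed, so both Berry phases are holonomies in the sense of \cref{prop:berry-holonomy}.
\end{enumerate}
This yields the contributions $J_1$, $J_2$ and $J_3$. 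Here the $\gamma_\star^{-1}\varepsilon$ term comes from comparing $\Tr[P(0)T(1)]$ with $\Tr[\widehat P(0)\widehat T(1)]$ through the projector difference at $s=0$. The other two terms come from integrating the generator difference $\norm{{\rm G}-\widehat{\rm G}}$, which the Duhamel estimate converts into a bound on $\norm{T(1)-\widehat T(1)}$.

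For the second term, I would apply \cref{lem:pullback-vs-simulator} to get $d_{2\pi}(\widehat\vartheta,\wt\vartheta)\le\snorm{\ps\U}\le\kappa$ by \ref{item:c5}. The only point needing care is that the Berry phases of $\widehat\Upsilon$ and $\wt\Upsilon$ are computed in compatible periodic gauges. I would fix a ${\rm C}^1$ periodic gauge $\ket{\wt\phi_0(s)}$ and push it forward through $\U(s)^\dagger$. This induced gauge is periodic because $\U(1)=\U(0)$. It is a genuine ground-state gauge of $\widehat H$ because condition~\ref{item:c1} says the image of $\U(s)$ is the low-energy band, which contains $\ket{\wt\phi_0(s)}$, so that $\U\U^\dagger\ket{\wt\phi_0}=\ket{\wt\phi_0}$. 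The connection difference is then
\begin{equation*}
    \widehat A-\wt A=\i\mel{\wt\phi_0}{\U\,\ps\U^\dagger}{\wt\phi_0},
\end{equation*}
whose modulus is at most $\kappa$ pointwise. Integrating over $[0,1]$ gives the bound.

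I expect the main obstacle to be the first comparison, specifically turning projector-level closeness into a phase estimate that is stable modulo $2\pi$. If $\abs{\Tr[P(0)T(1)]-\Tr[\widehat P(0)\widehat T(1)]}\le\delta$ with both traces unimodular, then the circular distance between their arguments is $O(\delta)$, since the chord length controls the arc length. This requires writing $\Tr[P(0)T(1)]$ gauge-invariantly as the unimodular number $\e^{\i\alpha}$ of \cref{eq:holonomy-phase}. Since that step is packaged inside \cref{lem:berry-phase-comparison}, the theorem itself reduces to summing the two lemmas and absorbing constants into the $O(\cdot)$.
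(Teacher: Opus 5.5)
Your proposal is correct and matches the paper's proof. Both use the triangle inequality for $d_{2\pi}$ through the pulled-back family $\widehat{\Upsilon}$: \cref{lem:berry-phase-comparison} supplies the $J_1$--$J_3$ terms and \cref{lem:pullback-vs-simulator} supplies the $\kappa$ term. The one step the paper states explicitly that you leave implicit is that the two lemmas use the same $\widehat\vartheta$. In \cref{lem:pullback-vs-simulator} it is a connection integral in the induced periodic gauge $\U^\dagger\ket{\wt\phi_0}$, while in \cref{lem:berry-phase-comparison} it is the holonomy phase $\arg\Tr[\widehat P(0)\widehat T(1)]$; these agree modulo $2\pi$ by \cref{prop:berry-holonomy} applied to $\widehat\Upsilon$.
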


For convenience, we introduce the following notation to refer to the individual contributions to the Berry-phase error.

\begin{definition}[Berry-phase error contributions]\label{def:J-contributions}
    We denote the four terms of \cref{eq:main-result-bp-error} by
    \begin{align}\label{eq:J-contributions}
        J_1 &= \gamma_\star^{-2}\max\{\snorm{\ps H}, \snorm{\ps\widehat{H}}\}\,\varepsilon, &
        J_2 &= \gamma_\star^{-1} \snorm{\ps E}, &
        J_3 &= \gamma_\star^{-1}\varepsilon, &
        J_4 &= \kappa.
    \end{align}
\end{definition}

The error contributions fall into two groups:
\begin{inparaenum}
    \item \emph{static:} $J_1$ and $J_3$ depend on the simulation error $\varepsilon$ at fixed target gap $\gamma_\star$;
    \item \emph{dynamic:} $J_2$ depends on the derivative of the error operator $E(s)$, while $J_4$ depends on the derivative norm $\kappa$ of the witness family.
\end{inparaenum}
To control the Berry-phase error it suffices to bound $J_1,\dots,J_4$, which depend on $\varepsilon$, $\snorm{\partial_sE}$, and $\kappa$.
For the hardness reduction, we additionally require small $\eta$ and $\varepsilon$ to maintain a sufficient guiding-state overlap.
This is carried out for the perturbative gadget constructions in Section~\ref{sec:perturbative-gadget-reductions}.

%% file: sections/perturbative-gadget-reductions.tex
\section{Perturbative Gadget Reductions}
\label{sec:perturbative-gadget-reductions}
In this section, we specialise the preceding analysis to perturbative gadget reductions between parameterised target and simulator families.
Such a reduction adjoins mediator qubits with a large energy penalty so that the low-energy subspace of $\wt H(s)$ reproduces $H(s)$ up to an error controlled by the perturbative order.

We fix the isometry
\begin{equation}\label{eq:mediator-isometry}
    \bs{\V} = I_{\rm sys} \otimes \bigotimes_{a\in[A]} \ket{0}_{m_a},
\end{equation}
which appends every mediator qubit in a fixed reference state.
The isometry $\bs{\V}$ is independent of $s$, as required of the embedding in Definition~\ref{def:uniform-simulation}, and ${\rm sys}$ is disjoint from the mediator qubits ${\rm med} = \{m_a\}_{a\in[A]}$.
We define the witness family $\Phi = \{\bs{\U}(s)\}_{s\in[0,1]}$ by $\bs{\U}(s) = \e^{-\bs{S}(s)}\bs{\V}$, where $\bs{S}(s)$ is the exact canonical Schrieffer--Wolff generator.
Our goal is to obtain a polynomially small derivative norm $\kappa = \snorm{\ps\bs{\U}}$ and thereby control the Berry-phase simulation error.
Note that the isometry in \cref{eq:mediator-isometry} is sufficient for the gadget reductions of Ref.~\cite{oliveira2008complexity}, but is not applicable to those of Ref.~\cite{schuch2009computational}; compatability with this latter group follows from a trivial modification (see Appendix~\ref{app:perturbative-gadget-extensions}).

For simplicity, we take the number of parallel gadget terms to equal the number of target terms $A$.
Note that in some cases one takes $O(\poly{A})$ simulator terms, which only inflates the polynomial factors in the final bounds and does not affect the inverse-polynomial asymptotics.
For ease of reference, we refer to Hamiltonian terms that carry an $s$-dependence as \emph{driving terms} (denoting the number of driving terms by $B$), and we refer to Hamiltonian terms that are not simulated by the gadgets as \emph{carried terms}.

The content of this section is organised into two main categories: 
\begin{inparaenum}[(1)]
    \item first-, second-, and third-order perturbative gadget reductions where we do not simulate any driving terms,
    \item second-order perturbative gadget reductions where we simulate $2$-local driving terms.
\end{inparaenum}
In each case we construct a family of Schrieffer--Wolff generators $\bs{S}(s)$ and bound the derivative scale $\kappa = \snorm{\ps\bs{\U}}$ of the resulting encoding isometry family $\bs{\U}(s) = \e^{-\bs{S}(s)}\bs{\V}$.
The bounds on $\kappa$ are then combined with the other contributions $J_1, J_2, J_3$ (cf. Definition~\ref{def:J-contributions} or \cref{eq:J-contributions}) to give the final Berry-phase simulation error. 
For the extended hardness results in \cref{sec:extended-hardness-results}, we do not require third-order gadget reductions for driving terms, hence their exclusion from the second category.

Our treatment is tailored to the parameter-dependent gadgets required for the hardness reduction rather than to every possible parameter-dependent construction.
For clarity, we also make simplifying assumptions about terms that are not gadgetised.
When these terms are polynomially numerous, the same arguments apply after retaining the corresponding polynomial factors in the size parameters.

\subsection{Target and Simulator Preliminaries}
\label{sec:target-simulator-hamiltonian-family-preliminaries}
We first analyse the case in which all driving terms are retained directly in the block-diagonal sector.
We treat this as a warm-up case for the more general scenario where driving terms are simulated by the gadgets, detailed in \cref{sec:gadgetised-two-local-driving-terms}.

We use the class of parameterised $5$-local target families arising in the \clw{BQP}{hardness} construction of Ref.~\cite{hayakawa2025computational}.
This restricted form suffices for our applications and supplies the notation used for both the carried- and gadgetised-driving analyses below.

\begin{definition}\label{def:target-family}
    Let $\Upsilon = \{\bs{H}_\zeta(s)\}_{s\in[0,1]}$ be the target family of local Hamiltonians.
    Take
    \begin{equation}
        \bs{H}_\zeta(s) = \bs{H} + \zeta\, \bmw(s),
    \end{equation}
    where $\bs{H} = \sum_{a\in[A]} H^{a}$ is a sum of $A$ local terms, each with $\norm{H^{a}} \le 1$, and
    \begin{align}\label{eq:carried-driving-scales}
        \bmw(s) &= \sum_{b\in[B]}\w^b(s), &
        w_r &\coloneqq \sum_{b\in[B]}\snorm{\ps^r \w^b}\quad (r=0,1,2).
    \end{align}
    We assume every $\w^b$ is $2$-local, ${\rm C}^2$, and closed, and that $\snorm{\ps^r\w^b} = O(1)$ for each $b \in [B]$ and each $r=0,1,2$; hence $w_r = O(B)$.
    We further assume that $A + B=\poly{n}$, $A\ge1$, and $0<\zeta<1$ is a parameter fixed for each instance.
\end{definition}

We now define the corresponding simulator family of Hamiltonians relevant to the first part of this section.

\begin{definition}\label{def:simulator-family}
    Let $\wt{\Upsilon} = \{\wt{\bs{H}}_\zeta(s)\}_{s\in[0,1]}$ be the simulator family of Hamiltonians.
    Take 
    \begin{equation}
        \wt{\bs{H}}_\zeta(s) = \wt{\bs{H}} + \zeta\, \bmw(s)\otimes I_{{\rm med}},
    \end{equation}
    where $\wt{\bs{H}} = \sum_{a\in[A]} (\Delta H_0^a + V^a)$ is a sum of $A$ local gadget Hamiltonians with $\norm{H_0^a} \le 1$, and the full perturbation satisfies
    \begin{equation}
        2\snorm{\sum_a V^a + \zeta\,\bmw\otimes I_{{\rm med}}} < \Delta.
    \end{equation}
\end{definition}

Adopting the summation-to-bold notation, we define 
\begin{align}
    \bs{H}_0 &\coloneqq \sum_{a\in[A]} H_0^a, &
    \bs{V} &\coloneqq \sum_{a\in[A]} V^a,
\end{align}
thus, $\wt{\bs{H}} = \Delta \bs{H}_0 + \bs{V}$.
We additionally use the notation as outlined in \cref{sec:global-sw-transformation}, and therefore denote the projectors onto the low-energy and high-energy subspaces of $\bs{H}_0$ by $\bs{\Pi}_-$ and $\bs{\Pi}_+$, respectively.

Direct calculations using the triangle inequality and the bounds in \cref{eq:carried-driving-scales} give
\begin{align}\label{eq:simulator-hamiltonian-norms}
    \snorm{\bs{H}_\zeta} &= O(A+\zeta w_0), &
    \snorm{\wt{\bs{H}}_\zeta} &= O(A\Delta+\zeta w_0), &
    \snorm{\ps\bs{H}_\zeta} &\le \zeta w_1, &
    \snorm{\ps\wt{\bs{H}}_\zeta} &\le \zeta w_1,
\end{align}
for the main Hamiltonians of interest.

In the hardness family of \cref{eq:driving-pauli-decomposition-intro}, $B=4$ and $w_r=O(1)$.
For polynomially many carried terms the factors $w_r = O(B)$ are retained explicitly; choosing $\Delta = \Omega((A+B)^2)$ makes $A\Delta$ dominate $\zeta w_0$.
For convenience, the perturbative sums below omit carried $s$-independent terms.
Such terms may instead be gadgetised, increasing $A$ by at most a polynomial factor, or included among the carried terms as constant functions of $s$.
Under the latter convention, if $\zeta w_0\le A$ does not hold, replace $A$ throughout the asymptotic bounds by $A'\coloneqq A+\zeta w_0 = O(\poly{n})$.
Both conventions therefore preserve the required polynomial scaling.

We pull the simulator back onto the target space and define the transformed family $\widehat{\Upsilon} = \{\widehat{\bs{H}}_\zeta(s)\}_{s\in[0,1]}$ by
\begin{equation}
    \widehat{\bs{H}}_\zeta(s)=\bs{\U}(s)^\dagger\wt{\bs{H}}_\zeta(s)\bs{\U}(s).
\end{equation}
Differentiating with respect to $s$ gives
\begin{equation}
    \ps\widehat{\bs{H}}_\zeta(s) =
        \big[\ps\bs{\U}(s)^\dagger\big]\wt{\bs{H}}_\zeta(s)\bs{\U}(s)
        +\bs{\U}(s)^\dagger\big[\ps\wt{\bs{H}}_\zeta(s)\big]\bs{\U}(s)
        +\bs{\U}(s)^\dagger\wt{\bs{H}}_\zeta(s)\big[\ps\bs{\U}(s)\big].
\end{equation}

Since $\bs{\U}$ is an isometry,
\begin{align}
    \snorm{\widehat{\bs{H}}_\zeta}&\le\snorm{\wt{\bs{H}}_\zeta} = O(A\Delta),\\
    \snorm{\ps\widehat{\bs{H}}_\zeta}&\le\snorm{\ps\bs{H}_\zeta} + \snorm{\ps\bs{E}_\zeta}
    \le \zeta w_1 + \snorm{\ps\bs{E}_\zeta}.
\end{align}
Writing $\Lambda(s) = \e^{-\bs{S}(s)}\big[\ps\e^{\bs{S}(s)}\big]$ we have $\ps\bs{\U}(s) = -\Lambda(s)\bs{\U}(s)$, and since $\bs{S}(s)$ is anti-Hermitian, $\kappa = \snorm{\ps\bs{\U}} \le \snorm{\ps\bs{S}}$ (matching \cref{eq:kappa-bound}).

Each order in the perturbative gadgets we study below requires bounding $\snorm{\bs{S}_\zeta}$ and $\kappa$ (among the other standard quantities).
The following lemma bounds the derivative of the simulation-error operator in a form that is independent of the perturbative order.

\begin{restatable}{lemma}{carriedDrivingSimulationError}\label{lem:carried-driving-simulation-error}
    Let $\Upsilon$ and $\wt\Upsilon$ be as in Definitions~\ref{def:target-family} and \ref{def:simulator-family}, so that
    \begin{align}\label{eq:carried-total-perturbation}
        \wt{\bs{H}}_\zeta(s) &= \Delta\bs{H}_0 + \bs{V}_\zeta(s), &
        \bs{V}_\zeta(s) &\coloneqq \bs{V} + \zeta\,\bmw(s)\otimes I_{{\rm med}},
    \end{align}
    with $\bs{V} = \sum_{a\in[A]} V^a$ independent of $s$.
    Fix $p \in \{1,2,3\}$ and let $v_0 > 0$ satisfy $\snorm{\bs{V}_\zeta} \le v_0 < \Delta/16$.
    Let $\bs\V$ be the fixed encoding isometry with image ${\rm Im}(\bs{\Pi}_-)$, put $\bs\U(s) = \e^{-\bs{S}(s)}\bs\V$, and set
    \begin{equation}
        \bs{E}_\zeta(s) \coloneqq \bs{H}_\zeta(s) - \bs\U(s)^\dagger \wt{\bs{H}}_\zeta(s) \bs\U(s).
    \end{equation}
    Defining the order-$p$ operator
    \begin{equation}\label{eq:carried-matching-defect}
        \bs{M}_p(s) \coloneqq \bs{H}_\zeta(s) - \bs\V^\dagger \bs{H}^{[p]}_{\tn{eff}}(s)\bs\V ,
    \end{equation}
    we have
    \begin{equation}\label{eq:carried-error-decomposition}
        \bs{E}_\zeta(s) = \bs{M}_p(s) - \bs\V^\dagger\,R^{\bs{H}_{\tn{eff}}}_{p+1}(s)\,\bs\V .
    \end{equation}
    Assuming that $(\bs{V})_{--} = 0$, then, for $p \in \{1,2,3\}$, we have
    \begin{equation}\label{eq:carried-defect-derivative}
        \ps\bs{M}_p(s) =
        \begin{cases}
            0,                                                 & p \in \{1,2\}, \\[0.3em]
            - \bs\V^\dagger \big( \ps\bs{K}_3(s) \big) \bs\V , & p = 3,
        \end{cases}
        \qquad\text{with}\quad
        \snorm{\ps\bs{K}_3} = O\big(\Delta^{-2} v_0^{2} \zeta w_1\big),
    \end{equation}
    where $\bs{K}_3$ is the third-order coefficient of \cref{eq:K3-parallel}.
    Consequently,
    \begin{align}\label{eq:carried-final-bounds}
        \snorm{\bs{E}_\zeta} &\le \snorm{\bs{M}_p} + O\big(\Delta^{-p} v_0^{\,p+1}\big), &
        \snorm{\ps\bs{E}_\zeta} &= O\big(\Delta^{-\min\{p,2\}} v_0^{\min\{p,2\}} \zeta w_1\big).
    \end{align}
\end{restatable}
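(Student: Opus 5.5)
The plan is to derive the error decomposition \cref{eq:carried-error-decomposition} as an exact identity, and then to differentiate each piece using the block structure of the driving term. For the decomposition, recall from \cref{sec:exact-sw-transformation} that $\bs\U(s)^\dagger\wt{\bs H}_\zeta(s)\bs\U(s)=\bs\V^\dagger\bs H_{\tn{eff}}(s)\bs\V$. Substituting $\bs H_{\tn{eff}}=\bs H_{\tn{eff}}^{[p]}+R^{\bs H_{\tn{eff}}}_{p+1}$ from \cref{eq:remainder-terms-generator-and-effective-hamiltonian} into the definition of $\bs E_\zeta$ and using the definition \cref{eq:carried-matching-defect} of $\bs M_p$ gives \cref{eq:carried-error-decomposition} immediately. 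The first bound in \cref{eq:carried-final-bounds} then follows from the triangle inequality, $\norm{\bs\V}=1$, and the remainder estimate \cref{eq:uniform-control-remainder-heff} of \cref{lemma:uniform-control-exact-truncated}.

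The key structural observation concerns the driving term. Since $\bmw(s)\otimes I_{\rm med}$ acts trivially on the mediators, and $\bs\Pi_\pm$ act only on mediators, it commutes with $\bs\Pi_\pm$ and is globally block-diagonal. Hence:
\begin{itemize}
\item $(\bs V_\zeta)_{\tn{od}}=\bs V_{\tn{od}}$ is independent of $s$;
\item using $(\bs V)_{--}=0$, we have $(\bs V_\zeta)_{--}=\zeta\,\bmw\otimes\bs\Pi_-$;
\item $(\bs V_\zeta)_{++}=\bs V_{++}+\zeta\,\bmw\otimes\bs\Pi_+$.
\end{itemize}
Inserting these into \cref{cor:heff-coefficients-parallel}, $\bs\V^\dagger\bs K_1\bs\V=\zeta\bmw$, while $\bs K_2$ depends only on the off-diagonal blocks and is therefore static. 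So $\bs M_1=\bs H$ and $\bs M_2=\bs H+\Delta^{-1}\bs\V^\dagger\bs V_{-+}\bs\Sigma\bs V_{+-}\bs\V$ are both $s$-independent, giving $\ps\bs M_p=0$ for $p\in\{1,2\}$.

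For $p=3$, only $\bs K_3$ contributes to $\ps\bs M_3$, giving $\ps\bs M_3=-\bs\V^\dagger(\ps\bs K_3)\bs\V$. Differentiating \cref{eq:K3-parallel}, the $s$-dependence enters only through $\zeta\bmw$ in the $(++)$ and $(--)$ blocks:
\begin{equation*}
\ps\bs K_3=\Delta^{-2}\zeta\Big(\bs V_{-+}\bs\Sigma(\ps\bmw\otimes\bs\Pi_+)\bs\Sigma\bs V_{+-}-\tfrac12\acomm{\ps\bmw\otimes\bs\Pi_-}{\bs V_{-+}\bs\Sigma^2\bs V_{+-}}\Big).
\end{equation*}
With $\norm{\bs\Sigma}=1$ (\cref{clm:excitation-sector-decomposition}), $\norm{\bs V_{\pm\mp}}\le v_0$ and $\snorm{\ps\bmw}\le w_1$, this yields $\snorm{\ps\bs K_3}=O(\Delta^{-2}v_0^2\zeta w_1)$.

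For the derivative bound in \cref{eq:carried-final-bounds}, I would write $\ps\bs E_\zeta=\ps\bs M_p-\bs\V^\dagger(\ps R^{\bs H_{\tn{eff}}}_{p+1})\bs\V$. I would then apply \cref{eq:uniform-control-remainder-heff} with $v_1=\zeta w_1$, which is valid because $\ps\bs V_\zeta=\zeta\,\ps\bmw\otimes I$. This gives $O(\Delta^{-p}v_0^p\zeta w_1)$. For $p=3$ the remainder term $\Delta^{-3}v_0^3\zeta w_1$ is dominated by $\Delta^{-2}v_0^2\zeta w_1$ because $v_0<\Delta$, so the total is $O(\Delta^{-\min\{p,2\}}v_0^{\min\{p,2\}}\zeta w_1)$.

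The main obstacle is applying \cref{lemma:uniform-control-exact-truncated} legitimately. Its stated hypothesis is $32v_0<\Delta$, whereas here we have $v_0<\Delta/16$. I would try first to recheck that its proof (a Cauchy estimate on the analytic disc $\abs{t}<\Delta/(16\snorm{V})$ of \cref{sec:perturbative-coefficients-and-truncation}) only needs a constant-factor margin inside that disc, which can be absorbed into the implicit constants. Failing that, I would simply strengthen the assumption to $32v_0<\Delta$, which costs nothing asymptotically. I would also confirm that $\bs V_\zeta$ is closed and ${\rm C}^2$, which follows from \cref{def:target-family}, so that the lemma's remainder-derivative bounds hold uniformly in $s$.
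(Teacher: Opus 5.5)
Your proposal is correct and follows the paper's proof step for step. It uses the exact identity $\bs\U^\dagger\wt{\bs H}_\zeta\bs\U=\bs\V^\dagger\bs H_{\tn{eff}}\bs\V$, then the global block-diagonality of $\bmw(s)\otimes I_{{\rm med}}$ (giving static off-diagonal blocks of $\bs V_\zeta$, $\bs\V^\dagger(\ps\bs K_1)\bs\V=\zeta\,\ps\bmw$ and $\ps\bs K_2=0$), then the explicit differentiation of $\bs K_3$, and finally \cref{lemma:uniform-control-exact-truncated} with $v_1=\zeta w_1$ for the remainder. Your concern about the hypotheses is legitimate, since the paper invokes that lemma (stated for $32v_0<\Delta$) under only $v_0<\Delta/16$ without comment, and strengthening to a fixed margin such as $32v_0<\Delta$, as you propose, is the clean fix at no asymptotic cost.
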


A proof is given in Appendix~\ref{app:perturbative-gadget-reductions-proofs}.

Recalling that $J_2$ of \cref{eq:J-contributions} involves the derivative of the simulation error operator, we can now establish a result that bounds the Berry phase error using \cref{eq:simulator-hamiltonian-norms} and appropriate assumptions on $\varepsilon$ and $\kappa$.

\subsection{A Polynomial-Control Criterion}
\label{sec:poly-control-criterion}
We now isolate the parameter tuning that makes the Berry-phase error $1/\poly{n}$, in a form characterised by decay exponents $q$ and $q'$.
We focus on the carried driving terms scenario in the following analysis (see \cref{app:perturbative-gadget-reductions-proofs} for the gadgetised-driving case).

Recall the four contributions of \cref{thm:berry-phase-simulation},
\begin{align}
    J_1 &= \gamma_\star^{-2}\max\{\snorm{\ps \bs{H}_\zeta}, \snorm{\ps\widehat{\bs{H}}_\zeta}\}\,\varepsilon, &
    J_2 &= \gamma_\star^{-1}\snorm{\ps\bs{E}_\zeta}, &
    J_3 &= \gamma_\star^{-1}\varepsilon, &
    J_4 &= \kappa,
\end{align}
where $\gamma_\star$ is the minimum gap of the target family; the transformed-family gap is bounded below using the simulation accuracy.
It suffices to make $J_1+\cdots+J_4\leq\delta$ for an inverse-polynomial target $\delta$ chosen strictly below the available arc margin.
The resulting promise preservation is then obtained from \cref{prop:gsbpe-simulator-hardness}.

The following proposition provides bounds on the contributions $J_1$ through $J_4$ under a choice of scaling hypotheses for $\kappa$ and $\snorm{\ps\bs{E}_\zeta}$ relative to the penalty $\Delta$.
Note that this choice is consistent with the scaling of the perturbative gadget reductions we study in \cref{sec:first-order-reductions,sec:second-order-reductions,sec:third-order-reductions}, and should not be considered the most general possible scenario.

\begin{restatable}{proposition}{AssembledBoundsGeneral}\label{prop:assembled-bounds-general}
    Set $d_1\coloneqq\max\{1,\zeta w_1\}$.
    Suppose a parallel gadget reduction with penalty $\Delta$ satisfies, for some $q,q' \in (0,2]$,
    \begin{equation}\label{eq:scaling-hypotheses}
        \kappa = O\big(A d_1\,\Delta^{-q}\big), \qquad \snorm{\ps\bs{E}_\zeta} = O\big(\Delta^{-q'} A^2 d_1\big).
    \end{equation}
    Then
    \begin{align}
        J_1 &= O\big(\gamma_\star^{-2}d_1(1 + A^2\Delta^{-q'})\varepsilon\big), &
        J_2 &= O\big(\gamma_\star^{-1}d_1 A^2\Delta^{-q'}\big), \\
        J_3 &= \gamma_\star^{-1}\varepsilon, &
        J_4 &= O\big(d_1 A\Delta^{-q}\big).
    \end{align}
\end{restatable}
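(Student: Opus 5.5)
This is a direct substitution into the four expressions of Definition~\ref{def:J-contributions}. The only non-trivial input is a bound on $\snorm{\ps\widehat{\bs{H}}_\zeta}$, which enters $J_1$. I will treat each $J_i$ in turn and use nothing beyond the hypotheses \cref{eq:scaling-hypotheses}, the norm estimates \cref{eq:simulator-hamiltonian-norms}, and the derivative estimate for the transformed family derived just before \cref{lem:carried-driving-simulation-error}.

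First, $J_3=\gamma_\star^{-1}\varepsilon$ holds by definition, so nothing needs to be done. Next, $J_4=\kappa$, and the first hypothesis gives $\kappa=O(Ad_1\Delta^{-q})$ directly; this is exactly the stated bound on $J_4$. Likewise $J_2=\gamma_\star^{-1}\snorm{\ps\bs{E}_\zeta}$, and inserting the second hypothesis gives $J_2=O(\gamma_\star^{-1}d_1A^2\Delta^{-q'})$.

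The one step that needs an argument is $J_1$. I would write $\widehat{\bs{H}}_\zeta=\bs{H}_\zeta-\bs{E}_\zeta$, so that $\ps\widehat{\bs{H}}_\zeta=\ps\bs{H}_\zeta-\ps\bs{E}_\zeta$. The triangle inequality for $\snorm{\cdot}$ then gives $\snorm{\ps\widehat{\bs{H}}_\zeta}\le\zeta w_1+\snorm{\ps\bs{E}_\zeta}$, which is the estimate already recorded in the paper. Using $\zeta w_1\le d_1$ and the second hypothesis, I get
\[
\max\{\snorm{\ps\bs{H}_\zeta},\snorm{\ps\widehat{\bs{H}}_\zeta}\}\le d_1+O(\Delta^{-q'}A^2d_1)=O\big(d_1(1+A^2\Delta^{-q'})\big).
\]
Multiplying by $\gamma_\star^{-2}\varepsilon$ yields the claimed bound on $J_1$.

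I do not expect a real obstacle. The only subtlety is introducing $d_1=\max\{1,\zeta w_1\}$ so that the constant part of the derivative bound is absorbed uniformly, including the regime $\zeta w_1<1$ where $d_1=1$. I would also remark that the implied constants are independent of $\Delta$, $A$ and $\gamma_\star$, since they come only from the constants in \cref{eq:scaling-hypotheses}.
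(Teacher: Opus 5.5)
Your proposal is correct and matches the paper's proof essentially verbatim. In both, $J_2,J_3,J_4$ are read off directly from Definition~\ref{def:J-contributions} and the two scaling hypotheses. The bound on $J_1$ then follows from $\widehat{\bs H}_\zeta=\bs H_\zeta-\bs E_\zeta$, which gives $\snorm{\ps\widehat{\bs H}_\zeta}\le\snorm{\ps\bs H_\zeta}+\snorm{\ps\bs E_\zeta}\le d_1+O(\Delta^{-q'}A^2d_1)$.
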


A proof is given in Appendix~\ref{app:perturbative-gadget-reductions-proofs}.

By choosing a fixed target accuracy $\delta = 1/\poly{n}$, we can ensure that the contributions $J_1$ through $J_4$ are all bounded by $\delta$ for appropriately chosen $\varepsilon$ and $\Delta$.

\begin{restatable}{lemma}{PolyControlGeneral}\label{lem:poly-control-general}
    Fix a target $\delta = 1/\poly{n}$ and suppose $A,B, \gamma_\star^{-1}, \zeta^{-1}, \eta^{-1} = \poly{n}$ with $\zeta$ fixed by the hardness reduction.
    Suppose a parallel gadget reduction satisfies the hypotheses of \cref{prop:assembled-bounds-general} for some $q, q' \in (0,2]$, and is a valid $(\eta,\varepsilon,\kappa)$-simulation whenever $\Delta \ge \Delta_{*}(\varepsilon,\eta,\kappa) = \poly{\varepsilon^{-1},\eta^{-1},\kappa^{-1}}$.
    Then there exist $\varepsilon = 1/\poly{n}$ and $\Delta = \poly{n}$ such that the reduction is a valid $(\eta,\varepsilon,\kappa)$-simulation and $J_1 + \cdots + J_4 \le \delta$.
\end{restatable}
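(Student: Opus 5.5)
The plan is to fix the simulation accuracy $\varepsilon$ first and then choose $\Delta$ large enough to satisfy every remaining constraint at once. Set $d_1=\max\{1,\zeta w_1\}$, which is $\poly{n}$ because $w_1=O(B)$ and $0<\zeta<1$. Apply \cref{prop:assembled-bounds-general}. The two static contributions $J_1$ and $J_3$ are linear in $\varepsilon$ with polynomial prefactors, provided $A^2\Delta^{-q'}\le 1$; we will enforce this condition later through the choice of $\Delta$. The two dynamic contributions $J_2$ and $J_4$ do not depend on $\varepsilon$ at all and decay as negative powers of $\Delta$.

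Concretely, we would choose
\begin{equation}
    \varepsilon \coloneqq \frac{\delta\,\gamma_\star^{2}}{8\,c\,d_1\,(1+\gamma_\star)} = 1/\poly{n},
\end{equation}
where $c$ absorbs the implicit constants. This gives $J_1+J_3\le\delta/2$ whenever $A^2\Delta^{-q'}\le1$. The same choice also ensures $\varepsilon<\gamma_\star/4$, as required by Assumption~\ref{item:a2}, since $\delta<1$. Next we require
\begin{equation}
    \Delta \ge \Delta_1 \coloneqq \max\Big\{ A^{2/q'},\ \big(8c\,\gamma_\star^{-1} d_1 A^2/\delta\big)^{1/q'},\ \big(8c\, d_1 A/\delta\big)^{1/q}\Big\}.
\end{equation}
With this, $J_2\le\delta/4$ and $J_4\le\delta/4$. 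Because $q,q'\in(0,2]$ are fixed constants, the exponents $1/q$ and $1/q'$ are constants, so $\Delta_1=\poly{n}$.

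The remaining step is to reconcile $\Delta$ with the validity threshold $\Delta_*(\varepsilon,\eta,\kappa)$. This is the main obstacle, because the dependence is circular: $\kappa$ is determined by $\Delta$ through \cref{eq:scaling-hypotheses}, yet $\Delta_*$ depends on $\kappa^{-1}$. We would break the circularity by taking a target derivative norm $\kappa_0\coloneqq\delta/4$. The hypothesis bounds the actual derivative norm by $O(Ad_1\Delta^{-q})\le\kappa_0$ once $\Delta\ge\Delta_1$. Since the $(\eta,\varepsilon,\kappa)$-simulation conditions are monotone in $\kappa$ (condition \ref{item:c5} only asks $\snorm{\ps\U}\le\kappa$), it suffices to certify an $(\eta,\varepsilon,\kappa_0)$-simulation. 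That requires $\Delta\ge\Delta_*(\varepsilon,\eta,\kappa_0)$, and this threshold is $\poly{n}$ because $\varepsilon^{-1},\eta^{-1},\kappa_0^{-1}$ are all $\poly{n}$ and a composition of polynomials is a polynomial.

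Finally, set $\Delta\coloneqq\max\{\Delta_1,\Delta_*(\varepsilon,\eta,\kappa_0),\Delta_2\}$. Here $\Delta_2=\poly{n}$ collects the standing requirements: $32v_0<\Delta$ from \cref{lemma:uniform-control-exact-truncated}, and $\Delta=\Omega((A+B)^2)$ so that the norm bounds \cref{eq:simulator-hamiltonian-norms} hold. If $v_0$ itself grows with a fractional power of $\Delta$, as happens in higher-order gadgets, we would instead absorb this into $\Delta_*$ by the hypothesis. With this $\Delta$, every bound of \cref{prop:assembled-bounds-general} is active, $J_1+\cdots+J_4\le\delta$, and both $\varepsilon$ and $\Delta$ are of the required polynomial size.
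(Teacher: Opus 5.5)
Your proposal is correct and follows essentially the same route as the paper's proof: you fix $\varepsilon$ and the target $\kappa=\delta/4$ independently of $\Delta$, then take $\Delta$ as the maximum of $\Delta_*(\varepsilon,\eta,\kappa)$, $A^{2/q'}$, and the two thresholds that make $J_2$ and $J_4$ small. Your explicit extra threshold $\Delta_2$ for the standing requirement $32v_0<\Delta$ is a harmless addition that the paper leaves implicit in $\Delta_*$.
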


A proof is given in Appendix~\ref{app:perturbative-gadget-reductions-proofs}; we sketch the parameter choices here.
First choose $\varepsilon = 1/\poly{n}$ so that $J_3 \leq \delta/4$ and the $\Delta$-independent part of $J_1$ is at most $\delta/8$.
Next choose $\Delta$ large enough that $A^2\Delta^{-q'} \leq 1$, that $J_2,J_4\leq\delta/4$, and that the gadget is a valid $(\eta,\varepsilon,\kappa)$-simulation.
For fixed positive $q$ and $q'$, the maximum of these lower bounds remains polynomial in the problem parameters.
Thus $\Delta = \poly{n}$ suffices to make $J_1+\cdots+J_4\leq\delta$.

We proceed to outline the form of the simulator Hamiltonians for first-, second-, and third-order reductions when the driving terms are carried.
At each order we prove there exists a series of choices for the parameters such that the Berry phase error can be controlled to be $1/\poly{n}$.
This satisfies our previous requirements on the existence of a ``good'' family of witness isometries $\bs{\U}(s)$ with polynomially small derivative norm $\kappa = \snorm{\ps\bs{\U}}$.
Our proposals for the simulator Hamiltonians are based on the standard constructions of Bravyi and Hastings~\cite{bravyi2016complexity}.

We analyse the gadgets in parallel because the Schrieffer--Wolff generator $\bs{S}_\zeta$ may contain cross-gadget terms.
Even when the driving terms are carried, these cross terms can contribute to the derivative norm $\kappa$.
Lemma~\ref{lem:parallel-effective-hamiltonian-decomposition} separately identifies when the effective Hamiltonian decomposes into single-gadget contributions.
Recall from \cref{sec:global-sw-transformation} that $(\bs{X})_{\alpha\beta} = \bs{\Pi}_\alpha \bs{X} \bs{\Pi}_\beta$ for $\alpha,\beta \in \{+,-\}$, where $\bs{\Pi}_-$ is the projector onto the low-energy subspace of $\bs{H}_0$ and $\bs{\Pi}_+ = I - \bs{\Pi}_-$.

\subsection{First-Order Reductions}
\label{sec:first-order-reductions}
We now analyse the first-order reduction when the driving terms are carried.
First, we establish the simulation lemma for the single-gadget case, then extend it to the parallel gadget construction.

\begin{restatable}[First-order reduction]{lemma}{FirstOrderReduction}
\label{lem:first-order-reduction}
    Let $\Upsilon=\{H(s)\}_{s\in[0,1]}$ be a closed ${\rm C}^2$ target family with encoded target $\overline{H}(s)$.
    Let $H_0$ be independent of $s$ and satisfy \cref{eq:H0-block-structure}. 
    Suppose there is an $s$-independent Hermitian operator $V_1$ and a block-diagonal Hermitian family $V_2(s)$ such that
    \begin{equation}\label{eq:first-order-uniform-condition}
        \sup_{s\in[0,1]} \norm{ \overline H(s) - \big( V_1 + V_2(s) \big)_{--} } \leq \frac{\varepsilon}{2}.
    \end{equation}
    Assume that $V(s) = V_1+V_2(s)$ is ${\rm C}^2$ and closed.
    Define $L_{0} = \max\left\{1,\norm{V_1},\snorm{V_2}\right\}$ and $L_{1} = \snorm{\ps V_2}$.
    Then the family $\wt{\Upsilon} = \{\wt{H}(s) = \Delta H_0 + V(s)\}_{s\in[0,1]}$ regularly simulates $\Upsilon$ with error $(\eta,\varepsilon,\kappa)$ provided
    \begin{equation}\label{eq:first-order-regular-threshold}
        \Delta \geq \Delta_{*}(\varepsilon,\eta,\kappa) 
        = \Omega\left( \varepsilon^{-1}L_{0}^2 + \eta^{-1}L_{0} + \kappa^{-1/2}(L_{0}L_{1})^{1/2} \right).
    \end{equation}
\end{restatable}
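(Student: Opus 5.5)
The plan is to take the exact canonical Schrieffer--Wolff witness $\U(s)=\e^{-S(s)}\V$, with $\V$ the fixed isometry satisfying $\V\V^\dagger=\Pi_-$. We then verify Conditions~\ref{item:c1}--\ref{item:c5} one at a time, using \cref{prop:regularity-periodicity-canonical-sw} and \cref{lemma:uniform-control-exact-truncated} with truncation order $p=1$.

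Set $v_0\coloneqq\snorm{V}\le\norm{V_1}+\snorm{V_2}\le 2L_0$ and $v_1\coloneqq\snorm{\ps V}=\snorm{\ps V_2}=L_1$; the latter holds because $V_1$ is static. The threshold \cref{eq:first-order-regular-threshold} with a large enough implicit constant gives $\Delta\ge C L_0^2\varepsilon^{-1}$. Since we may assume $\varepsilon\le L_0$ (otherwise replace $\varepsilon$ by $\min\{\varepsilon,L_0\}$, which only strengthens the conclusion), this yields $\Delta\ge CL_0\ge 32 v_0$. \cref{lemma:uniform-control-exact-truncated} therefore applies. It gives \ref{item:c1} through the identity $\U\U^\dagger=\e^{-S}\Pi_-\e^{S}=\wt\Pi_-(s)$, and it gives closure and ${\rm C}^2$ regularity, hence \ref{item:c4}. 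For \ref{item:c3} we bound $\norm{\U(s)-\V}\le\norm{\e^{-S(s)}-I}\le\snorm{S}=O(\Delta^{-1}v_0)=O(\Delta^{-1}L_0)$. This is at most $\eta$ once $\Delta=\Omega(\eta^{-1}L_0)$.

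For \ref{item:c2} we use $\U^\dagger\wt H\U=\V^\dagger H_{\rm eff}\V$ together with $H_{\rm eff}=K_1+R^{H_{\rm eff}}_2$, where $K_1=V_{--}=(V_1+V_2)_{--}$ by \cref{prop:heff-coefficients}. Since $H(s)=\V^\dagger\overline H(s)\V$, we obtain
\[
\norm{H(s)-\U^\dagger\wt H\U}\le\norm{\overline H-(V_1+V_2)_{--}}+\snorm{R^{H_{\rm eff}}_2}\le\tfrac{\varepsilon}{2}+O(\Delta^{-1}v_0^2).
\]
The last term is at most $\varepsilon/2$ when $\Delta=\Omega(\varepsilon^{-1}L_0^2)$.

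The main step is \ref{item:c5}. The generic bound \cref{eq:kappa-bound} only gives $\kappa=O(\Delta^{-1}L_1)$, which would demand $\Delta\gtrsim\kappa^{-1}L_1$ rather than the claimed $\kappa^{-1/2}(L_0L_1)^{1/2}$. The way around this is to observe that $V_2(s)$ is block-diagonal, so $(V(s))_{\rm od}=(V_1)_{\rm od}$ is independent of $s$. \cref{cor:static-leading-sw-generator} then gives $\kappa\le\snorm{\ps S}=O(\Delta^{-2}v_0v_1)=O(\Delta^{-2}L_0L_1)$, and this is at most $\kappa$ once $\Delta=\Omega(\kappa^{-1/2}(L_0L_1)^{1/2})$. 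The point to check carefully is that the corollary's hypothesis concerns only the global off-diagonal part, and that the higher generator coefficients $S_j$ for $j\ge2$ each carry at least one extra factor of $\Delta^{-1}v_0$ in their derivative. That extra factor is exactly what the corollary encodes. Taking the maximum of the three lower bounds on $\Delta$ completes the argument.
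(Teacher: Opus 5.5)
Your proposal is correct and follows essentially the paper's proof. It uses the same canonical witness $\U(s)=\e^{-S(s)}\V$, closure via uniqueness of the canonical generator, and \cref{cor:static-leading-sw-generator} applied because $(V(s))_{\tn{od}}=(V_1)_{\tn{od}}$ is static; the only cosmetic difference is that you obtain \ref{item:c1}--\ref{item:c3} from \cref{lemma:uniform-control-exact-truncated} at $p=1$ rather than by citing \cite[Lemma~4]{bravyi2016complexity} pointwise, which rests on the same estimates.

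One minor slip concerns your reduction to $\varepsilon\le L_0$: replacing $\varepsilon$ by $\min\{\varepsilon,L_0\}$ raises the threshold beyond what the hypothesis supplies, so it does not ``only strengthen the conclusion''. The condition $32v_0<\Delta$ should instead come from the $\eta^{-1}L_0$ term (for $\eta\le 1$) or simply be absorbed into the threshold, as the paper does implicitly.
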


A proof is given in Appendix~\ref{app:perturbative-gadget-reductions-proofs}; we sketch the main ideas here.
The proof follows the first-order Schrieffer--Wolff reduction of \cite[Lemma~4]{bravyi2016complexity} and verifies Conditions~\ref{item:c1}--\ref{item:c5} using \cref{lemma:uniform-control-exact-truncated,cor:static-leading-sw-generator}.
Conditions~\ref{item:c4} and \ref{item:c5} have no static analogue, which accounts for the additional term in \cref{eq:first-order-regular-threshold}.
Thus, \cref{lem:first-order-reduction} gives a uniform family extension of the static reduction.

\subsubsection{Parallel first-order reductions}
We now introduce the parallel extension of the first-order reduction.
Let 
\begin{align}
    \bs{V}_1 &= \sum_{a \in [A]} V_1^a, &
    \bs{V}_2(s) &= \bmw(s) \otimes I_{{\rm med}},
\end{align}
and define the parallel simulator Hamiltonian as
\begin{equation}\label{eq:parallel-simulator-hamiltonian-first-order}
    \wt{\bs{H}}_\zeta(s) = \Delta \bs{H}_0 + \bs{V}_1 + \zeta \bs{V}_2(s),
\end{equation}
where $\bs{H}_0 = \sum_{a \in [A]} H_0^a$.
For simplicty, we assume $\zeta w_0 \leq A$, hence $L_0 = O(A)$.

By \cref{eq:K1-parallel}, we obtain
\begin{equation}
    \bs{H}_{\tn{eff}}^{[1]}(s) = \big( \bs{V}_1 + \zeta \bs{V}_2(s) \big)_{--}.
\end{equation}
Consequently, if $(V^a_1)_{--} = \overline{H}^a$ for every $a$, then 
\begin{equation}
    \bs{H}_{\tn{eff}}^{[1]}(s) = \sum_{a \in [A]} \overline{H}^a + \zeta \big( \bs{V}_2(s) \big)_{--} = \overline{\bs{H}}_\zeta(s),
\end{equation}
since, under the fixed encoding, $\big( \bs{V}_2(s) \big)_{--} = \overline{\bmw}(s)$.
There are no cross terms to analyse at this order.
From \cref{lem:first-order-reduction} a sufficient parallel threshold is
\begin{equation}\label{eq:parallel-threshold-first-order}
    \Delta = \Omega(\varepsilon^{-1} A^2 + \eta^{-1} A + \kappa^{-1/2} (A d_1)^{1/2}).
\end{equation}

The global perturbation satisfies $v_0 = O(A)$, under the assumption $\zeta w_0 \leq A$, and $v_1 \leq \zeta w_1$.
Because the carried driving term is block diagonal, the first-order Schrieffer--Wolff coefficient $\bs{S}_1$ is independent of $s$ and therefore, $\ps \bs{S}_\zeta(s) = \ps R_2^{\bs{S}}(s)$.
Applying \cref{cor:static-leading-sw-generator} then gives
\begin{align}\label{eq:parallel-first-order-bounds-pt1}
    \snorm{\bs{S}_\zeta} &= O(\Delta^{-1} A), &
    \kappa &= O(\Delta^{-2} d_1 A).
\end{align}
Additionally, \cref{lem:carried-driving-simulation-error} with $p=1$ gives
\begin{equation}\label{eq:parallel-first-order-bounds-pt2}
    \snorm{\ps \bs{E}_\zeta} = O(\Delta^{-1} v_0 v_1) = O(\Delta^{-1} A d_1).
\end{equation}
Thus, the hypotheses of \cref{prop:assembled-bounds-general} hold with $q=2$, $q'=1$, and size parameter $O(A)$.
    
\begin{restatable}{corollary}{ParallelFirstOrderSimulationBerryPhase}
\label{cor:parallel-first-order-simulation-berry-phase}
    Let $\Upsilon$ be the target family of Definition~\ref{def:target-family} and let $\wt{\Upsilon}$ be the first-order parallel simulator family of the form in \cref{eq:parallel-simulator-hamiltonian-first-order}, with Berry phases $\vartheta$ and $\wt{\vartheta}$ respectively.
    Suppose $A + \zeta w_0$, $B$, $\gamma_\star^{-1}$, $\zeta^{-1}$ and $\eta^{-1}$ are polynomially bounded in $n$, and fix a target Berry-phase accuracy $\delta = 1/\poly{n}$.
    Then there exist simulation targets $\varepsilon, \kappa = 1/\poly{n}$, and a penalty $\Delta = \poly{n}$, such that $\wt{\Upsilon}$ is a regular $(\eta,\varepsilon,\kappa)$-simulation of $\Upsilon$ satisfying
    \begin{equation}
        d_{2\pi}(\vartheta,\wt{\vartheta}) \le \delta .
    \end{equation}
\end{restatable}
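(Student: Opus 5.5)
The plan is to combine the parallel first-order bounds with the polynomial-control criterion, and then invoke the Berry phase simulation theorem.

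\textbf{Step 1: regular simulation and the scaling hypotheses.} First, I will verify that the parallel simulator of \cref{eq:parallel-simulator-hamiltonian-first-order} is a regular $(\eta,\varepsilon,\kappa)$-simulation whenever $\Delta$ exceeds the threshold \cref{eq:parallel-threshold-first-order}. This follows by applying \cref{lem:first-order-reduction} with $V_1=\bs{V}_1$ and $V_2(s)=\zeta\bs{V}_2(s)$. The term $\zeta\bs{V}_2(s)=\zeta\bmw(s)\otimes I_{\rm med}$ is block-diagonal with respect to the global splitting, since it acts trivially on the mediators. It is also ${\rm C}^2$ and closed by \cref{def:target-family}. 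The matching condition \cref{eq:first-order-uniform-condition} holds with zero defect once $(V_1^a)_{--}=\overline{H}^a$, because $\bs{H}_{\tn{eff}}^{[1]}=\overline{\bs{H}}_\zeta$. The hypotheses of the lemma give $L_0=O(A)$, using $\zeta w_0\le A$ (otherwise replace $A$ by $A'=A+\zeta w_0$), and $L_1\le\zeta w_1\le d_1$. Witness closure \ref{item:c4} and ${\rm C}^2$ regularity come from \cref{prop:regularity-periodicity-canonical-sw}. Next, I will record the derivative bounds \cref{eq:parallel-first-order-bounds-pt1,eq:parallel-first-order-bounds-pt2}. These are $\kappa=O(\Delta^{-2}Ad_1)$ from \cref{cor:static-leading-sw-generator}, which applies because $(\bs{V}_\zeta)_{\tn{od}}=(\bs{V}_1)_{\tn{od}}$ is $s$-independent, and $\snorm{\ps\bs{E}_\zeta}=O(\Delta^{-1}Ad_1)$ from \cref{lem:carried-driving-simulation-error} with $p=1$. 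These bounds verify \cref{eq:scaling-hypotheses} with $q=2$ and $q'=1$, after weakening $A$ to $A^2$ in the second bound since $A\ge1$.

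\textbf{Step 2: polynomial control.} I will then apply \cref{lem:poly-control-general}. Its threshold $\Delta_*(\varepsilon,\eta,\kappa)$ is the polynomial expression \cref{eq:parallel-threshold-first-order} in $\varepsilon^{-1},\eta^{-1},\kappa^{-1}$ and $A$. The lemma therefore yields $\varepsilon=1/\poly{n}$ and $\Delta=\poly{n}$ such that the reduction is valid and $J_1+\cdots+J_4\le\delta/C$. Here $C$ is the implicit constant in \cref{eq:main-result-bp-error}; I absorb it by targeting $\delta/C$, which is still inverse-polynomial.

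The point needing care is that $\kappa$ appears both as a simulation target and as the output $J_4$. I will resolve this circularity by fixing $\kappa\coloneqq\delta/(4C)$ in advance. The threshold term $\kappa^{-1/2}(Ad_1)^{1/2}$ is then $\poly{n}$. The actual witness derivative is $O(\Delta^{-2}Ad_1)$, which is below $\kappa$ once $\Delta$ is polynomially large. This choice also gives $\kappa=1/\poly{n}$, as the statement requires. I will also impose $\varepsilon<\gamma_\star/4$ so that Assumption~\ref{item:a2} holds.

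\textbf{Step 3: Berry phase bound.} Finally, I will check Assumptions~\ref{item:a1}--\ref{item:a3}. Regularity and closure follow from Step 1. The polynomial bounds follow from \cref{eq:simulator-hamiltonian-norms} together with $\snorm{\ps\widehat{\bs{H}}_\zeta}\le\zeta w_1+\snorm{\ps\bs{E}_\zeta}$. I then invoke \cref{thm:berry-phase-simulation} to conclude $d_{2\pi}(\vartheta,\wt\vartheta)\le\delta$.

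The main obstacle I expect is checking that the $p=1$ instance of \cref{lem:carried-driving-simulation-error} applies. That lemma assumes $(\bs{V})_{--}=0$, whereas here $(\bs{V}_1)_{--}=\overline{\bs{H}}\neq0$. For $p=1$, however, $\bs{M}_1\equiv0$ identically, so $\ps\bs{E}_\zeta=-\bs{\V}^\dagger\ps R_2^{\bs{H}_{\tn{eff}}}\bs{\V}$. \cref{lemma:uniform-control-exact-truncated} then bounds this by $O(\Delta^{-1}v_0v_1)$ directly, without using the vanishing assumption. I would state this bypass explicitly in the proof.
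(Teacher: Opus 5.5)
Your proposal is correct and follows the paper's proof: you verify the scaling hypotheses of \cref{prop:assembled-bounds-general} with $q=2$, $q'=1$, apply \cref{lem:poly-control-general} at accuracy $\delta/C$ (whose proof fixes $\kappa=\delta/(4C)$, exactly as you do), and conclude with \cref{thm:berry-phase-simulation}. You also note that the hypothesis $(\bs{V})_{--}=0$ of \cref{lem:carried-driving-simulation-error} fails for the first-order construction, and you bypass it: $\bs{M}_1\equiv 0$, so $\ps\bs{E}_\zeta=-\bs{\V}^\dagger\,\ps R_2^{\bs{H}_{\tn{eff}}}\,\bs{\V}=O(\Delta^{-1}v_0v_1)$ by \cref{lemma:uniform-control-exact-truncated}; this is a sound patch of a point the paper passes over when it cites that lemma with $p=1$.
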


A proof is provided in Appendix~\ref{app:perturbative-gadget-reductions-proofs}.

\subsection{Second-Order Reductions}
\label{sec:second-order-reductions}
We now analyse the second-order reduction when the driving terms are carried.
First, we establish the simulation lemma for the single-gadget case, then extend it to the parallel gadget construction.

\begin{restatable}[Second-order reduction]{lemma}{SecondOrderReduction}\label{lem:second-order-reduction}
    Let $\Upsilon=\{H(s)\}_{s\in[0,1]}$ be a closed ${\rm C}^2$ target family, with encoded target $\overline{H}(s)$.
    Let $H_0$ be independent of $s$ and satisfy \cref{eq:H0-block-structure}. 
    Suppose there is an $s$-independent $V_1$ with $(V_1)_{--} = 0$, and block-diagonal $V_2, V_3(s)$ such that
    \begin{equation}\label{eq:second-order-uniform-condition}
        \sup_{s\in[0,1]} \norm{ \overline{H}(s) - \big( V_2 + V_3(s) \big)_{--} + (V_1)_{-+}H_0^{\pattce}(V_1)_{+-} } \le \varepsilon/2.
    \end{equation}
    Assume that $V(s) = \Delta^{1/2} V_1 + V_2 + V_3(s)$ is ${\rm C}^2$ and closed.
    Define $L_0 = \max\{\norm{V_1}, \norm{V_2}, \snorm{V_3}\}$ and $L_1 = \snorm{\ps V_3}$.
    Then the family $\wt{\Upsilon} = \{\wt{H}(s) = \Delta H_0 + V(s)\}_{s\in[0,1]}$ regularly simulates $\Upsilon$ with error $(\eta,\varepsilon,\kappa)$ provided
       \begin{equation}\label{eq:second-order-regular-threshold}
        \Delta \geq \Delta_{*}(\varepsilon,\eta,\kappa) 
        = \Omega\left( \varepsilon^{-2}L_{0}^6 + \eta^{-2}L_{0}^2 + \kappa^{-2/3}(L_{0}L_{1})^{2/3} \right).
    \end{equation}
\end{restatable}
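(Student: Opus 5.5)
We verify Conditions~\ref{item:c1}--\ref{item:c5} for the witness $\U(s)=\e^{-S(s)}\V$, where $S(s)$ is the exact canonical Schrieffer--Wolff generator of $\wt H(s)=\Delta H_0+V(s)$. We follow the static second-order argument of \cite[Lemma~5]{bravyi2016complexity} and use \cref{lemma:uniform-control-exact-truncated,cor:static-leading-sw-generator} for the uniform-in-$s$ estimates.

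\emph{Closure, regularity and \ref{item:c1}.} The perturbation obeys $v_0\coloneqq\snorm{V}\le \Delta^{1/2}L_0+2L_0=O(\Delta^{1/2}L_0)$. Hence $32v_0<\Delta$ as soon as $\Delta=\Omega(L_0^2)$, and this is implied by the threshold. \Cref{prop:regularity-periodicity-canonical-sw} and \cref{lemma:uniform-control-exact-truncated} then give that $\U$ is ${\rm C}^2$ and closed, with image the exact low-energy band. This settles \ref{item:c1} and \ref{item:c4}.

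\emph{Encoding error \ref{item:c3}.} We have $\snorm{\U-\V}\le\snorm{\e^{-S}-I}\le\snorm{S}=O(\Delta^{-1}v_0)=O(\Delta^{-1/2}L_0)$. This is at most $\eta$ once $\Delta=\Omega(\eta^{-2}L_0^2)$.

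\emph{Hamiltonian error \ref{item:c2}.} We truncate at $p=2$ using \cref{prop:heff-coefficients}. Since $(V_1)_{--}=0$ and $V_2,V_3$ are block diagonal,
\[
K_1=(V_2+V_3(s))_{--},\qquad K_2=-\Delta^{-1}(\Delta^{1/2}V_1)_{-+}\Sigma(\Delta^{1/2}V_1)_{+-}=-(V_1)_{-+}H_0^{\pattce}(V_1)_{+-}.
\]
Therefore $\norm{\overline H-H^{[2]}_{\tn{eff}}}\le\varepsilon/2$ by \cref{eq:second-order-uniform-condition}. The remainder satisfies $\snorm{R^{H_{\tn{eff}}}_3}=O(\Delta^{-2}v_0^3)$. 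This crude bound gives only $O(\Delta^{-1/2}L_0^3)$, because the cubic term $\Delta^{3/2}\norm{V_1}^3\cdot\Delta^{-2}$ dominates. Requiring it to be at most $\varepsilon/2$ yields $\Delta=\Omega(\varepsilon^{-2}L_0^6)$, which matches the stated threshold. I will not try to extract the finer cancellation of the $K_3$ term.

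\emph{Derivative norm \ref{item:c5}; the main obstacle.} Because $V_1$ and $V_2$ are $s$-independent and $V_3(s)$ is block diagonal, $\ps V_{\tn{od}}=0$, so $S_1$ is static by \cref{cor:static-leading-sw-generator}. That corollary gives $\kappa\le\snorm{\ps S}=O(\Delta^{-2}v_0v_1)$ with $v_1=L_1$, that is $O(\Delta^{-3/2}L_0L_1)$. Setting this at most $\kappa$ gives $\Delta=\Omega(\kappa^{-2/3}(L_0L_1)^{2/3})$, as claimed.

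The delicate point is that $v_0$ now grows like $\Delta^{1/2}$. The constants in \cref{lemma:uniform-control-exact-truncated,cor:static-leading-sw-generator} must therefore be uniform in the ratio $v_0/\Delta=O(\Delta^{-1/2})$, which tends to zero. I would check this by rereading the corollary's proof. The mechanism is that $\ps S_2=\L^{\pattce}_{\Delta H_0}([S_1,\ps V_3])$ has norm $O(\Delta^{-2}v_0v_1)$. Each higher-order term $\ps S_j$ carries at least one factor $\Delta^{-1}v_0$ from $S_1$ together with at least one $\Delta^{-1}v_1$, and the resulting geometric series converges under $32v_0<\Delta$. Taking the maximum of the three lower bounds on $\Delta$ (plus the $\Omega(L_0^2)$ condition, which is absorbed) completes the argument.
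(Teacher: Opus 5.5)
Your proposal is correct and follows essentially the same route as the paper. You take the witness $\U(s)=\e^{-S(s)}\V$ and obtain Conditions~\ref{item:c1}--\ref{item:c4} from the uniform Schrieffer--Wolff estimates together with closure of the canonical generator. You then obtain Condition~\ref{item:c5} from \cref{cor:static-leading-sw-generator} with $v_0=O(\Delta^{1/2}L_0)$ and $v_1=L_1$. Your bound $\kappa=O(\Delta^{-3/2}L_0L_1)$ is the one consistent with the stated $\kappa^{-2/3}$ threshold; the paper's written proof of this lemma mistakenly repeats the third-order exponents ($\Delta^{2/3}$, $\Delta^{-4/3}$, $\kappa^{-3/4}$).
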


A proof is given in Appendix~\ref{app:perturbative-gadget-reductions-proofs}; note that the structure follows that of \cref{lem:first-order-reduction}, so we do not provide a sketch here.

\subsubsection{Parallel second-order reductions}
We now introduce the parallel extension of the second-order reduction.
Let 
\begin{align}
    \bs{V}_1 &= \sum_a V_1^a, &
    \bs{V}_2 &= \sum_a V_2^a, &
    \bs{V}_3(s) &= \bmw(s) \otimes I_{{\rm med}},
\end{align}
and define the parallel simulator Hamiltonian as 
\begin{equation}\label{eq:parallel-simulator-hamiltonian-second-order}
    \wt{\bs{H}}_\zeta(s) = \Delta\bs{H}_0 + \Delta^{1/2}\bs{V}_1 + \bs{V}_2 + \zeta\,\bs{V}_3(s).
\end{equation}
For simplicty, we assume $\zeta w_0 \leq A$.

By \cref{eq:K1-parallel,eq:K2-parallel} we obtain 
\begin{equation}\label{eq:parallel-second-order-effective-hamiltonian}
    \bs{H}_{\tn{eff}}^{[2]}(s) = \big( \bs{V}_2 + \zeta\,\bs{V}_3(s) \big)_{--} - (\bs{V}_1)_{-+} \bs{\Sigma} (\bs{V}_1)_{+-}.
\end{equation}
Since we have chosen each $V^a$ to act only non-trivially on its corresponding mediator qubit, \cref{eq:K2-decomposition} holds and thus
\begin{equation}\label{eq:second-order-no-cross-gadget}
    (\bs{V}_1)_{-+} \bs{\Sigma} (\bs{V}_1)_{+-} = \sum_a (V_1^a)_{-+} (H_0^a)^{\pattce} (V_1^a)_{+-},
\end{equation}
implying there are no cross-gadget contributions at this order.

For each $a \in [A]$, suppose that $V_1^a$ and $V_2^a$ are chosen so that
\begin{equation}\label{eq:second-order-local-error}
    \norm{\overline{H}^a - (V_2^a)_{--} + (V_1^a)_{-+} (H_0^a)^{\pattce} (V_1^a)_{+-}} \leq \varepsilon_a \leq \frac{\varepsilon}{2A}.
\end{equation}
From \cref{eq:second-order-no-cross-gadget} and the fact that $\big( \bs{V}_3(s) \big)_{--} = \overline{\bmw}(s)$, we have that 
\begin{equation}
    \overline{\bs{H}}_\zeta(s) - \bs{H}_{\tn{eff}}^{[2]}(s) = \sum_a \left[ \overline{H}^a - (V_2^a)_{--} + (V_1^a)_{-+} (H_0^a)^{\pattce} (V_1^a)_{+-} \right].
\end{equation}
The triangle inequality and \cref{eq:second-order-local-error} then give
\begin{equation}
    \sup_{s \in [0,1]} \norm{\overline{\bs{H}}_\zeta(s) - \bs{H}_{\tn{eff}}^{[2]}(s)} \leq \frac{\varepsilon}{2}.
\end{equation}

Assuming $\norm{V_1^a}, \norm{V_2^a} = O(1)$ for all $a$ and $\zeta w_0 \leq A$, we have $L_0 = O(A)$.
Applying \cref{lem:second-order-reduction} therefore gives the sufficient threshold 
\begin{equation}
    \Delta = \Omega(\varepsilon^{-2} A^6 + \eta^{-2} A^2 + \kappa^{-2/3} (d_1 A)^{2/3}).
\end{equation}

The global perturbation satisfies $v_0 = O(\Delta^{1/2} A)$, under the assumption $\zeta w_0 \leq A$ and $\norm{V_2^a} = O(1)$, and $v_1 \leq \zeta w_1$.
Again, the first-order Schrieffer--Wolff coefficient is independent of $s$, therefore, $\ps \bs{S}_\zeta(s) = \ps \bs{S}_{2,\zeta}(s) + \ps R_3^{\bs{S}}(s)$.
Applying \cref{cor:static-leading-sw-generator} then gives
\begin{align}\label{eq:parallel-second-order-bounds-pt1}
    \snorm{\bs{S}_\zeta} &= O(\Delta^{-1/2} A), &
    \kappa &= O(\Delta^{-3/2} d_1 A).
\end{align}
Additionally, \cref{lem:carried-driving-simulation-error} with $p=2$ gives
\begin{equation}\label{eq:parallel-second-order-bounds-pt2}
    \snorm{\ps \bs{E}_\zeta} = O(\Delta^{-2} v_0^2 v_1) = O(\Delta^{-1} A^2 d_1).
\end{equation}
Thus, the hypotheses of \cref{prop:assembled-bounds-general} hold with $q=3/2$, $q'=1$, and size parameter $O(A)$.

\begin{restatable}{corollary}{ParallelSecondOrderSimulationBerryPhase}
\label{cor:parallel-second-order-simulation-berry-phase}
    Let $\Upsilon$ be the target family of Definition~\ref{def:target-family} and let $\wt{\Upsilon}$ be the second-order parallel simulator family of the form in \cref{eq:parallel-simulator-hamiltonian-second-order}, with Berry phases $\vartheta$ and $\wt{\vartheta}$ respectively.
    Suppose $A + \zeta w_0$, $B$, $\gamma_\star^{-1}$, $\zeta^{-1}$ and $\eta^{-1}$ are polynomially bounded in $n$, and fix a target Berry-phase accuracy $\delta = 1/\poly{n}$.
    Then there exist simulation targets $\varepsilon, \kappa = 1/\poly{n}$, and a penalty $\Delta = \poly{n}$, such that $\wt{\Upsilon}$ is a regular $(\eta,\varepsilon,\kappa)$-simulation of $\Upsilon$ satisfying
    \begin{equation}
        d_{2\pi}(\vartheta,\wt{\vartheta}) \le \delta .
    \end{equation}
\end{restatable}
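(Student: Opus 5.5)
The plan is to follow the same route as the first-order corollary \cref{cor:parallel-first-order-simulation-berry-phase}: verify that the parallel second-order construction satisfies the hypotheses of \cref{lem:poly-control-general}, then combine the resulting bound on $J_1+\cdots+J_4$ with \cref{thm:berry-phase-simulation}. Most of the verification has already been done in the text preceding the statement.

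\emph{Step 1: validity of the simulation.} Equation \cref{eq:second-order-no-cross-gadget}, obtained from \cref{lem:parallel-effective-hamiltonian-decomposition}, shows there are no cross-gadget terms at second order. The local error choice \cref{eq:second-order-local-error} then gives the uniform truncated condition $\sup_s\norm{\overline{\bs H}_\zeta-\bs H^{[2]}_{\tn{eff}}}\le\varepsilon/2$. With $L_0=O(A)$ (using $\zeta w_0\le A$) and $L_1\le\zeta w_1\le d_1$, \cref{lem:second-order-reduction} applies. It shows that $\wt\Upsilon$ is a regular $(\eta,\varepsilon,\kappa)$-simulation once
\begin{equation}
\Delta\ge\Delta_*=\Omega\big(\varepsilon^{-2}A^6+\eta^{-2}A^2+\kappa^{-2/3}(d_1A)^{2/3}\big),
\end{equation}
which is $\poly{\varepsilon^{-1},\eta^{-1},\kappa^{-1}}$ because $A,d_1=\poly{n}$. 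Closure \ref{item:c4} and ${\rm C}^2$ regularity of $\bs\U$ come from \cref{prop:regularity-periodicity-canonical-sw}, since $\bs V_1,\bs V_2$ are static and $\bmw$ is closed and ${\rm C}^2$. We also need $2v_0<\Delta$ and $v_0<\Delta/16$ with $v_0=O(\Delta^{1/2}A)$. These hold once $\Delta=\Omega(A^2)$, which is absorbed into $\Delta_*$.

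\emph{Step 2: scaling hypotheses.} From \cref{eq:parallel-second-order-bounds-pt1,eq:parallel-second-order-bounds-pt2} we have $\kappa=O(d_1A\Delta^{-3/2})$ and $\snorm{\ps\bs E_\zeta}=O(\Delta^{-1}A^2d_1)$. These are exactly \cref{eq:scaling-hypotheses} with $q=3/2$ and $q'=1$.

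This is the step I expect to need the most care. The concern is that $v_0$ carries a factor $\Delta^{1/2}$, so the naive bound $\snorm{\ps S}=O(\Delta^{-1}v_1)$ from \cref{lemma:uniform-control-exact-truncated} must be replaced by the static-leading bound of \cref{cor:static-leading-sw-generator}. Here $\ps\bs S=\ps\bs S_2+\ps R_3^{\bs S}$, and $\ps\bs S_2$ involves $\bs S_1$, which is $O(\Delta^{-1/2}A)$, together with $\zeta\ps\bmw$. This gives $O(\Delta^{-3/2}Ad_1)$. The remainder term contributes $O(\Delta^{-3}v_1v_0^2)=O(\Delta^{-2}A^2d_1)$, which is no larger after enlarging $\Delta$ by a polynomial in $A$.

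For $\snorm{\ps\bs E_\zeta}$, \cref{lem:carried-driving-simulation-error} with $p=2$ requires $(\bs V)_{--}=0$ on the $\Delta^{1/2}$-scaled part, which holds since $(V_1)_{--}=0$. The lemma then gives $O(\Delta^{-2}v_0^2\zeta w_1)$, and substituting $v_0^2=O(\Delta A^2)$ yields the stated $O(\Delta^{-1}A^2d_1)$. I would double-check that the $p=2$ defect derivative vanishes: the only $s$-dependence of $\bs H^{[2]}_{\tn{eff}}$ is $\zeta(\bs V_3)_{--}$, which is cancelled exactly by the target's driving term.

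\emph{Step 3: conclusion.} Apply \cref{lem:poly-control-general} with $A,B,\gamma_\star^{-1},\zeta^{-1},\eta^{-1}=\poly{n}$. This yields $\varepsilon=1/\poly{n}$ and $\Delta=\poly{n}$ such that the simulation is valid and $J_1+\cdots+J_4\le\delta/C$, where $C$ is the implicit constant of \cref{thm:berry-phase-simulation}. The resulting $\kappa=O(d_1A\Delta^{-3/2})$ is also $1/\poly{n}$.

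It remains to check Assumption~\ref{assump:berry} before invoking the theorem. Accuracy \ref{item:a2} follows by taking $\varepsilon<\gamma_\star/4$. The polynomial bounds in \ref{item:a1} follow from \cref{eq:simulator-hamiltonian-norms} and the bounds above. Closure \ref{item:a3} was established in Step 1. \cref{thm:berry-phase-simulation} then gives $d_{2\pi}(\vartheta,\wt\vartheta)\le\delta$.
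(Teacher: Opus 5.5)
Your proposal is correct and follows the paper's route exactly. The paper proves this corollary by repeating the first-order argument: it checks that the threshold of \cref{lem:second-order-reduction} is polynomial, notes that \cref{eq:parallel-second-order-bounds-pt1,eq:parallel-second-order-bounds-pt2} give the hypotheses of \cref{prop:assembled-bounds-general} with $q=3/2$ and $q'=1$, applies \cref{lem:poly-control-general} at accuracy $\delta/C$, and then invokes \cref{thm:berry-phase-simulation}. Your extra checks go slightly beyond the paper's sketch and are consistent with it: the direct $\ps\bs S_2$ estimate, and the observation that the static block $(\bs V_2)_{--}$ drops out of $\ps\bs M_2$, so the hypothesis $(\bs V)_{--}=0$ of \cref{lem:carried-driving-simulation-error} is not really needed for the derivative bound.
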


The proof follows analogously from \cref{cor:parallel-first-order-simulation-berry-phase} and the bounds established for the second-order case.

\subsection{Third-Order Reductions}
\label{sec:third-order-reductions}
We now analyse the third-order reduction when the driving terms are carried.
First, we establish the simulation lemma for the single-gadget case, then extend it to the parallel gadget construction.

\begin{restatable}[Third-order reduction]{lemma}{ThirdOrderReduction}\label{lem:third-order-reduction}
    Let $\Upsilon=\{H(s)\}_{s\in[0,1]}$ be a closed ${\rm C}^2$ target family with encoded target $\overline{H}(s)$.
    Let $H_0$ be independent of $s$ and satisfy \cref{eq:H0-block-structure}.
    Suppose there is an $s$-independent $V_1$ with $(V_1)_{--}=0$, an $s$-independent block-diagonal $V_2$ with $(V_2)_{--} = (V_1)_{-+}H_0^{\pattce}(V_1)_{+-}$, and block-diagonal $V_3,V_4(s)$ such that
    \begin{equation}\label{eq:third-order-uniform-condition}
        \sup_{s \in [0,1]} \norm{ \overline{H}(s) - \big( V_3 + V_4(s) \big)_{--} - (V_1)_{-+}H_0^{\pattce}(V_1)_{++}H_0^{\pattce}(V_1)_{+-} } \le \varepsilon/2.
    \end{equation}
    Assume that $V(s) = \Delta^{2/3} V_1 + \Delta^{1/3} V_2 + V_3 + V_4(s)$ is ${\rm C}^2$ and closed.
    Define $L_0 = \max\{\norm{V_1}, \norm{V_2}, \norm{V_3}, \snorm{V_4}\}$ and $L_1 = \snorm{\ps V_4}$.
    Then the family $\wt{\Upsilon} = \{\wt{H}(s) = \Delta H_0 + V(s)\}_{s\in[0,1]}$ regularly simulates $\Upsilon$ with error $(\eta,\varepsilon, \kappa)$ provided
    \begin{equation}\label{eq:third-order-regular-threshold}
        \Delta \geq \Delta_{*}(\varepsilon,\eta,\kappa) 
        = \Omega\left( \varepsilon^{-3}L_{0}^{12} + \eta^{-3}L_{0}^{3} + \kappa^{-4/3}(L_{0}L_{1})^{4/3} \right).
    \end{equation}
\end{restatable}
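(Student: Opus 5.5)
We follow the template of \cref{lem:first-order-reduction,lem:second-order-reduction}. The goal is to verify Conditions~\ref{item:c1}--\ref{item:c5} for the witness $\U(s)=\e^{-S(s)}\V$, where $S(s)$ is the exact canonical Schrieffer--Wolff generator of $\Delta H_0+V(s)$.

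\emph{Perturbation scale and Conditions~\ref{item:c1}, \ref{item:c4}.} The perturbation has size $v_0=\snorm{V}=O(\Delta^{2/3}L_0)$. The threshold $\Delta\gtrsim\eta^{-3}L_0^3$ in particular forces $32v_0<\Delta$, since it gives $\Delta^{1/3}\gg L_0$. Hence \cref{prop:regularity-periodicity-canonical-sw,lemma:uniform-control-exact-truncated} apply. They give \ref{item:c1}, closure \ref{item:c4}, and ${\rm C}^2$ regularity, and they also bound $\snorm{\U-\V}\le\snorm{S}=O(\Delta^{-1}v_0)=O(\Delta^{-1/3}L_0)$. Condition~\ref{item:c3} therefore holds once $\Delta=\Omega(\eta^{-3}L_0^3)$.

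\emph{Condition~\ref{item:c2}.} We expand $H_{\tn{eff}}^{[3]}$ using \cref{prop:heff-coefficients}, substituting $V=\Delta^{2/3}V_1+\Delta^{1/3}V_2+V_3+V_4(s)$ and collecting powers of $\Delta$.
\begin{itemize}
    \item From $K_1$: $(V_2)_{--}\Delta^{1/3}+(V_3+V_4)_{--}$, using $(V_1)_{--}=0$.
    \item From $K_2$: $-\Delta^{1/3}(V_1)_{-+}\Sigma(V_1)_{+-}$. This cancels the $\Delta^{1/3}$ piece of $K_1$ by the hypothesis on $(V_2)_{--}$. The remaining cross terms are $O(\Delta^{-1}\Delta^{2/3}\Delta^{1/3}L_0^2)=O(L_0^2)$; they must instead be $O(\Delta^{-1/3}L_0^2)$. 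To see this we need that the only $O(1)$ contribution is the intended $V_1\Sigma V_1\Sigma V_1$ term. Following the static third-order gadget of Ref.~\cite{bravyi2016complexity}, we take $V_2$ (and, as in the static construction, $V_3$) block-diagonal. Then $V_{-+}=\Delta^{2/3}(V_1)_{-+}$ exactly, so $K_2$ yields only the $\Delta^{1/3}$ term.
    \item From $K_3$: the leading term is $\Delta^{-2}\Delta^{2}(V_1)_{-+}\Sigma(V_1)_{++}\Sigma(V_1)_{+-}$, which is the target interaction. Its corrections involve $V_{++}$ contributions from $V_2,V_3,V_4$ and are of order $\Delta^{-1/3}L_0^3$ or smaller. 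The anticommutator contributes $\Delta^{-2}\cdot\Delta^{1/3}L_0\cdot\Delta^{4/3}L_0^2=O(\Delta^{-1/3}L_0^3)$.
\end{itemize}
Combining this with \cref{eq:third-order-uniform-condition} and the remainder bound $\snorm{R^{H_\tn{eff}}_4}=O(\Delta^{-3}v_0^4)=O(\Delta^{-1/3}L_0^4)$, we obtain $\snorm{E}\le\varepsilon/2+O(\Delta^{-1/3}L_0^4)$. This is at most $\varepsilon$ when $\Delta=\Omega(\varepsilon^{-3}L_0^{12})$.

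\emph{Condition~\ref{item:c5}.} This is the main obstacle. The only $s$-dependence is in $V_4$, which is block-diagonal, so $V_{\tn{od}}$ is static and \cref{cor:static-leading-sw-generator} applies. It gives $\kappa\le\snorm{\ps S}=O(\Delta^{-2}v_0v_1)=O(\Delta^{-4/3}L_0L_1)$, where $v_1=L_1$. This is at most $\kappa$ precisely when $\Delta=\Omega(\kappa^{-3/4}(L_0L_1)^{3/4})$.

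This exponent does not match the stated $\kappa^{-4/3}$. We plan to resolve the mismatch by one of two routes. The first is to bound more crudely, via $\snorm{\ps S_2}$ and $\snorm{\ps R^S_3}$ separately, where the weaker scaling $\Delta^{-3/4}$ of the remainder-type estimate yields the stated form. The second is to note that the stated threshold is merely sufficient: whenever $\kappa\le1$ and $L_0L_1\ge1$, the stated bound implies ours, and the remaining regime is absorbed into the other terms. The delicate point is verifying that $S_1$ is genuinely $s$-independent. For this we check that $(V_4)_{\tn{od}}=0$ and that the $\Delta$-scaled static pieces carry no $s$-dependence.
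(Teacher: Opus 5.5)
Your proof follows the same route as the paper. Conditions~\ref{item:c1}--\ref{item:c4} follow the template of \cref{lem:first-order-reduction}. For \ref{item:c2} the paper simply cites the static third-order lemma \cite[Lemma~5]{bravyi2016complexity}, whereas you expand $K_1,K_2,K_3$ explicitly and use $\snorm{R^{H_{\tn{eff}}}_4}=O(\Delta^{-1/3}L_0^4)$; that expansion is correct. Condition~\ref{item:c5} comes from \cref{cor:static-leading-sw-generator}. Your exponent there is the right one: $\Delta^{-2}v_0v_1=O(\Delta^{-2}\cdot\Delta^{2/3}L_0L_1)=O(\Delta^{-4/3}L_0L_1)$. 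The paper's proof writes $O(\Delta^{-3/4}L_0L_1)$ at this step, which is an arithmetic slip. The stated term $\kappa^{-4/3}(L_0L_1)^{4/3}$ is inherited from that slip. The paper's own parallel threshold \cref{eq:parallel-threshold-third-order} uses $\kappa^{-3/4}(Ad_1)^{3/4}$, which agrees with your computation.

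So discard your first route. No piece of $\ps S$ scales like $\Delta^{-3/4}$: one has $\snorm{\ps S_2}=O(\Delta^{-4/3}L_0L_1)$ and $\snorm{\ps R^S_3}=O(\Delta^{-5/3}L_0^2L_1)$. Bounding these pieces separately therefore cannot reproduce the stated exponent.

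Your second route is the correct way to obtain the lemma as stated, but the leftover regime needs to be made precise. If $L_0L_1\ge\kappa$, then $(L_0L_1/\kappa)^{4/3}\ge(L_0L_1/\kappa)^{3/4}$, so the stated condition implies yours. If $L_0L_1<\kappa$, your condition only asks for $\Delta$ to exceed a constant. That follows from the $\eta^{-3}L_0^3$ term under the normalisation $\eta\le 1\le L_0$. This convention is explicit in \cref{lem:first-order-reduction}. You already rely on it implicitly when you absorb the $\Delta^{-1/3}L_0^3$ corrections into $\Delta^{-1/3}L_0^4$.

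Finally, the $s$-independence of $S_1$ is not delicate. Block-diagonality of $V_2,V_3,V_4$ gives $V_{\tn{od}}=\Delta^{2/3}(V_1)_{\tn{od}}$ directly from the hypotheses.
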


A proof is given in Appendix~\ref{app:perturbative-gadget-reductions-proofs}; note that the structure follows that of \cref{lem:first-order-reduction}, so we do not provide a sketch here.

\subsubsection{Parallel third-order reductions}
We now introduce the parallel extension of the third-order reduction.
Let 
\begin{align}
    \bs{V}_1 &= \sum_a V_1^a, &
    \bs{V}_2 &= \sum_a V_2^a, &
    \bs{V}_3 &= \sum_a V_3^a, &
    \bs{V}_4(s) &= \bmw(s) \otimes I_{{\rm med}},
\end{align}
and define the parallel simulator Hamiltonian as
\begin{equation}\label{eq:parallel-simulator-hamiltonian-third-order}
    \wt{\bs{H}}_\zeta(s) = \Delta\bs{H}_0 + \Delta^{2/3}\bs{V}_1 + \Delta^{1/3}\bs{V}_2 + \bs{V}_3 + \zeta\bs{V}_4(s),
\end{equation}
where $\bs{H}_0 = \sum_a H_0^a$.
For simplicity, we assume $\zeta w_0 \leq A$.

By \cref{eq:K1-parallel,eq:K2-parallel,eq:K3-parallel} we obtain
\begin{align}\label{eq:third-order-effective-hamiltonian}
    \begin{split}
        \bs{H}_{\tn{eff}}^{[3]}(s) &= 
        \big( \Delta^{1/3}\bs{V}_2 + \bs{V}_3 + \zeta\bs{V}_4(s) \big)_{--} 
        - \Delta^{1/3} (\bs{V}_1)_{-+} \bs{\Sigma} (\bs{V}_1)_{+-} 
        + (\bs{V}_1)_{-+} \bs{\Sigma} (\bs{V}_1)_{++} \bs{\Sigma} (\bs{V}_1)_{+-} \\
        &\qquad+ O(\Delta^{-2/3} \max \{\Delta^{1/3}\norm{\bs{V}_2}, \norm{\bs{V}_3}, \snorm{\bs{V}_4} \} \norm{\bs{V}_1}^2).
    \end{split}
\end{align}
We further assume that $\max \{\Delta^{1/3}\norm{\bs{V}_2}, \norm{\bs{V}_3}, \snorm{\bs{V}_4} \} = \Delta^{1/3} \norm{\bs{V}_2}$ giving $O(\Delta^{-1/3} \norm{\bs{V}_2} \norm{\bs{V}_1}^2)$ as the error term in the third-order effective Hamiltonian.
Recall \cref{rmk:ignoring-third-order-terms} discussed earlier that we ``ignore'' the second group of terms in $\bs{K}_3(s)$ (\cref{eq:K3-parallel}); such terms represent the $O(\Delta^{-1/3} \norm{\bs{V}_2} \norm{\bs{V}_1}^2)$ contribution above.

Since we have chosen each $V^a$ to act only non-trivially on its corresponding mediator qubit, and $(V_1^a)_{--} = 0$ for all $a$, \cref{eq:K2-decomposition,eq:K3-decomposition} hold, giving
\begin{align}\label{eq:third-order-no-cross-gadget}
    \begin{split}
        (\bs{V}_1)_{-+} \bs{\Sigma} (\bs{V}_1)_{+-} &= \sum_a (V_1^a)_{-+} (H_0^a)^{\pattce} (V_1^a)_{+-}, \\
        (\bs{V}_1)_{-+} \bs{\Sigma} (\bs{V}_1 )_{++} \bs{\Sigma} (\bs{V}_1)_{+-} &= \sum_a (V_1^a)_{-+} (H_0^a)^{\pattce} (V_1^a)_{++} (H_0^a)^{\pattce} (V_1^a)_{+-},
    \end{split}
\end{align}
implying no cross-gadget contributions in the third-order effective Hamiltonian, modulo contributions from the error term.

For each $a \in [A]$, suppose that $V_1^a$ and $V_3^a$, are chosen so that 
\begin{equation}\label{eq:third-order-local-error}
    \norm{\overline{H}^a - (V_3^a)_{--} - (V_1^a)_{-+} (H_0^a)^{\pattce} (V_1^a)_{++} (H_0^a)^{\pattce} (V_1^a)_{+-}} \leq \varepsilon_a \leq \frac{\varepsilon}{2A},
\end{equation}
with $(V_1^a)_{--} = 0$, and $V_2^a$ is chosen so that $(V_2^a)_{--} = (V_1^a)_{-+} (H_0^a)^{\pattce} (V_1^a)_{+-}$.
From \cref{eq:third-order-no-cross-gadget} and the fact that $\big( \bs{V}_4(s) \big)_{--} = \overline{\bmw}(s)$, we have that 
\begin{equation}
    \overline{\bs{H}}_\zeta(s) - \bs{H}_{\tn{eff}}^{[3]}(s) = \sum_a \left[ \overline{H}^a - (V_3^a)_{--} - (V_1^a)_{-+} (H_0^a)^{\pattce} (V_1^a)_{++} (H_0^a)^{\pattce} (V_1^a)_{+-} \right] + O(\Delta^{-1/3} \norm{\bs{V}_2} \norm{\bs{V}_1}^2).
\end{equation}
The triangle inequality and \cref{eq:third-order-local-error} then give
\begin{equation}
    \norm{\overline{\bs{H}}_\zeta(s) - \bs{H}_{\tn{eff}}^{[3]}(s)} \leq \sum_a \varepsilon_a + O(\Delta^{-1/3} \norm{\bs{V}_2} \norm{\bs{V}_1}^2) \leq \frac{\varepsilon}{2} + O(\Delta^{-1/3} \norm{\bs{V}_2} \norm{\bs{V}_1}^2).
\end{equation}
Assuming $\norm{V_1^a}, \norm{V_2^a}, \norm{V_3^a} = O(1)$ for all $a$ and $\zeta w_0 \leq A$, we have $L_0 = O(A)$ and the error term $O(\Delta^{-1/3} \norm{\bs{V}_2} \norm{\bs{V}_1}^2) = O(\Delta^{-1/3} A^3) \leq \varepsilon/2$ for $\Delta = \Omega(A^9 \varepsilon^{-3})$.
Applying \cref{lem:third-order-reduction} therfore gives the sufficient threshold
\begin{equation}\label{eq:parallel-threshold-third-order}
    \Delta = \Omega(\varepsilon^{-3}A^{12} + \eta^{-3}A^3 + \kappa^{-3/4}(Ad_1)^{3/4}).
\end{equation}

The global perturbation satisfies $v_0 = O(\Delta^{2/3} A)$, under the assumption $\zeta w_0 \leq A$ and $\norm{V_1^a}, \norm{V_2^a}, \norm{V_3^a} = O(1)$, and $v_1 \leq \zeta w_1$.
Again, the first-order Schrieffer--Wolff coefficient is independent of $s$, therefore, $\ps \bs{S}_\zeta(s) = \ps \bs{S}_{2,\zeta}(s) + \ps \bs{S}_{3,\zeta}(s) + \ps R_4^{\bs{S}}(s)$.
Applying \cref{cor:static-leading-sw-generator} then gives 
\begin{align}\label{eq:parallel-third-order-bounds-pt1}
    \snorm{\bs{S}_\zeta} &= O(\Delta^{-1/3} A), &
    \kappa &= O(\Delta^{-4/3} d_1 A).
\end{align}
Additionally, \cref{lem:carried-driving-simulation-error} with $p=3$ gives
\begin{equation}\label{eq:parallel-third-order-bounds-pt2}
    \snorm{\ps \bs{E}_\zeta} = O(\Delta^{-2} v_0^2 v_1) = O(\Delta^{-2/3} A^2 d_1).
\end{equation}
Thus, the hypotheses of \cref{prop:assembled-bounds-general} hold with $q=4/3$, $q'=2/3$, and size parameter $O(A)$.

\begin{restatable}{corollary}{ParallelThirdOrderSimulationBerryPhase}
\label{cor:parallel-third-order-simulation-berry-phase}
    Let $\Upsilon$ be the target family of Definition~\ref{def:target-family} and let $\wt{\Upsilon}$ be the third-order parallel simulator family of the form in \cref{eq:parallel-simulator-hamiltonian-third-order}, with Berry phases $\vartheta$ and $\wt{\vartheta}$ respectively.
    Suppose $A + \zeta w_0$, $B$, $\gamma_\star^{-1}$, $\zeta^{-1}$ and $\eta^{-1}$ are polynomially bounded in $n$, and fix a target Berry-phase accuracy $\delta = 1/\poly{n}$.
    Then there exist simulation targets $\varepsilon, \kappa = 1/\poly{n}$, and a penalty $\Delta = \poly{n}$, such that $\wt{\Upsilon}$ is a regular $(\eta,\varepsilon,\kappa)$-simulation of $\Upsilon$ satisfying
    \begin{equation}
        d_{2\pi}(\vartheta,\wt{\vartheta}) \le \delta .
    \end{equation}
\end{restatable}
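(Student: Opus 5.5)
The plan mirrors the first-order corollary, \cref{cor:parallel-first-order-simulation-berry-phase}: we show the third-order parallel construction meets the hypotheses of \cref{lem:poly-control-general}, and then apply \cref{thm:berry-phase-simulation}.

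First, we check that $\wt{\Upsilon}$ is a valid regular simulation. By \cref{eq:third-order-no-cross-gadget} and the local error condition \cref{eq:third-order-local-error}, the truncated effective Hamiltonian satisfies
\[
\snorm{\overline{\bs{H}}_\zeta - \bs{H}^{[3]}_{\tn{eff}}} \le \varepsilon/2 + O(\Delta^{-1/3}A^3).
\]
The cross-gadget and anticommutator contributions from \cref{rmk:ignoring-third-order-terms} are absorbed into this error term, and it is at most $\varepsilon/2$ once $\Delta=\Omega(A^9\varepsilon^{-3})$. \cref{lem:third-order-reduction} then holds with $L_0=O(A)$ and $L_1\le\zeta w_1\le d_1$, so we get a regular $(\eta,\varepsilon,\kappa)$-simulation whenever $\Delta$ exceeds the threshold \cref{eq:parallel-threshold-third-order}. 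That threshold is $\poly{\varepsilon^{-1},\eta^{-1},\kappa^{-1},n}$, since $A=\poly{n}$. Closure and ${\rm C}^2$ regularity of the witness $\bs{\U}=\e^{-\bs{S}}\bs{\V}$ come from \cref{prop:regularity-periodicity-canonical-sw}, which needs $2\snorm{\bs{V}_\zeta}<\Delta$. This holds for large $\Delta$ because $v_0=O(\Delta^{2/3}A)$, and the stricter requirement $32v_0<\Delta$ of \cref{lemma:uniform-control-exact-truncated} also follows once $\Delta=\Omega(A^3)$.

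Second, we verify the scaling hypotheses \cref{eq:scaling-hypotheses} with $q=4/3$ and $q'=2/3$. For $\kappa$, the only $s$-dependence in the off-diagonal block would come from $\zeta\bs{V}_4$, which is block-diagonal, so $\bs{S}_1$ is static. \cref{cor:static-leading-sw-generator} then gives
\[
\kappa=O(\Delta^{-2}v_0v_1)=O(\Delta^{-4/3}Ad_1).
\]
For $\snorm{\ps\bs{E}_\zeta}$, we use \cref{lem:carried-driving-simulation-error} with $p=3$, which requires $(\bs{V})_{--}=0$ in the relevant sense. This is where I expect the main obstacle. The perturbation contains $\Delta^{1/3}\bs{V}_2$ and $\bs{V}_3$, whose $(--)$ blocks are nonzero, so the static part is not purely off-diagonal. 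I would first rederive $\ps\bs{K}_3$ directly from \cref{eq:K3-parallel}: differentiation only hits $\zeta\bs{V}_4$, which enters only through the $(--)$ block in the anticommutator term and the $(++)$ block in the middle term. This gives
\[
\snorm{\ps\bs{K}_3}=O\bigl(\Delta^{-2}\norm{\Delta^{2/3}\bs{V}_1}^2\zeta w_1\bigr)=O(\Delta^{-2/3}A^2d_1).
\]
The same reasoning shows $\ps\bs{K}_1=\zeta\ps\overline{\bmw}$ exactly and $\ps\bs{K}_2=0$, because $\bs{V}_4$ does not enter the off-diagonal blocks. The remainder derivative $\snorm{\ps R_4^{\bs{H}_\tn{eff}}}=O(\Delta^{-3}v_0^3v_1)=O(\Delta^{-1}A^3d_1)$ from \cref{lemma:uniform-control-exact-truncated} is subdominant, so overall $\snorm{\ps\bs{E}_\zeta}=O(\Delta^{-2/3}A^2d_1)$, up to polynomial factors in $A$ that the lemma tolerates.

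Finally, \cref{prop:assembled-bounds-general} bounds $J_1$ through $J_4$, and \cref{lem:poly-control-general} supplies the parameters. We choose $\varepsilon=1/\poly{n}$ with $J_3\le\delta/4$, and $\kappa=1/\poly{n}$ matched to the $J_4$ budget. We then take $\Delta=\poly{n}$ at least the maximum of: the simulation threshold, the value giving $A^2\Delta^{-2/3}\le1$, and the values giving $J_2,J_4\le\delta/4$. The remaining hypotheses of \cref{thm:berry-phase-simulation} also hold: Assumption~\ref{item:a2} follows from $\varepsilon<\gamma_\star/4$, and Assumption~\ref{item:a1} from the bounds \cref{eq:simulator-hamiltonian-norms}. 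The theorem then yields $d_{2\pi}(\vartheta,\wt\vartheta)\le J_1+\cdots+J_4\le\delta$.
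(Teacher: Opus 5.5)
Your proposal is correct and follows essentially the paper's route. You check that the third-order threshold is polynomial and that the scaling hypotheses hold with $q=4/3$, $q'=2/3$, then apply \cref{lem:poly-control-general} and \cref{thm:berry-phase-simulation}; your direct recomputation of $\ps\bs{K}_3$ also correctly sidesteps the fact that the literal hypothesis $(\bs{V})_{--}=0$ of \cref{lem:carried-driving-simulation-error} fails here, since the derivative bound only needs the off-diagonal blocks to be static. The only slip is that the theorem gives $d_{2\pi}(\vartheta,\wt\vartheta)\le C(J_1+\cdots+J_4)$ rather than $J_1+\cdots+J_4$, so, as the paper does in the first-order proof, run the lemma with target accuracy $\delta/C$.
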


The proof follows analogously from \cref{cor:parallel-first-order-simulation-berry-phase} and the bounds established for the third-order case.

\subsection{Gadgetised Sums of Two-Local Driving Terms}
\label{sec:gadgetised-two-local-driving-terms}
We now analyse $2$-local driving terms that are realised perturbatively using mediator qubits.
We first extend \cref{lem:second-order-reduction} to one carried term and one gadgetised interaction, and then formulate the corresponding parallel construction.

We explicitly restrict attention to $2$-local target Hamiltonians that are ${\rm C}^2$ in $s$ and closed, so $H(0) = H(1)$.
The structure of the target Hamiltonians is now given as 
\begin{equation}\label{eq:target-hamiltonian-gadget-sum-form}
    H(s) = \Gamma(s) + \mathdcal{K}(s),
\end{equation}
where 
\begin{equation}
    \mathdcal{K}(s) = \mathdcal{C}(s) - \frac{1}{2} \big(\mathdcal{Q}(s)\big)^2,
\end{equation}
for some $2$-local operator $\mathdcal{C}(s)$ and $1$-local operator $\mathdcal{Q}(s)$.
More specifically,
\begin{align}
    \mathdcal{C}(s) &= \sum_{\ell \in [{\sf d}_{\mathdcal{C}}]} \mathdcal{C}^\ell(s), &
    \mathdcal{Q}(s) &= \sum_{\ell \in [{\sf d}_{\mathdcal{Q}}]} \mathdcal{Q}^\ell(s).
\end{align}
In a moment, we show examples forms of the individual terms $\mathdcal{C}^\ell(s)$ and $\mathdcal{Q}^\ell(s)$.

In the present case, we assume that the second-order effective Hamiltonian can be chosen to agree exactly with the encoded target Hamiltonian.
This assumption holds for the constructions used below; extending the result to a more general setting is straightforward.

\begin{restatable}[Parameter-dependent second-order reduction]{lemma}{SecondOrderParameterisedReduction}\label{lem:second-order-parameterised-reduction}
    Let $\Upsilon=\{H(s)\}_{s\in[0,1]}$ be a closed ${\rm C}^2$ target family, with encoded target $\overline{H}(s)$.
    Let $H_0$ be independent of $s$ and satisfy \cref{eq:H0-block-structure}. 
    Suppose there is $V_1(s)$ with $\big( V_1(s) \big)_{--} = 0$, and block-diagonal $V_2(s)$ such that
    \begin{equation}\label{eq:second-order-parameterised-uniform-condition}
        \big( V_2(s) \big)_{--} - \big( V_1(s) \big)_{-+}H_0^{\pattce}\big( V_1(s) \big)_{+-} = \overline{H}(s).
    \end{equation}
    Assume that $V(s) = \Delta^{1/2} V_1(s) + V_2(s)$ is ${\rm C}^2$ and closed.
    Define $L_0 = \max\{\snorm{V_i}\}_{i\in\{1,2\}}$ and $L_1 = \max\{\snorm{\ps V_i}\}_{i\in\{1,2\}}$.
    Then the family $\wt{\Upsilon} = \{\wt{H}(s) = \Delta H_0 + V(s)\}_{s\in[0,1]}$ regularly simulates $\Upsilon$ with error $(\eta,\varepsilon,\kappa)$ provided
       \begin{equation}\label{eq:second-order-parameterised-regular-threshold}
        \Delta \geq \Delta_{*}(\varepsilon,\eta,\kappa) 
        = \Omega\left( \varepsilon^{-2}L_{0}^6 + \eta^{-2}L_{0}^2 + \kappa^{-2}L_1^2 \right).
    \end{equation}
\end{restatable}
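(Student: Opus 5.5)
The plan follows the structure of \cref{lem:second-order-reduction}: verify Conditions~\ref{item:c1}--\ref{item:c5} for the witness $\U(s)=\e^{-S(s)}\V$, where $S(s)$ is the exact canonical Schrieffer--Wolff generator of $\wt H(s)=\Delta H_0+V(s)$. The new feature is that the block-off-diagonal part $\Delta^{1/2}V_1(s)$ now depends on $s$, so \cref{cor:static-leading-sw-generator} no longer applies. The derivative norm therefore has to come from the general bound \cref{eq:kappa-bound} of \cref{lemma:uniform-control-exact-truncated}; this is why the threshold contains $\kappa^{-2}L_1^2$ rather than a fractional power.

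\textbf{Set-up and Conditions~\ref{item:c1}, \ref{item:c4}.}
Set $v_0=\snorm{V}\le \Delta^{1/2}L_0+L_0=O(\Delta^{1/2}L_0)$ and $v_1=\snorm{\ps V}=O(\Delta^{1/2}L_1)$. The requirement $32v_0<\Delta$ holds once $\Delta=\Omega(L_0^2)$, and this is implied by the $\eta$ term of the threshold (since $\eta\le 1$). Then \cref{prop:regularity-periodicity-canonical-sw} and \cref{lemma:uniform-control-exact-truncated} give the following: $S$ is ${\rm C}^2$ and closed, $\U$ is a closed ${\rm C}^2$ isometry, and ${\rm Im}\,\U(s)$ is the exact low-energy band. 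This settles \ref{item:c1} and \ref{item:c4}. For \ref{item:c3} we use $\norm{\U(s)-\V}\le\norm{\e^{-S(s)}-I}\le\snorm{S}=O(\Delta^{-1}v_0)=O(\Delta^{-1/2}L_0)$, which is at most $\eta$ when $\Delta=\Omega(\eta^{-2}L_0^2)$.

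\textbf{Condition~\ref{item:c2}.}
Apply \cref{prop:heff-coefficients} pointwise in $s$. Because $(V_1)_{--}=0$ and $V_2$ is block-diagonal, we get $K_1=(V_2)_{--}$ and $K_2=-\Delta^{-1}(\Delta^{1/2}V_1)_{-+}\Sigma(\Delta^{1/2}V_1)_{+-}=-(V_1)_{-+}\Sigma(V_1)_{+-}$, where $\Sigma=H_0^{\pattce}$. Hence \cref{eq:second-order-parameterised-uniform-condition} gives $H^{[2]}_{\rm eff}(s)=\overline H(s)$ exactly. By \cref{lemma:uniform-control-exact-truncated} with $p=2$,
\begin{equation*}
    \snorm{\overline H-H_{\rm eff}}=\snorm{R^{H_{\rm eff}}_3}=O(\Delta^{-2}v_0^3)=O(\Delta^{-1/2}L_0^3).
\end{equation*}
This is at most $\varepsilon$ once $\Delta=\Omega(\varepsilon^{-2}L_0^6)$. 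Since $\U^\dagger\wt H\U=\V^\dagger H_{\rm eff}\V$, Condition~\ref{item:c2} follows.

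\textbf{Condition~\ref{item:c5} (main obstacle).}
The general estimate $\snorm{\ps S}=O(\Delta^{-1}v_1)$ gives $\kappa=O(\Delta^{-1/2}L_1)$, and this is at most $\kappa$ for $\Delta=\Omega(\kappa^{-2}L_1^2)$, which is exactly the third term of the threshold. The delicate step is justifying that bound uniformly. If the proof of \cref{lemma:uniform-control-exact-truncated} is taken as given, it is immediate. Otherwise I would check it directly. The leading term satisfies $\ps S_1=\L^{\pattce}_{\Delta H_0}(\Delta^{1/2}\ps V_1)$, so \cref{lem:L-inverse} with $\norm{\Sigma}\le1$ bounds it by $O(\Delta^{-1/2}L_1)$. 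The $j$-th coefficient $S_j$ is a sum of $O(C^j)$ nested terms, each with $j$ factors of $V$ and a prefactor $\Delta^{-j}$. Differentiating such a term hits one factor of $V$, so the derivative is bounded by $jC^j\Delta^{-j}v_0^{j-1}v_1$. Summing over $j$ under $32v_0<\Delta$ (Weierstrass M-test) gives $O(\Delta^{-1}v_1)$, because the ratio $v_0/\Delta=O(\Delta^{-1/2}L_0)$ is small.

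Collecting the three constraints on $\Delta$ (for $\varepsilon$, $\eta$ and $\kappa$) gives \cref{eq:second-order-parameterised-regular-threshold}.
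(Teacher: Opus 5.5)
Your proposal is correct and follows the paper's proof essentially step for step. Both use the witness $\U(s)=\e^{-S(s)}\V$, obtain \ref{item:c2} from exactness of $H^{[2]}_{\rm eff}$ plus the $p=2$ remainder bound $O(\Delta^{-1/2}L_0^3)$, obtain \ref{item:c3} from $\snorm{S}=O(\Delta^{-1/2}L_0)$, and obtain \ref{item:c5} from the general (non-static) estimate $\snorm{\ps S}=O(\Delta^{-1}v_1)=O(\Delta^{-1/2}L_1)$ of \cref{lemma:uniform-control-exact-truncated}. The extra details you supply are all correct and are left implicit in the paper: that $32v_0<\Delta$ is absorbed by the $\eta$-term, the explicit computation of $K_1,K_2$, and the inequality $\norm{\e^{-S}-I}\le\norm{S}$.
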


A proof is given in Appendix~\ref{app:perturbative-gadget-reductions-proofs}; note that the structure follows that of \cref{lem:second-order-reduction}, so we do not provide a sketch here.

The target is allowed to contain a sum of such interactions and an independently parameterised background.
As every driving interaction, to which we apply a gadget, is already $2$-local we will only require the use of second-order gadgets.
Moreover, the specific second-order gadgets we will use stem from Oliveira and Terhal's \textsl{subdivision}, \textsl{fork}, and \textsl{cross} gadgets \cite{oliveira2008complexity} which all admit the same structure (see \cref{eq:polarisation-map} below).

We present a pictorial representation of a single application of each of these gadgets, in \cref{fig:example-gadgets}, to demonstrate the connections each mediator qubit has to the perturbation terms.
Notice that the mediator in each case has a constant number of perturbative connections to the system qubits.

\begin{figure}[!hb]
\centering
\begin{subfigure}{0.3\textwidth}
    \begin{tikzpicture}
        \pic{subdivision};
    \end{tikzpicture}
    \caption{The \textsl{subdivision} gadget.}
    \label{fig:subdivision}
\end{subfigure}
\hfill
\begin{subfigure}{0.3\textwidth}
    \begin{tikzpicture}
        \pic{cross};
    \end{tikzpicture}
    \caption{The \textsl{cross} gadget.}
    \label{fig:cross}
\end{subfigure}
\hfill
\begin{subfigure}{0.3\textwidth}
    \begin{tikzpicture} 
        \pic{fork};
    \end{tikzpicture}
    \caption{The \textsl{fork} gadget.}
    \label{fig:fork}
\end{subfigure}
\caption{
    Revised from Ref.~\cite{waite2025complexitya}. 
    These illustrations omit representations of cross-terms that can emerge.
    The top card of each column represents the target interaction, while the bottom card represents the form of the gadget used to generate it.
}
\label{fig:example-gadgets}
\end{figure}

The motivation for introducing parameter-dependent second-order layers is to systematically decompose the driving Hamiltonian in \cref{eq:driving-pauli-decomposition-intro} into components that can be individually addressed by second-order gadgets (\textsl{subdivision}, \textsl{cross}, or \textsl{fork}).
At each layer, we partition $\bmw(s)=\sum_{\alpha} \w^{\alpha}(s)$ into terms retained in the carried background $\bs\Gamma(s)$ and terms $\mathdcal{K}^a(s)$ generated by gadgets.
This partition may change between layers.

\subsubsection{The parallel setting}
We now define one parallel parameter-dependent second-order layer and its simulator.

\begin{definition}[A parameter-dependent second-order layer]\label{def:parameter-dependent-second-order-layer}
    Let $\Upsilon = \{\bs{H}(s)\}_{s\in[0,1]}$ be a family of target Hamiltonians.
    Take
    \begin{equation}\label{eq:general-gadgetised-target}
        \bs{H}(s) = \bs{\Gamma}(s) + \mathdbcal{K}(s),
    \end{equation}
    where $\bs{\Gamma}(s) = \sum_{b\in[B]} \Gamma^b(s)$ is a sum of $B$ $2$-local terms that are ${\rm C}^2$ and closed with $\snorm{\ps^r \Gamma^b(s)} = O(1)$ for all $b \in [B]$ and $r \in \{0,1,2\}$, and
    \begin{equation}
        \mathdbcal{K}(s) = \sum_{a\in[A]} \mathdcal{K}^a(s).
    \end{equation}
    We assume every $\mathdcal{K}^a$ is ${\rm C}^2$, closed and of the form
    \begin{equation}
        \mathdcal{K}^a(s) = \mathdcal{C}^a(s) - \frac{1}{2} \big(\mathdcal{Q}^a(s)\big)^2 
        = \sum_{\ell = 1}^{{\sf d}_{a,\mathdcal{C}}} \mathdcal{C}^{a,\ell}(s) - \frac{1}{2} \bigg( \sum_{\ell =1}^{{\sf d}_{a,\mathdcal{Q}}} \mathdcal{Q}^{a,\ell}(s) \bigg)^2,
    \end{equation}
    where ${\sf d}_{a,\mathdcal{C}}$ and ${\sf d}_{a,\mathdcal{Q}}$ are the respective numbers of components in the decompositions of $\mathdcal{C}^a(s)$ and $\mathdcal{Q}^a(s)$, and all $(a,\ell)$ components are ${\rm C}^2$ and closed with $\snorm{\ps^r \mathdcal{C}^{a,\ell}(s)}, \snorm{\ps^r \mathdcal{Q}^{a,\ell}(s)} = O(1)$ for all $a,\ell$ and $r \in \{0,1,2\}$.
    We further assume that $A + B = \poly{n}$ with $A,B \geq 1$.
\end{definition}

Note that $\sum_a {\sf d}_{a,\mathdcal{C}} = {\sf d}_{\mathdcal{C}}$ and $\sum_a {\sf d}_{a,\mathdcal{Q}} = {\sf d}_{\mathdcal{Q}}$.
We will now define the corresponding simulator family of local Hamiltonians that will align exactly with the, appropriately defined, encoded target family.

\begin{definition}\label{def:general-driving-simulator}
    Let $\wt{\Upsilon} = \{\wt{\bs{H}}(s)\}_{s\in[0,1]}$ be the corresponding simulator family of local Hamiltonians.
    Take 
    \begin{equation}\label{eq:general-driving-simulator-form}
        \wt{\bs{H}}(s) = \Delta \bs{H}_0 + \Delta^{1/2} \bs{V}_1(s) + \bs{V}_2(s),
    \end{equation}
    where 
    \begin{align}\label{eq:general-driving-simulator-pieces}
        \bs{H}_0 &= \sum_{a\in[A]}\ketbra{1}_{m_a}, &
        \bs{V}_1(s) &= \frac{1}{\sqrt{2}}\sum_{a\in[A]}\sum_{\ell=1}^{{\sf d}_{a,\mathdcal{Q}}} \mathdcal{Q}^{a,\ell}(s)X_{m_a}, &
        \bs{V}_2(s) &= \left(\bs\Gamma(s)+\sum_{a\in[A]}\sum_{\ell=1}^{{\sf d}_{a,\mathdcal{C}}} \mathdcal{C}^{a,\ell}(s)\right)\otimes I_{{\rm med}}, 
    \end{align}
    for a set of distinct mediator qubits ${\rm med} = \{m_a\}_{a\in[A]}$.
\end{definition}

We again use the notation as outlined in \cref{sec:global-sw-transformation}, and therefore denote the projectors onto the low-energy and high-energy subspaces of $\bs{H}_0$ by $\bs{\Pi}_-$ and $\bs{\Pi}_+$, respectively.

Two important features of this definition are that the penalty Hamiltonian $\bs{H}_0$ is independent of $s$ and that the perturbative interactions use distinct mediator-excitation sectors.
The latter feature removes second-order cross terms across different gadgets, that is, we can employ the decomposition for \cref{eq:K2-decomposition} and define 
\begin{equation}
    \Omega^a(s) = \frac{1}{2} \bigg( \sum_{\ell,\ell' \in [{\sf d}_{a,\mathdcal{Q}}]} \mathdcal{Q}^{a,\ell}(s) \mathdcal{Q}^{a,\ell'}(s) \bigg) \bs{\Pi}_- = \frac{1}{2} \big( \mathdcal{Q}^a(s) \big)^2 \bs{\Pi}_-.
\end{equation}
Hence, there are no cross-gadget second-order terms, i.e., no contributions involving different gadgets $a$ and $a'$ with $a\ne a'$.
Also, observe that 
\begin{equation}
    \big( \bs{V}_2(s) \big)_{--} = \bigg(\bs\Gamma(s)+\sum_{a\in[A]} \mathdcal{C}^a(s)\bigg) \bs{\Pi}_-,
\end{equation}
and therefore via substitution into \cref{eq:parallel-second-order-effective-hamiltonian}, the effective Hamiltonian to second order is given by
\begin{equation}
    \bs{H}_{\tn{eff}}^{[2]}(s) \equiv \overline{\bs{H}}(s),
\end{equation}
which verifies the claim that we recover the exact encoded target Hamiltonian to second order.

\subsubsection{The choice of interaction terms}
For our purposes, we require the operators appearing in \cref{eq:general-driving-simulator-pieces} to have the desired simulator locality of $2$.
In the static formulation of the \textsl{subdivision}, \textsl{cross} and \textsl{fork} gadgets, every $\mathdcal{Q}^{a,\ell}$ is a sum of single-qubit Pauli operators and every $\mathdcal{C}^{a,\ell}$ is at most $2$-local, so \cref{eq:general-driving-simulator-form} is $2$-local.

For $r=0,1,2$, we define the parameters
\begin{align}
    q_r &\coloneqq \sum_{a\in[A]}\sum_{\ell=1}^{{\sf d}_{a,\mathdcal{Q}}}\snorm{\ps^r \mathdcal{Q}^{a,\ell}}, &
    c_r &\coloneqq \sum_{a\in[A]}\sum_{\ell=1}^{{\sf d}_{a,\mathdcal{C}}}\snorm{\ps^r \mathdcal{C}^{a,\ell}}, &
    \tau_r &\coloneqq \sum_{b\in[B]}\snorm{\ps^r\Gamma^b}.
\end{align}
and the scale function
\begin{equation}
    \label{eq:general-driving-perturbation-scale}
    l_r(\Delta) \coloneqq (\Delta/2)^{1/2}q_r + c_r + \tau_r.
\end{equation}
Definition~\ref{def:parameter-dependent-second-order-layer} gives $q_r,c_r = O(A)$ and $\tau_r = O(B)$.
Consequently, $l_r(\Delta) = O(\Delta^{1/2}A+B)$, which is polynomial because $A+B = \poly{n}$.
Moreover, for simplicity we take $l_r(\Delta) = O(\Delta^{1/2}A)$.

From Ref.~\cite{oliveira2008complexity} (and \cref{fig:example-gadgets}), ${\sf d}_{a, \mathdcal{Q}} \leq 4$ and ${\sf d}_{a, \mathdcal{C}} \leq 16$ (both constant with respect to $n$) for all $a\in[A]$.
Note that $\bs{\Gamma}(s)$ can be partitioned into those terms dependent on $s$ and those that are not.
However, we do not consider this partitioning for a slightly more general treatment.

Each $\mathdcal{K}^a(s)$ will collect a particular set of target interactions we intend to simulate using perturbative gadgets.
The decomposition $\mathdcal{K}^a \mapsto (\mathdcal{C}^a, \mathdcal{Q}^a)$ can be recovered using an analogy with the polarisation identity for real vectors.
Specifically, this choice can be organised using the polarisation map $\Xi$ 
\begin{equation}\label{eq:polarisation-map}
    \Xi(X, Y; R) \coloneqq \left(\frac{1}{2}(X^2+Y^2)-R, X-Y\right).
\end{equation}
If the commuting operators $X$ and $Y$ and the correction term $R$ are chosen so that $\mathdcal{K}^a = XY - R$, then $(\mathdcal{C}^a, \mathdcal{Q}^a) = \Xi(X, Y; R)$ satisfies $\mathdcal{K}^a = \mathdcal{C}^a - \frac{1}{2} (\mathdcal{Q}^a)^2$.
Here $R$ removes any unwanted interactions generated by the product $XY$, and note that when $X$ and $Y$ are Pauli operators the first term in the polarisation map, $\frac{1}{2}(X^2+Y^2)-R$, reduces to $I-R$.

For the static \textsl{subdivision} gadget we can use $(X, Y; R) = (A_u, B_v; -1)$, for the \textsl{fork} gadget we can use $(X, Y; R) = (B_v, A_u + C_w; \tfrac{3}{2} + A_uC_w)$, and for the \textsl{cross} gadget we can use $(X, Y; R) = (A_u+C_w, B_v+D_s; 1 + A _uC_w + B_vD_s)$.
\cref{eq:polarisation-map} then directly gives the required pair $(\mathdcal{C}^a, \mathdcal{Q}^a)$ --- for a \textsl{fork} gadget, we find that $\mathdcal{C} = \frac{3}{2}I + A_u C_w$ and $\mathdcal{Q} = B_v - A_u - C_w$.

In our application of these constructions (given in Section~\ref{sec:extended-hardness-results}) we do not restrict the \emph{type} of allowed interaction.
For example, the results of Biamonte and Love~\cite{biamonte2008realizable} consider only Hamiltonians whose interactions are composed of $2$-local Pauli operators generated by $X$ and $Z$.
The only restriction on interaction type we impose is that all terms are $2$-local (and will be Pauli operators).
In Appendix~\ref{app:perturbative-gadget-extensions} we construct explicit extensions to the \textsl{subdivision}, \textsl{fork}, and \textsl{cross} gadgets that will be applied in our extended hardness results.
Additionally, we extend the four gadgets: \textsl{parameter-fixing}, \textsl{Pauli-to-Ising}, \textsl{Ising-to-XX} and \textsl{XX-to-Heisenberg}, from Ref.~\cite{schuch2009computational} to the parameterised setting.
This secondary family of gadgets also adhere to a decomposition analogous to Definition~\ref{def:general-driving-simulator}, albeit with the mediator qubit having all possible Pauli operator interactions.
However, for the sake of over-cumbering the presentation, we do not explicitly detail this expansion here and defer the technical discussion to Appendix~\ref{app:perturbative-gadget-extensions}.

The carried and gadgetised constructions can be combined at second order by treating each bulk term as a constant function of $s$ and applying \cref{lem:second-order-parameterised-reduction} to it.
We chose to present the two cases separately to keep their scale parameters distinct.

\subsubsection{Control of the Berry phase}
To conclude, we state the following corollary establishing control of the Berry phase in the context of the parameterised perturbative gadget reductions.
First, we see that $v_0, v_1 = O(\Delta^{1/2} A)$ from the assumptions above \cref{lemma:uniform-control-exact-truncated} gives
\begin{align}
    \snorm{\bs{S}} &= O(\Delta^{-1/2} A), &
    \kappa &= O(\Delta^{-1/2} A).
\end{align}
Applying a straightforward corollary of \cref{lem:carried-driving-simulation-error} (\cref{cor:gadgetised-driving-simulation-error}) gives
\begin{equation}
    \snorm{\ps \bs{E}} = O(\Delta^{-2} v_0^2 v_1) = O(\Delta^{-1/2} A^3).
\end{equation}
Thus, analogous hypotheses to those of \cref{prop:assembled-bounds-general} hold for the second-order parameterised perturbative gadget reductions (see \cref{prop:parallel-second-order-exact-assembled-bounds}), with parameter $q = 1/2$.

\begin{corollary}
\label{cor:parallel-second-order-exact-simulation-berry-phase}
    Let $\Upsilon$ be the target family of Definition~\ref{def:parameter-dependent-second-order-layer} and let $\wt{\Upsilon}$ be an exact second-order parallel simulator family of Definition~\ref{def:general-driving-simulator}, with Berry phases $\vartheta$ and $\wt{\vartheta}$ respectively.
    Suppose $A$, $B$, $\gamma_\star^{-1}$, and $\eta^{-1}$ are polynomially bounded in $n$, and fix a target Berry-phase accuracy $\delta = 1/\poly{n}$.
    Then there exist simulation targets $\varepsilon, \kappa = 1/\poly{n}$, and a penalty $\Delta = \poly{n}$, such that $\wt{\Upsilon}$ is a regular $(\eta,\varepsilon,\kappa)$-simulation of $\Upsilon$ satisfying
    \begin{equation}
        d_{2\pi}(\vartheta,\wt{\vartheta}) \le \delta .
    \end{equation}
\end{corollary}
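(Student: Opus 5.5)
The plan is to run the argument of \cref{cor:parallel-first-order-simulation-berry-phase} through \cref{lem:poly-control-general}, with two changes. The scalings now come from the parameter-dependent second-order bounds stated just above. The leading generator $\bs{S}_1$ is no longer static, so $q=1/2$.

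\textbf{Step 1: validity of the simulation.} For the simulator of Definition~\ref{def:general-driving-simulator}, the discussion preceding the corollary gives $\bs{H}_{\tn{eff}}^{[2]}(s)=\overline{\bs{H}}(s)$ exactly, with no cross-gadget second-order terms by \cref{lem:parallel-effective-hamiltonian-decomposition}. So \cref{eq:second-order-parameterised-uniform-condition} holds. I will apply \cref{lem:second-order-parameterised-reduction} to the global penalty splitting, using \cref{cor:L-inverse-parallel} and \cref{lem:liouville-pseudoinverse-bound} in place of their single-gadget analogues. Since $L_0,L_1=O(A+B)=\poly{n}$, this gives a valid regular $(\eta,\varepsilon,\kappa)$-simulation whenever $\Delta\ge\Delta_*=\Omega(\varepsilon^{-2}L_0^6+\eta^{-2}L_0^2+\kappa^{-2}L_1^2)$. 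This threshold is polynomial in $\varepsilon^{-1},\eta^{-1},\kappa^{-1},n$. Closure and ${\rm C}^2$ regularity of $\bs{\U}=\e^{-\bs{S}}\bs{\V}$ follow from \cref{prop:regularity-periodicity-canonical-sw}, since every $\mathdcal{Q}^{a,\ell},\mathdcal{C}^{a,\ell},\Gamma^b$ is closed and ${\rm C}^2$. Assumption~\ref{item:a2} is ensured by taking $\varepsilon<\gamma_\star/4$.

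\textbf{Step 2: the four contributions.} With $v_0,v_1=O(\Delta^{1/2}A)$ (taking $l_r(\Delta)=O(\Delta^{1/2}A)$), \cref{lemma:uniform-control-exact-truncated} gives $\kappa=O(\Delta^{-1/2}A)$. \cref{cor:gadgetised-driving-simulation-error} gives $\snorm{\ps\bs{E}}=O(\Delta^{-1/2}A^3)$. Because the truncated effective Hamiltonian is exact, $\bs{M}_2\equiv0$, so $\snorm{\bs{E}}=O(\Delta^{-2}v_0^3)=O(\Delta^{-1/2}A^3)$ as well. Then $\snorm{\ps\widehat{\bs{H}}}\le\snorm{\ps\bs{H}}+\snorm{\ps\bs{E}}=\poly{n}$, so $J_1=O(\gamma_\star^{-2}\poly{n}\,\varepsilon)$. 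The other terms are $J_3=\gamma_\star^{-1}\varepsilon$, $J_2=O(\gamma_\star^{-1}\Delta^{-1/2}A^3)$ and $J_4=O(\Delta^{-1/2}A)$. This is the content of \cref{prop:parallel-second-order-exact-assembled-bounds}, with $q=q'=1/2$ after absorbing factors of $A$ into the size parameter.

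\textbf{Step 3: parameter choice.} First fix $\varepsilon=1/\poly{n}$ so that $J_1+J_3\le\delta/2$ and $\varepsilon<\gamma_\star/4$. Next set $\kappa$ to be the explicit polynomial bound $C\Delta^{-1/2}A$; to avoid circularity, fix $\kappa=\delta/8$ in advance. Finally choose $\Delta=\poly{n}$ large enough that $J_2,J_4\le\delta/4$ and $\Delta\ge\Delta_*(\varepsilon,\eta,\kappa)$. Every requirement is a polynomial lower bound on $\Delta$, so the maximum is still $\poly{n}$. \cref{thm:berry-phase-simulation} then yields $d_{2\pi}(\vartheta,\wt\vartheta)\le\delta$.

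\textbf{Main obstacle.} I expect the hardest step to be justifying $\snorm{\ps\bs{E}}=O(\Delta^{-1/2}A^3)$ when the off-diagonal part $\Delta^{1/2}\bs{V}_1(s)$ depends on $s$. The derivative of the third-order remainder scales like $\Delta^{-2}v_0^2v_1$, and it must not pick up a worse factor from the large $s$-dependent off-diagonal coupling. I would verify this directly from \cref{eq:uniform-control-remainder-heff} with $p=2$, noting that the $s$-derivatives of $\bs{K}_1$ and $\bs{K}_2$ cancel exactly against $\ps\overline{\bs{H}}$.
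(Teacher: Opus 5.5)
Your proposal is correct and follows essentially the same route as the paper. Validity comes from \cref{lem:second-order-parameterised-reduction} on the global splitting; the bounds $\kappa=O(\Delta^{-1/2}A)$ and $\snorm{\ps\bs{E}}=O(\Delta^{-1/2}A^3)$ come from \cref{lemma:uniform-control-exact-truncated} and \cref{cor:gadgetised-driving-simulation-error}, with $\bs{M}_2\equiv 0$ since $\bs{H}^{[2]}_{\tn{eff}}=\overline{\bs{H}}$ identically in $s$; and the proof closes with \cref{prop:parallel-second-order-exact-assembled-bounds} and the parameter schedule of \cref{lem:poly-control-general}. One detail to tidy: \cref{thm:berry-phase-simulation} only gives $d_{2\pi}(\vartheta,\wt\vartheta)\le C(J_1+\cdots+J_4)$ for an implicit constant $C$, so run Step 3 with target $\delta/C$ instead of $\delta$, as the paper does in the proof of \cref{cor:parallel-first-order-simulation-berry-phase}.
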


The proof is analogous to that of Corollaries~\ref{cor:parallel-first-order-simulation-berry-phase}--\ref{cor:parallel-third-order-simulation-berry-phase}, adapted using \cref{cor:gadgetised-driving-simulation-error} and \cref{prop:parallel-second-order-exact-assembled-bounds}.

%% file: sections/extended-hardness-results.tex
\section{Extended Hardness Results}
\label{sec:extended-hardness-results}
Our reduction starts from the $5$-local Hamiltonians in the \cl{BQP}-hardness construction of Ref.~\cite{hayakawa2025computational}.
These families have the bulk-driving decomposition in Section~\ref{sec:prior-work}, with the $2$-local driving term given by \cref{eq:driving-pauli-decomposition-intro}.
Here $q$ is the qubit measured in the original \cl{BQP} circuit, and $c$ is the final clock qubit in the Feynman--Kitaev construction.

Before we establish the improved hardness results, we first discuss what parameters a valid reduction must preserve with respect to the \sc{Guided-State Berry Phase Estimation} problem. 

\subsection{Preserving Problem Parameters}
\label{sec:preserving-parameters}
A valid reduction must preserve every parameter of the problem definition.
For \sc{GSBPE} the parameters of interest are:
\begin{inparaenum}[(a)]
    \item the promise arcs,
    \item the promise gap $\g$,
    \item the guiding-state overlap $\delta_0$ and its classical description $C_\xi$, and
    \item the structural conditions (closed, gapped, non-degenerate).
\end{inparaenum}
We first record how the arcs and overlap transform under a bounded perturbation of the Berry phase and ground state, then consider how the remaining parameters are preserved under the reduction.

We map an instance $x = \langle\Upsilon, a, b, \g, C_\xi, \delta_0\rangle$ with Berry phase $\vartheta$ to $\wt{x} = \langle\wt{\Upsilon}, \wt{a}, \wt{b}, \g', C_{\wt{\xi}}, \wt{\delta}_0\rangle$ with Berry phase $\wt{\vartheta}$.
The reduction must map yes-instances to yes-instances and no-instances to no-instances, so an algorithm for $\wt{x}$ decides $x$.
Recall the disjoint arcs $I_{\tn{in}} = [a+\g, b-\g]_{2\pi}$, $I_{\tn{out}} = [b+\g, a-\g]_{2\pi}$ with $\abs{b - a} \ge 2\g$ and $\g = 1/\poly{n}$.
Suppose the reduction moves the phase by at most $\upsilon > 0$, i.e.\ $d_{2\pi}(\wt{\vartheta}, \vartheta) \le \upsilon$.
Keep the arc centres fixed, so $\wt{a} = a$ and $\wt{b} = b$, and set $\g' = \g - \upsilon$.
The resulting arcs are $\wt{I}_{\tn{in}} = [a+\g', b-\g']_{2\pi}$ and $\wt{I}_{\tn{out}} = [b+\g', a-\g']_{2\pi}$.
If $2\upsilon < \g$, these arcs remain disjoint and satisfy $|\wt{b} - \wt{a}| \ge 2\g' \ge 1/\poly{n}$; see \cref{fig:interval-preservation}.

For the guiding-state overlap we use the following elementary tracking bound.

\begin{lemma}[\cite{waite2025guided}]\label{lma:norm-tracking}
    Let $\ket{a}, \ket{b}, \ket{c} \in (\mathbb{C}^2)^{\otimes n}$ be normalised with $\norm{\ket{a} - \ket{b}} \le \epsilon_{ab}$ and $\abs{\braket{b}{c}}^2 \ge \delta_{bc}$.
    Then
    \begin{equation}
        \abs{\braket{a}{c}}^2
        \ge \big(\max\{0,\sqrt{\delta_{bc}}-\epsilon_{ab}\}\big)^2.
    \end{equation}
\end{lemma}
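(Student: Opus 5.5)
The plan is to reduce the statement to the triangle inequality together with a phase-free bound on the modulus of an inner product. Global phases never affect $\abs{\braket{\cdot}{\cdot}}$, so no gauge has to be chosen; only $\abs{\braket{a}{c}}$ and $\abs{\braket{b}{c}}$ enter.

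First, decompose $\ket{a} = \ket{b} + (\ket{a}-\ket{b})$ and take the inner product with $\ket{c}$. This gives $\braket{a}{c} = \braket{b}{c} + (\bra{a}-\bra{b})\ket{c}$. The reverse triangle inequality in $\mathbb{C}$ then yields
\begin{equation*}
    \abs{\braket{a}{c}} \ge \abs{\braket{b}{c}} - \abs{(\bra{a}-\bra{b})\ket{c}}.
\end{equation*}
Next, bound the error term with Cauchy--Schwarz and the normalisation $\norm{\ket{c}}=1$:
\begin{equation*}
    \abs{(\bra{a}-\bra{b})\ket{c}} \le \norm{\ket{a}-\ket{b}}\,\norm{\ket{c}} \le \epsilon_{ab}.
\end{equation*}
By hypothesis $\abs{\braket{b}{c}} \ge \sqrt{\delta_{bc}}$, so combining the two displays gives $\abs{\braket{a}{c}} \ge \sqrt{\delta_{bc}} - \epsilon_{ab}$.

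It remains to pass to squares, and this is the only delicate step. Squaring a lower bound is monotone only when the bound is non-negative. Since $\abs{\braket{a}{c}}\ge 0$ trivially, we can first improve the bound to $\abs{\braket{a}{c}} \ge \max\{0,\sqrt{\delta_{bc}}-\epsilon_{ab}\}$. Both sides are now non-negative, so squaring preserves the inequality and gives exactly the claimed bound.

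Nothing here depends on the dimension $(\mathbb{C}^2)^{\otimes n}$; it only fixes the ambient space. In the reduction, one applies the lemma with $\ket{a}=\ket{\wt\phi_0(0)}$, $\ket{b}=\V\ket{\phi_0(0)}$ and $\ket{c}=\V\ket{\xi}$. The isometry $\V$ preserves the overlap, so $\abs{\braket{b}{c}}^2 \ge \delta_0$, and the distance $\epsilon_{ab} = \mu$ is supplied by \cref{lem:ground-state-simulation} after a suitable phase choice.
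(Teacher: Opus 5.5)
Your proof is correct and complete. You split $\braket{a}{c} = \braket{b}{c} + (\bra{a}-\bra{b})\ket{c}$, apply the reverse triangle inequality and Cauchy--Schwarz, and clip at $0$ before squaring; that is the standard argument for this tracking bound. The paper gives no proof of its own, citing the lemma from \cite{waite2025guided}, and uses it only as you describe in the proof of \cref{prop:gsbpe-simulator-hardness}. There, the phase-invariance of $\abs{\braket{\cdot}{\cdot}}$ makes the phase choice in \cref{lem:ground-state-simulation} harmless, as you note.
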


\begin{restatable}{proposition}{gsbpesimulatorhardness}\label{prop:gsbpe-simulator-hardness}
    Let $\Upsilon$ be a target family and $(\wt{\Upsilon}, \V)$ a regular $(\eta, \varepsilon, \kappa)$-simulation of $\Upsilon$ witnessed by $\Phi = \{\U(s)\}$ with $\kappa = \snorm{\ps\U}$, satisfying Assumptions~\ref{item:a1}--\ref{item:a3}.
    Assume that the fixed isometry $\V = \bigotimes_a \V^a$ describes a tensor product of local encoding isometries.
    Let $\upsilon$ be the Berry-phase error of \cref{thm:berry-phase-simulation} (so $d_{2\pi}(\vartheta, \wt{\vartheta}) \le \upsilon$) and $\mu = O(\eta + \gamma_\star^{-1}\varepsilon)$ the ground-state error of \cref{lem:ground-state-simulation}.
    Given a \sc{GSBPE} instance $x=\langle\Upsilon,a,b,\g,C_\xi,\delta_0\rangle$ where $C_\xi$ is a classical description of a semi-classical encoded subset state $\ket{\xi}$ and with $2\upsilon < \g$ and $2\mu \le \sqrt{\delta_0}$, let $C_{\wt\xi}$ append the classical description of $\V$ to $C_\xi$.
    Then the instance
    \begin{align}
        \wt{x} &= \big\langle \wt{\Upsilon}, a , b , \g' , C_{\wt\xi} , \delta_0' \big\rangle,
        & \g' &= \g - \upsilon,
        & \delta_0' &= (\sqrt{\delta_0} - \mu)^2 \ge \delta_0/4,
    \end{align}
    is a valid \sc{GSBPE} instance with $\wt{x}\in\sc{yes}$ whenever $x\in\sc{yes}$ and $\wt{x}\in\sc{no}$ whenever $x\in\sc{no}$.
    Hence, \sc{GSBPE} for $\wt{\Upsilon}$ is at least as hard as for $\Upsilon$ in the sense of polynomial-time reducibility.
\end{restatable}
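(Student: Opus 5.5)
The proof splits into three checks: the promise arcs, the guiding-state overlap with its classical description, and the structural promises. Throughout, the simulator guarantees come from \cref{thm:berry-phase-simulation} and \cref{lem:ground-state-simulation}.

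\textbf{Arcs.} By \cref{thm:berry-phase-simulation} we have $d_{2\pi}(\vartheta,\wt\vartheta)\le\upsilon$. Suppose $x\in\sc{yes}$, so $\vartheta\in[a+\g,b-\g]_{2\pi}$. Every point within circular distance $\upsilon$ of this arc lies in $[a+\g-\upsilon,\,b-\g+\upsilon]_{2\pi}$. That arc is $[a+\g',b-\g']_{2\pi}$ with the $\upsilon$-shift going the wrong way, so I will do the bookkeeping carefully with $\g'=\g-\upsilon$. The point to establish is that $\wt\vartheta$ lies at circular distance at least $\g-\upsilon=\g'$ from both endpoints $a$ and $b$, on the same side as $\vartheta$. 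This holds because $\vartheta$ is at distance at least $\g$ from $a$ and $b$, and moving by at most $\upsilon<\g$ cannot cross either endpoint. Hence $\wt\vartheta\in\wt I_{\tn{in}}$. The \sc{no} case is symmetric, with $\wt\vartheta\in\wt I_{\tn{out}}$. The condition $2\upsilon<\g$ gives $\g'>\g/2=1/\poly{n}$, so the new arcs are disjoint, separated by $2\g'$, and specifiable in $\poly{n}$ bits.

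\textbf{Overlap.} Since $2\varepsilon<\gamma_\star$ by Assumption~\ref{item:a2}, \cref{lem:ground-state-simulation} at $s=0$ gives phases with $\norm{\ket{\wt\phi_0(0)}-\V\ket{\phi_0(0)}}\le\mu$. Set $\ket{\wt\xi}=\V\ket\xi$. Because $\V$ is an isometry, $|\langle\phi_0(0)|\V^\dagger\V|\xi\rangle|^2=|\braket{\phi_0(0)}{\xi}|^2\ge\delta_0$. Apply \cref{lma:norm-tracking} with $\ket a=\ket{\wt\phi_0(0)}$, $\ket b=\V\ket{\phi_0(0)}$ and $\ket c=\V\ket\xi$. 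This yields $|\braket{\wt\phi_0(0)}{\wt\xi}|^2\ge(\sqrt{\delta_0}-\mu)^2=\delta_0'$. The hypothesis $2\mu\le\sqrt{\delta_0}$ gives $\sqrt{\delta_0}-\mu\ge\sqrt{\delta_0}/2$, so $\delta_0'\ge\delta_0/4\ge1/\poly{n}$, and also $\delta_0'<1$.

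\textbf{Classical description.} $\V=\bigotimes_a\V^a$ is a tensor product of constant-size local isometries; for the gadgets, each appends mediators in $\ket0$. Then $\V\ket{S_{\mathcal W}}$ is again a semi-classical encoded subset state, obtained by composing each $W_j$ with the corresponding local factor of $\V$. Mediators can be absorbed by taking the isometry on a neighbouring system qubit to append them. So $C_{\wt\xi}$ is computed in classical polynomial time, as noted after Definition~\ref{def:semi-classical-encoded-subset-state}, and satisfies Items~\ref{item:uniform-preparation}--\ref{item:efficient-compilation}.

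\textbf{Structural promises.} Closure, ${\rm C}^2$ regularity and polynomial derivative bounds of $\wt\Upsilon$ follow from Assumptions~\ref{item:a1} and~\ref{item:a3}. For non-degeneracy and a poly gap, use \cref{lem:eigenvalue-simulation}: the lowest two eigenvalues of $\wt H(s)$ lie within $\varepsilon<\gamma_\star/4$ of those of $H(s)$, provided the band gap $\Delta_\star$ exceeds $\gamma_\star$ so that the first excited level of $\wt H$ lies in the low band. This gives $\wt\gamma_\star\ge\gamma_\star/2$.

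The main obstacle is this last point together with the arc bookkeeping modulo $2\pi$. For the gap, I will argue the band gap is polynomially large, $\Delta_\star\ge\Delta-2\snorm V$, in the gadget applications; in the abstract proposition I will state it as implicit in the simulation being valid. For the arcs, I will handle them via the explicit distance-to-endpoints formulation above rather than arc arithmetic.

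The reduction map $x\mapsto\wt x$ is computable in polynomial time because the gadget presentation and $C_{\wt\xi}$ are explicit. Together with the \sc{yes}/\sc{no} preservation, this gives the claimed hardness transfer.
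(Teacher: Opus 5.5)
Your proposal is correct and follows the paper's proof essentially step for step. The arc containment comes from $d_{2\pi}(\vartheta,\wt\vartheta)\le\upsilon$, the overlap bound from \cref{lem:ground-state-simulation} and \cref{lma:norm-tracking} with $\ket{\wt\xi}=\V\ket{\xi}$, and the new description by appending that of $\V$ to $C_\xi$.

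Your two hedges are unnecessary. First, $a+\g-\upsilon$ is exactly $a+\g'$, so there is no wrong-way shift. Second, the low band is by definition spanned by the lowest $2^n\ge 2$ eigenvectors, so $\wt\lambda_1(s)$ is automatically the first excited level of $\wt H(s)$, and \cref{lem:eigenvalue-simulation} gives a gap of at least $\gamma_\star-2\varepsilon>\gamma_\star/2$ with no condition on $\Delta_\star$. For computability the paper simply outputs $\g/2$ and $\delta_0/4$, which your inequalities $\g'>\g/2$ and $\delta_0'\ge\delta_0/4$ already justify.
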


A proof is given in Appendix~\ref{app:extended-hardness-results-proofs}.

In the reductions we construct that follow Ref.~\cite{oliveira2008complexity}, the encoding isometries are local and correspond to ``attaching'' polynomially many $\ket{0}$ ancilla qubits to the guiding state~\cite{cade2023improved,waite2026physically}.
More concretely, if $\ket{\xi}$ is the guiding state for the target Hamiltonian, then the guiding state for the simulator Hamiltonian is $\ket{\wt{\xi}} = \ket{\xi}\ket{0^m}$, where $m = |{\rm med}|$ is the number of ancilla qubits introduced by the local encoding isometries.
The reductions that follow Ref.~\cite{schuch2009computational} also employ local encoding isometries that attach ancilla qubits to the guiding state, though the specific structure of ancilla qubits may differ depending on the details of the perturbative gadget construction; we outline these states in \cref{eq:ancilla-qubit-states-sv} and also in Appendix~\ref{app:perturbative-gadget-extensions} (with their corresponding isometries).

\begin{figure}[!ht]
    \centering
    \begin{tikzpicture}
        \pic[scale=0.75]{latticepaths};
    \end{tikzpicture}
    \caption{Illustration of the geometric construction used in the analysis showing the arrangement of system qubits and the rerouting of connections between them.
    Dashed lines indicate the original connections between system qubits and bold lines indicate the rerouted connections.}
    \label{fig:geometric-embedding}
\end{figure}

\subsection{Improved Hardness to 2-Local Hamiltonians}
\label{sec:improved-hardness-to-2-local}
We proceed to prove our main result, showing that \sc{GSBPE} is \clw{BQP}{complete} for families of $2$-local Hamiltonians on square and triangular lattices.
The first step is to prove that spatially-sparse $5$-local Hamiltonians are \clw{BQP}{complete}.
The resulting Hamiltonians are of an identical form to those in Ref.~\cite{hayakawa2025computational}, but with a spatially-sparse interaction graph.

\begin{definition}[Spatially Sparse Graph~\cite{oliveira2008complexity}]
    A graph is spatially sparse if
    \begin{inparaenum}[(i)]
        \item every vertex is incident to $O(1)$ edges;
        \item the graph has a straight-line drawing in the plane in which every edge crosses $O(1)$ other edges and has length $O(1)$.
    \end{inparaenum}
\end{definition}

\begin{restatable}{theorem}{spatiallysparsefivelocal}\label{theorem:spatially-sparse-5-local}
    The $(2, 5, \delta_0)$-\sc{GSBPE} problem is \clw{BQP}{complete} for any $\delta_0 \in [1/\poly{n}, 1 - 1/\poly{n}]$, when the guiding state is a semi-classical encoded subset state and the Hamiltonians have a spatially-sparse interaction graph.
\end{restatable}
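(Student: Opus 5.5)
Containment in \cl{BQP} is immediate from the verification algorithm of Ref.~\cite{hayakawa2025computational}, which needs only $k$-locality and a guiding state satisfying Items~\ref{item:uniform-preparation} and~\ref{item:efficient-compilation}. Semi-classical encoded subset states satisfy both. All the work is therefore in hardness. I will start from the family $H_\zeta(s) = H + \zeta\w(s)$ of \cref{thm:hayakawa-bqp-hardness} and make its interaction graph spatially sparse without changing the Berry phase or the ground state at all, so that the promise parameters $a,b,\g,\delta_0$ and the guiding-state description carry over verbatim.

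\textbf{Step 1: sparsify the circuit.} Given a \cl{BQP} circuit on $n$ qubits with $L=\poly{n}$ gates from $\{I,\Gate{Had},T,\Gate{Cnot}\}$, I apply the standard trick of Ref.~\cite{oliveira2008complexity}. Insert SWAP networks, themselves compiled into \Gate{Cnot}s, so that every gate acts on nearest neighbours of a one-dimensional (or snake-ordered two-dimensional) layout of computational qubits. I then lay out the gates in a brickwork pattern so that consecutive gates touch only geometrically nearby qubits. This costs a polynomial overhead in $L$ and preserves the acceptance probabilities exactly, so the circuit still decides the same \cl{BQP} language with the same completeness and soundness.

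\textbf{Step 2: use a spatially arranged clock.} Next I build the Feynman--Kitaev history Hamiltonian of Ref.~\cite{hayakawa2025computational} with a unary or domain-wall clock whose clock qubits are placed geometrically next to the computational qubits touched at the corresponding time step. This is the interleaving used by Oliveira and Terhal. Each propagation term $h_t$ acts on $O(1)$ clock qubits (three in the unary-clock version) together with the $\le 2$ gate qubits, giving locality $5$. Under the brickwork layout these are within $O(1)$ distance of each other. The clock-validity terms act on consecutive clock qubits, and the input terms act on a qubit together with the first clock qubits, which I place adjacent. The driving term \cref{eq:driving-pauli-decomposition-intro} couples only the output qubit $q$ and the final clock qubit $c$. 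I choose the layout so that $q$ is routed by SWAPs next to $c$ at the end.

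I then check the two conditions of a spatially sparse graph. Each qubit is used by $O(1)$ gates in any window, and the clock is laid out along a path, so every vertex has $O(1)$ degree. Every edge has length $O(1)$ in the drawing and crosses $O(1)$ others.

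\textbf{Step 3: carry over the analysis and the guiding state.} Because the Hamiltonian has the same algebraic form as in Ref.~\cite{hayakawa2025computational}, their analysis applies unchanged to the new circuit. This covers the ground-state structure, the inverse-polynomial gap $\gamma_\star$, closure, ${\rm C}^2$ regularity and the Berry-phase dichotomy between $I_{\tn{in}}$ and $I_{\tn{out}}$. The guiding state there is a semi-classical subset state built from the clock and input register. Its support is still of polynomial size for the relabelled qubits, so it remains a semi-classical encoded subset state and $\delta_0$ is preserved for every $\delta_0\in[1/\poly{n},1-1/\poly{n}]$.

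\textbf{Main obstacle.} The delicate step is Step 2: checking that the geometric placement gives bounded degree for the clock qubits. Each clock qubit participates in the $O(1)$ propagation terms of nearby times, and the layout must stay planar-drawable with bounded crossings even though computational qubits are revisited at many time steps. I would handle this as in Ref.~\cite{oliveira2008complexity}, by giving each time step its own column of clock qubits so that clock positions advance spatially with time. Computational qubits are then moved along using the SWAP-based layout, effectively as fresh qubits per column, so that every vertex is reused only $O(1)$ times.
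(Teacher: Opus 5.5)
Your proposal follows the paper's route: apply the Oliveira--Terhal spatial-sparsification~\cite{oliveira2008complexity} to the history-state construction of Ref.~\cite{hayakawa2025computational}, keep the guiding state a semi-classical subset state, and take \cl{BQP} containment from the existing algorithm. You spell out the circuit and clock layout that the paper only cites, and you re-run the original analysis on the sparsified circuit rather than passing through \cref{prop:gsbpe-simulator-hardness}.

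Two small points to tighten. First, the arcs $a,b,\g$ and the guiding-state description are recomputed by the original map for the longer sparsified circuit; they are not carried over verbatim. Second, the initialisation terms must also cover your fresh per-column qubits, otherwise the ground space is degenerate. Each such term should couple its qubit to the clock qubit at that qubit's first-use time, placed beside it, rather than to the start of the clock; as you wrote it, those terms would be long-range.
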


A short proof is provided in Appendix~\ref{app:extended-hardness-results-proofs}; we sketch the main ideas here.
Apply the spatially sparse reduction of Ref.~\cite{oliveira2008complexity} to the \cl{BQP}-hardness construction of Ref.~\cite{hayakawa2025computational}.
This preserves the form of the Hamiltonian (keeping $\w(s)$ effectively unchanged) while making its interaction graph spatially sparse (the $5$-local bulk interactions are now spatially sparse), and the arguments of Ref.~\cite{waite2026physically} preserve the guiding-state overlap and classical description.

Starting from this result, we follow the geometric reductions of Refs.~\cite{oliveira2008complexity,cubitt2016complexity,piddock2017complexity,waite2025complexitya,waite2026physically} to obtain $2$-local interactions on square and triangular lattices.

\begin{restatable}[Main theorem]{theorem}{maintheorem}\label{thm:main-theorem}
    The $(2,2,\delta_0)$-\sc{GSBPE} problem is \clw{BQP}{complete} for any $\delta_0 \in [1/\poly{n}, 1 - 1/\poly{n}]$, when the guiding state is a semi-classical encoded subset state and the Hamiltonians are restricted to square or triangular lattices.
\end{restatable}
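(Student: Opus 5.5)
Containment in \cl{BQP} is immediate from the verification algorithm of Ref.~\cite{hayakawa2025computational}. That algorithm requires only $k$-locality together with a guiding state satisfying Items~\ref{item:uniform-preparation} and~\ref{item:efficient-compilation}, and semi-classical encoded subset states satisfy both. For hardness we start from \cref{theorem:spatially-sparse-5-local}: a spatially sparse, closed, poly-gapped ${\rm C}^2$ family $\bs{H}_\zeta(s)=\bs{H}+\zeta\bmw(s)$, where $\bs{H}$ is $5$-local and $\bmw$ is the $2$-local driving term of \cref{eq:driving-pauli-decomposition-intro}. We then apply a constant number $M=O(1)$ of parallel gadget layers in series. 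Each layer is certified as a regular $(\eta_i,\varepsilon_i,\kappa_i)$-simulation, and the layers are combined via \cref{cor:composition-of-simulations}.

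The layers are as follows. First, \textsl{subdivision} and \textsl{3-to-2-local} layers reduce the static $5$-local bulk to $2$-local terms while the driving term is carried. These use the carried-driving results \cref{cor:parallel-second-order-simulation-berry-phase,cor:parallel-third-order-simulation-berry-phase}, whose hypothesis $(\bs V)_{--}=0$ we arrange by placing the bulk entirely in the gadgetised part. Next come \textsl{fork} and \textsl{cross} layers, which reduce vertex degree and remove edge crossings. Finally, further \textsl{subdivision} layers route edges along the square or triangular lattice, as in \cref{fig:geometric-embedding}. In these later layers the $s$-dependent terms $X_q$, $Y_qZ_c$, and so on may themselves need to be routed. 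For this we use the parameter-dependent second-order layer of Definitions~\ref{def:parameter-dependent-second-order-layer} and~\ref{def:general-driving-simulator}, with polarisation triples $(X,Y;R)$ chosen via \cref{eq:polarisation-map} and multiplied by the smooth coefficients $\cos(2\pi s)$ and $\sin(2\pi s)$, which are closed and ${\rm C}^2$. \cref{cor:parallel-second-order-exact-simulation-berry-phase} then gives control of the Berry phase for these layers. At every layer the mediators are unit-penalty qubits with the isometry \cref{eq:mediator-isometry}, so the composed encoding $\V=\V_M\circ\cdots\circ\V_1$ is a tensor product of local isometries that append $\ket{0}$ ancillas.

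The parameters must then be budgeted. For each layer $i$ we need the target gap; by \cref{lem:eigenvalue-simulation} and the requirement $2\varepsilon_i<\gamma_\star/2$, the gap of layer $i$ stays at least $\gamma_\star/2^{i}=1/\poly{n}$. The size parameters $A_i,B_i$ also remain polynomial. We choose the penalties successively with $\Delta_i=\poly{n}$, per \cref{lem:poly-control-general}, large enough that (a) each stage has $\eta_i,\varepsilon_i,\kappa_i\le 1/\poly{n}$, (b) the composition conditions $8\varepsilon_i\le\Delta_{i-1}$ hold, and (c) the cross terms $\Delta_{i-1}^{-1}\snorm{\ps\wt H_i}$ in the composed $\kappa$ are small. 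Condition (c) requires $\Delta_{i-1}$ to dominate $\snorm{\ps\wt H_i}=O(\Delta_i^{1/2}A)$, so we instead select the penalties in decreasing-to-increasing order consistent with this inequality, for example $\Delta_i=\Delta_{i-1}^{c}$ with $c$ constant. Since $M$ is constant, the exponents stay polynomial. We then apply \cref{thm:berry-phase-simulation} to the composite simulation to obtain $\upsilon<\g/2$, and \cref{lem:ground-state-simulation} to obtain $\mu\le\sqrt{\delta_0}/2$. Invoking \cref{prop:gsbpe-simulator-hardness} produces a valid instance with $\g'=\g-\upsilon=1/\poly{n}$ and $\delta_0'\ge\delta_0/4$. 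The guiding description is updated by appending $\ket{0^m}$, which is classical-polynomial. Finally, monotonicity in $\delta_0$ together with the fact that the original hardness holds for every $\delta_0\in[1/\poly{n},1-1/\poly{n}]$ (rescaling $\delta_0\mapsto4\delta_0$) covers the full range.

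We expect the main obstacle to be the interplay between successive penalty scales. In the composed $\kappa$ of \cref{cor:composition-of-simulations}, the term $\Delta_{i-1}^{-1}\snorm{\ps\wt H_i}$ couples consecutive layers. Moreover, for gadgetised driving terms $\snorm{\ps\wt H_i}$ grows like $\Delta_i^{1/2}$, while each later layer must also see the previous layer's large ($\poly{\Delta_{i-1}}$-norm) terms as its target. This forces a careful hierarchy of polynomial exponents. We would first try to route all driving terms in the final layer only, so that earlier layers carry $\bmw$ with $O(1)$ derivative and only one layer has $\Delta^{1/2}$-scale derivatives. A secondary check is that cross-gadget terms vanish, via \cref{lem:parallel-effective-hamiltonian-decomposition}, which requires mediator supports to be disjoint in each layer.
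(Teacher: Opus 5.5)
Your overall route is the paper's. Containment comes from the algorithm of \cite{hayakawa2025computational}. Hardness comes from pushing the spatially sparse $5$-local family of \cref{theorem:spatially-sparse-5-local} through the constant-length \textsl{subdivision}, \textsl{3-to-2-local}, \textsl{fork}, \textsl{cross}, \textsl{subdivision} chain, then transferring the promise with \cref{thm:berry-phase-simulation}, \cref{lem:ground-state-simulation} and \cref{prop:gsbpe-simulator-hardness}. Where you go beyond the paper's proof, namely the explicit penalty budget for a single composed simulation, the argument does not close. (The paper leaves this step to the convention $\Delta_*\gg M\max\{\eta_*^{-1}\varepsilon_*,\snorm{H},\kappa_*^{-1}\snorm{\ps\wt H_*}\}$ of \cref{sec:generalised-composition}, which runs into the same tension.)

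In \cref{cor:composition-of-simulations} the coupling term for stage $i$ is $\Delta_{i-1}^{-1}\max\{\snorm{\ps\wt H_{i-1}},\snorm{\ps\wt H_i}\}$. Any stage $i\ge2$ that gadgetises $s$-dependent couplings has $\snorm{\ps\wt H_i}=\Omega(\Delta_i^{1/2})$. But its target $\wt H_{i-1}$ carries the previous penalties $\Delta_{i-1}\bs H_0^{(i-1)}$, so $L_0\gtrsim\Delta_{i-1}$ in \cref{lem:second-order-parameterised-reduction}, and the validity threshold there gives $\Delta_i\gtrsim\varepsilon_i^{-2}\Delta_{i-1}^{6}$. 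Hence $\Delta_{i-1}^{-1}\snorm{\ps\wt H_i}\gtrsim\Delta_{i-1}^{2}$, so no choice $\Delta_i=\Delta_{i-1}^{c}$ satisfies both your condition (c) and that threshold. Routing all driving terms in the final layer makes this term largest rather than removing it. It is harmless only when the $s$-dependent gadgetisation happens in stage $1$, and your plan rules that out because the geometric stages must reroute edges carrying $s$-dependent couplings. There is also a second missing ingredient in the same step. \cref{cor:composition-of-simulations} returns only $(\eta,\varepsilon,\kappa)$, whereas $J_1$ and $J_2$ in \cref{thm:berry-phase-simulation} need $\snorm{\ps\widehat H}$ and $\snorm{\ps E}$ for the composite witness $T_M\U_M\cdots T_2\U_2\U_1$, and you never bound these.

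The cleaner repair, which removes both problems, is not to compose witnesses at all. Apply \cref{thm:berry-phase-simulation} and \cref{prop:gsbpe-simulator-hardness} stage by stage, treating stage $i$ as a regular simulation of the closed, poly-gapped family $\wt\Upsilon_{i-1}$. Once $\Delta_{i-1}$ is fixed, that family has $\poly{n}$ norms and derivatives, so the per-stage corollaries of \cref{sec:perturbative-gadget-reductions} apply after retaining polynomial factors. This yields a chain of $M=O(1)$ Karp reductions with $\g_i=\g_{i-1}-\upsilon_i$ and $\delta_{0,i}=(\sqrt{\delta_{0,i-1}}-\mu_i)^2$, where each $\Delta_i$ is chosen after $\Delta_{i-1}$.

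If you want to keep one composed simulation, you must bound $\snorm{\ps E}$ for the composite and also sharpen the derivative estimate behind \cref{cor:composition-of-simulations}. Since $\mathcal{E}_N(\wt H_i)\subseteq{\rm Im}(\U_i)$, one has $P_{N,i}=\U_i\widehat P\,\U_i^\dagger$. Here $\widehat P$ is the lowest-$N$ projector of $\U_i^\dagger\wt H_i\U_i=\wt H_{i-1}-E_i$, whose gap is at least $\Delta_{i-1}-2\varepsilon_i$. This gives $\snorm{\ps P_{N,i}}=O\big(\kappa_i+\Delta_{i-1}^{-1}(\snorm{\ps\wt H_{i-1}}+\snorm{\ps E_i})\big)$. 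Now only the previous layer's derivative, which is at most $O(\Delta_{i-1}^{1/2}\poly{n})$, competes with $\Delta_{i-1}$, and increasing penalties work.

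Two smaller slips. First, $(\bs V)_{--}=0$ cannot in general be arranged, because the static block-diagonal counterterms are nonzero (e.g. $(V_2)_{--}=(V_1)_{-+}H_0^{\pattce}(V_1)_{+-}$ in the third-order layer). It is not needed, though: the derivative bounds of \cref{lem:carried-driving-simulation-error} only use that the static part is $s$-independent. Second, reaching $\delta_0$ near $1-1/\poly{n}$ requires $\mu$ to be small compared with $1-\sqrt{\delta_0}$; the rescaling $\delta_0\mapsto4\delta_0$ does not achieve this.
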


A proof is given in Appendix~\ref{app:extended-hardness-results-proofs}; we sketch the main ideas here.
Starting from the spatially sparse $5$-local Hamiltonians of \cref{theorem:spatially-sparse-5-local}, apply the \textsl{subdivision} gadget to reduce the bulk locality from $5$ to $3$~\cite{kempe2006complexity,oliveira2008complexity}.
Then apply the \textsl{3-to-2-local} gadget to reduce the bulk locality from $3$ to $2$~\cite{oliveira2008complexity}.
These are static gadget templates, while the present reduction carries their $s$-dependent terms using the analysis of Section~\ref{sec:perturbative-gadget-reductions}.

Next apply geometric reductions that control Pauli degree, remove non-planar crossings, and embed the resulting planar interaction graph into the square or triangular lattice.
Recall that the Pauli degree of a qubit is the triple $({\rm deg}_X, {\rm deg}_Y, {\rm deg}_Z)$, where ${\rm deg}_\sigma$ counts the number of $\sigma$-type Pauli operators acting on the qubit.
Figure~\ref{fig:gadgets-for-graph-elements} illustrates the gadgets used for these graph elements, and Appendix~\ref{app:perturbative-gadget-extensions} is intended to give their parameter-dependent constructions.
Subject to those constructions, Section~\ref{sec:perturbative-gadget-reductions} controls the Berry-phase error, while Ref.~\cite{waite2026physically} preserves the guiding-state overlap and classical description.
Proposition~\ref{prop:gsbpe-simulator-hardness} then transfers hardness to the simulator family.

Both \cref{theorem:spatially-sparse-5-local,thm:main-theorem} also hold for semi-classical subset states because the encoding isometries are trivial.
We state the results for semi-classical encoded subset states to accommodate future reductions with non-trivial (but not complicated) encodings, such as those of Ref.~\cite{schuch2009computational}.

Together, these reductions establish \cref{thm:main-theorem}; the parameter-dependent gadget constructions are given in Appendix~\ref{app:perturbative-gadget-extensions}.

We now extend the hardness result to Hamiltonians generated by so-called \emph{weighted Heisenberg interactions} on square or triangular lattices.
Take $\boldsymbol{\alpha} = (\alpha_1, \alpha_2, \alpha_3) \in \mathbb{R}^3$ and define the $2$-local interaction $W(\boldsymbol{\alpha}) = \alpha_1 XX + \alpha_2 YY + \alpha_3 ZZ$ as the weighted Heisenberg interaction.
Suppose that $\boldsymbol{\alpha}(s)$ is a parameter-dependent vector function of $s$, such that $\alpha_i(s)$ are ${\rm C}^2$ and closed; we say $\boldsymbol{\alpha}(s)$ is therefore ${\rm C}^2$ and closed.
Starting from the $2$-local lattice Pauli Hamiltonians obtained from \cref{thm:main-theorem}, we can apply the gadget chain reduction framework of Schuch and Verstraete~\cite{schuch2009computational} to obtain Hamiltonians generated by parameter-dependent weighted Heisenberg interactions.

\begin{restatable}[Main corollary]{corollary}{maincorollary}\label{cor:main-corollary}
    The $(2,2,\delta_0)$-\sc{GSBPE} problem is \clw{BQP}{complete} for any $\delta_0 \in [1/\poly{n}, 1 - 1/\poly{n}]$, when the guiding state is a semi-classical encoded subset state and the Hamiltonian terms are generated by weighted Heisenberg interactions, with local fields, restricted to square or triangular lattices.
\end{restatable}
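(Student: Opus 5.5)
Containment in \cl{BQP} needs no new argument. Weighted Heisenberg families with local fields are $2$-local, so the verification algorithm of Ref.~\cite{hayakawa2025computational} applies, and semi-classical encoded subset states satisfy Items~\ref{item:uniform-preparation} and~\ref{item:efficient-compilation}. The work is in hardness.

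For hardness, start from a \sc{GSBPE} instance produced by \cref{thm:main-theorem}. This is a closed, poly-gapped ${\rm C}^2$ family of $2$-local Pauli Hamiltonians on a square or triangular lattice, with an inverse-polynomial arc margin $\g$ and overlap $\delta_0$. We then apply the Schuch--Verstraete gadget chain~\cite{schuch2009computational}, in its parameterised form from Appendix~\ref{app:perturbative-gadget-extensions}, as a constant number $M=O(1)$ of layers in series:
\begin{inparaenum}[(i)]
    \item \textsl{parameter-fixing}, which makes the interaction strengths uniform so that $s$-dependence appears only in coefficients that the later gadgets can carry;
    \item \textsl{Pauli-to-Ising}, which converts each $\sigma_u\sigma'_v$ into $ZZ$-type couplings via local basis rotations absorbed into mediators and local fields;
    \item \textsl{Ising-to-XX}; and
    \item \textsl{XX-to-Heisenberg}.
\end{inparaenum}
Each layer has the second-order form of Definition~\ref{def:general-driving-simulator}, with an $s$-independent penalty $\bs{H}_0$ on distinct mediators. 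In some layers the perturbation is $s$-independent and the driving is carried, so \cref{lem:second-order-reduction} and \cref{cor:parallel-second-order-simulation-berry-phase} apply. Where driving coefficients are gadgetised into $\boldsymbol{\alpha}(s)$, \cref{lem:second-order-parameterised-reduction} and \cref{cor:parallel-second-order-exact-simulation-berry-phase} apply. In every case the layer is a regular $(\eta_i,\varepsilon_i,\kappa_i)$-simulation with all parameters inverse-polynomial and $\Delta_i=\poly{n}$. The mediators sit on edges or faces of the lattice, so after a constant rescaling (and the subdivision embedding of \cref{fig:geometric-embedding}) the geometry stays square or triangular. Closure and ${\rm C}^2$ regularity of $\boldsymbol{\alpha}(s)$ follow because each coefficient is a polynomial in the original closed ${\rm C}^2$ coefficients and in powers $\Delta^{1/2}$ that do not depend on $s$.

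Next we compose the layers using \cref{cor:composition-of-simulations}. Choosing the penalties in a hierarchy, $\Delta_i\gg M\max\{\eta_*^{-1}\varepsilon_*,\snorm{\wt H_{i-1}},\kappa_*^{-1}\snorm{\ps\wt H_*}\}$, keeps the composed $(\eta,\varepsilon,\kappa)$ inverse-polynomial. The norms grow polynomially at each layer, so after $M=O(1)$ layers every quantity is still $\poly{n}$. \cref{thm:berry-phase-simulation}, together with \cref{lem:poly-control-general} applied layer by layer, then gives $d_{2\pi}(\vartheta,\wt\vartheta)\le\upsilon<\g/2$, and \cref{lem:ground-state-simulation} gives $\mu\le\sqrt{\delta_0}/2$.

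Finally, \cref{prop:gsbpe-simulator-hardness} transfers the promise, with $\g'=\g-\upsilon$ and $\delta_0'\ge\delta_0/4$. To keep the full range of $\delta_0\in[1/\poly{n},1-1/\poly{n}]$, start from an instance with a suitably adjusted overlap, which monotonicity and \cref{theorem:spatially-sparse-5-local} allow. The guiding-state description is updated by appending the local encoding isometries, which for Schuch--Verstraete are fixed few-qubit states (singlet-like ancilla states, \cref{eq:ancilla-qubit-states-sv}) rather than $\ket{0}$. These remain efficiently describable semi-classical encoded subset states.

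The main obstacle is the non-product, entangled ground spaces of the Schuch--Verstraete mediators. There the isometry $\bs{\V}$ of \cref{eq:mediator-isometry} does not apply. One has to check that the modified fixed isometry is still $s$-independent and tensor-product, and that the excitation-sector decomposition of \cref{lem:parallel-effective-hamiltonian-decomposition} still removes cross-gadget terms when the penalty projector is a multi-qubit rank-one projector. Our first attempt would be a fixed local unitary change of basis per gadget that maps each mediator ground state to $\ket{0\cdots0}$. This reduces the situation to single-qubit-like penalties, after which \cref{lem:liouville-pseudoinverse-bound} and \cref{clm:excitation-sector-decomposition} apply verbatim. Because the change of basis does not depend on $s$, it leaves $\kappa$ and $\snorm{\ps E}$ unchanged.
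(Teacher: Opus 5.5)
Your proposal is correct and follows the paper's own route: containment via the algorithm of Ref.~\cite{hayakawa2025computational}; the parameterised Schuch--Verstraete chain of Appendix~\ref{app:perturbative-gadget-extensions} applied in series to the lattice instances of \cref{thm:main-theorem}, with Berry-phase and ground-state errors controlled as in Section~\ref{sec:perturbative-gadget-reductions}; and the promise and guiding state transferred by \cref{prop:gsbpe-simulator-hardness} (your explicit re-adjustment of $\delta_0$ via monotonicity is a welcome step the paper leaves implicit). Two descriptive points are off, though neither breaks the argument: first, each Schuch--Verstraete gadget in the paper uses a \emph{single} mediator qubit with a single-qubit penalty ($\Pi_{+,\pi/4}$, $\Pi_{+,i}$ or $\ketbra{1}$) whose ground state is one of the product states in \cref{eq:ancilla-qubit-states-sv}, not an entangled or singlet-like state, so your `main obstacle' is just a fixed single-qubit rotation (the paper's `trivial modification' of \cref{eq:mediator-isometry}); second, the parameterised \textsl{parameter-fixing} gadget does not equalise coupling strengths but deliberately places all of $f(s)$ on one mediator edge, because a symmetric split would introduce $\sqrt{f(s)}$, which fails to be ${\rm C}^2$ where the driving coefficients $\cos(2\pi s)/2$ and $\sin(2\pi s)/2$ vanish---and this asymmetry is what makes your `polynomial in the original coefficients' regularity claim true.
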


The proof of this corollary follows by adapting the proof of \cref{thm:main-theorem} to the reduction chain of Ref.~\cite{schuch2009computational} and applying the parameterised extensions of the \textsl{parameter-fixing}, \textsl{Pauli-to-Ising}, \textsl{Ising-to-XX}, and \textsl{XX-to-Heisenberg} gadgets, as detailed in Appendix~\ref{app:perturbative-gadget-extensions}.
The reduction to the square lattice is achieved via arguments in Ref.~\cite{schuch2009computational} and reductions to the triangular lattice follow from Ref.~\cite{waite2026physically}.
Moreover, the preservation of the guiding-state overlap and classical description is ensured by the constructions in Ref.~\cite{waite2026physically}; in this case, the local isomatries still correspond to the additonal of single ancilla qubits in state such as
\begin{align}\label{eq:ancilla-qubit-states-sv}
    &\ket{0}, &
    &\frac{1}{\sqrt{2}}(\ket{0} - \i \ket{1}), &
    &\frac{1}{\sqrt{2}}(\ket{0} + \e^{\i \pi/4} \ket{1}),
\end{align}
which are trivially handled using the arguments in Refs.~\cite{cade2023improved,waite2025guided,waite2026physically} and remain semi-classical encoded subset states.

%% file: sections/berry-phase-estimation-parameterised-1-local.tex
\section{Berry Phase Estimation for Parameterised Families of 1-Local Hamiltonians}
\label{sec:berry-phase-estimation-parameterised-1-local}
In this section, we prove that the Berry phase of a $1$-local Hamiltonian family can be estimated classically to inverse-polynomial precision under the ${\rm C}^2$ assumptions of \cref{sec:preliminaries}.
The algorithm approximates the Berry phase by a gauge-invariant cyclic product of one-qubit ground-state projectors.

For each $s$, collect all terms acting non-trivially on qubit $j$ into a single-qubit Hamiltonian and remove its scalar part.
This gives
\begin{equation}\label{eq:one-local-traceless-decomposition}
    H(s)=c(s)I+\sum_{j=1}^n g_j(s),
\end{equation}
where each $g_j(s)$ is a traceless Hermitian operator acting on qubit $j$.
The scalar term $c(s)I$ does not affect the ground-state projectors or the Berry phase.
Let $P_j(s)$ be the ground-state projector of $g_j(s)$, and let $\gamma_j(s)$ be its spectral gap.
Since the terms in \cref{eq:one-local-traceless-decomposition} act on distinct qubits, the unique ground-state projector and the spectral gap of $H(s)$ are
\begin{align}\label{eq:one-local-product-ground-space}
    P(s)&=\bigotimes_{j=1}^n P_j(s), &
    \gamma(s)&=\min_{j\in[n]}\gamma_j(s).
\end{align}
In particular, $\gamma_j(s)\geq\gamma_\star$ for every $j$ and $s$, where $\gamma_\star$ is the lower bound on the spectral gap of the single-qubit terms.
Note that $\gamma_\star \geq 1/\poly{n}$ can be encoded using $\poly{n}$ bits.

Let $L_r$ be the upper bound on $\snorm{\partial_s^rH}$ for $r\in\{1,2\}$.
The uniqueness of the decomposition into a scalar part and traceless single-qubit parts implies
\begin{equation}\label{eq:one-local-derivative-scales}
    \snorm{\partial_s^r g_j}\leq L_r
\end{equation}
for every $j\in[n]$ and $r\in\{1,2\}$.
The operators $\partial_s^r g_j(s)$ commute, are traceless, and act on distinct qubits, so the spectral width of their sum is twice the sum of their norms.
This spectral width is unchanged by the scalar part of $\ps^r H(s)$ and is at most $2\norm{\partial_s^rH(s)}$.
Closedness of $H(s)$ and uniqueness of the traceless decomposition imply $g_j(1)=g_j(0)$ and hence $P_j(1)=P_j(0)$ for every $j$.

Note that since each $P_j(s)$ has rank one, it determines a line rather than a single vector.
That is, any two normalised vectors spanning ${\rm Im}(P_j(s))$ differ by a phase.
We must therefore choose a specific normalised vector spanning ${\rm Im}(P_j(s))$ at each $s$; we require this choice to be smooth in $s$ and to return to itself at $s=1$.
Such a choice always exists in this case.

Fix any normalised vector $\ket{u_j(0)}$ spanning ${\rm Im}(P_j(0))$ and let $\ket{u_j(s)}$ solve 
\begin{equation}\label{eq:one-local-parallel-transport-ode}
    \ps\ket{u_j(s)} = \comm{\ps P_j(s)}{P_j(s)}\ket{u_j(s)} .
\end{equation}
Such a solution exists and is unique, and it remains a normalised vector spanning ${\rm Im}(P_j(s))$ for all $s$.
Moreover, it is parallel-transported in the sense that $\braket{u_j(s)}{\ps u_j(s)}=0$.
However, the vector $\ket{u_j(s)}$ need not return to itself at $s=1$; it may acquire a phase, giving
\begin{equation}
    \ket{u_j(1)} = \e^{\i\vartheta_j}\ket{u_j(0)},
\end{equation}
where $\vartheta_j\in\mathbb{R}$ is the local Berry phase.
We can ``unwind'' this phase by defining a new vector
\begin{equation}
    \ket{\phi_{0,j}(s)} \coloneqq \e^{-\i\vartheta_j s}\ket{u_j(s)},
\end{equation}
which is smooth in $s$ and returns to itself at $s=1$.
Therefore, the vector $\ket{\phi_{0,j}(s)}$ provides a smooth, closed choice of representative for the line determined by $P_j(s)$.

\begin{remark}
    Note that $\vartheta_j$ is identified with the usual connection integral (cf. \cref{eq:berry-phase}).
    Since $\braket{u_j(s)}{\ps u_j(s)} = 0$, differentiating $\ket{\phi_{0,j}(s)} = \e^{-\i\vartheta_j s}\ket{u_j(s)}$ gives
    \begin{equation}\label{eq:one-local-constant-connection}
        \i\mel{\phi_{0,j}(s)}{\ps}{\phi_{0,j}(s)}
        = \i\big(-\i\vartheta_j + \braket{u_j(s)}{\ps u_j(s)}\big)
        = \vartheta_j ,
    \end{equation}
    so the local Berry connection is constant and its integral over $[0,1]$ equals $\vartheta_j$.
\end{remark}

Recall that $\ket{\phi_0(s)}=\bigotimes_{j=1}^n\ket{\phi_{0,j}(s)}$ is a closed ground-state section for $H(s)$, and its Berry connection satisfies
\begin{equation}\label{eq:one-local-berry-phase-sum}
    \i\mel{\phi_0(s)}{\partial_s}{\phi_0(s)}
    =\sum_{j=1}^n\i\mel{\phi_{0,j}(s)}{\partial_s}{\phi_{0,j}(s)}.
\end{equation}
Consequently, the global Berry phase is the sum of the local Berry phases modulo $2\pi$.

Fix an integer $N\geq2$ and define the mesh points $s_\ell=\ell/N$ for $\ell\in\{0,\ldots,N-1\}$.
For each qubit $j$, define the cyclic Bargmann invariant~\cite{bargmann1964note,rabei1999bargmann}
\begin{equation}\label{eq:one-local-bargmann-invariant}
    \varDelta_{j,N}\coloneqq \Tr\left[P_j(s_0)P_j(s_{N-1})\cdots P_j(s_1)\right],
\end{equation}
and set
\begin{align}\label{eq:one-local-discrete-phase}
    \varDelta_N &\coloneqq \prod_{j=1}^n \varDelta_{j,N}, &
    \vartheta_N &\coloneqq \arg \varDelta_N \pmod{2\pi}.
\end{align}
If $\ket{\phi_{0,j}^{[\ell]}}$ is any normalised vector in the image of $P_j(s_\ell)$ and $\ket{\phi_{0,j}^{[N]}}\coloneqq\ket{\phi_{0,j}^{[0]}}$, then
\begin{equation}\label{eq:one-local-bargmann-overlaps}
    \varDelta_{j,N}
    =\prod_{\ell=0}^{N-1}\braket{\phi_{0,j}^{[\ell+1]}}{\phi_{0,j}^{[\ell]}}.
\end{equation}
The phase of \cref{eq:one-local-bargmann-overlaps} is invariant under an independent rephasing of every sampled eigenvector since all phase factors cancel around the cycle.
The cyclic definition also uses only mesh points in $[0,1)$; closure is enforced by the condition $P_j(1) = P_j(0)$.

We next bound the discretisation error using the upper bounds on the first two derivatives of the Hamiltonian.

\begin{restatable}{lemma}{oneLocalDiscreteBerryPhase}\label{lem:one-local-discrete-bp}
    Define
    \begin{equation}\label{eq:one-local-second-derivative-scale}
        M_2\coloneqq \gamma_\star^{-1} L_2 + 5\gamma_\star^{-2} L_1^2,    
    \end{equation}
    with $\overline{M}_2 \coloneqq \max\{1,M_2\}$.
    If $N \geq 2n \overline{M}_2$, then $\varDelta_N \neq 0$ and
    \begin{align}\label{eq:one-local-discrete-bp-bound}
        d_{2\pi}(\vartheta_N,\vartheta) & \leq \frac{n M_2}{N}, &
        \abs{\varDelta_N} & \geq 1 - \frac{n M_2}{2 N} \geq \frac{3}{4}.
    \end{align}
\end{restatable}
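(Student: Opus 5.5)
The plan is to reduce everything to one qubit, using the parallel-transported vectors $\ket{u_j(s)}$ of \cref{eq:one-local-parallel-transport-ode} rather than the rephased section $\ket{\phi_{0,j}(s)}$. This way the size of the representative of $\vartheta_j$ never enters the estimate.

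\textbf{Step 1: rewrite each Bargmann invariant.} Since \cref{eq:one-local-bargmann-overlaps} holds for any choice of normalised representatives, I take $\ket{\phi^{[\ell]}_{0,j}}=\ket{u_j(s_\ell)}$ for $\ell=0,\dots,N-1$. The closing vector is $\ket{\phi^{[N]}_{0,j}}=\ket{u_j(0)}=\e^{-\i\vartheta_j}\ket{u_j(1)}$. This gives
\begin{equation*}
    \varDelta_{j,N}=\e^{\i\vartheta_j}\prod_{\ell=0}^{N-1}\braket{u_j(s_{\ell+1})}{u_j(s_\ell)},\qquad s_N\coloneqq 1 .
\end{equation*}
Taking the product over $j$ and using \cref{eq:one-local-berry-phase-sum}, which gives $\vartheta\equiv\sum_j\vartheta_j$, yields $\varDelta_N=\e^{\i\vartheta}\,Z$. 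Here $Z$ is a product of $nN$ nearest-neighbour overlaps. It therefore suffices to show $\abs{Z-1}\le nM_2/(2N)\le 1/4$. Both claims in \cref{eq:one-local-discrete-bp-bound} then follow. First, $\abs{\varDelta_N}=\abs{Z}\ge 1-nM_2/(2N)$, and $nM_2/(2N)\le 1/4$ because $N\ge 2n\overline{M}_2$. Second, the elementary bound $\abs{\arg Z}\le 2\abs{Z-1}$ for $\abs{Z-1}\le 1/4$ gives $d_{2\pi}(\vartheta_N,\vartheta)\le nM_2/N$.

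\textbf{Step 2: a single overlap.} Set $h=1/N$. Taylor-expand $\ket{u_j(s+h)}$ around $s$, with integral remainder. The parallel-transport condition $\braket{u_j}{\ps u_j}=0$ kills the first-order term, so
\begin{equation*}
    \abs{\braket{u_j(s+h)}{u_j(s)}-1}\le \tfrac{h^2}{2}\snorm{\ps^2 u_j}.
\end{equation*}
From \cref{eq:one-local-parallel-transport-ode} we have $\ps^2u_j=(\ps{\rm G}_j+{\rm G}_j^2)u_j$, where ${\rm G}_j=\comm{\ps P_j}{P_j}$. Hence
\begin{equation*}
    \norm{\ps^2 u_j}\le 2\norm{\ps^2P_j}+\norm{\ps P_j}^2 .
\end{equation*}
I then bound the projector derivatives through the Riesz formula of \cref{sec:continuity-of-spectrum}, using a contour of radius $\gamma_j/2$ and the derivative bounds \cref{eq:one-local-derivative-scales}. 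For a two-level traceless $g_j$ these become explicit: $\norm{\ps P_j}\le L_1/\gamma_\star$ and $\norm{\ps^2P_j}\le L_2/(2\gamma_\star)+2L_1^2/\gamma_\star^2$ (up to the exact constants). Combining them should give $\snorm{\ps^2u_j}\le M_2$ with the constant $5$ of \cref{eq:one-local-second-derivative-scale}. Each factor then has the form $1-\delta$ with $\abs{\delta}\le M_2/(2N^2)$.

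\textbf{Step 3: the product.} Each $\abs{\delta}\le 1/(8nN)$ is small. Using $\abs{\prod_k(1-\delta_k)-1}\le\prod_k(1+\abs{\delta_k})-1\le \e^{\sum_k\abs{\delta_k}}-1$, I get $\abs{Z-1}\lesssim nN\cdot M_2/(2N^2)=nM_2/(2N)$. To land exactly on the stated constant $1/2$, I may need to sharpen the exponential step slightly, or absorb the small overshoot into the slack in the constant $5$. Nonvanishing of $\varDelta_N$ then follows from $\abs{Z}\ge 3/4$.

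The main obstacle is Step 2: obtaining the second-derivative bound on $P_j$ with constants that match the stated $M_2$. I would first try the closed form $P_j=\tfrac12(I-\hat g_j)$ with $\hat g_j=g_j/\norm{g_j}$ and $\gamma_j=2\norm{g_j}$. Differentiating the unit Bloch vector twice gives explicit bounds of the form $L_1/\gamma$ and $L_2/\gamma+L_1^2/\gamma^2$, which should be sharper and easier than the general resolvent route.
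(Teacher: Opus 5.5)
Your overall plan is the paper's: parallel-transported $\ket{u_j}$, the rewriting $\varDelta_{j,N}=\e^{\i\vartheta_j}\prod_\ell a_{j,\ell}$ with $a_{j,\ell}=\braket{u_j(s_{\ell+1})}{u_j(s_\ell)}$, and $\abs{a_{j,\ell}-1}\le\frac{h^2}{2}\snorm{\ps^2u_j}$. However, the two steps that fix the explicit constants do not close as written.

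\textbf{Step 2.} Your first-derivative bound $\norm{\ps P_j}\le L_1/\gamma_\star$ is correct. The claimed bound $\norm{\ps^2P_j}\le L_2/(2\gamma_\star)+2L_1^2/\gamma_\star^2$ is false. Write $g_j=\vec b\cdot\vec\sigma$, $\hat b=\vec b/\abs{\vec b}$ and $P_j=\frac12(I-\hat b\cdot\vec\sigma)$. At a point where $\ps\vec b=0$ and $\ps^2\vec b\perp\vec b$, one has $\norm{\ps^2P_j}=\frac12\abs{\ps^2\vec b}/\abs{\vec b}=\norm{\ps^2g_j}/\gamma_j$. So the coefficient of $L_2$ cannot be below $\gamma_\star^{-1}$. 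The Riesz route with radius $\gamma_j/2$ is worse still: it gives $2L_2/\gamma_\star+8L_1^2/\gamma_\star^2$. Plugged into your inequality $\norm{\ps^2u_j}\le2\norm{\ps^2P_j}+\norm{\ps P_j}^2$, any correct bound yields at least $2L_2/\gamma_\star$, which exceeds $M_2$. The apparent match with the constant $5$ comes from pairing a lossy factor $2$ with an underestimate of $\norm{\ps^2P_j}$.

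The missing observation is that $P_ju_j=u_j$ and $P_j(\ps P_j)P_j=0$ give the exact identity
\begin{equation*}
    \ps^2u_j=\big(\ps{\rm G}_j+{\rm G}_j^2\big)u_j=Q_j(\ps^2P_j)u_j-P_j(\ps P_j)^2u_j ,
\end{equation*}
so $\norm{\ps^2u_j}\le\norm{\ps^2P_j}+\norm{\ps P_j}^2$ with no factor $2$. Differentiating $\hat b$ twice gives $\norm{\ps^2P_j}\le L_2/\gamma_\star+4L_1^2/\gamma_\star^2$. Adding $\norm{\ps P_j}^2\le L_1^2/\gamma_\star^2$ gives exactly $M_2$.

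The paper avoids $\ps^2P_j$ altogether. It differentiates the eigenvalue equation twice and applies the reduced resolvent, $Q_j\ps^2u_j=-R_jQ_j\big[(\ps^2g_j)u_j+2(\ps g_j-\ps\lambda_{0,j})\ps u_j\big]$. It bounds this by $L_2/\gamma_\star+4L_1^2/\gamma_\star^2$, and the $P_j$-component by $\abs{\braket{u_j}{\ps^2u_j}}=\norm{\ps u_j}^2\le L_1^2/\gamma_\star^2$. That route is not tied to the Bloch-sphere parametrisation.

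\textbf{Step 3.} The exponential estimate gives only $\abs{Z-1}\le\e^{x}-1$ with $x=nM_2/(2N)$, and $\e^{x}-1>x$. The slack you hope to borrow sits only in the $L_1^2$-term of $M_2$, which can be negligible next to $L_2/\gamma_\star$. So this does not reach the stated constants.

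Use instead that each $a_{j,\ell}$ is an overlap of unit vectors, so $\abs{a_{j,\ell}}\le1$. The telescoping identity $\prod_{k=1}^{m}a_k-1=\sum_{k=1}^{m}(a_k-1)\prod_{i<k}a_i$ then gives $\abs{Z-1}\le\sum_{j,\ell}\abs{a_{j,\ell}-1}\le nM_2/(2N)\le1/4$ exactly. Your Step 1 reduction then finishes the proof.

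The paper instead separates modulus and argument. It uses $\abs{\varDelta_N}\ge\prod_{j,\ell}\big(1-\abs{a_{j,\ell}-1}\big)\ge1-nM_2/(2N)$ and sums $\abs{\arg a_{j,\ell}}\le2\abs{a_{j,\ell}-1}$. Once Step 2 is repaired, either version gives the lemma with the stated constants.
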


A proof is given in Appendix~\ref{app:berry-phase-estimation-parameterised-1-local-proofs}.  

\begin{remark}
    \Cref{lem:one-local-discrete-bp} is the statement that \cref{eq:one-local-discrete-phase} computes the Berry phase defined in \cref{eq:berry-phase}. 
    Combining \cref{eq:one-local-berry-phase-sum,eq:one-local-constant-connection} gives $\vartheta \equiv \sum_j \vartheta_j \pmod{2\pi}$, while \cref{eq:one-local-discrete-continuous-comparison} gives $\arg\varDelta_N \equiv \sum_j\vartheta_j + \sum_{j,\ell}\arg a_{j,\ell}$. 
    The bound $\abs{\arg a_{j,\ell}}\le M_2/N^2$ then controls the error between the two by $n M_2/N$, which vanishes as the mesh is refined in the limit $N\to\infty$.
\end{remark}

\subsection{Main Result}\label{sec:main-result-1l}

We now state the following theorem which concludes the existence of the polynomial-time algorithm for deciding the $(2,1,\delta_0)$-\sc{GSBPE} problem.

\begin{restatable}{theorem}{oneLocalBerryPhaseInP}\label{thm:one-local-berry-phase-in-p}
    For every $\delta_0 \in [0,1]$ and any choice of guiding state description, the $(2,1,\delta_0)$-\sc{GSBPE} problem is in \cl{P}.
\end{restatable}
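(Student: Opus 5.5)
Given an instance, the algorithm ignores $C_\xi$ and $\delta_0$ and computes the discrete estimator $\vartheta_N$ of \cref{eq:one-local-discrete-phase}. It then uses \cref{lem:one-local-discrete-bp} to conclude that $\vartheta_N$ lies within $\g/2$ of $\vartheta$. Finally it accepts if and only if the computed value lies in the closed arc $[a+\g/2,\,b-\g/2]_{2\pi}$. By the promise, $\vartheta\in I_{\tn{in}}$ or $\vartheta\in I_{\tn{out}}$, and these arcs are separated by $2\g$. So any estimate within $\g/2$ of $\vartheta$, perturbed by a further $\g/4$ of numerical error, is classified correctly.

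The steps, in order, are as follows.

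\begin{enumerate}
\item Read the promised polynomial bounds $L_1,L_2$ on $\snorm{\ps H}$ and $\snorm{\ps^2H}$ and the bound $\gamma_\star^{-1}\le\poly{n}$ from the presentation $\mathcal{P}$. Set $M_2=\gamma_\star^{-1}L_2+5\gamma_\star^{-2}L_1^2$ and
\[
N=\max\{2n\overline{M}_2,\ \lceil 4nM_2/\g\rceil\}.
\]
This is $\poly{n}$ because $\g^{-1}=\poly{n}$.

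\item For each mesh point $s_\ell=\ell/N$, which has $O(\log N)$ bits, query $\mathcal{P}$ for a presentation of $H(s_\ell)$. For each qubit $j$, sum the $1$-local terms acting on $j$ to obtain a $2\times2$ Hermitian matrix, and subtract half its trace to obtain $g_j(s_\ell)=\bm{r}\cdot\bm{\sigma}$.

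\item Write down the ground projector $P_j(s_\ell)=\tfrac12(I-\hat{\bm r}\cdot\bm\sigma)$ in closed form. The norm $|\bm r|=\gamma_j/2\ge\gamma_\star/2$ is bounded away from zero, so normalising $\bm r$ is well conditioned.

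\item Multiply the $N$ matrices cyclically to obtain $\varDelta_{j,N}$, take the product over $j$ to obtain $\varDelta_N$, and output $\arg\varDelta_N$. By \cref{lem:one-local-discrete-bp}, $d_{2\pi}(\vartheta_N,\vartheta)\le nM_2/N\le\g/4$ and $|\varDelta_N|\ge 3/4$.
\end{enumerate}

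The main obstacle is finite-precision arithmetic, since the exact $\varDelta_N$ involves square roots of polynomial-bit numbers. I would carry every entry to $p=O(\log(nN/\g))$ extra bits beyond the input precision. Each computed $P_j(s_\ell)$ is then within $2^{-p}$ of the exact one in norm, with the division by $|\bm r|\ge\gamma_\star/2$ costing only $\log\gamma_\star^{-1}$ bits.

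A telescoping argument bounds the accumulated error in a product of $nN$ contractions by $nN2^{-p}$. Since $|\varDelta_N|\ge3/4$ is bounded away from zero, the argument function is Lipschitz there, with constant at most $4/3$. Hence the computed phase is within $\g/4$ of $\vartheta_N$ once $2^{-p}\le \g/(8nN)$. All steps are polynomially many operations on polynomial-size numbers, so the decision procedure runs in polynomial time for every $\delta_0$ and every guiding-state description, which places $(2,1,\delta_0)$-\sc{GSBPE} in \cl{P}.
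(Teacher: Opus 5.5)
Your proposal is correct and follows the paper's proof essentially step for step: the same guiding-state-free cyclic Bargmann estimator, a mesh size $N=\poly{n}$ chosen via \cref{lem:one-local-discrete-bp} so that the discretisation error is a fixed fraction of $\g$ and $|\varDelta_N|\ge 3/4$, closed-form one-qubit projectors $\tfrac12(I-\hat{\bm r}\cdot\bm\sigma)$ whose normalisation is controlled by $\gamma_\star$, and a telescoping bound combined with stability of the argument showing that $O(\log(nN/\g))$ working bits suffice. One small bookkeeping slip: your opening paragraph budgets an error of ``within $\g/2$ plus a further $\g/4$'', and that budget is \emph{not} compatible with the acceptance arc $[a+\g/2,\,b-\g/2]_{2\pi}$; it is your detailed budget of $\g/4$ for discretisation plus $\g/4$ for numerics that makes this threshold correct.
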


A proof is given in Appendix~\ref{app:berry-phase-estimation-parameterised-1-local-proofs}.  

%% file: sections/conclusion.tex
\section{Conclusion}
\label{sec:conclusion}
In this work, we have demonstrated that estimating the Berry phase for parameterised families of $2$-local Hamiltonians restricted to two-dimensional lattice geometries is \clw{BQP}{complete} when provided with a classical description of the guiding state that is promised to overlap with the ground state.
For the case of $1$-local Hamiltonians, estimating the Berry phase to inverse-polynomial precision can be done efficiently on a classical computer, placing the problem in \cl{P}.
To establish our results for the $2$-local case, we extended the Schrieffer--Wolff framework for perturbative gadget constructions to handle parameterised families of Hamiltonians. 
This allows us to construct simulator Hamiltonians whose Berry phase is approximately equal to that of the target Hamiltonian, up to an error that can be made arbitrarily small by choosing sufficiently large, but polynomially-large interactions.

The perspective offered by this work is primarily complexity-theoretic, treating the interest in the Berry phase from the standpoint of computational difficulty rather than by experimental considerations.
Though, we were motivated by physical considerations such as geometrical restrictions to the qubit interactions and the locality thereof, to understand the computational complexity in so-called physically-motivated settings.\footnote{Physically-motivated settings in this context typically refer to $2$-local qubit Hamiltonians, often restricted to specific geometries and interaction types such as antiferromagnetic Heisenberg or $XY$, as a way of modelling realistic physical systems studied in many-body physics.}
Our results extend the prior work of Hayakawa, Sakamoto and Kiumi~\cite{hayakawa2025computational} who first introduced the study of the computational complexity of estimating the Berry phase for local Hamiltonians.
It was proven in Ref.~\cite{hayakawa2025computational} that estimating the Berry phase for $5$-local Hamiltonians is \clw{BQP}{complete}, given a classical description of the guiding state that is promised to overlap with the ground state (among two other computational tasks established but not considered in this work).
In addition to improving results on this new physical problem, our work extends the ongoing literature studying the computational complexity of guided Hamiltonian problems~\cite{richter2007two,gharibian2023dequantizing,cade2022complexity,cade2023improved,waite2025guided,waite2026physically}.

We anticipate that future work will continue to explore the boundaries of classical and quantum computational capabilities in estimating the Berry phase, potentially relaxing some of the assumptions made in this study and extending the results to broader classes of Hamiltonians.
The (static) perturbative gadget literature often assumes the simulation-error parameters $\eta$ and $\varepsilon$ to be sufficiently small (typically $\eta, \varepsilon \leq 1/\poly{n}$) and subsequently excludes a rigorous error analysis.
We believe that given our parameter-dependent analysis, analogous assumptions can be supposed for $\eta$, $\varepsilon$ and $\kappa$ moving forward, i.e., we can assume $\eta, \varepsilon, \kappa \leq 1/\poly{n}$, for a sufficiently large polynomial in the system size.
In addition to complexity-theoretic considerations, we expect the ideas of this work may have use in the context of theoretical physics, particularly in understanding the role of the Berry phase in many-body quantum systems and its implications for quantum simulation and computation.

\subsection{Open Questions}\label{sec:open-questions}
We outline several open questions and potential directions for future research.
\begin{enumerate}
    \item Apply our framework to the other two Berry-phase problems defined in Ref.~\cite{hayakawa2025computational}.
    In particular, we conjecture that the Berry phase with an energy threshold problem is \clw{dUQMA}{complete} for $2$-local systems, but a detailed reduction remains open.
    \item Extend the analysis to the Fermi--Hubbard Hamiltonian using the perturbative gadget framework of Ref.~\cite{schuch2009computational} and arguments in Ref.~\cite{waite2026physically}.
    We conjecture \clw{BQP}{completeness} for a weighted Fermi--Hubbard Hamiltonian with local fields, but a detailed reduction remains open.
    \item Extend the framework to the Heisenberg-type gadgets of Ref.~\cite{piddock2017complexity} and the logical-encoding gadgets of Ref.~\cite{cubitt2016complexity}.
    The principal obstacle is their doubly degenerate ground spaces.
    \item Study whether the complexity landscape changes for exponential-precision estimation of the Berry phase~\cite{fefferman2016quantum,deshpande2022importance}.
    \item Determine whether classical techniques efficiently approximate the Berry phase for restricted parameter choices.
    Guided Hamiltonian problems often admit classical approximation methods~\cite{gharibian2023dequantizing,waite2025guided,waite2026physically}, but their applicability to the Berry phase remains unclear.
    \item Investigate the potential for classical verification of the Berry phase in systems with succinctly describable ground states~\cite{resta1994macroscopic,jiang2025local,waite2025complexityb}.
    Supposing we have amplitude oracles $A_j(x) = \braket{x}{\phi_j}$, then the overlap $O_j = \braket{\phi_j}{\phi_{j+1}} = \mathbb{E}_{|A_j(x)|^2}[A_{j+1}(x)/A_j(x)]$; it then follows that $\mathbb{E}[\abs{A_{j+1}(x)/A_j(x)}^2] = 1$ and therefore ${\rm Var}(A_{j+1}(x)/A_j(x)) \le 1$, so that $O_j$ can be estimated to additive error $\varepsilon$ with $O(1/\varepsilon^2)$ samples.
    \item Investigate the counting complexity of multiplicative approximations to the Berry phase, given the relationship between additive \clw{BQP}{hardness} and multiplicative \clw{\#P}{hardness}~\cite{kuperberg2015hard}.
    \item Develop a more general theory of the parameter-dependent Schrieffer--Wolff transformation used here.
    Such a theory could clarify the relationship among geometric phases, perturbation theory, and quantum many-body physics~\cite{bukov2016schrieffer}.
\end{enumerate}

Resolving these questions would clarify which parts of the proposed $1$-local/$2$-local complexity boundary persist under restricted interactions, weaker guiding information, and more general low-energy encodings.

%% file: appendices/section-proofs/preliminaries-proofs.tex
\section{Proofs of Main Results in Section~\ref{sec:preliminaries}}\label{app:preliminaries-proofs}

\begin{proposition}\label{prop:berry-phase-gauge-invariance}
    Let $\Upsilon = \{H(s)\}_{s\in[0,1]}$ be a ${\rm C}^2$ family of Hamiltonians such that $H(0) = H(1)$ and the ground state is non-degenerate for all $s \in [0,1]$ and the spectral gap satisfies $\gamma(s) \geq \gamma_\star > 0$.
    Let $\ket{\phi_0(s)}$ be a smooth, normalised, periodic choice of ground state for $H(s)$.
    Then
    \begin{equation}
        \vartheta = \int_0^1 \dd{s} \ \i\mel{\phi_0(s)}{\partial_s}{\phi_0(s)}
    \end{equation}
    is invariant modulo $2\pi$ under the gauge transformation $\ket{\phi_0(s)} \mapsto \e^{\i\alpha(s)}\ket{\phi_0(s)}$ for any smooth function $\alpha : [0,1] \to \mathbb{R}$ satisfying $\alpha(1) - \alpha(0) = 2\pi k$ for some integer $k$.
\end{proposition}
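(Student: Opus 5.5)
The plan is to compute the gauge-transformed connection directly and integrate it. Write $\ket{\phi_0'(s)} = \e^{\i\alpha(s)}\ket{\phi_0(s)}$. Since $\alpha$ is smooth and $\ket{\phi_0(s)}$ is a smooth normalised section, $\ket{\phi_0'(s)}$ is also a smooth normalised ground-state section. Its periodicity is inherited from that of $\ket{\phi_0}$ because $\e^{\i\alpha(1)} = \e^{\i\alpha(0)}\e^{2\pi\i k} = \e^{\i\alpha(0)}$. So $\ket{\phi_0'}$ is an admissible periodic gauge, and the connection integral is defined for it as well.

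Next, apply the product rule: $\partial_s\ket{\phi_0'} = \e^{\i\alpha}\big(\i\alpha'(s)\ket{\phi_0} + \partial_s\ket{\phi_0}\big)$. Using $\braket{\phi_0}{\phi_0} = 1$, this gives
\begin{equation}
    \i\mel{\phi_0'(s)}{\partial_s}{\phi_0'(s)} = \i\big(\i\alpha'(s) + \mel{\phi_0(s)}{\partial_s}{\phi_0(s)}\big) = -\alpha'(s) + \i\mel{\phi_0(s)}{\partial_s}{\phi_0(s)}.
\end{equation}
So the Berry connection shifts by the exact derivative $-\alpha'(s)$.

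Integrating over $[0,1]$ with the fundamental theorem of calculus, which applies since $\alpha$ is ${\rm C}^1$, gives $\vartheta' = \vartheta - (\alpha(1) - \alpha(0)) = \vartheta - 2\pi k$. Hence $\vartheta' \equiv \vartheta \pmod{2\pi}$.

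There is no real obstacle here. The only points needing care are confirming that the transformed section stays periodic, so that it lies in the class on which \cref{eq:berry-phase} is defined, and that both connections are real-valued; the latter follows from differentiating the normalisation, as already noted in the paper. I might add a remark that any two periodic ${\rm C}^1$ gauges of the same rank-one projector family differ by such an $\e^{\i\alpha(s)}$. Indeed, $\braket{\phi_0(s)}{\phi_0'(s)}$ is a unimodular ${\rm C}^1$ function, so by the phase-lift fact in \cref{sec:analytic-preliminaries} it has a ${\rm C}^1$ lift $\alpha$, and periodicity forces $\alpha(1) - \alpha(0) \in 2\pi\mathbb{Z}$. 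This shows that $\vartheta$ is independent of the choice of periodic gauge.
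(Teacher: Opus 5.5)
Your proposal is correct and matches the paper's proof: the Berry connection shifts by the exact derivative $-\alpha'(s)$, so integrating changes $\vartheta$ by $-(\alpha(1)-\alpha(0)) = -2\pi k$. Your extra checks are also correct. These are that the transformed section stays periodic, and that any two periodic ${\rm C}^1$ gauges differ by such a phase via the ${\rm C}^1$ phase lift. Together they give the slightly stronger conclusion that $\vartheta$ is independent of the periodic gauge, which the paper does not spell out here.
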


\begin{proof}
    Under the gauge transformation $\ket{\phi_0(s)} \mapsto \e^{\i\alpha(s)}\ket{\phi_0(s)}$, we have
    \begin{align}
        \mel{\phi_0(s)}{\partial_s}{\phi_0(s)} &\mapsto \mel{\e^{\i\alpha(s)}\phi_0(s)}{\partial_s}{\e^{\i\alpha(s)}\phi_0(s)} \\
        &= \mel{\phi_0(s)}{\partial_s}{\phi_0(s)} + \i \dot{\alpha}(s).
    \end{align}
    Here, $\dot{\alpha}(s) = \dv{\alpha(s)}{s}$ denotes the derivative of $\alpha(s)$ with respect to $s$.
    Therefore, the Berry connection and phase transform as
    \begin{align}
        \i\mel{\phi_0(s)}{\partial_s}{\phi_0(s)}
        &\mapsto \i\mel{\phi_0(s)}{\partial_s}{\phi_0(s)}-\dot\alpha(s),\\
        \vartheta&\mapsto\vartheta-\alpha(1)+\alpha(0).
    \end{align}
    Since $\alpha(1)-\alpha(0)=2\pi k$, we have $\vartheta\mapsto\vartheta-2\pi k$, which is equivalent to $\vartheta$ modulo $2\pi$.
\end{proof}

\liouvilleProperties*

\begin{proof}
    We prove each statement separately.

    \medskip
    \noindent\emph{Item 1.}
    By the definitions of $\bs{X}$, $\bs{Y}$ and the Liouville superoperator, we have
    \begin{equation}
        \L_{\bs{X}}(\bs{Y}) = \comm{\bs{X}}{\bs{Y}} = \sum_{a,b \in {\rm A}} \comm{X_a}{Y_b} = \sum_{a,b \in {\rm A}} \L_{X_a}(Y_b).
    \end{equation}

    \medskip
    \noindent\emph{Item 2.}
    Suppose that $\supp{X_a}\cap\supp{Y_b}=\emptyset$ whenever $a\neq b$. 
    Operators with disjoint support commute; we may therefore express them as $X_a = \widehat{X}_a \otimes I$ and $Y_b = I \otimes \widehat{Y}_b$ relative to a tensor-product decomposition containing their respective supports.
    It follows that $\comm{X_a}{Y_b}=0$ whenever $a\neq b$.
    Applying the first statement, we obtain
    \begin{equation}
        \L_{\bs{X}}(\bs{Y}) = \sum_{a,b \in {\rm A}} \L_{X_a}(Y_b) = \sum_{a \in {\rm A}} \L_{X_a}(Y_a) + \sum_{\substack{a,b \in {\rm A} \\ a \neq b}} \L_{X_a}(Y_b),
    \end{equation}
    where the terms with $a\neq b$ vanish due to the disjoint support.

    \medskip
    \noindent\emph{Item 3.}
    Apply the first statement to $\bs{X}$ and $\bs{Z}$ to obtain
    \begin{equation}
        \L_{\bs{X}}(\bs{Z}) = \sum_{a \in {\rm A}} \L_{X_a}(Z_a) + \sum_{\substack{a,b \in {\rm A} \\ a \neq b}} \L_{X_a}(Z_b).
    \end{equation}
    The second sum vanishes under the assumption that $\comm{X_a}{Z_b}=0$ whenever $a\neq b$.
    The hypothesis that $\L_{X_a}(Z_a) = Y_a$ for all $a \in {\rm A}$ then implies
    \begin{equation}
        \L_{\bs{X}}(\bs{Z}) = \sum_{a \in {\rm A}} Y_a = \bs{Y}.
    \end{equation}

    \medskip
    \noindent\emph{Item 4.}
    Let $X = \sum_{\lambda \in \sigma(X)} \lambda P_\lambda$ be the spectral decomposition of $X$, where $P_\lambda$ are the corresponding spectral projectors.
    By definition 
    \begin{equation}
        \L^{\pattce}_X(Y) = \sum_{\substack{\lambda, \mu \in \sigma(X) \\ \lambda \neq \mu}} \frac{P_\lambda Y P_\mu}{\lambda - \mu}.
    \end{equation}
    Applying this definition to $\bs{Y} = \sum_{a \in {\rm A}} Y_a$, we obtain
    \begin{equation}
        \L^{\pattce}_{X}(\bs{Y}) = \sum_{\substack{\lambda, \mu \in \sigma(X) \\ \lambda \neq \mu}} \frac{P_\lambda \left(\sum_{a \in {\rm A}} Y_a\right) P_\mu}{\lambda - \mu} = \sum_{a \in {\rm A}} \sum_{\substack{\lambda, \mu \in \sigma(X) \\ \lambda \neq \mu}} \frac{P_\lambda Y_a P_\mu}{\lambda - \mu} = \sum_{a \in {\rm A}} \L^{\pattce}_X(Y_a).
    \end{equation}
\end{proof}

\liouvillePseudoinverseBound*

\begin{proof}
    For each $k \in \{0,\ldots,\abs{A}\}$, let $Q_k$ denote the projector onto the subspace containing exactly $k$ excited qubits.
    Then
    \begin{align}
        \bs{X} &= \sum_{k=0}^{\abs{A}} k Q_k,& 
        Q_0 &= \bs{P},& 
        \bs{Q} &= \sum_{k=1}^{\abs{A}} Q_k.
    \end{align}
    Define
    \begin{equation}
        G \coloneqq \sum_{k=1}^{\abs{A}} \frac{1}{k} Q_k.
    \end{equation}
    Since the nonzero eigenvalues of $\bs{X}$ are at least one, we have $\norm{G}\leq 1$.

    Because $O = \bs{P} O \bs{Q} + \bs{Q} O \bs{P}$, the definition of the Liouville pseudoinverse gives
    \begin{equation}
        \L_{\Delta\bs{X}}^{\pattce}(O) = \Delta^{-1}\big( G O \bs{P} - \bs{P} O G \big).
    \end{equation}
    Relative to the decomposition $\mathcal{H} = {\rm Im}(\bs{P}) \oplus {\rm Im}(\bs{Q})$, these operators have the block forms
    \begin{align}
        O &= 
        \begin{pmatrix}
            0 & \bs{P}O\bs{Q} \\
            \bs{Q}O\bs{P} & 0
        \end{pmatrix}, &
        \L_{\Delta\bs{X}}^{\pattce}(O) &= \Delta^{-1} 
        \begin{pmatrix}
            0 & -\bs{P}OG \\ 
            GO\bs{P} & 0
        \end{pmatrix}.
    \end{align}
    For any block-off-diagonal operator $\left(\begin{smallmatrix} 0&B\\ C&0 \end{smallmatrix}\right)$, its norm is $\max\{\norm{B},\norm{C}\}$.
    Therefore,
    \begin{equation}
        \norm{\L_{\Delta\bs{X}}^{\pattce}(O)}
        = \Delta^{-1}\max\left\{\norm{\bs{P}OG},\norm{GO\bs{P}}\right\}
        \leq \Delta^{-1}\max\left\{\norm{\bs{P}O\bs{Q}},\norm{\bs{Q}O\bs{P}}\right\}
        = \Delta^{-1}\norm{O}.
    \end{equation}
\end{proof}

\moorePenrosePseudoinverseBlockDiagonalOperator*

\begin{proof}
    Recall that the Moore-Penrose pseudoinverse of an operator $X$ satisfies:
    \begin{enumerate}
        \item $X X^{\pattce} X = X$;
        \item $X^{\pattce} X X^{\pattce} = X^{\pattce}$;
        \item $(X X^{\pattce})^\dagger = X X^{\pattce}$;
        \item $(X^{\pattce} X)^\dagger = X^{\pattce} X$.
    \end{enumerate}
    By definition, 
    \begin{equation}
        X = \begin{pmatrix}
            0 & 0 \\
            0 & X_Q
        \end{pmatrix} = 0 \oplus X_Q,
    \end{equation}
    where $X_Q = Q X Q$ is the restriction of $X$ to the subspace $\mathcal{Q}$.
    Similarly, the proposed operator $\Sigma = Q X_Q^{-1} Q$ can be expressed as
    \begin{equation}
        \Sigma = \begin{pmatrix}
            0 & 0 \\
            0 & X_Q^{-1}
        \end{pmatrix} = 0 \oplus X_Q^{-1}.
    \end{equation}
    We now verify that $\Sigma$ satisfies the four Moore-Penrose conditions for $X$ in turn.

    \medskip
    \noindent\textsl{Condition 1.}
    $X \Sigma X = (0 \oplus X_Q)(0 \oplus X_Q^{-1})(0 \oplus X_Q) = 0 \oplus X_Q = X$.

    \medskip
    \noindent\textsl{Condition 2.}
    $\Sigma X \Sigma = (0 \oplus X_Q^{-1})(0 \oplus X_Q)(0 \oplus X_Q^{-1}) = 0 \oplus X_Q^{-1} = \Sigma$.

    \medskip
    \noindent\textsl{Condition 3.}
    $(X \Sigma)^\dagger = ((0 \oplus X_Q)(0 \oplus X_Q^{-1}))^\dagger = (0 \oplus I)^\dagger = 0 \oplus I = X \Sigma$.

    \medskip
    \noindent\textsl{Condition 4.}
    $(\Sigma X)^\dagger = ((0 \oplus X_Q^{-1})(0 \oplus X_Q))^\dagger = (0 \oplus I)^\dagger = 0 \oplus I = \Sigma X$.

    The operator $\Sigma$ thus satisfies all four Moore-Penrose conditions for $X$, and is therefore the Moore-Penrose pseudoinverse of $X$.
\end{proof}

%% file: appendices/section-proofs/uniform-simulation-proofs.tex
\section{Proofs of Main Results in Section~\ref{sec:uniform-simulation}}\label{app:uniform-simulation-proofs}
In this appendix, we include some additional supporting proofs for the main results presented in Section~\ref{sec:uniform-simulation}.
We first outline these supporting results then proceed with the proofs of the main results.

\subsection{Supporting Results}

\begin{restatable}{claim}{rankimagepointwise}\label{claim:rank-image-pointwise}
    With $A(s) = \U_2(s)P_{N,1}(s)\U_2^\dagger(s)$ and $B(s) = P_{N,2}(s)$ on $\wt\Hil_2$:
    \begin{enumerate}
        \item[(i)] $A(s), B(s)$ are orthogonal projectors of equal rank $N$, with ${\rm Im}(A(s)) = \U_2(s)\mathcal{E}_N(\wt{H}_1(s))$;
        \item[(ii)] if $T(s)$ is unitary with $T(s)A(s)T^\dagger(s) = B(s)$, then $\mathcal{E}_N(\wt{H}_2(s)) = T(s)\U_2(s)\mathcal{E}_N(\wt{H}_1(s))$.
    \end{enumerate}
\end{restatable}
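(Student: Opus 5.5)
Both parts follow from elementary properties of isometries and unitaries acting on images of projectors. The only input from the surrounding setup is that $P_{N,1}(s)$ and $P_{N,2}(s)$ are the spectral projectors onto $\mathcal{E}_N(\wt{H}_1(s))$ and $\mathcal{E}_N(\wt{H}_2(s))$, respectively. I will take $N=2^n$, the dimension of the target space. I also use that $\U_2(s)$ is an isometry with $\U_2^\dagger(s)\U_2(s)=I$, by Definition~\ref{def:uniform-simulation}.

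For (i), I first show that $A(s)$ is an orthogonal projector. Hermiticity is immediate from the form $\U_2P_{N,1}\U_2^\dagger$. For idempotence,
\[
A^2=\U_2P_{N,1}\U_2^\dagger\U_2P_{N,1}\U_2^\dagger=\U_2P_{N,1}^2\U_2^\dagger=A .
\]
Next I identify the image. Since $\U_2^\dagger$ maps onto the domain of $\U_2$, we have ${\rm Im}(P_{N,1}\U_2^\dagger)={\rm Im}(P_{N,1})$. Hence ${\rm Im}(A)=\U_2\,{\rm Im}(P_{N,1})=\U_2(s)\mathcal{E}_N(\wt{H}_1(s))$. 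Because an isometry is injective, it preserves dimension, so ${\rm rank}\,A(s)=N$.

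For $B(s)$, it is by definition the orthogonal spectral projector onto the first $N$ eigenvectors of $\wt{H}_2(s)$. I would justify briefly that this projector is well defined, meaning there is no degeneracy between levels $N$ and $N+1$ of $\wt{H}_2(s)$. There are two cases. If $N$ equals the full dimension of the simulated space $\wt\Hil_1$, the low band of $\wt{H}_2$ is given directly by condition~\ref{item:c1}. Otherwise, Lemma~\ref{lem:eigenvalue-simulation} places the first $N+1$ eigenvalues of $\wt{H}_2(s)$ within $\varepsilon_2$ of those of $\wt{H}_1(s)$. Since $\wt{H}_1(s)$ has a gap of at least $\Delta_1\ge 8\varepsilon_2>2\varepsilon_2$ after its $N$-th level, $\wt{H}_2(s)$ retains a gap there. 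In either case $B(s)$ has rank exactly $N$.

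For (ii), conjugation by a unitary maps images bijectively: ${\rm Im}(TAT^\dagger)=T\,{\rm Im}(AT^\dagger)=T\,{\rm Im}(A)$, because $T^\dagger$ is surjective. Combining this with $TAT^\dagger=B$ and part (i) gives
\[
\mathcal{E}_N(\wt{H}_2(s))={\rm Im}(B(s))=T(s)\U_2(s)\mathcal{E}_N(\wt{H}_1(s)).
\]

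None of these steps is a real obstacle. The only point needing care is the well-definedness and rank of $P_{N,2}(s)$, which rests on the gap-preservation argument via Lemma~\ref{lem:eigenvalue-simulation} and the hypothesis $8\varepsilon_2\le\Delta_1$.
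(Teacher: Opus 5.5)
Your proposal is correct and follows essentially the same route as the paper. Both arguments verify that $A$ is Hermitian and idempotent using $\U_2^\dagger\U_2=I$. Both use surjectivity of $\U_2^\dagger$ to get ${\rm Im}(A)=\U_2\mathcal{E}_N(\wt H_1)$, with rank $N$ preserved by the isometry, and both deduce (ii) from ${\rm Im}(TAT^\dagger)=T\,{\rm Im}(A)$.

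Your extra check that $P_{N,2}(s)$ is well defined with rank $N$, via Lemma~\ref{lem:eigenvalue-simulation} and $8\varepsilon_2\le\Delta_1$, is the same gap argument the paper makes. The paper places it at the start of the proof of Lemma~\ref{lem:composition-of-simulations} rather than inside the claim.
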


\begin{proof}
    We suppress $s$ and prove each item separately.

    \medskip
    \noindent\textsl{(i).}
    $A^\dagger = A$ since $P_{N,1}$ is Hermitian, and $A^2 = \U_2 P_{N,1}\U_2^\dagger\U_2 P_{N,1}\U_2^\dagger = \U_2 P_{N,1}^2\U_2^\dagger = A$ using $\U_2^\dagger\U_2 = I$, so $A$ is an orthogonal projector.
    As $\U_2^\dagger : \wt\Hil_2 \to \wt\Hil_1$ is surjective, $w = \U_2^\dagger v$ ranges over $\wt\Hil_1$ as $v$ ranges over $\wt\Hil_2$, so ${\rm Im}(A) = \U_2(P_{N,1}\wt\Hil_1) = \U_2\mathcal{E}_N(\wt{H}_1)$, of dimension $N$ since $\U_2$ is an isometry.

    \medskip
    \noindent\textsl{(ii).}
    For unitary $T$ with $TAT^\dagger = B$, ${\rm Im}(B) = \{TAw : w \in \wt\Hil_2\} = T\,{\rm Im}(A)$, so $\mathcal{E}_N(\wt{H}_2) = {\rm Im}(B) = T\U_2\mathcal{E}_N(\wt{H}_1)$.
\end{proof}

\subsection{Main Results}
\eigenvaluesimulation*

\begin{proof}
    We suppress the argument $s$.
    By \ref{item:c1}--\ref{item:c2}, $\U^\dagger\wt{H}\U$ is unitarily equivalent to the restriction of $\wt{H}$ to its low band, so its eigenvalues are $\wt{\lambda}_0, \dots, \wt{\lambda}_{2^n-1}$; by \ref{item:c2} it is within $\varepsilon$ of $H$ in norm, and Weyl's inequality gives the claim.
\end{proof}

\groundstatesimulation*

\begin{proof}
    We suppress the argument $s$.
    By \cref{lem:eigenvalue-simulation} and $2\varepsilon < \gamma_\star$, the low band has a non-degenerate minimum.
    Writing $\ket{\psi_0} = \U^\dagger\ket{\wt{\phi}_0}$ (the ground state of $\U^\dagger\wt{H}\U$), the Davis--Kahan bound and an appropriate relative phase give $\norm{\ket{\psi_0} - \ket{\phi_0}} = O(\gamma_\star^{-1}\varepsilon)$, and then
    \begin{equation}
        \norm{\ket{\wt{\phi}_0} - \V\ket{\phi_0}} \le \norm{(\U - \V)\ket{\phi_0}} + \norm{\U(\ket{\psi_0} - \ket{\phi_0})} \le \eta + O(\gamma_\star^{-1}\varepsilon). \qedhere
    \end{equation}
\end{proof}

\parameterisedunitaryrotation*

\begin{proof}
    We suppress the argument $s$ and prove each item in turn.

    \medskip
    \noindent\textsl{0) Construction.}
    Set $X = BA + (I - B)(I - A) = I + (2B - I)(A - B)$.
    Then $X - I = (2B - I)(A - B)$ is a product of a unitary and a Hermitian operator, so $\norm{X - I} = \norm{A - B} \le \delta < 1$; by Banach's lemma ($\norm{P} < 1 \Rightarrow I - P$ invertible with $\norm{(I - P)^{-1}} \le (1 - \norm{P})^{-1}$), $X$ is invertible.
    Let $X = U\abs{X}$ be its polar decomposition, with $U$ unitary and $\abs{X} = (X^\dagger X)^{1/2}$.

    \medskip
    \noindent\textsl{1) Intertwining.}
    Using $A^2 = A$, $B^2 = B$,
    \begin{align}
        XA &= (BA + (I - B)(I - A))A = BA, & 
        BX &= B(BA + (I - B)(I - A)) = BA,
    \end{align}
    so $XA = BX$.
    Taking adjoints gives $AX^\dagger = X^\dagger B$, and multiplying on the right by $X$ gives $AX^\dagger X = X^\dagger X A$; thus $A$ commutes with $X^\dagger X$, hence with $\abs{X}$ and $\abs{X}^{-1}$.
    Therefore
    \begin{equation}
        UAU^\dagger = X\abs{X}^{-1}A\abs{X}^{-1}X^\dagger = XA\abs{X}^{-2}X^\dagger = BX\abs{X}^{-2}X^\dagger = B,
    \end{equation}
    the last step using $X\abs{X}^{-2}X^\dagger = U\abs{X}\abs{X}^{-2}\abs{X}U^\dagger = I$.

    \medskip
    \noindent\textsl{2) Norm bound.}
    Writing $X^\dagger X = I + F$ with $F = (X - I) + (X - I)^\dagger + (X - I)^\dagger(X - I)$,
    \begin{equation}
        \norm{F} \le 2\norm{X - I} + \norm{X - I}^2 \le 2\delta + \delta^2 \le 3\delta,
    \end{equation}
    Hence $X^\dagger X\succeq(1 - 3\delta)I$ is positive definite and
    \begin{align}
        \norm{ \abs{X}^{-1} } &\le (1 - 3\delta)^{-1/2}, & 
        \norm{ (I + F)^{-1/2} - I } &\le \frac{3\delta}{1 - 3\delta}.
    \end{align}
    It follows that
    \begin{equation}
        \norm{U - I} = \norm{(X - I)\abs{X}^{-1} + (\abs{X}^{-1} - I)} \le \frac{\delta}{(1 - 3\delta)^{1/2}} + \frac{3\delta}{1 - 3\delta},
    \end{equation}
    and for $4\delta \le 1$ this is at most $C\delta$ with $C = 16$, for example.

    \medskip
    \noindent\textsl{3) Derivative norm bound.}
    Differentiating $X$ and using $\norm{2A - I} = \norm{2B - I} = 1$,
    \begin{align}
        \ps X &= (\ps B)(2A - I) + (2B - I)(\ps A), & 
        \norm{\ps X} &\le \norm{\ps A} + \norm{\ps B}.
    \end{align}
    From $\ps U = (\ps X)\abs{X}^{-1} + X\,\ps[\abs{X}^{-1}]$, the first term is at most $(\norm{\ps A} + \norm{\ps B})(1 - 3\delta)^{-1/2}$.
    For the second, set $M = X^\dagger X$, so $\norm{\ps M} \le 2 (1 + \delta) (\norm{\ps A} + \norm{\ps B})$.
    For $t \geq 0$, let $R_t = (M + tI)^{-1}$.
    From Section~\ref{sec:continuity-of-spectrum}, we know that 
    \begin{align}\label{eq:resolvent-derivatives}
        \ps R_t &= -R_t (\ps M)R_t, &
        \ps^2R_t &= 2R_t (\ps M) R_t (\ps M) R_t - R_t (\ps^2 M) R_t.
    \end{align}
    Recall the inverse square root integral representation
    \begin{equation}\label{eq:inverse-square-root-integral}
        M^{-1/2} = \frac{1}{\pi} \int_0^\infty \dd{t}\,t^{-1/2} R_t.
    \end{equation}
    Differentiation under the integral gives
    \begin{equation}
        \ps[M^{-1/2}] = -\frac{1}{\pi} \int_0^\infty \dd{t}\,t^{-1/2} R_t (\ps M) R_t.
    \end{equation}
    If $\lambda_{\min}(M) \ge 1 - 3\delta$, the scalar integral evaluates to
    \begin{equation}
        \norm{\ps[M^{-1/2}]}
        \le \frac{1}{2} (1 - 3\delta)^{-3/2} \norm{\ps M}.
    \end{equation}
    Consequently, $\norm{X\,\ps [\abs {X}^{-1}]}\le C_1 (\norm{\ps A} + \norm{\ps B})$
    and $\norm{\ps U}\le C(\norm{\ps A}+\norm{\ps B})$, 
    with universal constants because $\delta\le 1/4$.

    \medskip
    \noindent\textsl{4) Closure and differentiability.}
    Since $A(0) = A(1)$ and $B(0) = B(1)$, we have $X(0) = X(1)$, hence $\abs{X(0)} = \abs{X(1)}$ and $U(0) = U(1)$.
    Furthermore, $A(s)$ and $B(s)$ are ${\rm C}^2$, the families $X(s)$ and $M(s)=X^\dagger(s)X(s)$ are also ${\rm C}^2$.
    Moreover, the bound established above gives $M(s) \succeq (1 - 3\delta)I \succeq \frac{1}{4} I$ uniformly in $s$.

    The resolvent $R_t(s)=(M(s) + tI)^{-1}$ is ${\rm C}^2$ and satisfies the derivative formulas from \cref{eq:resolvent-derivatives}.
    Since $\norm{R_t(s)} \leq (t + \frac{1}{4})^{-1}$ uniformly in $s$, \cref{eq:inverse-square-root-integral} may be differentiated twice under the integral.
    Consequently, $s\mapsto M(s)^{-1/2}$ is ${\rm C}^2$.
    Since
    \begin{equation}
        U(s) = X(s)M(s)^{-1/2},
    \end{equation}
    it follows that $U(s)$ is ${\rm C}^2$.
\end{proof}

\compositionofsimulations*

\begin{proof}
    We follow the static composition proof of Ref.~\cite{bravyi2016complexity}, adding a derivative bound.
    Let $H, \wt{H}_1, \wt{H}_2$ act on $\Hil, \wt\Hil_1, \wt\Hil_2$ of dimensions $N, N_1, N_2$.
    By \cref{lem:eigenvalue-simulation} and the hypothesis $8\varepsilon_2\leq\Delta_1$, the separation in $\wt{H}_2(s)$ corresponding to the band gap $\Delta_1$ of $\wt H_1(s)$ is at least $\Delta_1 - 2\varepsilon_2\geq 3\Delta_1/4>0$.
    Thus, $\mathcal{E}_N(\wt{H}_2(s))$ is well-defined, and $\U_2(s)$ maps $\wt\Hil_1$ into $\mathcal{E}_{N_1}(\wt{H}_2(s))$.

    Let $P_{N,i}(s)$ project onto $\mathcal{E}_N(\wt{H}_i(s))$.
    With $V(s) = \U_2^\dagger(s)\wt{H}_2(s)\U_2(s) - \wt{H}_1(s)$ (of norm $\le \varepsilon_2$) we have $\mathcal{E}_N(\wt{H}_1(s) + V(s)) = \U_2^\dagger(s)\mathcal{E}_N(\wt{H}_2(s))$, so by the projector-perturbation bound \cite[Lemma~3.1]{bravyi2011schrieffer},
    \begin{equation}
        \norm{P_{N,1}(s) - \U_2^\dagger(s)P_{N,2}(s)\U_2(s)} \le 2\Delta_1^{-1}\varepsilon_2,
    \end{equation}
    Since $\mathcal{E}_{N_1}(\wt{H}_2(s)) = {\rm Im}(\U_2(s))$, conjugation by $\U_2(s)$ gives the same bound on the simulator space.
    Set
    \begin{align}
        A(s) &= \U_2(s)P_{N,1}(s)\U_2^\dagger(s), &
        B(s) &= P_{N,2}(s).
    \end{align}
    Then $\norm{A-B}\le2\Delta_1^{-1}\varepsilon_2\le1/4$.
    By Claim~\ref{claim:rank-image-pointwise}, $A,B$ have equal rank, so \cref{lem:parameterised-unitary-rotation} yields $T(s)$ with
    \begin{align}
        \norm{T(s) - I} &\le 2 C \Delta_1^{-1} \varepsilon_2, &
        \mathcal{E}_N(\wt H_2(s)) &= T(s)\U_2(s)\mathcal{E}_N(\wt H_1(s)).
    \end{align}

    Let $\U(s) = T(s)\U_2(s)\U_1(s)$, an isometry.
    Condition \ref{item:c1} holds as $\U(s)$ maps $\Hil$ onto $\mathcal{E}_N(\wt{H}_2(s))$.
    For \ref{item:c3}, $\norm{\V_2\V_1 - \U} \le \eta_1 + \eta_2 + O(\Delta_1^{-1}\varepsilon_2)$.
    For \ref{item:c2}, writing $\wt{H}_i^{(N)} = \wt{H}_i P_{N,i}$ and using $\wt{H}_1\U_1 = \wt{H}_1^{(N)}\U_1$, $\wt{H}_2 T\U_2\U_1 = \wt{H}_2^{(N)} T\U_2\U_1$,
    \begin{equation}
        \norm{H - \U^\dagger\wt{H}_2\U} \le \varepsilon_1 + \varepsilon_2 + O(\Delta_1^{-1}\varepsilon_2\snorm{H}).
    \end{equation}
    Finally, $\ps\U = (\ps T)\U_2\U_1 + T(\ps\U_2)\U_1 + T\U_2(\ps\U_1)$, so $\snorm{\ps\U} \le \rho + \kappa_2 + \kappa_1$, where $\rho = \snorm{\ps T}$.
    Differentiating $A = \U_2P_{N,1}\U_2^\dagger$ gives
    $\snorm{\ps A} = O(\kappa_2+\Delta_1^{-1}\snorm{\ps\wt H_1})$, while the Riesz formula gives $\snorm{\ps B} = O(\Delta_1^{-1}\snorm{\ps\wt H_2})$ after absorbing the fixed constant loss in the perturbed band gap.
    Hence, \cref{lem:parameterised-unitary-rotation} gives the displayed bound on $\kappa$ and yields the closure property of $T$.
\end{proof}

%% file: appendices/section-proofs/sw-transformation-proofs.tex
\section{Proofs of Main Results in Section~\ref{sec:sw-transformation}}\label{app:sw-transformation-proofs}

\LInverse*

\begin{proof}
    Since $H_0 = \Pi_+ \in \D_\tn{d}$, \cref{cor:L-preserves-block-structure} gives $\L_{\Delta H_0}(\D_\tn{od}) \subseteq \D_\tn{od}$, and linearity is immediate from bilinearity of the commutator.
    For $X \in \D_\tn{od}$ we have $\Pi_+ X = X_{+-}$ and $X \Pi_+ = X_{-+}$, so
    \begin{equation}\label{eq:L-projector-form}
        \L_{\Delta H_0}(X) = \Delta \comm{\Pi_+}{X} = \Delta \big( X_{+-} - X_{-+} \big).
    \end{equation}
    Applying \cref{eq:L-projector-form} twice, and noting that $\L_{\Delta H_0}(X)$ has blocks $\Delta X_{+-}$ and $-\Delta X_{-+}$,
    \begin{equation}
        \L_{\Delta H_0}^2 (X) = \Delta\big( \Delta X_{+-} + \Delta X_{-+} \big) = \Delta^2 X .
    \end{equation}
    Hence, $\L_{\Delta H_0}$ restricted to $\D_\tn{od}$ is a bijection of $\D_\tn{od}$ onto itself, with two-sided inverse $\Delta^{-2}\L_{\Delta H_0}$, that is $X \mapsto \Delta^{-1}(X_{+-} - X_{-+})$.
    Since $H_0 = \Pi_+$ gives $\Sigma = \Pi_+$, and $\Pi_+ X_{+-} = X_{+-}$, $X_{-+}\Pi_+ = X_{-+}$, this is exactly the right-hand side of \cref{eq:L-inverse}.

    For the identification with the pseudoinverse, $\Delta H_0 = 0\cdot\Pi_- + \Delta\,\Pi_+$ is the spectral decomposition, so $\sigma(\Delta H_0) = \{0,\Delta\}$ with spectral projectors $\Pi_-$ and $\Pi_+$. \Cref{eq:liouville_pseudoinverse_def} then has exactly two terms,
    \begin{equation}
        \L_{\Delta H_0}^{\pattce}(X)
        = \frac{\Pi_- X \Pi_+}{0-\Delta} + \frac{\Pi_+ X \Pi_-}{\Delta - 0}
        = \Delta^{-1}\big( X_{+-} - X_{-+} \big),
    \end{equation}
    which agrees with the inverse just computed.
    Uniqueness is injectivity: if $\L_{\Delta H_0}(S) = \L_{\Delta H_0}(S')$ with $S, S' \in \D_\tn{od}$, then $S - S' \in \ker\big(\L_{\Delta H_0}\big) \cap \D_\tn{od} = \{0\}$.
\end{proof}

\regularityperiodicitycanonicalsw*

\begin{proof}
    Since $V\in{\rm C}^2$, the Hamiltonian family $\widetilde{H}(s)=\Delta H_0+V(s)$ is ${\rm C}^2$.
    The condition $2\snorm{V}<\Delta$ ensures that its low-energy band remains separated from the high-energy band by a uniform positive gap. 
    We may therefore choose a fixed contour $\mathscr{C}$ enclosing the low-energy band and no other eigenvalues for every $s\in[0,1]$ and write the corresponding spectral projector as
    \begin{equation}
        \widetilde{\Pi}_-(s) = \frac{1}{2\pi\i}\oint_{\mathscr{C}} \dd{z} \, (zI-\widetilde{H}(s))^{-1}.
    \end{equation}
    The resolvent $(zI-\widetilde{H}(s))^{-1}$ is ${\rm C}^2$ in $s$ uniformly for $z\in\mathscr{C}$, and hence $\widetilde{\Pi}_-(s)$ is a ${\rm C}^2$ family of projectors.
    The canonical direct rotation from the fixed projector $\Pi_-$ to $\widetilde{\Pi}_-(s)$ is a smooth and analytic function of these two projectors whenever $\norm{\Pi_- - \widetilde{\Pi}_-(s)}<1$.
    It follows that the unitary $\e^{-S(s)}$ is ${\rm C}^2$. 
    Since the canonical generator is the unique anti-Hermitian and block-off-diagonal logarithm satisfying $\norm{S(s)}<\pi/2$~\cite{bravyi2011schrieffer}, the family $S(s)$ is also ${\rm C}^2$.
    Moreover, $V(0)=V(1)$ implies $\widetilde{H}(0)=\widetilde{H}(1)$ and therefore $\widetilde{\Pi}_-(0) = \widetilde{\Pi}_-(1)$.

    The operators $S(0)$ and $S(1)$ are consequently the canonical generators associated with the same ordered pair of projectors $\bigl(\Pi_-,\widetilde{\Pi}_-(0)\bigr)$. 
    Uniqueness of the canonical direct rotation gives $S(0)=S(1)$ and hence $\e^{-S(0)}=\e^{-S(1)}$.

    Finally, since $\V$ is fixed, $\U(s)=\e^{-S(s)}\V$ is ${\rm C}^2$ and satisfies $\U(0)=\U(1)$.
\end{proof}

\swcoefficients*

\begin{proof}
    We suppress the dependence on the parameter $s$ for notational convenience in this proof.
    Let 
    \begin{equation}
        \mathcal{B}(t) \coloneqq \e^{S(t)} (\Delta H_0 + t V) \e^{-S(t)} = \sum_{k \geq 0} \frac{1}{k!} \L^k_{S(t)}(\Delta H_0 + t V) \eqqcolon \sum_{j \geq 0} t^j B_j.
    \end{equation}
    Since $S(t) = O(t)$, the $k$-th BCH term contributes to the order $t^k$ and higher in the expansion.
    Recalling that $S(t) = t S_1 + t^2 S_2 + t^3 S_3 + O(t^4)$ and therefore collecting powers of $t$ up to third order gives
    \begin{align}
        B_0 &= \Delta H_0, \\
        B_1 &= V 
                + \comm{S_1}{\Delta H_0}, \\
        B_2 &= \comm{S_2}{\Delta H_0} 
                + \frac{1}{2} \comm{S_1}{\comm{S_1}{\Delta H_0}} + \comm{S_1}{V}, \\
        B_3 &= \comm{S_3}{\Delta H_0} 
                + \comm{S_2}{V}
                + \frac{1}{2} \left( \comm{S_2}{\comm{S_1}{\Delta H_0}} + \comm{S_1}{\comm{S_2}{\Delta H_0}} + \comm{S_1}{\comm{S_1}{V}} \right)
                + \frac{1}{6} \comm{S_1}{\comm{S_1}{\comm{S_1}{\Delta H_0}}}.
    \end{align}
    By construction $S(t)$ block-diagonalises $\wt{H}(t)$, so $\mathcal{B}(t) \in \D_\tn{d}$ and hence $(B_j)_\tn{od} = 0$ for every $j \geq 0$.
    Recall also that $S_j \in \D_\tn{od}$ for every $j$.
    We now prove the expressions for the first three Schrieffer--Wolff generator coefficients in turn.

    \medskip
    \noindent\textsl{First Schrieffer--Wolff coefficient $S_1$.}
    The first Schrieffer--Wolff coefficient $S_1$ is determined by setting the block-off-diagonal part of $B_1$ to zero, and thus using \cref{lem:block-commutators} we have
    \begin{equation}
        V_{\tn{od}} + \comm{S_1}{\Delta H_0} = 0.
    \end{equation}
    Solving for $S_1$ gives
    \begin{equation}
        S_1 = \L_{\Delta H_0}^{\pattce}( V_{\tn{od}} ),
    \end{equation}
    yielding \cref{eq:sw1}, since $V_{\tn{od}} \in \D_\tn{od}$ and hence \cref{lem:L-inverse} applies.

    \medskip
    \noindent\textsl{Second Schrieffer--Wolff coefficient $S_2$.}
    The second Schrieffer--Wolff coefficient $S_2$ is determined by setting the block-off-diagonal part of $B_2$ to zero. 
    Using \cref{lem:block-commutators}, we have
    \begin{equation}
        \comm{S_2}{\Delta H_0} + \comm{S_1}{V_{\tn{d}}} = 0.
    \end{equation}
    Solving for $S_2$ gives
    \begin{equation}
        S_2 = \L_{\Delta H_0}^{\pattce}(\comm{S_1}{V_{\tn{d}}}),
    \end{equation}
    yielding \cref{eq:sw2}, since $\comm{S_1}{V_{\tn{d}}} \in \D_\tn{od}$ via \cref{lem:block-commutators} and hence \cref{lem:L-inverse} applies.

    \medskip
    \noindent\textsl{Third Schrieffer--Wolff coefficient $S_3$.}
    The third Schrieffer--Wolff coefficient $S_3$ is determined by setting the block-off-diagonal part of $B_3$ to zero.
    Using \cref{lem:block-commutators}, we have
    \begin{equation}
        \comm{S_3}{\Delta H_0} + \comm{S_2}{V_{\tn{d}}} + \frac{1}{2}\comm{S_1}{\comm{S_1}{V_{\tn{od}}}} + \frac{1}{6}\comm{S_1}{\comm{S_1}{\comm{S_1}{\Delta H_0}}} = 0.
    \end{equation}
    Recalling that $\comm{S_1}{\Delta H_0} = -V_{\tn{od}}$, we can rewrite the third-order equation as
    \begin{equation}
        \comm{S_3}{\Delta H_0} + \comm{S_2}{V_{\tn{d}}} - \frac{1}{3}\comm{S_1}{\comm{S_1}{\comm{S_1}{\Delta H_0}}} = 0.
    \end{equation}
    Solving for $S_3$ gives
    \begin{equation}
        S_3 = \L_{\Delta H_0}^{\pattce}\left(\comm{S_2}{V_{\tn{d}}} - \frac{1}{3}\comm{S_1}{\comm{S_1}{\comm{S_1}{\Delta H_0}}}\right) = \L_{\Delta H_0}^{\pattce}\left(\comm{S_2}{V_{\tn{d}}}\right) - \frac{1}{3} \L_{\Delta H_0}^{\pattce}\left(\comm{S_1}{\comm{S_1}{\comm{S_1}{\Delta H_0}}}\right),
    \end{equation}
    yielding \cref{eq:sw3}, since $\comm{S_2}{V_{\tn{d}}} - \frac{1}{3}\comm{S_1}{\comm{S_1}{\comm{S_1}{\Delta H_0}}} \in \D_\tn{od}$ via \cref{lem:block-commutators} and hence \cref{lem:L-inverse} applies.
    The second equality follows from \cref{lem:liouville-properties}.

    In each case the argument of $\L_{\Delta H_0}^{\pattce}$ lies in $\D_\tn{od}$ by \cref{lem:block-commutators}, so \cref{lem:L-inverse} applies, hence the corresponding equation has a unique solution in $\D_\tn{od}$, given by the stated formula.
\end{proof}

\heffcoefficients*

\begin{proof}
    We suppress the dependence on the parameter $s$ for notational convenience in this proof.
    The proof of \cref{prop:sw-coefficients} provides explicit expressions for the coefficients in the expansion of $\mathcal{B}(t) = \e^{S(t)} \wt{H}(t) \e^{-S(t)}$ in powers of $t$.
    Using this expansion, we now prove the expressions for the first three effective Hamiltonian coefficients in turn.
    Recall that at each order, the Schrieffer-Wolff coefficients $S_j$ are chosen such that the block-off-diagonal part of $B_j$ vanishes; hence, we only need to consider the block-diagonal part when computing the effective Hamiltonian coefficients.
    We first identify the form of the effective Hamiltonian coefficients $K_j$ in terms of the Schrieffer--Wolff coefficients $S_j$, then evaluate them explicitly using the expressions for $S_j$ obtained from \cref{prop:sw-coefficients}.

    \medskip
    \noindent\textsl{First effective Hamiltonian coefficient $K_1$.}
    The block-diagonal part of $B_1$ is simply the block-diagonal part of $V$.
    Hence,
    \begin{equation}
        K_1 = (B_1)_{--} = (V)_{--}.
    \end{equation}

    \medskip
    \noindent\textsl{Second effective Hamiltonian coefficient $K_2$.}
    The block-diagonal part of $B_2$ is given by
    \begin{equation}
        (B_2)_{\tn{d}} = \comm{S_1}{V_\tn{od}} + \frac{1}{2} \comm{S_1}{\comm{S_1}{\Delta H_0}} = \frac{1}{2} \comm{S_1}{V_\tn{od}}.
    \end{equation}
    Hence,
    \begin{equation}\label{eq:K2-raw}
        K_2 = (B_2)_{--} = \frac{1}{2} \big( \comm{S_1}{V_\tn{od}} \big)_{--}.
    \end{equation}

    \medskip
    \noindent\textsl{Third effective Hamiltonian coefficient $K_3$.}
    The block-diagonal part of $B_3$ is given by
    \begin{equation}
        (B_3)_{\tn{d}} = \comm{S_2}{V_\tn{od}} + \frac{1}{2} \big( \comm{S_1}{\comm{S_2}{\Delta H_0}} + \comm{S_2}{\comm{S_1}{\Delta H_0}} + \comm{S_1}{\comm{S_1}{V_\tn{d}}} \big).
    \end{equation}
    Recalling that $\comm{S_1}{\Delta H_0} = -V_\tn{od}$ and $\comm{S_2}{\Delta H_0} = - \comm{S_1}{V_\tn{d}}$, we can simplify the expression for $(B_3)_{\tn{d}}$ and hence,
    \begin{equation}\label{eq:K3-raw}
        K_3 = (B_3)_{--} = \frac{1}{2} \big( \comm{S_2}{V_\tn{od}} \big)_{--}.
    \end{equation}

    \medskip
    \noindent\textsl{Evaluation of coefficients.}
    From \cref{lem:L-inverse} and \cref{eq:sw1}, 
    \begin{align}\label{eq:S1-evaluation}
        S_1 &= \Delta^{-1} ( \Sigma V_{+-} - V_{-+} \Sigma ), &
        (S_1)_{-+} &= - \Delta^{-1} V_{-+} \Sigma, &
        (S_1)_{+-} &= \Delta^{-1} \Sigma V_{+-}.
    \end{align}
    Inserting these into \cref{eq:K2-raw} gives 
    \begin{equation}
        K_2 = \frac{1}{2} \big( (S_1)_{-+} V_{+-} - V_{-+} (S_1)_{+-} \big) = -\Delta^{-1} V_{-+} \Sigma V_{+-},
    \end{equation}
    which yields \cref{eq:K2}.
    Next, \cref{eq:sw2} and \cref{eq:S1-evaluation} show that 
    \begin{align}
        \big( \comm{S_1}{V_\tn{d}} \big)_{-+} &= - \Delta^{-1} ( V_{-+} \Sigma V_{++} - V_{--} V_{-+} \Sigma ), &
        \big( \comm{S_1}{V_\tn{d}} \big)_{+-} &= \Delta^{-1} ( \Sigma V_{+-} V_{--} - V_{++} \Sigma V_{+-} ).
    \end{align}
    By applying \cref{lem:L-inverse} again, 
    \begin{align}\label{eq:S2-evaluation}
        (S_2)_{-+} &= - \Delta^{-2} ( V_{-+} \Sigma V_{++} \Sigma - V_{--} V_{-+} \Sigma^2 ), &
        (S_2)_{+-} &= \Delta^{-2} ( \Sigma^2 V_{+-} V_{--} - \Sigma V_{++} \Sigma V_{+-} ).
    \end{align}
    Substituting \cref{eq:S2-evaluation} into \cref{eq:K3-raw} gives 
    \begin{equation}
        K_3 = \frac{1}{2} \big( (S_2)_{-+} V_{+-} - V_{-+} (S_2)_{+-} \big) = \Delta^{-2} ( V_{-+} \Sigma V_{++} \Sigma V_{+-} - \frac{1}{2} \acomm{V_{--}}{V_{-+} \Sigma^2 V_{+-}} ),
    \end{equation}
    which yields \cref{eq:K3}.
\end{proof}

\uniformcontrolexacttruncated*

\begin{proof}
    The undifferentiated coefficient estimates follow from the standard Schrieffer--Wolff bounds in Refs.~\cite{bravyi2011schrieffer,bravyi2016complexity}.
    We show that these estimates hold uniformly in $s$ and that the corresponding series may be differentiated term by term.
    We suppress the explicit dependence on $s$ when it is clear from context.

    Since $\norm{V(s)}\leq v_0 < \Delta/32$, the low- and high-energy bands of $\Delta H_0+V(s)$ remain uniformly separated.
    Since $V$ is closed and ${\rm C}^2$, \cref{prop:regularity-periodicity-canonical-sw} shows that the exact canonical generator $S(s)$ and the unitary $\e^{-S(s)}$ are closed and ${\rm C}^2$.

    For each fixed $s$, \cite[Lemma~3.4]{bravyi2011schrieffer} guarantees that the series for $S(t,s)$ and $H_{\tn{eff}}(t,s)$ converge absolutely whenever
    \begin{equation}
        \abs{t}<\frac{\Delta}{16\norm{V(s)}}.
    \end{equation}
    Suppose that $v_0>0$.
    Choose a fixed constant $16 < \alpha < 32$ and set $R\coloneqq \Delta/(\alpha v_0)$.
    Since $32 v_0 < \Delta$ and $\alpha < 32$, we have $R >1$.
    Moreover, since $\alpha >16$ and $\norm{V(s)}\leq v_0$, we have
    \begin{equation}
        R < \frac{\Delta}{16 v_0} \leq \frac{\Delta}{16\norm{V(s)}}
    \end{equation}
    whenever $V(s) \neq 0$.
    Thus, the circle $\abs{t}=R$ lies inside the convergence domain of both series for every $s\in[0,1]$.

    Applying the norm bound arguments of Refs.~\cite{bravyi2011schrieffer,bravyi2016complexity}, for a constant order $p$, on this common circle gives uniform bounds on the undifferentiated coefficients.
    Recall that $S_j$ and $K_j$ are (homogeneous) operator polynomials of degree $j$ in $V(s)$.
    Differentiating their expressions with respect to $s$ replaces one occurrence of $V$ by $\ps V$ and produces at most $j$ corresponding terms.
    Applying the same norm bound arguments to these differentiated expressions gives a constant $C > 0$, depending only on the fixed choice of $\alpha$, such that, uniformly in $s$ and for every $j\geq1$,
    \begin{align}
        \snorm{S_j} &\leq C\alpha^j\Delta^{-j}v_0^j, & \snorm{\ps S_j} &\leq Cj\alpha^j\Delta^{-j}v_1v_0^{j-1}, \label{eq:coefficient-estimates-S}\\
        \snorm{K_j} &\leq C\alpha^j\Delta^{1-j}v_0^j, & \snorm{\ps K_j} &\leq Cj\alpha^j\Delta^{1-j}v_1v_0^{j-1}. \label{eq:coefficient-estimates-K}
    \end{align}

    We now verify that these estimates give uniform convergence in $s$.
    Set $q \coloneqq \alpha v_0/\Delta = R^{-1}$.
    Since $32 v_0 < \Delta$ and $\alpha < 32$, we have $q <1$.
    The estimates in \cref{eq:coefficient-estimates-S,eq:coefficient-estimates-K} may therefore be rewritten as
    \begin{align}
            \snorm{S_j} &\leq Cq^j, & \snorm{\ps S_j} &\leq C\alpha\Delta^{-1}v_1jq^{j-1}, \\
            \snorm{K_j} &\leq C\Delta q^j, & \snorm{\ps K_j} &\leq C\alpha v_1jq^{j-1}.
    \end{align}
    Since
    \begin{align}
        \sum_{j\geq1}q^j &= \frac{q}{1-q}, &
        \sum_{j\geq1}jq^{j-1} &= \frac{1}{(1-q)^2},
    \end{align}
    the coefficient series and their differentiated series are summable uniformly in $s$.
    The Weierstrass M-test therefore gives uniform convergence of the series for $S$ and $H_{\tn{eff}}$ and of their differentiated series.
    Since every $S_j$ and $K_j$ is ${\rm C}^1$ in $s$, the termwise-differentiation theorem gives
    \begin{align}
        \ps S(s) &= \sum_{j\geq1}\ps S_j(s), &
        \ps H_{\tn{eff}}(s) &= \sum_{j\geq1}\ps K_j(s).
    \end{align}

    Summing the first two coefficient bounds gives
    \begin{align}
        \snorm{S} &\leq C\frac{q}{1-q} = O(\Delta^{-1}v_0), &
        \snorm{\ps S} &\leq C\alpha\Delta^{-1}v_1\frac{1}{(1-q)^2} = O(\Delta^{-1}v_1).
    \end{align}
    For every fixed and constant $p\geq1$, the corresponding scalar tails satisfy
    \begin{align}
        \sum_{j\geq p+1}q^j &= \frac{q^{p+1}}{1-q}, &
        \sum_{j\geq p+1}jq^{j-1} &= \frac{q^p\bigl((p+1)-pq\bigr)}{(1-q)^2}.
    \end{align}
    Since $q < 1$ and $\alpha$ is fixed, these identities give
    \begin{align}
        \snorm{R^S_{p+1}} &= O(\Delta^{-(p+1)}v_0^{p+1}), & \snorm{\ps R^S_{p+1}} &= O(\Delta^{-(p+1)}v_1v_0^p), \\
        \snorm{R^{H_{\tn{eff}}}_{p+1}} &= O(\Delta^{-p}v_0^{p+1}), & \snorm{\ps R^{H_{\tn{eff}}}_{p+1}} &= O(\Delta^{-p}v_1v_0^p).
    \end{align}
    The implicit constants may depend on the fixed order $p$, but are independent of $s$, $\Delta$, $v_0$, and $v_1$.

    Finally, if $\V$ is independent of $s$, then $\U(s)=\e^{-S(s)}\V$ is closed and ${\rm C}^2$.
    The Duhamel formula and the anti-Hermiticity of $S$ give
    \begin{equation}
        \kappa = \snorm{\ps\U} \leq \snorm{\ps S} = O(\Delta^{-1}v_1).
    \end{equation}
\end{proof}

\staticleadingswgenerator*

\begin{proof}
    Since $H_0$ and $V_{\tn{od}}$ are independent of $s$, $\ps S_1 = 0$. 
    Writing $S(s)=S_1+R_2^S(s)$, we therefore have $\ps S(s) = \ps R_2^S(s)$.
    The differentiated remainder estimate in \cref{lemma:uniform-control-exact-truncated}, applied with $p=1$, gives $\snorm{\ps R_2^S} = O(\Delta^{-2}v_0v_1)$.
    Finally, we apply the Duhamel formula to see that $\snorm{\ps \U} \leq \snorm{\ps S} = O(\Delta^{-2}v_0v_1)$.
\end{proof}

\excitationsectordecomposition*

\begin{proof}
    The projectors $\{Q_X\}_{X\subseteq[A]}$ are mutually orthogonal and resolve the identity.
    That is, if $X\neq Y$, then choosing $a$ in the symmetric difference of $X$ and $Y$ shows that $Q_X Q_Y$ contains the factor $p_a q_a = 0$ on the penalty qubit $m_a$, so $Q_XQ_Y = 0$.
    By expanding the resolution of the identity on each penalty qubit, we see
    \begin{equation}
        \sum_{X\subseteq[A]} Q_X = \bigotimes_{a\in[A]} (p_a + q_a) = I.
    \end{equation}

    On the subspace selected by $Q_X$ we have $q_a Q_X = \ind{a\in X}Q_X$, and therefore
    \begin{equation}
        \bs{H}_0 Q_X = \sum_{a\in[A]} q_a Q_X = \abs{X}\, Q_X ,
    \end{equation}
    which is \cref{eq:H0-sector-eigenvalue}.
    Taking adjoints gives $Q_X \bs{H}_0 = \abs{X} Q_X$, so $\bs{H}_0$ commutes with every $Q_X$, i.e., $\comm{\bs{H}_0}{Q_X} = 0$ for all $X$.
    Summing over $X$ yields
    \begin{equation}\label{eq:H0-sector-decomposition}
        \bs{H}_0 = \bs{H}_0\sum_{X\subseteq[A]} Q_X = \sum_{\emptyset\neq X\subseteq[A]} \abs{X}\, Q_X;
    \end{equation}
    note that the term $X = \emptyset$ vanishes.
    Since $\abs{X}\geq 1$ for every nonempty $X$, the only vanishing coefficient in \cref{eq:H0-sector-decomposition} comes from $Q_\emptyset = \bs{\Pi}_-$.
    Hence, the spectral gap of $\bs{H}_0$ above ${\rm Im}(\bs{\Pi}_-)$ is one.

    Now set $\bs{\Sigma} \coloneqq \sum_{\emptyset\neq X\subseteq[A]} \abs{X}^{-1} Q_X$.
    By orthogonality of the $Q_X$ together with \cref{eq:H0-sector-decomposition},
    \begin{equation}
        \bs{\Sigma}\,\bs{H}_0 = \bs{H}_0\,\bs{\Sigma} = \sum_{\emptyset\neq X\subseteq[A]} Q_X = \bs{\Pi}_+,
    \end{equation}
    and $\bs{\Sigma}\bs{\Pi}_- = 0$.
    Thus, $\bs{\Sigma}$ inverts $\bs{H}_0$ on ${\rm Im}(\bs{\Pi}_+)$ and vanishes on $\ker(\bs{H}_0) = {\rm Im}(\bs{\Pi}_-)$, which for a Hermitian operator is precisely the Moore--Penrose pseudoinverse; hence $\bs{\Sigma} = \bs{H}_0^{\pattce}$ and the claimed decomposition holds.

    Finally, for mutually orthogonal projectors we have $\norm{\sum_X c_X Q_X} = \max_{X : Q_X\neq 0}\abs{c_X}$, so that $\norm{\bs{\Sigma}_X} = \abs{X}^{-1}$ and $\norm{\bs{\Sigma}} = \max_{\emptyset\neq X\subseteq[A]}\abs{X}^{-1} = 1$.
\end{proof}

\linverseparallel*

\begin{proof}
    By \cref{clm:excitation-sector-decomposition}, $\bs{H}_0$ is block-diagonal with respect to the global splitting $\bs{\Pi}_- \oplus \bs{\Pi}_+$, with $(\bs{H}_0)_{--} = 0$ and $(\bs{H}_0)_{++} = \bs{H}_0 \succeq \bs{\Pi}_+$ invertible on ${\rm Im}(\bs{\Pi}_+)$ with inverse $\bs{\Sigma}$.
    Thus, $\bs{H}_0$ satisfies \cref{eq:H0-block-structure} for the global splitting, and \cref{lem:L-inverse} applies with $H_0 \mapsto \bs{H}_0$, $\Pi_\pm \mapsto \bs{\Pi}_\pm$ and $\Sigma \mapsto \bs{\Sigma}$.
\end{proof}

\swgeneratorsparallel*

\begin{proof}
    The proof of \cref{prop:sw-coefficients} uses the fact that $\Delta \bs{H}_0 \in D_\tn{d}$, the block-commutator rules of \cref{lem:block-commutators} applied to the global splitting, and the invertibility of $\L_{\Delta\bs{H}_0}$ on $D_\tn{od}$ established in \cref{cor:L-inverse-parallel}. 
\end{proof}

\heffcoefficientsparallel*

\begin{proof}
    The proof of \cref{prop:heff-coefficients} uses only the block structure of $\Delta H_0$, the block-commutator rules of \cref{lem:block-commutators}, and the inversion of $\L_{\Delta H_0}$ on block-off-diagonal operators.
    By \cref{clm:excitation-sector-decomposition} and \cref{cor:L-inverse-parallel} these hold for $\bs{H}_0$ with respect to the global splitting, hence the argument applies under the substitutions $H_0\mapsto\bs{H}_0$, $\Pi_\pm\mapsto\bs{\Pi}_\pm$, $\Sigma\mapsto\bs{\Sigma}$ and $V\mapsto\bs{V}$.
\end{proof}

\ParallelEffectiveHamiltonianDecomposition*

\begin{proof}
    We suppress the dependence on the parameter $s$ for notational convenience in this proof.
    
    For $X, Y \subseteq [A]$ and $\alpha \in \{-,+\}$, define
    \begin{align}
        \bs{V}_{XY} &\coloneqq Q_X \bs{V} Q_Y, &
        \bs{V}_{X\alpha} &\coloneqq Q_X \bs{V} \bs{\Pi}_\alpha, &
        \bs{V}_{\alpha X} &\coloneqq \bs{\Pi}_\alpha \bs{V} Q_X.
    \end{align}
    Since $Q_{\emptyset} = \bs{\Pi}_-$, we have that 
    \begin{align}
        \bs{V}_{+-} = \sum_{\emptyset\neq X\subseteq[A]} \bs{V}_{X-}, &
        \bs{V}_{-+} = \sum_{\emptyset\neq X\subseteq[A]} \bs{V}_{-X}.
    \end{align}

    Taking the decomposition $\bs{V} = \sum_{a\in[A]} V^a$ into account, we see that $\bs{V}_{XY} = \sum_{a\in[A]} (V^a)_{XY}$ and $\bs{V}_{X-} = \sum_{a\in[A]} (V^a)_{X-}$.
    It follows by definition that $Q_X V^a \bs{\Pi}_- = 0$ unless $X \subseteq \{a\}$ and therefore
    \begin{equation}\label{eq:Va-plus-minus}
        (V^a)_{+-} = Q_{\{a\}} V^a \bs{\Pi}_-.
    \end{equation}
    The same argument applies to $\bs{\Pi}_- V^a Q_X$.
    
    Now we prove the decomposition of $\bs{K}_2$ and $\bs{K}_3^{(1)}$ in turn.

    \medskip
    \noindent\textsl{Decomposition of $\bs{K}_2$.}
    By the definition of $\bs{K}_2$ in \cref{eq:K2-parallel}, the decomposition of $\bs{V}$ we see 
    \begin{equation}
        \bs{K}_2 = - \Delta^{-1} \sum_{a,b \in [A]} \sum_{\emptyset \neq X \subseteq [A]} \bs{\Pi}_- V^a Q_X \bs{\Sigma}_X Q_X V^b \bs{\Pi}_-.
    \end{equation}
    From \cref{eq:Va-plus-minus} (and the analogous expression for $(V^a)_{-+}$), the second-order terms labelled by the triple $(a,b,X)$ can be nonzero only if $\emptyset \neq X \subseteq a \cap b$.
    Using \cref{eq:singleton-sigma} allows us to factorise as 
    \begin{equation}
        \bs{K}_2 = - \Delta^{-1} \sum_{a \in [A]} \bs{\Pi}_- V^a \bs{\Sigma}_{\{a\}} V^a \bs{\Pi}_- = - \Delta^{-1} \sum_{a \in [A]} (V^a)_{-+} (H_0^a)^{\pattce} (V^a)_{+-}.
    \end{equation}
    By \cref{eq:Omega-a} we yield \cref{eq:K2-decomposition}.
    
    \medskip
    \noindent\textsl{Decomposition of $\bs{K}_3^{(1)}$.}
    By the definition of $\bs{K}_3^{(1)}$ in \cref{eq:K3-parallel}, the decomposition of $\bs{V}$ we see 
    \begin{equation}
        \bs{K}_3^{(1)} = \Delta^{-2} \sum_{a,b,c \in [A]} \sum_{\emptyset \neq X,Y \subseteq [A]} \bs{\Pi}_- V^a Q_X \bs{\Sigma}_X Q_X V^b Q_Y \bs{\Sigma}_Y Q_Y V^c \bs{\Pi}_-.
    \end{equation}
    By \cref{eq:Va-plus-minus}, the rightmost and leftmost factors can be nonzero only when $Y=\{c\}$ and $X=\{a\}$, respectively.
    Suppose that $b\neq c$.
    Since $V^b$ acts only on the system qubits and mediator $m_b$, the assumption $\bs{\Pi}_-V^b\bs{\Pi}_- = 0$ implies that $Q_{\{c\}} V^b Q_{\{c\}} = 0$.
    Every remaining action of $V^b$ on $Q_{\{c\}}$ creates an additional excitation on $m_b$, producing the sector $Q_{\{b,c\}}$, from which the leftmost single-mediator perturbation cannot return to $\bs{\Pi}_-$.
    If $b = c$, the second reduced inverse removes the component in which $V^c$ returns the state to the ground sector, leaving only the component that remains in $Q_{\{c\}}$.
    Consequently, a contribution can be nonzero only when $a = b = c$ and $X = Y = \{a\}$.
    Using \cref{eq:singleton-sigma} allows us to factorise as 
    \begin{equation}
        \bs{K}_3^{(1)} = \Delta^{-2} \sum_{a \in [A]} \bs{\Pi}_- V^a \bs{\Sigma}_{\{a\}} V^a \bs{\Sigma}_{\{a\}} V^a \bs{\Pi}_- = \Delta^{-2} \sum_{a \in [A]} (V^a)_{-+} (H_0^a)^{\pattce} (V^a)_{++} (H_0^a)^{\pattce} (V^a)_{+-}.
    \end{equation}
    By \cref{eq:Psi-a} we yield \cref{eq:K3-decomposition}.
\end{proof}

%% file: appendices/section-proofs/berry-phase-simulation-proofs.tex
\section{Proofs of Main Results in Section~\ref{sec:berry-phase-simulation}}\label{app:berry-phase-simulation-proofs}
In this appendix, we include some additional supporting proofs for the main results presented in Section~\ref{sec:berry-phase-simulation}.
We first outline these supporting results then proceed with the proofs of the main results.

\subsection{Supporting Results}
The context and relevance of these supporting results is with respect to the content, notational conventions, and assumptions used in Section~\ref{sec:berry-phase-simulation}.

\begin{restatable}{lemma}{resolventbounds}\label{lem:resolvent-bounds}
    For every $s\in[0,1]$, the contour $\mathscr{C}(s)$ encloses $\lambda_0(s)$ and no other eigenvalue of $H(s)$, and encloses $\mu_0(s)$ and no other eigenvalue of $\widehat{H}(s)$.
    Moreover, for all $z\in\mathscr{C}(s)$,
    \begin{align}
        \|R(z,s)\| &\le \frac{2}{\gamma_\star}, &
        \|\widehat{R}(z,s)\| &\le \frac{4}{\gamma_\star}.
    \end{align}
\end{restatable}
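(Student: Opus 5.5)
The plan is to read the statement directly off the spectral structure of $H(s)$ and $\widehat{H}(s)$ on the circle $\mathscr{C}(s)$ of radius $\gamma_\star/2$ centred at $\lambda_0(s)$, using that both operators are Hermitian. Then $\|R(z,s)\| = {\rm dist}(z,\sigma(H(s)))^{-1}$ and similarly for $\widehat{R}$, so both the enclosure claims and the norm bounds reduce to lower bounds on the distance from points of $\mathscr{C}(s)$ to the two spectra.

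For $H(s)$, the enclosure is immediate. $\lambda_0(s)$ is the centre of the circle, and every other eigenvalue satisfies $\lambda_j(s) \ge \lambda_0(s) + \gamma_\star$, so it lies at distance at least $\gamma_\star$ from the centre and hence outside. For $z \in \mathscr{C}(s)$, the triangle inequality gives $|z-\lambda_0| = \gamma_\star/2$ and $|z - \lambda_j| \ge \gamma_\star - \gamma_\star/2 = \gamma_\star/2$ for $j \ge 1$. Hence ${\rm dist}(z,\sigma(H)) \ge \gamma_\star/2$ and $\|R\| \le 2/\gamma_\star$.

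For $\widehat{H}(s)$, I will invoke Condition~\ref{item:c2} (so $\|H - \widehat{H}\| \le \varepsilon$) together with Weyl's inequality. This is essentially \cref{lem:eigenvalue-simulation}, since $\widehat{H}$ has eigenvalues $\mu_j = \wt\lambda_j$ for $j < 2^n$. It gives $|\mu_j(s) - \lambda_j(s)| \le \varepsilon$ for every $j$ after ordering.
\begin{itemize}
\item By Assumption~\ref{item:a2}, $\varepsilon < \gamma_\star/4$, so $|\mu_0 - \lambda_0| < \gamma_\star/4$. Then $\mu_0$ lies strictly inside the circle, and for $z \in \mathscr{C}(s)$ we get $|z - \mu_0| \ge \gamma_\star/2 - \varepsilon > \gamma_\star/4$.
\item For $j \ge 1$, $\mu_j \ge \lambda_j - \varepsilon \ge \lambda_0 + \gamma_\star - \varepsilon > \lambda_0 + 3\gamma_\star/4$. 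So $\mu_j$ lies outside the circle, and $|z - \mu_j| \ge 3\gamma_\star/4 - \gamma_\star/2 = \gamma_\star/4$.
\end{itemize}
Together these give ${\rm dist}(z,\sigma(\widehat{H})) \ge \gamma_\star/4$, hence $\|\widehat R\| \le 4/\gamma_\star$.

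The only point needing care is the eigenvalue-matching step for $\widehat H$. One must confirm that $\widehat H$ lives on the same $2^n$-dimensional space as $H$, so that Weyl's inequality applies index by index with $\mu_0$ the minimal eigenvalue. One must also check the strict versus non-strict inequalities, so that no eigenvalue of $\widehat H$ lands on the contour itself. With $\varepsilon < \gamma_\star/4$ strict, both margins are strict, so the resolvents are well defined on all of $\mathscr{C}(s)$ uniformly in $s$.
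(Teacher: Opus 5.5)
Your proof is correct and follows essentially the same route as the paper: both use ${\rm dist}(z,\sigma(H(s)))^{-1}=\|R(z,s)\|$ for Hermitian operators, bound distances from the circle of radius $\gamma_\star/2$ by the triangle inequality, and then apply Weyl's inequality with $\varepsilon<\gamma_\star/4$ to obtain $|z-\mu_0(s)|\ge\gamma_\star/2-\varepsilon$ and $|z-\mu_j(s)|\ge\gamma_\star/2-\varepsilon$ for $j\ge1$, both of which are at least $\gamma_\star/4$. Your attention to strict inequalities is a small tightening of the paper's write-up, since it ensures that no eigenvalue of $\widehat{H}(s)$ lies on the contour and that $\mu_0(s)$ is simple.
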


\begin{proof}
    Since $H(s)$ is Hermitian, $\|R(z,s)\| = {\rm dist}(z, {\rm spec}(H(s)))^{-1}$, and likewise for $\widehat{R}$.
    For $z \in \mathscr{C}(s)$, $\abs{z - \lambda_0(s)} = \gamma_\star/2$ while $\abs{z - \lambda_j(s)} \ge \gamma(s) - \gamma_\star/2 \ge \gamma_\star/2$ for $j\ge 1$, which proves the bound on $\|R\|$ and the enclosure for $H(s)$.
    By Weyl's inequality and Assumption~\ref{item:a2}, $\abs{\mu_0(s) - \lambda_0(s)} \le \varepsilon$ and $\mu_j(s) - \lambda_0(s) \ge \gamma_\star - \varepsilon$ for $j\ge 1$.
    Therefore,
    \begin{equation}
        \abs{z - \mu_0(s)} \ge \tfrac{\gamma_\star}{2} - \varepsilon \ge \tfrac{\gamma_\star}{4}, \qquad \abs{z - \mu_j(s)} \ge \gamma_\star - \varepsilon - \tfrac{\gamma_\star}{2} \ge \tfrac{\gamma_\star}{4} \quad (j\ge 1),
    \end{equation}
    giving the bound on $\|\widehat{R}\|$.
    Finally $\mu_0(s)$ lies inside $\mathscr{C}(s)$ since $\varepsilon < \gamma_\star/2$, and every $\mu_j(s)$, $j\ge 1$, lies outside since $\mu_j(s) - \lambda_0(s) \ge \gamma_\star - \varepsilon > \gamma_\star/2$.
\end{proof}

\begin{remark}\label{rem:riesz-equivalence}
    Expanding the resolvent in the eigenbasis of $H(s)$,
    \begin{equation}
        \frac{1}{2\pi\i}\oint_{\mathscr{C}(s)} \dd{z}\, R(z,s) = \sum_{j=0}^{2^n - 1} \left( \frac{1}{2\pi\i}\oint_{\mathscr{C}(s)} \frac{\dd{z}}{z - \lambda_j(s)} \right) \ketbra{\phi_j(s)} = \ketbra{\phi_0(s)},
    \end{equation}
    since by \cref{lem:resolvent-bounds} only the pole at $\lambda_0(s)$ is enclosed (residue $1$), the integrands for $j\ge 1$ being holomorphic inside $\mathscr{C}(s)$.
    The same computation gives $\widehat{P}(s) = \ketbra{\psi_0(s)}$.
\end{remark}

\begin{restatable}{lemma}{resolventidentity}\label{lem:resolvent-identity}
    For all $z\in\mathscr{C}(s)$ and $s\in[0,1]$,
    \begin{equation}
        R(z,s)-\widehat R(z,s)=R(z,s)E(s)\widehat R(z,s).
    \end{equation}
\end{restatable}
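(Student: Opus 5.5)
The identity follows from the second resolvent identity, applied with both resolvents evaluated at the same point $z$ and the same parameter $s$. Fix $s\in[0,1]$ and $z\in\mathscr{C}(s)$. By \cref{lem:resolvent-bounds}, $z$ lies in the resolvent set of both $H(s)$ and $\widehat H(s)$, so $R(z,s)$ and $\widehat R(z,s)$ exist as bounded operators on $\Hil$.

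The computation is a single factorisation. Write
\begin{equation}
    R - \widehat R = R\big[(z-\widehat H) - (z-H)\big]\widehat R = R\,(H-\widehat H)\,\widehat R,
\end{equation}
where all operators are evaluated at $(z,s)$. This uses $R(z-H)=I$ on the left and $(z-\widehat H)\widehat R=I$ on the right. Inserting $E(s)=H(s)-\widehat H(s)$ from \cref{eq:transformed-family} and the definition of the simulation error operator gives the claim.

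There is no real obstacle here. The only points to check are that the order of the factors is correct (with $R$ on the left and $\widehat R$ on the right) and that both inverses exist on the whole contour. The latter is guaranteed by Assumption~\ref{item:a2} through \cref{lem:resolvent-bounds}, because the contour stays at distance at least $\gamma_\star/4$ from the spectra of both operators.
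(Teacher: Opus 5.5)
Your proposal is correct and matches the paper's proof: both apply the second resolvent identity $A^{-1}-B^{-1}=A^{-1}(B-A)B^{-1}$ with $A=z-H(s)$ and $B=z-\widehat H(s)$, so that $B-A=H-\widehat H=E$. You also check that both resolvents exist on $\mathscr{C}(s)$ via \cref{lem:resolvent-bounds}, which the paper's one-line proof leaves implicit.
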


\begin{proof}
    Dropping dependencies and applying $A^{-1} - B^{-1} = A^{-1}(B - A)B^{-1}$ with $A = z - H$, $B = z - \widehat{H}$,
    $R - \widehat{R} = R[(z - \widehat{H}) - (z - H)]\widehat{R} = R(H - \widehat{H})\widehat{R} = R E \widehat{R}$.
\end{proof}

\begin{restatable}{proposition}{derivativeSpectralProjectorBound}\label{prop:derivative-spectral-projector-bound}
    For all $s\in[0,1]$,
    \begin{align}
        \|\ps  P(s)\| &= O(\gamma_\star^{-1} \norm{\ps H(s)}), &
        \|\ps  \widehat{P}(s)\| &= O(\gamma_\star^{-1}\norm{\ps\widehat H(s)}),
    \end{align}
    where $\norm{\ps\widehat H(s)}\le \norm{\ps H(s)}+\norm{\ps E(s)}$ because $\widehat H=H-E$.
\end{restatable}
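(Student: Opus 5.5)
We start from the Riesz representation \cref{eq:riesz} and differentiate under the integral sign, as justified in Section~\ref{sec:continuity-of-spectrum}. The circle $\mathscr{C}(s)$ moves with $s$, so differentiating directly would produce a boundary term. To avoid this, fix $s_0$ and hold the contour $\mathscr{C}(s_0)$ fixed on a neighbourhood of $s_0$.

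On a small enough neighbourhood, continuity of $H$ and $\widehat H$ keeps this fixed contour admissible for both operators. Weyl's inequality gives $\abs{\lambda_0(s)-\lambda_0(s_0)}\le \norm{H(s)-H(s_0)}$, which tends to zero. With the fixed contour in place, the identity $\ps R = R(\ps H)R$ yields
\begin{equation*}
    \ps P(s_0) = \frac{1}{2\pi\i}\oint_{\mathscr{C}(s_0)} \dd{z}\, R(z,s_0)\,\big(\ps H(s_0)\big)\,R(z,s_0),
\end{equation*}
together with the analogous formula for $\widehat P$. At $s=s_0$ the contour is exactly $\mathscr{C}(s_0)$, so the resolvent bounds of \cref{lem:resolvent-bounds} apply on it: $\norm{R}\le 2/\gamma_\star$ and $\norm{\widehat R}\le 4/\gamma_\star$.

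We then apply the ML estimate with $\abs{\mathscr{C}(s_0)}=\pi\gamma_\star$. This gives
\begin{equation*}
    \norm{\ps P(s_0)} \le \frac{\pi\gamma_\star}{2\pi}\cdot\frac{4}{\gamma_\star^{2}}\,\norm{\ps H(s_0)} = 2\gamma_\star^{-1}\norm{\ps H(s_0)},
\end{equation*}
and in the same way $\norm{\ps\widehat P(s_0)}\le 8\gamma_\star^{-1}\norm{\ps\widehat H(s_0)}$. Since $s_0$ was arbitrary, both claimed bounds hold for all $s$. The final remark follows by differentiating $\widehat H = H - E$ and applying the triangle inequality.

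The only delicate step is the first one: justifying differentiation with a contour that is fixed, rather than the moving $\mathscr{C}(s)$. The uniform margin in \cref{lem:resolvent-bounds} settles it. Its proof shows $\mathrm{dist}(z,\sigma)\ge\gamma_\star/4$ for every $z$ on the contour, for both operators. Continuity of the eigenvalues then keeps this margin strictly positive for $s$ near $s_0$.

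An alternative that avoids contours altogether is the rank-one identity $\ps P = -R_\perp(\ps H)P - P(\ps H)R_\perp$, where $R_\perp$ is the reduced resolvent at the ground energy, with norm at most $1/\gamma$. It gives the same bound with a different constant.
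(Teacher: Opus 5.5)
Your proposal is correct and is essentially the paper's own argument. You fix $s_0$, hold $\mathscr{C}(s_0)$ fixed on a neighbourhood, differentiate the Riesz integral using $\ps R = R(\ps H)R$, and apply the ML estimate with the resolvent bounds of \cref{lem:resolvent-bounds}; this gives the same constants $2\gamma_\star^{-1}$ and $8\gamma_\star^{-1}$, and the remark on $\norm{\ps\widehat H}$ follows, as in the paper, by differentiating $\widehat H = H - E$.
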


\begin{proof}
    Fix $s_0$.
    By continuity of the spectrum the contour $\mathscr{C}(s_0)$ stays admissible on a neighbourhood of $s_0$, so we differentiate \cref{eq:riesz} under the integral sign with locally constant contour.
    Using $\ps  R = R\,(\ps  H)\, R$,
    \begin{equation}\label{eq:projector-derivative}
        \ps  P(s) = \frac{1}{2\pi\i}\oint_{\mathscr{C}(s)} \dd{z}\, R(z,s)\,(\ps  H(s))\, R(z,s),
    \end{equation}
    and analogously for $\widehat{P}$ with $R \to \widehat{R}$, $H \to \widehat{H}$.
    The contour estimate with \cref{lem:resolvent-bounds} gives
    \begin{equation}
        \|\ps  P\| \le \frac{\pi\gamma_\star}{2\pi}\Big(\frac{2}{\gamma_\star}\Big)^2 \norm{\ps H} = \frac{2\norm{\ps H}}{\gamma_\star}, \qquad \|\ps  \widehat{P}\| \le \frac{\pi\gamma_\star}{2\pi}\Big(\frac{4}{\gamma_\star}\Big)^2 \norm{\ps\widehat{H}} = \frac{8\norm{\ps\widehat{H}}}{\gamma_\star}.
    \end{equation}
    Finally, differentiating the identity $\widehat H=H-E$ gives $\norm{\ps\widehat H}\le\norm{\ps H}+\norm{\ps E}$.
\end{proof}

\begin{restatable}{claim}{circularDistance}\label{claim:circular-distance}
    For all $a, b\in\mathbb{R}$, $\ d_{2\pi}(a,b) \le \tfrac{\pi}{2}\abs{\e^{\i a} - \e^{\i b}}$.
\end{restatable}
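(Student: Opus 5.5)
We prove Claim~\ref{claim:circular-distance} by reducing to a scalar inequality on a single angle. Let $x \in [-\pi,\pi]$ be the representative of $a-b$ modulo $2\pi$ with smallest absolute value, so that $d_{2\pi}(a,b) = \abs{x}$. Since $\abs{\e^{\i a}-\e^{\i b}} = \abs{\e^{\i(a-b)}-1}$ is $2\pi$-periodic in $a-b$, it equals $\abs{\e^{\i x}-1}$. A direct computation gives
\begin{equation}
    \abs{\e^{\i x}-1} = \sqrt{2-2\cos x} = 2\abs{\sin(x/2)}.
\end{equation}
The claim is therefore equivalent to $\abs{x} \le \pi\abs{\sin(x/2)}$ for all $\abs{x}\le\pi$.

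To establish this, set $y = \abs{x}/2 \in [0,\pi/2]$; the inequality becomes $\sin y \ge (2/\pi) y$, which is Jordan's inequality. We justify it by concavity: $\sin$ is concave on $[0,\pi/2]$, since its second derivative $-\sin y \le 0$ there. Hence $\sin$ lies above the chord joining $(0,0)$ and $(\pi/2,1)$, and that chord is exactly the line $y\mapsto 2y/\pi$. The case $x=0$ is trivial.

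Every step here is routine. The only point requiring care is choosing the representative with $\abs{x}\le\pi$ rather than an arbitrary lift of $a-b$: for a general lift the inequality fails, because the left-hand side could exceed $\pi$. With that choice the constant $\pi/2$ follows directly from Jordan's inequality, and it is attained at the antipodal case $\abs{x}=\pi$.
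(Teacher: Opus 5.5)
Your proposal is correct and follows essentially the same argument as the paper. Both take the minimal representative $\theta = d_{2\pi}(a,b)\in[0,\pi]$, write $\abs{\e^{\i a}-\e^{\i b}} = 2\sin(\theta/2)$, and apply concavity of $\sin$ on $[0,\pi/2]$ (Jordan's inequality) to get $\theta \le \pi\sin(\theta/2)$.
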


\begin{proof}
    Let $\theta = d_{2\pi}(a,b) \in [0,\pi]$, so $\abs{\e^{\i a} - \e^{\i b}} = 2\sin(\theta/2)$.
    Concavity of $\sin$ on $[0,\pi/2]$ gives $\sin(\theta/2) \ge (\theta/2)\cdot\tfrac{2}{\pi} = \theta/\pi$, whence $\theta \le \pi\sin(\theta/2) = \tfrac{\pi}{2}\cdot 2\sin(\theta/2)$.
\end{proof}

\begin{restatable}{proposition}{duhamelBound}\label{prop:duhamel}
    Let $A(s), B(s)$ be continuous anti-Hermitian families and $U(s), V(s)$ solve $\ps  U = AU$, $\ps  V = BV$ with $U(0) = V(0) = I$.
    Then $\|U(s) - V(s)\| \le \int_0^s \|A(t) - B(t)\|\,\dd{t}$ for all $s\in[0,1]$.
\end{restatable}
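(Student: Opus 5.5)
The natural route is a Grönwall-type argument on the difference $W(s)\coloneqq U(s)-V(s)$, using anti-Hermiticity so that the propagators are unitary and the error does not grow. First, existence and uniqueness of $U$ and $V$ follow from continuity of $A$ and $B$ (linear ODE with continuous coefficients). Since $\ps(U^\dagger U) = U^\dagger(A^\dagger + A)U = 0$ and $U(0)=I$, both $U(s)$ and $V(s)$ are unitary for all $s$. The same computation gives $\ps\big(U(s)^{-1}\big) = -U(s)^\dagger A(s)$, which I will use below.

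The cleanest step is the variation-of-constants (Duhamel) identity. I would differentiate $U(s)^\dagger V(s)$:
\begin{equation}
    \ps\big(U^\dagger V\big) = -U^\dagger A V + U^\dagger B V = U^\dagger (B - A) V .
\end{equation}
Integrating from $0$ to $s$ with $U(0)^\dagger V(0) = I$ gives
\begin{equation}
    U(s)^\dagger V(s) - I = \int_0^s U(t)^\dagger\big(B(t)-A(t)\big)V(t)\,\dd{t}.
\end{equation}
Multiplying on the left by the unitary $U(s)$ then yields
\begin{equation}
    V(s) - U(s) = U(s)\int_0^s U(t)^\dagger\big(B(t)-A(t)\big)V(t)\,\dd{t}.
\end{equation}

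Taking norms, using unitary invariance of the spectral norm together with $\norm{U(t)}=\norm{V(t)}=1$, and applying the triangle inequality for Bochner (here Riemann) integrals of continuous operator-valued functions gives $\norm{U(s)-V(s)}\le\int_0^s\norm{A(t)-B(t)}\,\dd{t}$.

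There is no real obstacle here. The only point needing care is justifying the product rule and the integral norm bound for operator-valued ${\rm C}^1$ functions, and both are standard in finite dimensions.
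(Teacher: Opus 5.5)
Your proof is correct and matches the paper's argument essentially verbatim. Both differentiate $U^\dagger V$ to obtain $U^\dagger(B-A)V$, integrate from $0$ using $U(0)^\dagger V(0)=I$, and conclude by unitary invariance of the spectral norm. The opening ``Gr\"onwall-type'' framing is unnecessary, since you never actually invoke a Gr\"onwall inequality.
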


\begin{proof}
    Both $U, V$ are unitary.
    Using $\partial_t U^\dagger = -U^\dagger A$, $\ \partial_t(U^\dagger V) = U^\dagger(B - A)V$, so $U^\dagger(s)V(s) - I = \int_0^s U^\dagger(t)(B(t) - A(t))V(t)\,\dd{t}$.
    By unitary invariance, $\|U(s) - V(s)\| = \|I - U^\dagger(s)V(s)\| \le \int_0^s \|A(t) - B(t)\|\,\dd{t}$.
\end{proof}

\begin{restatable}{proposition}{generatorCommutatorBound}\label{prop:generator-commutator-bound}
    Let $X(s), Y(s)$ be ${\rm C}^1$ and $K = \comm{\ps  X}{X}$, $L = \comm{\ps  Y}{Y}$.
    Then $\|K - L\| \le 2\|\ps  X - \ps  Y\|\,\|X\| + 2\|\ps  Y\|\,\|X - Y\|$.
\end{restatable}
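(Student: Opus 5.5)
The plan is to prove the bound by writing the difference $K-L$ as a sum of two commutators via the usual add-and-subtract trick, and then estimate each commutator with the elementary inequality $\norm{\comm{P}{Q}}\le 2\norm{P}\norm{Q}$. Concretely, I would insert the mixed term $\comm{\ps Y}{X}$:
\begin{equation}
    K - L = \comm{\ps X}{X} - \comm{\ps Y}{X} + \comm{\ps Y}{X} - \comm{\ps Y}{Y} = \comm{\ps X - \ps Y}{X} + \comm{\ps Y}{X - Y},
\end{equation}
using bilinearity of the commutator in each slot. The decomposition is pointwise in $s$ and needs only that $X,Y$ are ${\rm C}^1$, so that $\ps X$ and $\ps Y$ exist as bounded operators.

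Next I would apply the triangle inequality together with $\norm{PQ-QP}\le\norm{PQ}+\norm{QP}\le 2\norm{P}\norm{Q}$ to each term. The first term gives $2\norm{\ps X-\ps Y}\,\norm{X}$. The second gives $2\norm{\ps Y}\,\norm{X-Y}$. Summing the two yields exactly the claimed inequality.

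The choice of the mixed term fixes which norms appear. Inserting $\comm{\ps Y}{X}$ rather than $\comm{\ps X}{Y}$ produces $\norm{X}$ and $\norm{\ps Y}$ as the weights, matching the statement. For the later application to projectors, $\norm{X}=\norm{P}=1$, and this is why that insertion is the useful one. There is no real obstacle here: the argument is purely algebraic and is the same at every $s$.
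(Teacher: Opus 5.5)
Your proof is correct and is essentially the paper's argument: it also adds and subtracts $\comm{\ps Y}{X}$ to get $K-L=\comm{\ps X-\ps Y}{X}+\comm{\ps Y}{X-Y}$. It then bounds each term with $\norm{\comm{W}{Z}}\le 2\norm{W}\norm{Z}$.
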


\begin{proof}
    Adding and subtracting $\comm{\ps  Y}{X}$, $\ K - L = \comm{\ps  X - \ps  Y}{X} + \comm{\ps  Y}{X - Y}$; the bound follows from $\|\comm{W}{Z}\| \le 2\|W\|\|Z\|$ and submultiplicativity.
\end{proof}

\begin{restatable}{lemma}{generatorComparison}\label{lem:generator-comparison}
    For all $s\in[0,1]$,
    \begin{equation}
        \|{\rm G}(s) - \widehat{{\rm G}}(s)\| = O\big( \gamma_\star^{-2}\max\{\norm{\ps H}, \norm{\ps\widehat{H}}\}\,\varepsilon + \gamma_\star^{-1} \norm{\ps E}\big).
    \end{equation}
\end{restatable}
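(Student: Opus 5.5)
The plan is to apply \cref{prop:generator-commutator-bound} pointwise in $s$ with $X = P(s)$ and $Y = \widehat{P}(s)$, so that ${\rm G} = K$ and $\widehat{{\rm G}} = L$. This gives
\begin{equation}
    \|{\rm G}(s) - \widehat{{\rm G}}(s)\| \le 2\|\ps P(s) - \ps\widehat{P}(s)\|\,\|P(s)\| + 2\|\ps\widehat{P}(s)\|\,\|P(s) - \widehat{P}(s)\|.
\end{equation}
The proof then reduces to inserting the three projector estimates already established.

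For the first term, I will use $\|P(s)\| = 1$, since $P(s)$ is a rank-one orthogonal projector. \Cref{prop:derivative-projector-difference} then bounds this contribution by
\begin{equation}
    O\big(\gamma_\star^{-2}\max\{\norm{\ps H},\norm{\ps\widehat H}\}\varepsilon + \gamma_\star^{-1}\norm{\ps E}\big),
\end{equation}
which is exactly the claimed form.

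For the second term, I will combine \cref{prop:derivative-spectral-projector-bound}, which gives $\|\ps\widehat{P}\| = O(\gamma_\star^{-1}\norm{\ps\widehat H})$, with \cref{prop:spectral-projector-bound}, which gives $\|P - \widehat{P}\| = O(\gamma_\star^{-1}\varepsilon)$. Their product is $O(\gamma_\star^{-2}\norm{\ps\widehat H}\varepsilon)$, and this is absorbed into the $\max$ term of the first contribution.

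All constants involved are universal: they come from the contour estimate with radius $\gamma_\star/2$ and the resolvent bounds of \cref{lem:resolvent-bounds}, which are valid under Assumption~\ref{item:a2}. The bound therefore holds uniformly in $s$.

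The main obstacle is \cref{prop:derivative-projector-difference}, which I am assuming here. If it had to be proved directly, I would differentiate the Riesz formula and write the difference of the two integrands as
\begin{equation}
    R(\ps H)R - \widehat R(\ps\widehat H)\widehat R = (R-\widehat R)(\ps H)R + \widehat R(\ps H)(R-\widehat R) + \widehat R(\ps E)\widehat R .
\end{equation}
Here $R - \widehat R = R E \widehat R$ by \cref{lem:resolvent-identity}. The ML estimate then produces the $\gamma_\star^{-2}\varepsilon\norm{\ps H}$ and $\gamma_\star^{-1}\norm{\ps E}$ terms directly.
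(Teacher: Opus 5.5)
Your proposal is correct and matches the paper's proof: apply the commutator estimate with $X=P$, $Y=\widehat P$ and $\|P\|=1$, bound the first term by the derivative-difference proposition, and absorb the second term $O(\gamma_\star^{-2}\norm{\ps\widehat H}\varepsilon)$ into the maximum. Your aside telescoping, with $\ps E$ sandwiched between two copies of $\widehat R$, is also valid, and it yields the derivative-difference bound with $\norm{\ps H}$ alone in place of the maximum.
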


\begin{proof}
    By \cref{prop:generator-commutator-bound} with $X = P$, $Y = \widehat{P}$ and $\|P\| = 1$, $\ \|{\rm G} - \widehat{{\rm G}}\| \le 2\|\ps  P - \ps \widehat{P}\| + 2\|\ps \widehat{P}\|\,\|P - \widehat{P}\|$.
    The first term is bounded by \cref{prop:derivative-projector-difference}.
    By \cref{prop:spectral-projector-bound,prop:derivative-spectral-projector-bound}, the second is
    $O(\gamma_\star^{-2}\norm{\ps\widehat H}\varepsilon)$, which is already contained in the first term's maximum.
\end{proof}

\subsection{Main Results}
We now proceed to prove the main results of Section~\ref{sec:berry-phase-simulation}, making use of the supporting results established above.

\spectralprojectorbound*

\begin{proof}
    By \cref{eq:riesz} and the contour estimate, $\|P(s) - \widehat{P}(s)\| \le \frac{\abs{\mathscr{C}(s)}}{2\pi} \sup_{z\in\mathscr{C}(s)} \|R(z,s) - \widehat{R}(z,s)\|$.
    By \cref{lem:resolvent-identity,lem:resolvent-bounds}, $\sup_{z\in\mathscr{C}(s)} \|R - \widehat{R}\| \le \tfrac{2}{\gamma_\star}\cdot\varepsilon\cdot\tfrac{4}{\gamma_\star} = 8\varepsilon\gamma_\star^{-2}$, and since $\abs{\mathscr{C}(s)} = \pi\gamma_\star$ the claim follows.
\end{proof}

\derivativeProjectorDifference*

\begin{proof}
    Dropping dependencies, by \cref{eq:projector-derivative} and the contour estimate, $\|\ps  P - \ps  \widehat{P}\| \le \frac{\abs{\mathscr{C}}}{2\pi} \sup_{z\in\mathscr{C}} \|\Omega(z)\|$ with $\Omega \coloneqq R(\ps  H)R - \widehat{R}(\ps \widehat{H})\widehat{R}$.
    Telescoping,
    \begin{equation}
        \Omega = (R - \widehat{R})(\ps  H)R + \widehat{R}(\ps  H - \ps \widehat{H})R + \widehat{R}(\ps \widehat{H})(R - \widehat{R}),
    \end{equation}
    with $\ps  H - \ps \widehat{H} = \ps  E$.
    By \cref{lem:resolvent-bounds} and $\|R - \widehat{R}\| \le 8\varepsilon\gamma_\star^{-2}$ (proof of \cref{prop:spectral-projector-bound}),
    \begin{equation}
        \|\Omega\| \le \frac{8\varepsilon}{\gamma_\star^2}\norm{\ps H}\frac{2}{\gamma_\star} + \frac{4}{\gamma_\star}\norm{\ps E}\frac{2}{\gamma_\star} + \frac{4}{\gamma_\star}\norm{\ps\widehat{H}}\frac{8\varepsilon}{\gamma_\star^2} = O\big(\gamma_\star^{-3}\max\{\norm{\ps H}, \norm{\ps\widehat{H}}\}\varepsilon + \gamma_\star^{-2}\norm{\ps E}\big).
    \end{equation}
    Multiplying by $\abs{\mathscr{C}}/2\pi = \gamma_\star/2$ yields the claim.
\end{proof}

\katoIntertwining*

\begin{proof}
    Differentiating $P^2 = P$ gives $\ps  P = (\ps  P)P + P(\ps  P)$; multiplying by $P$ on both sides yields $P(\ps  P)P = 0$, whence
    \begin{equation}
        \comm{{\rm G}}{P} = (\ps  P)P + P(\ps  P) - 2P(\ps  P)P = \ps  P.
    \end{equation}
    Setting $Q(s) \coloneqq T^\dagger(s) P(s) T(s)$ and using $\ps  T = {\rm G}T$, $\ps  T^\dagger = -T^\dagger {\rm G}$, we get $\ps  Q = T^\dagger(\ps  P - \comm{{\rm G}}{P})T = 0$, so $Q(s) = Q(0) = P(0)$.
\end{proof}

\berryphaseholonomy*

\begin{proof}
    Let $\ket{\xi(s)} = T(s)\ket{\phi_0(0)}$.
    By \cref{eq:xi-in-image} and rank-one-ness, $\ket{\xi(s)} \propto \ket{\phi_0(s)}$; as both are normalised, $c(s) \coloneqq \braket{\phi_0(s)}{\xi(s)}$ is unimodular and ${\rm C}^1$, so it admits a ${\rm C}^1$ phase lift $\beta$ with $c(s) = \e^{\i\beta(s)}$ and, since $\ket{\xi(0)} = \ket{\phi_0(0)}$, $\beta(0) = 0$.
    Thus, $\ket{\xi(s)} = \e^{\i\beta(s)}\ket{\phi_0(s)}$, and taking the inner product of $\ket{\xi(1)}$ with $\ket{\phi_0(0)}$ and using \cref{eq:holonomy-phase},
    \begin{equation}\label{eq:trace-via-beta}
        \Tr[P(0)\, T(1)] = \braket{\phi_0(0)}{\xi(1)} = \e^{\i\beta(1)}\braket{\phi_0(0)}{\phi_0(1)},
    \end{equation}
    so it suffices to identify $\beta(1)$.
    On one hand $\ps \ket{\xi} = {\rm G}\ket{\xi} = \e^{\i\beta} {\rm G}\ket{\phi_0}$ with $\mel{\phi_0}{{\rm G}}{\phi_0} = 0$ (as in \cref{eq:parallel-transport}, using $P\ket{\phi_0} = \ket{\phi_0}$), so $\mel{\phi_0}{\ps }{\xi} = 0$.
    On the other hand, differentiating $\ket{\xi} = \e^{\i\beta}\ket{\phi_0}$ gives $\mel{\phi_0}{\ps }{\xi} = \e^{\i\beta}(\i\,\ps \beta + \mel{\phi_0}{\ps }{\phi_0})$.
    Comparing and dividing by $\e^{\i\beta} \ne 0$ yields $\ps \beta = \i\mel{\phi_0}{\ps }{\phi_0} = A(s)$, and integrating with $\beta(0) = 0$ gives $\beta(1) = \varTheta$; substituting into \cref{eq:trace-via-beta} proves \cref{eq:berry-holonomy-identity}.
\end{proof}

\berryPhaseComparison*

\begin{proof}
    By Assumption~\ref{item:a3}, both paths are closed, so by \cref{prop:berry-holonomy} in periodic gauges $\Tr[P(0)T(1)] = \e^{\i\vartheta}$ and $\Tr[\widehat{P}(0)\widehat{T}(1)] = \e^{\i\widehat{\vartheta}}$ are unimodular.
    The triangle inequality gives
    \begin{equation}
        \big|\e^{\i\vartheta} - \e^{\i\widehat{\vartheta}}\big| \le \big|\Tr[P(0)(T(1) - \widehat{T}(1))]\big| + \big|\Tr[(P(0) - \widehat{P}(0))\widehat{T}(1)]\big|.
    \end{equation}
    By H\"older's trace inequality, the first term is $\le \|P(0)\|_1\|T(1) - \widehat{T}(1)\| = \|T(1) - \widehat{T}(1)\|$.
    For the second, $P(0) - \widehat{P}(0)$ has rank $\le 2$, so $\|P(0) - \widehat{P}(0)\|_1 \le 2\|P(0) - \widehat{P}(0)\|$ and the term is $\le 2\|P(0) - \widehat{P}(0)\| = O(\gamma_\star^{-1}\varepsilon)$ by \cref{prop:spectral-projector-bound}.
    Combining these estimates with \cref{prop:duhamel}, bounding the generator-difference integral by the uniform estimates of \cref{lem:generator-comparison}, and applying Claim~\ref{claim:circular-distance} yields the result.
\end{proof}

\pullbackVsSimulator*

\begin{proof}
    Differentiating $\ket{\psi_0} = \U^\dagger\ket{\wt{\phi}_0}$ and substituting into $\widehat{A}$,
    \begin{equation}
        \wt{A}(s) - \widehat{A}(s) = \i\bra{\wt{\phi}_0}(I - \U\U^\dagger)\ps \ket{\wt{\phi}_0} - \i\mel{\wt{\phi}_0}{\U(\ps \U^\dagger)}{\wt{\phi}_0}.
    \end{equation}
    Since $\U\U^\dagger$ is the orthogonal projector onto the low-energy band containing $\ket{\wt{\phi}_0}$, the first term vanishes, so $\big|\wt{A}(s)-\widehat A(s)\big|\le\norm{\ps \U}$.
    Integrating gives the same inequality for compatible real lifts of the two phases; minimizing over multiples of $2\pi$ gives the stated circular-distance bound.
\end{proof}

\berryPhaseSimulation*

\begin{proof}
    By \cref{prop:berry-holonomy} applied to $\widehat{\Upsilon}$ in the periodic gauge $\ket{\psi_0(s)}$, the phase $\widehat{\vartheta}$ of \cref{lem:pullback-vs-simulator} agrees modulo $2\pi$ with the holonomy phase $\arg\Tr[\widehat{P}(0)\widehat{T}(1)]$ of \cref{lem:berry-phase-comparison}, so the two results combine.
    The triangle inequality for $d_{2\pi}$ gives
    \begin{equation}
        d_{2\pi}(\vartheta,\wt\vartheta)\le d_{2\pi}(\vartheta,\widehat\vartheta)+d_{2\pi}(\widehat\vartheta,\wt\vartheta),
    \end{equation}
    and the claim follows from \cref{lem:berry-phase-comparison,lem:pullback-vs-simulator}.
\end{proof}

%% file: appendices/section-proofs/perturbative-gadget-reductions-proofs.tex
\section{Proofs of Main Results in Section~\ref{sec:perturbative-gadget-reductions}}\label{app:perturbative-gadget-reductions-proofs}
In this appendix, we include some additional supporting proofs for the main results presented in Section~\ref{sec:perturbative-gadget-reductions}.
We first outline these supporting results then proceed with the proofs of the main results.

\subsection{Supporting Results}

\begin{restatable}{proposition}{SWGenSimpleBounds}\label{prop:SW-gen-simple-bounds}
    Let $S$ be anti-Hermitian with $\norm{S} = o(1)$, and let $X$ be any operator.
    Then
    \begin{equation}
        \norm{\e^{S} X \e^{-S} - X} \le 2\norm{S}\norm{X}\big(1 + O(\norm{S})\big),
        \qquad
        \norm{\e^{\pm S} - I} = O(\norm{S}).
    \end{equation}
\end{restatable}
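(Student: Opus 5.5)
The plan has two parts, one for each estimate in \cref{prop:SW-gen-simple-bounds}. For the conjugation estimate we use the Duhamel/integral representation of conjugation. For the exponential estimate we use the power series of the exponential. In both cases the only inputs are anti-Hermiticity of $S$ (so $\e^{tS}$ is unitary for real $t$) and submultiplicativity of the spectral norm.

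\textbf{Conjugation estimate.} Define $f(t) = \e^{tS} X \e^{-tS}$ for $t\in[0,1]$. Then
\[
\partial_t f(t) = \e^{tS}\comm{S}{X}\e^{-tS},
\]
and integrating gives
\[
\e^{S}X\e^{-S} - X = \int_0^1 \e^{tS}\comm{S}{X}\e^{-tS}\,\dd{t}.
\]
Each $\e^{\pm tS}$ is unitary, so the integrand has norm $\norm{\comm{S}{X}} \le 2\norm{S}\norm{X}$. This already gives the bound $2\norm{S}\norm{X}$, which is stronger than the stated $2\norm{S}\norm{X}(1+O(\norm{S}))$. If we want to exhibit the correction term explicitly, we can instead expand with the Liouville series $\sum_{k\ge1}\frac{1}{k!}\L_S^k(X)$ and use $\norm{\L_S^k(X)}\le (2\norm{S})^k\norm{X}$. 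Summing gives
\[
\norm{X}\big(\e^{2\norm{S}}-1\big) = 2\norm{S}\norm{X}\big(1+O(\norm{S})\big),
\]
since $\norm{S}=o(1)$. Either route suffices; we present the integral one and note the series as the source of the stated form.

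\textbf{Exponential estimate.} Writing $\e^{\pm S} - I = \sum_{k\ge1} (\pm S)^k/k!$ and bounding term by term gives
\[
\norm{\e^{\pm S}-I} \le \e^{\norm{S}}-1 = O(\norm{S})
\]
when $\norm{S}=o(1)$ (for instance, when $\norm{S}\le1$ the constant is $\e-1$).

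\textbf{Main obstacle.} There is essentially no obstacle. The only subtle point is that the $O(\cdot)$ constants should be universal; they are, because we only use $\norm{S}\le 1$ eventually. We will remark that anti-Hermiticity is used only for unitarity in the integral form, and that the series form needs no hypothesis on $S$ beyond smallness.
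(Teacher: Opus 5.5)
Your proposal is correct. Your secondary route is exactly the paper's proof. The paper expands $\e^{S}X\e^{-S}-X=\sum_{p\ge1}\frac{1}{p!}\L_S^p(X)$ and bounds this by $(\e^{2\norm{S}}-1)\norm{X}$. It then uses $\norm{S}=o(1)$ to write $\e^{2\norm{S}}-1=2\norm{S}(1+O(\norm{S}))$. The exponential estimate is handled by the same power-series bound $\e^{\norm{S}}-1$ that you give.

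Your primary Duhamel argument, $\e^{S}X\e^{-S}-X=\int_0^1\e^{tS}\comm{S}{X}\e^{-tS}\,\dd{t}$, genuinely differs in what it uses and what it yields.

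\begin{itemize}
\item \emph{What it uses.} It is the only argument that invokes anti-Hermiticity, through unitarity of $\e^{\pm tS}$. The paper's series bound holds for any bounded $S$ and never uses anti-Hermiticity.
\item \emph{What it yields.} It gives the sharp bound $2\norm{S}\norm{X}$, with no correction factor and no smallness hypothesis.
\item \emph{Where smallness is needed.} In the paper's route, $\norm{S}=o(1)$ is required to reach the stated form of the first estimate. In your route it is only needed for the second estimate $\norm{\e^{\pm S}-I}=O(\norm{S})$.
\end{itemize}

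The two routes coincide if you drop unitarity. Bounding $\norm{\e^{tS}}\,\norm{\e^{-tS}}\le\e^{2t\norm{S}}$ inside your integral reproduces exactly the paper's $(\e^{2\norm{S}}-1)\norm{X}$. So unitarity is precisely what removes the $1+O(\norm{S})$ factor. Either route suffices for how the proposition is used, since the implicit constants are universal once $\norm{S}\le1$.
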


\begin{proof}
    Expanding in nested commutators, $\e^{S} X \e^{-S} - X = \sum_{p\ge 1}\tfrac{1}{p!}\L_S^p(X)$ with $\norm{\L_S^p(X)} \le (2\norm{S})^p\norm{X}$.
    Hence, $\norm{\e^{S} X \e^{-S} - X} \le (\e^{2\norm{S}} - 1)\norm{X}$, and $\e^{2\norm{S}} - 1 = 2\norm{S}(1 + O(\norm{S}))$ since $\norm{S} = o(1)$.
    The second bound follows identically from $\norm{\e^{\pm S} - I} \le \e^{\norm{S}} - 1 = O(\norm{S})$.
\end{proof}

\subsection{Main Results}

\carriedDrivingSimulationError*

\begin{proof}
    Throughout, $\bs\V$ is the fixed encoding isometry, independent of $s$, satisfying
    $\bs\V^\dagger\bs\V = I$ and $\bs\V\bs\V^\dagger = \bs{\Pi}_-$. Consequently
    \begin{align}\label{eq:pi-minus-v-identity}
        \bs{\Pi}_-\bs\V &= \big(\bs\V\bs\V^\dagger\big)\bs\V = \bs\V\big(\bs\V^\dagger\bs\V\big) = \bs\V, &
        \bs\V^\dagger\bs{\Pi}_- &= \bs\V^\dagger .
    \end{align}
    We write $\bs X_{\alpha\beta} = \bs{\Pi}_\alpha\bs X\bs{\Pi}_\beta$ for $\alpha,\beta\in\{-,+\}$.

    We prove \cref{eq:carried-final-bounds} in several steps outlined below.

    \medskip
    \noindent\textsl{Step 1.}\\
    \indent Since $\bs\U(s) = \e^{-\bs{S}(s)}\bs\V$, and using \cref{eq:pi-minus-v-identity,eq:exact-effective-hamiltonian}, we find that
    \begin{equation}\label{eq:carried-conjugation-identity}
        \bs\U(s)^\dagger\wt{\bs{H}}_\zeta(s)\bs\U(s)
        = \bs\V^\dagger\e^{\bs{S}(s)}\wt{\bs{H}}_\zeta(s)\e^{-\bs{S}(s)}\bs\V
        = \bs\V^\dagger\Big(\bs{\Pi}_-\e^{\bs{S}(s)}\wt{\bs{H}}_\zeta(s)\e^{-\bs{S}(s)}\bs{\Pi}_-\Big)\bs\V
        = \bs\V^\dagger\bs{H}_{\tn{eff}}(s)\bs\V.
    \end{equation}
    Substituting
    $\bs{H}_{\tn{eff}} = \bs{H}^{[p]}_{\tn{eff}} + R^{\bs{H}_{\tn{eff}}}_{p+1}$ and recalling the
    definition \cref{eq:carried-matching-defect} of $\bs{M}_p$ gives
    \cref{eq:carried-error-decomposition},
    \begin{equation}
        \bs{E}_\zeta(s) = \bs{M}_p(s) - \bs\V^\dagger R^{\bs{H}_{\tn{eff}}}_{p+1}(s)\bs\V .
    \end{equation}

    \medskip
    \noindent\textsl{Step 2.}\\
    \indent The operator $\bmw(s)\otimes I_{{\rm med}}$ acts as the identity on the mediator qubits
    and therefore commutes with $\bs{\Pi}_\pm$, so it is block-diagonal. 
    Hence
    \begin{align}\label{eq:V-zeta-blocks}
        \big(\bs{V}_\zeta(s)\big)_{--} &= \zeta\bmw(s)\bs{\Pi}_-, &
        \big(\bs{V}_\zeta(s)\big)_{-+} &= \bs{V}_{-+}, \\
        \big(\bs{V}_\zeta(s)\big)_{++} &= \bs{V}_{++} + \zeta\bmw(s)\bs{\Pi}_+, &
        \big(\bs{V}_\zeta(s)\big)_{+-} &= \bs{V}_{+-} .
    \end{align}
    First note that the block-off-diagonal parts of $\bs{V}_\zeta$ are independent of $s$, and that
    \begin{equation}\label{eq:carried-pullback}
        \bs\V^\dagger\big(\bmw(s)\bs{\Pi}_-\big)\bs\V = \bmw(s)\,\bs\V^\dagger\bs\V = \bmw(s).
    \end{equation}
    Finally, since $\bs{\Pi}_\pm$ do not depend on $s$, we may differentiate blockwise: $\ps\big(\bs{V}_\zeta\big)_{\alpha\beta} = \big(\ps\bs{V}_\zeta\big)_{\alpha\beta}$.

    \medskip
    \noindent\textsl{Step 3.}\\
    \indent Differentiating \cref{eq:carried-matching-defect} and using
    $\ps\bs{H}_\zeta(s) = \zeta\,\ps\bmw(s)$ gives
    \begin{equation}\label{eq:carried-defect-derivative-raw}
        \ps\bs{M}_p(s) = \zeta\,\ps\bmw(s) - \bs\V^\dagger\bigg(\sum_{j=1}^{p}\ps\bs K_j(s)\bigg)\bs\V .
    \end{equation}
    By \cref{eq:K1-parallel} and \cref{eq:V-zeta-blocks}, the first coefficient is
    $\bs K_1 = \big(\bs{V}_\zeta\big)_{--}$, whose only $s$-dependent part is
    $\zeta\bmw(s)\bs{\Pi}_-$; hence, by \cref{eq:carried-pullback},
    \begin{equation}
        \bs\V^\dagger\big(\ps\bs K_1(s)\big)\bs\V = \zeta\,\ps\bmw(s).
    \end{equation}
    By \cref{eq:K2-parallel}, the second coefficient
    $\bs K_2 = -\Delta^{-1}\big(\bs{V}_\zeta\big)_{-+}\bs\Sigma\big(\bs{V}_\zeta\big)_{+-}$ involves
    only block-off-diagonal parts, which are static by \cref{eq:V-zeta-blocks}; hence
    $\ps\bs K_2 = 0$.
    Substituting both into \cref{eq:carried-defect-derivative-raw} we find
    \begin{equation}
        \ps\bs{M}_p(s) = 0,
    \end{equation}
    for $p\in\{1,2\}$.
    For $p = 3$ the same cancellation leaves only the third coefficient, giving the second case of
    \cref{eq:carried-defect-derivative},
    \begin{equation}\label{eq:carried-defect-p3}
        \ps\bs{M}_3(s) = -\,\bs\V^\dagger\big(\ps\bs K_3(s)\big)\bs\V .
    \end{equation}

    \medskip
    \noindent\textsl{Step 4.}\\
    \indent Differentiating \cref{eq:K3-parallel} and using \cref{eq:V-zeta-blocks},
    \begin{equation}\label{eq:carried-K3-leak}
        \ps\bs K_3(s) = \frac{\zeta}{\Delta^{2}}\left(
            \bs{V}_{-+}\bs\Sigma\big[\ps\bmw(s)\bs{\Pi}_+\big]\bs\Sigma\bs{V}_{+-}
            - \tfrac{1}{2}\acomm{\ps\bmw(s)\bs{\Pi}_-}{\bs{V}_{-+}\bs\Sigma^{2}\bs{V}_{+-}}\right).
    \end{equation}
    By \cref{eq:V-zeta-blocks} the off-diagonal blocks appearing here are those of $\bs{V}_\zeta$,
    so $\norm{\bs{V}_{\pm\mp}} = \norm{(\bs{V}_\zeta)_{\pm\mp}} \le \snorm{\bs{V}_\zeta} \le v_0$.
    Using $\norm{\bs\Sigma} = 1$ from \cref{clm:excitation-sector-decomposition},
    $\snorm{\ps\bmw\,\bs{\Pi}_\alpha} \le w_1$ by \cref{eq:carried-driving-scales}, and
    $\norm{\acomm{X}{Y}}\le 2\norm{X}\norm{Y}$, each of the two terms of
    \cref{eq:carried-K3-leak} is bounded by $\zeta\Delta^{-2}v_0^2 w_1$, so that
    \begin{align}\label{eq:carried-K3-bound}
        \snorm{\ps\bs K_3} &\le 2\zeta\Delta^{-2}v_0^{2}w_1, &
        \snorm{\ps\bs{M}_3} &\le \snorm{\ps\bs K_3} \le 2\zeta\Delta^{-2}v_0^{2}w_1,
    \end{align}
    the second inequality because $\bs\V$ is an isometry. This proves
    \cref{eq:carried-defect-derivative}.

    \medskip
    \noindent\textsl{Step 5.}\\
    \indent Since $\bs{V}$ is independent of $s$, we have
    $\ps\bs{V}_\zeta(s) = \zeta\big(\ps\bmw(s)\big)\otimes I_{{\rm med}}$ and hence
    $\snorm{\ps\bs{V}_\zeta}\le\zeta w_1$ by \cref{eq:carried-driving-scales}. Applying
    \cref{lemma:uniform-control-exact-truncated} with $v_1 = \zeta w_1$, gives
    \begin{align}\label{eq:carried-remainder-bounds}
        \snorm{R^{\bs{H}_{\tn{eff}}}_{p+1}} &= O\big(\Delta^{-p}v_0^{\,p+1}\big), &
        \snorm{\ps R^{\bs{H}_{\tn{eff}}}_{p+1}} &= O\big(\Delta^{-p}v_0^{\,p}\,\zeta w_1\big).
    \end{align}
    As $\bs\V$ is an isometry independent of $s$, \cref{eq:carried-error-decomposition} yields
    \begin{align}
        \snorm{\bs{E}_\zeta} &\le \snorm{\bs{M}_p} + \snorm{R^{\bs{H}_{\tn{eff}}}_{p+1}}, &
        \snorm{\ps\bs{E}_\zeta} &\le \snorm{\ps\bs{M}_p} + \snorm{\ps R^{\bs{H}_{\tn{eff}}}_{p+1}}.
    \end{align}
    The first of these together with \cref{eq:carried-remainder-bounds} is the first bound of
    \cref{eq:carried-final-bounds}.
    For $p\in\{1,2\}$ we have $\ps\bs{M}_p = 0$ by Step 3, so
    $\snorm{\ps\bs{E}_\zeta} = O(\Delta^{-p}v_0^{\,p}\,\zeta w_1)$.
    For $p = 3$, combining \cref{eq:carried-K3-bound,eq:carried-remainder-bounds} gives
    \begin{equation}
        \snorm{\ps\bs{E}_\zeta}
        \le 2\zeta\Delta^{-2}v_0^{2}w_1 + O\big(\Delta^{-3}v_0^{3}\,\zeta w_1\big)
        = O\big(\Delta^{-2}v_0^{2}\,\zeta w_1\big),
    \end{equation}
    the first term dominating since $v_0 < \Delta/2$. 
\end{proof}

\AssembledBoundsGeneral*

\begin{proof}
    The bounds on $J_2,J_3,J_4$ are immediate.
    Moreover, $\snorm{\ps\bs{H}_\zeta}\le \zeta w_1\le d_1$ and, since $\widehat{\bs H}_\zeta=\bs H_\zeta-\bs E_\zeta$,
    \begin{equation}
        \snorm{\ps\widehat{\bs H}_\zeta}
        \le \snorm{\ps\bs H_\zeta}+\snorm{\ps\bs E_\zeta}
        =O\big(d_1(1+A^2\Delta^{-q'})\big).
    \end{equation}
    Substitution into $J_1$ proves the remaining claim.
\end{proof}

\PolyControlGeneral*

\begin{proof}
    By \cref{prop:assembled-bounds-general}, there exists a constant $c\geq 1$ such that
    \begin{align}
        J_1 &\leq c\,\gamma_\star^{-2}d_1\bigl(1+A^2\Delta^{-q'}\bigr)\varepsilon, &
        J_2 &\leq c\,\gamma_\star^{-1}A^2d_1\Delta^{-q'}, \\
        J_3 &\leq \gamma_\star^{-1}\varepsilon, &
        \snorm{\ps\bs{U}} &\leq c\,Ad_1\Delta^{-q}.
    \end{align}
    Moreover, recall from \cref{eq:carried-driving-scales} that
    \begin{equation}
        d_1=\max\{1,\zeta w_1\}=O(1+\zeta B) \leq \poly{n}.
    \end{equation}

    We may assume without loss of generality that $\delta = 1/\poly{n}$.
    Choose
    \begin{align}\label{eq:eps-kappa-choice-general}
        \varepsilon &\coloneqq \min\bigg\{\frac{\gamma_\star\delta}{8},\frac{\gamma_\star^2\delta}{8c\,d_1}\bigg\}, &
        \kappa &\coloneqq \frac{\delta}{4}.
    \end{align}
    These are both inverse-polynomially small and are independent of $\Delta$.
    Moreover, $\varepsilon\leq\gamma_\star/8<\gamma_\star/4$, so Assumption~\ref{item:a2} is satisfied.

    Having fixed these targets, choose
    \begin{equation}\label{eq:delta-choice-general}
        \Delta \coloneqq 
        \max\bigg\{
            \Delta_\ast(\varepsilon,\eta,\kappa),
            A^{2/q'},
            \left(\frac{4c\,\gamma_\star^{-1}A^2d_1}{\delta}\right)^{1/q'},
            \left(\frac{4c\,Ad_1}{\delta}\right)^{1/q}
        \bigg\}.
    \end{equation}
    The first term ensures that the reduction is a valid $(\eta,\varepsilon,\kappa)$-regular simulation.
    The remaining terms imply
    \begin{align}
        A^2 \Delta^{-q'} &\leq 1, &
        c \gamma_\star^{-1} A^2 d_1 \Delta^{-q'} &\leq \frac{\delta}{4}, &
        c A d_1 \Delta^{-q} &\leq \frac{\delta}{4}=\kappa.
    \end{align}
    It follows that
    \begin{align}
        J_1 &\leq 2c\,\gamma_\star^{-2}d_1\varepsilon \leq\frac{\delta}{4}, &
        J_2 &\leq c\,\gamma_\star^{-1}A^2d_1\Delta^{-q'} \leq\frac{\delta}{4}, &
        J_3 &\leq\gamma_\star^{-1}\varepsilon \leq\frac{\delta}{8}, &
        J_4 &\leq\kappa =\frac{\delta}{4}.
    \end{align}
    Consequently,
    \begin{equation}
        J_1+J_2+J_3+J_4\leq\frac{7\delta}{8}\leq\delta.
    \end{equation}

    Finally, $\varepsilon^{-1},\eta^{-1},\kappa^{-1}=\poly{n}$, and hence
    \begin{equation}
        \Delta_\ast(\varepsilon,\eta,\kappa)=\poly{n}.
    \end{equation}
    Since $q$ and $q'$ are fixed positive constants and $A$, $d_1$, $\gamma_\star^{-1}$ and $\delta^{-1}$ are polynomially bounded, every other term in \cref{eq:delta-choice-general} is also polynomially bounded.
    Therefore $\Delta=\poly{n}$, completing the proof.
\end{proof}

\FirstOrderReduction*

\begin{proof}
    For each fixed $s\in[0,1]$, the hypotheses are those of the first-order reduction of \cite[Lemma~4]{bravyi2016complexity}.
    Moreover, $\snorm{V} \leq \norm{V_1} + \snorm{V_2} \leq 2L_{0}$.
    Since both the matching condition and the perturbation bound are uniform in $s$, the proof of \cite[Lemma~4]{bravyi2016complexity} applies with the same value of $\Delta$ at every $s$. 
    It establishes \cref{item:c1,item:c2,item:c3} whenever $\Delta = \Omega(\varepsilon^{-1}L_{0}^2 + \eta^{-1}L_{0})$.

    It remains to verify \cref{item:c4,item:c5}. 
    Let $S(s)$ be the canonical exact Schrieffer--Wolff generator and set $\U(s)=\e^{-S(s)}\V$ to define the family $\Phi = \{\U(s)\}_{s\in[0,1]}$.
    By \cref{lemma:uniform-control-exact-truncated}, $S$ and $\U$ are ${\rm C}^2$.

    Since $V(1)=V(0)$ and $H_0$ is independent of $s$, we have $\wt{H}(1)=\wt{H}(0)$.
    Therefore, the exact low-energy spectral projectors of $\wt H(0)$ and $\wt H(1)$ coincide. 
    By uniqueness of the canonical block-off-diagonal Schrieffer--Wolff generator satisfying $\snorm{S}<\pi/2$~\cite{bravyi2011schrieffer}, it follows that $S(1)=S(0)$.
    Hence, $\U(1)=\U(0)$, proving \cref{item:c4}.

    Because $V_1$ is independent of $s$ and $V_2(s)$ is block diagonal, $\partial_s \big( V(s) \big)_{\tn{od}} = 0$.
    Moreover, $\snorm{V} = O(L_{0})$ and $\snorm{\ps V} = L_{1}$.
    Applying \cref{cor:static-leading-sw-generator} gives $\snorm{\ps\U} = O(\Delta^{-2}L_{0}L_{1})$.
    Thus, $\Delta = \Omega(\kappa^{-1/2}(L_{0}L_{1})^{1/2})$ ensures $\snorm{\ps\U}\leq\kappa$, proving \cref{item:c5}.
\end{proof}

\ParallelFirstOrderSimulationBerryPhase*

\begin{proof}
    The validity threshold of \cref{eq:parallel-threshold-first-order} is polynomial in $\varepsilon^{-1}$, $\eta^{-1}$, $\kappa^{-1}$, and $A$, and the bounds of \cref{eq:parallel-first-order-bounds-pt1,eq:parallel-first-order-bounds-pt2} satisfy the hypotheses of \cref{prop:assembled-bounds-general} with $q=2$ and $q'=1$.
    Let $C$ be the constant implicit in \cref{thm:berry-phase-simulation}, and apply \cref{lem:poly-control-general} with target accuracy $\delta/C = 1/\poly{n}$.
    This yields simulation targets $\varepsilon,\kappa = 1/\poly{n}$ and a penalty $\Delta = \poly{n}$ for which $\wt{\Upsilon}$ is a regular $(\eta,\varepsilon,\kappa)$-simulation of $\Upsilon$ with $J_1+\cdots+J_4 \le \delta/C$.
    \Cref{thm:berry-phase-simulation} then gives $d_{2\pi}(\vartheta,\wt{\vartheta}) \le C\,(J_1+\cdots+J_4) \le \delta$.
\end{proof}

\SecondOrderReduction*

\begin{proof}
    The proof follows the same structure as that of \cref{lem:first-order-reduction} and the static result of \cite[Lemma~6]{bravyi2016complexity}, after verifying the added block-diagonal hypothesis on $V_2$; thus, \cref{item:c1,item:c2,item:c3,item:c4} hold analogously.
    We verify \cref{item:c5} for the third-order reduction.
    Since $\snorm{V} = O(\Delta^{2/3}L_0)$ and $\snorm{\ps V} = L_1$, \cref{cor:static-leading-sw-generator} gives $\snorm{\ps\U} = O(\Delta^{-4/3}L_0L_1)$.
    Thus, $\Delta = \Omega(\kappa^{-3/4}(L_0L_1)^{3/4})$ ensures $\snorm{\ps\U}\leq\kappa$, completing the verification of \cref{item:c5}.
\end{proof}

\ThirdOrderReduction*

\begin{proof}
    The proof follows the same structure as that of \cref{lem:first-order-reduction} and the static result of \cite[Lemma~5]{bravyi2016complexity}; thus, \cref{item:c1,item:c2,item:c3,item:c4} hold analogously.
    We verify \cref{item:c5} for the third-order reduction.
    Noting that $\snorm{V} = O(\Delta^{2/3} L_0)$ and $\snorm{\ps V} = L_1$, we apply \cref{cor:static-leading-sw-generator} to obtain $\snorm{\ps\U} = O(\Delta^{-3/4}L_{0}L_{1})$.
    Thus, $\Delta = \Omega(\kappa^{-4/3}(L_{0}L_{1})^{4/3})$ ensures $\snorm{\ps\U}\leq\kappa$, completing the verification of \cref{item:c5}.
\end{proof}

\SecondOrderParameterisedReduction*

\begin{proof}
    First note that $\snorm{V} = O(\Delta^{1/2}L_0)$ and $\snorm{\ps V} = O(\Delta^{1/2}L_1)$.
    Let $S(s)$ be the canonical Schrieffer--Wolff generator and set $\U(s) = \e^{-S(s)} \V$ where $\V$ is the fixed encoding isometry.
    Following the proofs of \cref{lem:first-order-reduction} and \cite[Lemma~4]{bravyi2016complexity} we establish that \cref{item:c1,item:c4} are satisfied.
    Next, recall that the second-order effective Hamiltonian $H_{\tn{eff}}^{[2]}(s)$ is given by the left-hand side of \cref{eq:second-order-parameterised-uniform-condition}, which exactly coincides with $\overline{H}(s)$ by assumption.
    By \cref{lemma:uniform-control-exact-truncated} we observe that $\snorm{\overline{H} - H_{\tn{eff}}} = \snorm{R_3^{H_{\tn{eff}}}} = O(\Delta^{-1/2}L_0^3) \leq \varepsilon$ and therefore, $\sup_s \norm{H(s) - \U(s)^\dagger \wt{H}(s) \U(s)} \leq \varepsilon$.
    Consequently, $\Delta = \Omega(\varepsilon^{-2}L_0^6)$ proves \cref{item:c2}.
    We see that \cref{eq:uniform-control-s} gives $\snorm{S} = O(\Delta^{-1/2}L_0)$, and hence $\snorm{\U - \V} = O(\Delta^{-1/2}L_0)$. 
    Therefore, $\Delta = \Omega(\eta^{-2}L_0^2)$ proves \cref{item:c3}.
    Finally, \cref{eq:uniform-control-s} gives $\snorm{\ps S} = O(\Delta^{-1/2}L_1)$, and hence $\snorm{\ps\U} = O(\Delta^{-1/2}L_1)$. 
    Hence, $\Delta = \Omega(\kappa^{-2}L_1^2)$ proves \cref{item:c5}.
\end{proof}

\begin{corollary}\label{cor:gadgetised-driving-simulation-error}
    Let $\Upsilon$ and $\wt\Upsilon$ be as in Definitions~\ref{def:parameter-dependent-second-order-layer} and~\ref{def:general-driving-simulator}.
    Let $v_0 > 0$ satisfy $\snorm{V} \leq v_0$ and $v_1 > 0$ satisfy $\snorm{\ps V} \leq v_1$.
    Take $\bs\V$ to be the fixed encoding isometry with image ${\rm Im}(\bs\Pi_-)$, put $\bs\U(s) = \e^{-\bs{S}(s)}\bs\V$, and set
    \begin{equation}
        \bs{E}(s) \coloneqq \bs{H}(s) - \bs\U(s)^\dagger \wt{\bs{H}}(s) \bs\U(s).
    \end{equation}
    Then,
    \begin{align}
        \snorm{\bs{E}} = O(\Delta^{-2} v_0^3), &
        \snorm{\ps \bs{E}} = O(\Delta^{-2} v_0^2 v_1), &
    \end{align}
\end{corollary}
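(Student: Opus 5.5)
The proof follows the same template as \cref{lem:carried-driving-simulation-error}, specialised to $p=2$. The key simplification is that the second-order effective Hamiltonian now matches the encoded target exactly.

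\textbf{Step 1: reduce the error to the remainder.} As in Step 1 of that proof, I use $\bs\U(s)=\e^{-\bs S(s)}\bs\V$ together with $\bs\V\bs\V^\dagger=\bs\Pi_-$. This gives $\bs\U^\dagger\wt{\bs H}\bs\U=\bs\V^\dagger\bs H_{\tn{eff}}\bs\V$. Splitting $\bs H_{\tn{eff}}=\bs H^{[2]}_{\tn{eff}}+R^{\bs H_{\tn{eff}}}_3$ yields
\begin{equation*}
    \bs E(s)=\bs M_2(s)-\bs\V^\dagger R^{\bs H_{\tn{eff}}}_3(s)\bs\V,\qquad \bs M_2(s)\coloneqq\bs H(s)-\bs\V^\dagger\bs H^{[2]}_{\tn{eff}}(s)\bs\V.
\end{equation*}
The discussion following Definition~\ref{def:general-driving-simulator} applies here. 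It uses \cref{cor:heff-coefficients-parallel} and the decomposition \cref{eq:K2-decomposition} of \cref{lem:parallel-effective-hamiltonian-decomposition}, which holds because the mediators are distinct and $(\bs V_1)_{--}=0$ since $\bs\Pi_-X_{m_a}\bs\Pi_-=0$. It gives $\bs H^{[2]}_{\tn{eff}}(s)=\overline{\bs H}(s)$ identically in $s$. Hence $\bs M_2\equiv0$, and therefore $\ps\bs M_2\equiv0$ as well. This is stronger than the carried case: there is no need for static off-diagonal blocks, so the fact that $\bs V_1(s)$ now depends on $s$ causes no difficulty.

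\textbf{Step 2: apply uniform remainder control.} The perturbation $\bs V(s)=\Delta^{1/2}\bs V_1(s)+\bs V_2(s)$ is closed and ${\rm C}^2$, and $\bs H_0$ is $s$-independent with unit gap. I can therefore invoke \cref{lemma:uniform-control-exact-truncated} with $p=2$, using the given $v_0,v_1$ (assuming, as is implicit, $32v_0<\Delta$). This yields
\begin{equation*}
    \snorm{R^{\bs H_{\tn{eff}}}_3}=O(\Delta^{-2}v_0^3),\qquad \snorm{\ps R^{\bs H_{\tn{eff}}}_3}=O(\Delta^{-2}v_0^2v_1).
\end{equation*}
Since $\bs\V$ is an $s$-independent isometry, $\ps(\bs\V^\dagger R\bs\V)=\bs\V^\dagger(\ps R)\bs\V$, and norms do not increase under this compression. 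Both claimed bounds follow.

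\textbf{Main obstacle.} The delicate point is justifying the exact identity $\bs H^{[2]}_{\tn{eff}}=\overline{\bs H}$ for every $s$ under the global splitting. Two things need checking. First, the $(--)$ block of $\bs V_2$ must pull back to $\bs\Gamma+\sum_a\mathdcal C^a$. Second, $\Omega^a(s)$ must pull back to $\tfrac12(\mathdcal Q^a)^2$, using $\Delta^{1/2}$-scaling, the $1/\sqrt2$ normalisation, and $(H_0^a)^{\pattce}=\ketbra{1}_{m_a}$. Both checks are routine block computations, already carried out in the text, so I would cite them directly.

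A second check is that \cref{lemma:uniform-control-exact-truncated} really only requires a bound $v_1$ on $\snorm{\ps\bs V}$ and not static off-diagonal blocks. That is the case, since its proof differentiates the homogeneous coefficient polynomials termwise. With $v_0,v_1=O(\Delta^{1/2}A)$, this gives the quoted bound $O(\Delta^{-1/2}A^3)$.
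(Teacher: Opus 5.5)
Your proposal is correct and follows the paper's proof. Both write $\bs E = \bs M_2 - \bs\V^\dagger R^{\bs H_{\tn{eff}}}_3\bs\V$ as in \cref{lem:carried-driving-simulation-error}, note that $\bs M_2 \equiv 0$ because the second-order effective Hamiltonian equals $\overline{\bs H}(s)$, and read off both bounds from \cref{lemma:uniform-control-exact-truncated} with $p=2$. Your extra remarks (that the $s$-dependence of $\bs V_1$ is harmless once $\bs M_2$ vanishes identically, and that the hypothesis $32v_0<\Delta$ is needed) only spell out what the paper's one-line proof leaves implicit.
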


\begin{proof}
    The proof follows that of Lemma~\ref{lem:carried-driving-simulation-error} noting that the second-order effective Hamiltonian is exactly the encoded target Hamiltonian, hence $\bs{M}(s) = 0$.
\end{proof}

\begin{proposition}\label{prop:parallel-second-order-exact-assembled-bounds}
    Set $e_1\coloneqq\max\left\{1,A,\snorm{\bs H},\snorm{\ps\bs H}\right\}$.
    Suppose an exact second-order parallel gadget reduction with penalty $\Delta$ satisfies, for some $q \in (0,1]$,
    \begin{align}
        \kappa = O(\Delta^{-q} A e_1), &
        \snorm{\ps \bs{E}} = O(\Delta^{-q} A^2 e_1).
    \end{align}
    Then, 
    \begin{align}
        J_1 &= O(\gamma_\star^{-2} e_1(1 + A^2 \Delta^{-q})\varepsilon), & J_2 &= O(\gamma_\star^{-1}e_1A^2 \Delta^{-q})\\
        J_3 &= \gamma_\star^{-1}\varepsilon, & J_4 &= O(e_1 A \Delta^{-q})
    \end{align}
\end{proposition}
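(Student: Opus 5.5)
This is the exact second-order analogue of \cref{prop:assembled-bounds-general}, and the proof should run the same way: substitute the scaling hypotheses into each of $J_1,\dots,J_4$ of Definition~\ref{def:J-contributions}, one term at a time. The parameter $e_1$ plays the role that $d_1$ played before. Because $e_1$ dominates $1$, $A$, $\snorm{\bs H}$ and $\snorm{\ps\bs H}$, every prefactor coming from the target family can be absorbed into it.

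The terms $J_2$, $J_3$ and $J_4$ should be immediate.
\begin{itemize}
\item For $J_4=\kappa$, I will just quote the hypothesis $\kappa=O(\Delta^{-q}Ae_1)$.
\item For $J_2=\gamma_\star^{-1}\snorm{\ps\bs E}$, I will insert $\snorm{\ps\bs E}=O(\Delta^{-q}A^2e_1)$.
\item $J_3=\gamma_\star^{-1}\varepsilon$ is unchanged by definition.
\end{itemize}

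The only step needing an argument is $J_1$, which requires a bound on $\max\{\snorm{\ps\bs H},\snorm{\ps\widehat{\bs H}}\}$.
\begin{itemize}
\item Since $\snorm{\ps\bs H}\le e_1$ by definition of $e_1$, the target contribution is covered.
\item For the transformed family, I will use the identity $\widehat{\bs H}=\bs H-\bs E$ (the definition of $\bs E$ in \cref{cor:gadgetised-driving-simulation-error}) and the triangle inequality for $\snorm{\cdot}$. This gives $\snorm{\ps\widehat{\bs H}}\le\snorm{\ps\bs H}+\snorm{\ps\bs E}\le e_1+O(\Delta^{-q}A^2e_1)=O(e_1(1+A^2\Delta^{-q}))$.
\end{itemize}
Multiplying this by $\gamma_\star^{-2}\varepsilon$ yields the stated $J_1$.

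I do not expect a substantive obstacle. The one point to check is that the scaling hypotheses are consistent with the bounds obtained earlier for Definition~\ref{def:general-driving-simulator}, so the proposition is not vacuous. Earlier we had $\kappa=O(\Delta^{-1/2}A)$ and $\snorm{\ps\bs E}=O(\Delta^{-1/2}A^3)\le O(\Delta^{-1/2}A^2e_1)$ using $A\le e_1$. These give $q=1/2\in(0,1]$, so the hypotheses are satisfied, but that verification belongs to the corollary that uses this result, not to this proof. This proposition itself is purely a substitution argument.
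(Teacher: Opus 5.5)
Your proposal is correct and matches the paper's argument. The paper proves this by deferring to the proof of \cref{prop:assembled-bounds-general}, which likewise treats $J_2,J_3,J_4$ as immediate and obtains $J_1$ from $\snorm{\ps\widehat{\bs H}}\le\snorm{\ps\bs H}+\snorm{\ps\bs E}$ via $\widehat{\bs H}=\bs H-\bs E$, with $e_1$ in place of $d_1$.
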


\begin{proof}
    The proof follows an analogous argument to that of Proposition~\ref{prop:assembled-bounds-general}.
\end{proof}

%% file: appendices/section-proofs/extended-hardness-results-proofs.tex
\section{Proofs of Main Results in Section~\ref{sec:extended-hardness-results}}\label{app:extended-hardness-results-proofs}

\gsbpesimulatorhardness*

\begin{proof}
    By consequence of Assumption~\ref{item:a2} and Assumption~\ref{item:a3}, $\wt{\Upsilon}$ is closed, gapped, and non-degenerate, so $\wt{x}$ is syntactically valid.

    \medskip
    \noindent\textsl{Promise.} 
    By \cref{thm:berry-phase-simulation}, $d_{2\pi}(\vartheta, \wt{\vartheta}) \le \upsilon$.
    If $x\in\sc{yes}$ then $\vartheta \in [a+\g, b-\g]_{2\pi}$, so $\wt{\vartheta} \in [a+\g-\upsilon, b-\g+\upsilon]_{2\pi} = [a+\g', b-\g']_{2\pi} = \wt{I}_{\sc{yes}}$; the \sc{no} case is identical.
    Since $2\upsilon < \g$, the arcs stay disjoint with separation $2\g' \ge 1/\poly{n}$.

    \medskip
    \noindent\textsl{Overlap.} 
    Set $\ket{a} = \ket{\wt{\phi}_0(0)}$, $\ket{b} = \V\ket{\phi_0(0)}$, $\ket{c} = \V\ket{\xi} = \ket{\wt{\xi}}$.
    By \cref{lem:ground-state-simulation}, $\norm{\ket{a} - \ket{b}} \le \mu$, and since $\V^\dagger\V = I$, $\abs{\braket{b}{c}}^2 = \abs{\braket{\phi_0(0)}{\xi}}^2 \ge \delta_0$.
    The norm-tracking lemma gives $\abs{\braket{\wt{\phi}_0(0)}{\wt\xi}}^2\ge(\sqrt{\delta_0}-\mu)^2=\delta_0'$.
    The stronger hypothesis $\mu \le \sqrt{\delta_0}/2$ implies $\delta_0' \ge \delta_0/4$.
    Finally, $C_{\wt\xi}$ is polynomially sized since it is the concatenation of $C_\xi$ and a polynomial-size description of $\V$.
    From Ref.~\cite{waite2026physically} we know this description satisfies Items~\cref{item:efficient-compilation,item:uniform-preparation}.

    \medskip
    \noindent\textsl{Computability.} 
    The presentation of $\wt{\Upsilon}$ and the description of $\V$ are computable from $x$ in polynomial time.
    The simulator contains polynomially many qubits and local terms, and $C_{\wt\xi}$ is obtained from $C_\xi$ by composing polynomially many explicitly described local isometries.
    Finally, $\g/2$ and $\delta_0/4$ have polynomial-bit descriptions and are computable by elementary rational arithmetic.
    Therefore, the map $x\mapsto\wt{x}$ is a polynomial-time many-one reduction.
\end{proof}

\spatiallysparsefivelocal*

\begin{proof}
    The proof follows by applying the spatially-sparse reduction of Ref.~\cite{oliveira2008complexity} to the \cl{BQP}-hardness construction of Ref.~\cite{hayakawa2025computational}.
    The resulting local Hamiltonians are of identical form to those in Ref.~\cite{hayakawa2025computational}, but with a spatially-sparse interaction graph (cf. Ref.~\cite{waite2026physically}).
    We conclude that the guiding state's overlap and classical description are preserved under the spatially-sparse reduction by the results of Ref.~\cite{waite2026physically}, thus \cref{prop:gsbpe-simulator-hardness} applies.
    Containment in \cl{BQP} follows trivially from the existing algorithm in Ref.~\cite{hayakawa2025computational}, which applies to every constant-locality family satisfying the promises in \cref{sec:problem-statement}.
\end{proof}

\maintheorem*

\begin{proof}
    Containment in \cl{BQP} is the guided-state Berry-phase algorithm of Ref.~\cite{hayakawa2025computational}, which applies to every constant-locality family satisfying the promises in \cref{sec:problem-statement}.
    Hardness follows by applying the geometrical reduction chain of Ref.~\cite{oliveira2008complexity} to the spatially-sparse $5$-local Hamiltonians of \cref{theorem:spatially-sparse-5-local}.
    The procedure occurs in the following way: 
    \begin{enumerate}
        \item Reduce the locality of the bulk terms from $5$-local to $3$-local using the \textsl{subdivision} gadget, requiring only second-order perturbation gadget constructions in parallel;
        \item Reduce the locality of the bulk terms from $3$-local to $2$-local using the \textsl{3-to-2-local} gadget, requiring a third-order perturbation gadget construction in parallel;
        \item Reduce the Pauli-degree of each qubit to $3$ using repeated applications of the \textsl{fork} gadget, requiring only a constant number of second-order perturbation gadget constructions;
        \item Remove any non-planar crossings in the interaction graph using the \textsl{cross} gadget, requiring only a constant number of second-order perturbation gadget constructions;
        \item Embed the resulting planar graph into a square or triangular lattice using the \textsl{subdivision} gadget to reroute edges, requiring only a constant number of second-order perturbation gadget constructions.
    \end{enumerate}
    See \cref{fig:geometric-embedding} for an illustration of the geometric embedding procedure, adpated from Ref.~\cite{oliveira2008complexity}.
    Note that this illustration holds for both the square and triangular lattice embeddings since the square lattice is a special case of the triangular lattice.
    Following the simulation parameter choices outlined in \cref{sec:uniform-simulation} for a composition of a constant number of gadget reductions and the error analysis of Section~\ref{sec:perturbative-gadget-reductions}, we conclude that at each stage, there exists a sufficiently large, but polynomially-large penalty $\Delta$ such that the resulting Berry phase error is sufficiently small.
    We conclude that the guiding state's overlap and classical description are preserved under the geometrical reductions by the results of Ref.~\cite{waite2026physically}, thus \cref{prop:gsbpe-simulator-hardness} applies.
\end{proof}

%% file: appendices/section-proofs/berry-phase-estimation-parameterised-1-local-proofs.tex
\section{Proofs of Main Results in Section~\ref{sec:berry-phase-estimation-parameterised-1-local}}\label{app:berry-phase-estimation-parameterised-1-local-proofs}

\oneLocalDiscreteBerryPhase*

\begin{proof}
    For each $j$, choose a normalised parallel-transported ground state $\ket{u_j(s)}$ satisfying $\braket{u_j(s)}{\partial_su_j(s)}=0$.
    Let $Q_j(s)=I-P_j(s)$ and let $R_j(s)$ be the reduced resolvent of $g_j(s)$ above its ground state, so $\snorm{R_j}\leq\gamma_\star^{-1}$.
    Write $\lambda_{0,j}(s)$ for the ground-state eigenvalue of $g_j(s)$.
    Differentiating the eigenvalue equation once and projecting onto $Q_j(s)$ gives
    \begin{equation}\label{eq:one-local-first-eigenvector-identity}
        \partial_s\ket{u_j(s)}
        =-R_j(s)Q_j(s)\big(\partial_sg_j(s)\big)\ket{u_j(s)},
    \end{equation}
    where we used the parallel-transport condition to remove the component in the image of $P_j(s)$.
    Consequently,
    \begin{equation}\label{eq:one-local-first-eigenvector-derivative}
        \norm{\partial_su_j(s)}\leq\frac{L_1}{\gamma_\star}.
    \end{equation}
    Differentiating the eigenvalue equation a second time gives
    \begin{equation}\label{eq:one-local-second-eigenvector-identity}
            Q_j(s)\partial_s^2\ket{u_j(s)} = - R_j(s) Q_j(s) \big[\big(\ps^2 g_j(s)\big)\ket{u_j(s)} + 2\big(\ps g_j(s)-(\ps \lambda_{0,j}(s))I\big)\ps\ket{u_j(s)} \big].
    \end{equation}
    The Hellmann--Feynman identity gives $\abs{\partial_s\lambda_{0,j}(s)}\leq L_1$.
    Applying \cref{eq:one-local-first-eigenvector-derivative} to \cref{eq:one-local-second-eigenvector-identity} therefore gives
    \begin{equation}\label{eq:one-local-second-eigenvector-derivative-orthogonal}
        \norm{Q_j(s)\partial_s^2u_j(s)}
        \leq\frac{L_2}{\gamma_\star}+4\frac{L_1^2}{\gamma_\star^2}.
    \end{equation}
    Differentiating the parallel-transport condition gives
    \begin{equation}\label{eq:one-local-parallel-second-derivative}
        \braket{u_j(s)}{\partial_s^2u_j(s)}
        =-\norm{\partial_su_j(s)}^2.
    \end{equation}
    Combining \cref{eq:one-local-first-eigenvector-derivative,eq:one-local-second-eigenvector-derivative-orthogonal,eq:one-local-parallel-second-derivative} yields
    \begin{equation}\label{eq:one-local-second-eigenvector-derivative}
        \sup_{s\in[0,1]}\norm{\partial_s^2u_j(s)}\leq M_2.
    \end{equation}

    Put $h=1/N$ and $a_{j,\ell}=\braket{u_j(s_{\ell+1})}{u_j(s_\ell)}$, where $s_N=1$.
    The parallel-transport condition and the fundamental theorem of calculus imply
    \begin{equation}\label{eq:one-local-neighbouring-overlap-bound}
        \abs{a_{j,\ell}-1}
        \leq\int_{s_\ell}^{s_{\ell+1}}\dd{t}\,\int_{s_\ell}^{t}\dd{r}\,\norm{\partial_r^2u_j(r)}
        \leq\frac{M_2}{2N^2}.
    \end{equation}
    Since $N\geq2n\overline M_2$, every $a_{j,\ell}$ lies in the disk of radius $1/2$ centred at $1$ and hence $\abs{\arg a_{j,\ell}}\leq2\abs{a_{j,\ell}-1}\leq M_2/N^2$.

    Let $\vartheta_j$ be the local Berry phase.
    Parallel transport gives $\ket{u_j(1)}=\e^{\i\vartheta_j}\ket{u_j(0)}$, and therefore
    \begin{equation}\label{eq:one-local-discrete-continuous-comparison}
        \varDelta_{j,N}=\e^{\i\vartheta_j}\prod_{\ell=0}^{N-1}a_{j,\ell}.
    \end{equation}
    Summing the arguments of the $nN$ neighbouring overlaps and using \cref{eq:one-local-berry-phase-sum} proves the first bound in \cref{eq:one-local-discrete-bp-bound}.
    Moreover, \cref{eq:one-local-neighbouring-overlap-bound} and the inequality $\prod_k(1-x_k)\geq1-\sum_kx_k$ give
    \begin{equation}
        \abs{\varDelta_N}
        \geq\left(1-\frac{M_2}{2N^2}\right)^{nN}
        \geq1-\frac{nM_2}{2N}
        \geq\frac{3}{4},
    \end{equation}
    which completes the proof.
\end{proof}

\oneLocalBerryPhaseInP*

\begin{proof}
    Fix a target phase accuracy $\epsilon_{\rm alg}\in(0,1)$ and choose
    \begin{equation}\label{eq:one-local-mesh-size}
        N=\left\lceil\frac{4n\overline M_2}{\epsilon_{\rm alg}}\right\rceil.
    \end{equation}
    By \cref{lem:one-local-discrete-bp}, the exact discrete phase satisfies $d_{2\pi}(\vartheta_N,\vartheta)\leq\epsilon_{\rm alg}/4$ and $\abs{\varDelta_N}\geq3/4$.

    The presentation supplies $L_1$, $L_2$, and $\gamma_\star^{-1}=\poly{n}$, so $N=\poly{n,\epsilon_{\rm alg}^{-1}}$.
    Every mesh point $s_\ell=\ell/N$ has a binary description of length $O(\log N)$.
    At each mesh point, the local terms can be grouped by qubit and their scalar parts removed in classical polynomial time.
    For a traceless $2\times2$ Hermitian matrix $g_j(s)$, the ground-state projector is
    \begin{equation}\label{eq:one-local-explicit-projector}
        P_j(s)=\frac{1}{2}\left(I-\frac{g_j(s)}{\norm{g_j(s)}}\right),
        \qquad
        \norm{g_j(s)}=\frac{\gamma_j(s)}{2}\geq\frac{\gamma_\star}{2}.
    \end{equation}
    Thus, the projector can be evaluated easily and the inverse-polynomial gap makes the normalisation in \cref{eq:one-local-explicit-projector} polynomially well-conditioned.

    Query the presentation and perform the constant-dimensional matrix operations using $b_{\rm prec}$ bits of precision, where
    \begin{equation}\label{eq:one-local-working-precision}
        b_{\rm prec}=\left\lceil c_{\rm prec}\log_2\!\left(\frac{mnN}{\epsilon_{\rm alg}\gamma_\star}\right)\right\rceil
    \end{equation}
    for a sufficiently large absolute constant $c_{\rm prec}$.
    This choice ensures that each grouped approximation to $g_j(s_\ell)$ is within $\gamma_\star/4$ of the exact matrix and therefore has norm at least $\gamma_\star/4$.
    The map $g\mapsto\frac{1}{2}(I-g/\norm{g})$ is $O(\gamma_\star^{-1})$-Lipschitz on the promised domain.
    Grouping the $m$ presented terms introduces at most an additional factor of $m$ in the evaluation error.
    A telescoping expansion of the products in \cref{eq:one-local-bargmann-invariant,eq:one-local-discrete-phase} therefore shows that the accumulated error from the projector evaluations and fixed-precision matrix operations is $O(mnN2^{-b_{\rm prec}}/\gamma_\star)$.
    Choosing $c_{\rm prec}$ sufficiently large makes the resulting approximation $\widehat \varDelta_N$ satisfy $\abs{\widehat \varDelta_N-\varDelta_N}\leq\epsilon_{\rm alg}/16$.
    Since $\abs{\varDelta_N}\geq3/4$ and $\epsilon_{\rm alg}<1$, the approximation $\widehat \varDelta_N$ is nonzero and
    \begin{equation}\label{eq:one-local-argument-stability}
        d_{2\pi}(\arg\widehat \varDelta_N,\arg \varDelta_N)
        \leq\frac{2\abs{\widehat \varDelta_N-\varDelta_N}}{\abs{\varDelta_N}}
        \leq\frac{\epsilon_{\rm alg}}{6}.
    \end{equation}
    Compute a $b_{\rm prec}$-bit representative $\widehat\vartheta\in[0,2\pi)$ such that
    \begin{equation}\label{eq:one-local-output-argument-precision}
        d_{2\pi}(\widehat\vartheta,\arg\widehat \varDelta_N)
        \leq\frac{\epsilon_{\rm alg}}{12}.
    \end{equation}
    Combining this output-precision bound with the numerical and discretisation errors gives
    \begin{equation}\label{eq:one-local-final-estimation-error}
        d_{2\pi}(\widehat\vartheta,\vartheta)
        \leq\frac{\epsilon_{\rm alg}}{12}+\frac{\epsilon_{\rm alg}}{6}+\frac{\epsilon_{\rm alg}}{4}
        \leq\epsilon_{\rm alg}.
    \end{equation}

    The algorithm uses $O(mN+nN)$ arithmetic operations on $b_{\rm prec}=O(\log n+\log\epsilon_{\rm alg}^{-1})$-bit numbers under the supplied polynomial bounds.
    Standard fixed-precision algorithms for arithmetic, square roots, and the argument of a nonzero complex number run in time polynomial in $b_{\rm prec}$.
    Hence the total bit complexity is polynomial in $n$ and $\epsilon_{\rm alg}^{-1}$.

    For an instance of \sc{GSBPE}, choose $\epsilon_{\rm alg}=\g/4$.
    The promise arcs are separated by $2\g$, so the estimate in \cref{eq:one-local-final-estimation-error} determines which promised arc contains $\vartheta$ in classical polynomial time.
    The construction never uses the guiding state, completing the proof.
\end{proof}

%% file: appendices/perturbative-gadget-extensions.tex
\section{Perturbative Gadget Extensions}
\label{app:perturbative-gadget-extensions}
In this appendix we analyse explicit examples of parameter-dependent gadgets, extending the discussions of both Ref.~\cite{oliveira2008complexity} and Ref.~\cite{schuch2009computational}.
Ref.~\cite{oliveira2008complexity} introduces the gadgets called: \textsl{subdivision}, \textsl{fork} and \textsl{cross}; we call this ``family 1''.
Ref.~\cite{schuch2009computational} introduces the gadgets called: \textsl{parameter-fixing}, \textsl{Pauli-to-Ising}, \textsl{Ising-to-XX} and \textsl{XX-to-Heisenberg}; we call this ``family 2''.

We start by setting up some general notation and results that will be used throughout this appendix.
Note that in the analysis of the gadget families from each reference, we frequently encounter the same form in the components defined.
For the sake of completeness, we repeat some of these calculations in the respective subsections below.

Note that the norm bounds on the relevant components can be directly obtained from \cref{lemma:uniform-control-exact-truncated}.
The calculations below are somewhat extensive, but we include them to verify explicitly that the static constructions admit the required parameterised extensions, that the penalty Hamiltonians and encoding isometries remain independent of $s$, and that all unwanted second-order terms are cancelled.
They may therefore be regarded as proofs that these gadgets can be used within the extended hardness reductions developed in this work.

\subsection{Setup}
We remark on some basic properties of functions that are ${\rm C}^2$ on the interval $[0,1]$ and periodic with $f(0) = f(1)$.
\begin{fact}\label{fact:c2-periodic-functions}
    Let $f, g : [0,1]\rightarrow\mathbb{R}$ be ${\rm C}^2$ functions with $f(0) = f(1)$ and $g(0) = g(1)$.
    Then the functions: 
    \begin{inparaenum}[(i)]
        \item $f(s)g(s)$,
        \item $f(s) + g(s)$,
        \item $f(s) - g(s)$,
        \item $f^p(s) + g^q(s)$ for any $p, q \in \mathbb{Z}^+$,
    \end{inparaenum}
    are all ${\rm C}^2$ and satisfy the same boundary conditions.
\end{fact}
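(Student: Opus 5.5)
The statement is elementary calculus, so I will verify regularity and the boundary condition separately for each of the four combinations (i)--(iv). None of the earlier machinery is needed.

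For regularity, I start from the hypothesis that $f$ and $g$ have continuous first and second derivatives on $[0,1]$, understood one-sidedly at the endpoints.
\begin{itemize}
    \item Sums and differences: linearity of differentiation gives $(f\pm g)^{(k)} = f^{(k)} \pm g^{(k)}$ for $k\le 2$. Each right-hand side is continuous, which settles (ii) and (iii).
    \item Products: the Leibniz rule gives $(fg)'' = f''g + 2f'g' + fg''$. This is a finite sum of products of continuous functions, hence continuous, which settles (i).
    \item Powers: induction on $p$ with the product rule, using $f^{p} = f\cdot f^{p-1}$, shows $f^p$ is ${\rm C}^2$ for every $p\in\mathbb{Z}^+$, and likewise $g^q$. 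Then (ii) gives that $f^p+g^q$ is ${\rm C}^2$, which settles (iv).
\end{itemize}

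For the boundary condition, each combination is obtained by pointwise algebraic operations on the values $f(s)$ and $g(s)$. Substituting $f(1)=f(0)$ and $g(1)=g(0)$ therefore gives
\begin{align}
    f(1)g(1) &= f(0)g(0), & f(1)\pm g(1) &= f(0)\pm g(0), & f(1)^p+g(1)^q &= f(0)^p+g(0)^q.
\end{align}

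No step is a real obstacle. The one point needing care is the phrase ``the same boundary conditions'': the hypotheses only give closure at the level of values. If closure of derivatives, $f^{(k)}(0)=f^{(k)}(1)$, were intended, it would have to be assumed separately for $f$ and $g$, and it would then propagate through the Leibniz formula in exactly the same way. I would state explicitly that only value closure is claimed, since that is all that is used to keep Hamiltonian families closed in the gadget constructions.
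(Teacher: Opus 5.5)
Your proposal is correct. The paper states this as a Fact without proof, and your routine verification (linearity for (ii)–(iii), the Leibniz rule for (i), induction on the exponent for (iv), and pointwise substitution for the endpoint values) is exactly the elementary argument it implicitly relies on. Your reading of ``the same boundary conditions'' as closure of values only also matches the paper's definition of a closed family, which requires just $H(0)=H(1)$.
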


In what follows, the functions $f$ and $g$ are assumed to be ${\rm C}^2$ and periodic with $f(0) = f(1)$ and $g(0) = g(1)$.
Additionally, the target operators are assumed to be single-qubit non-identity Pauli operators.
If we further assume that $\snorm{\ps^r f} = O(c_{r,f})$ and $\snorm{\ps^r g} = O(c_{r,g})$, it follows that $\snorm{\ps^r f^p} = O(c_{r,f}^p)$ and $\snorm{\ps^r g^q} = O(c_{r,g}^q)$ for any $p, q \in \mathbb{Z}^+$ where $p$ and $q$ scale as a constant with respect to the relevant variable.
The norms of the functions constructed from $f$ and $g$ as in Fact~\ref{fact:c2-periodic-functions} are also $O(\max\{c_{r,f}^p, c_{r,g}^q\})$ for any $p, q \in \mathbb{Z}^+$ and any derivative order $r$.
We shall assume $c_{r,f}, c_{r,g} \leq \poly{n}$ for all relevant derivative orders $r$ and where $n$ will be the system size of the Hamiltonians we consider.

For each target system we consider, we include a remainder term that is not directly simulated by the gadget.
The operator 
\begin{equation}
    \bs{\Gamma}(s) = \sum_{j} \Gamma_j(s)
\end{equation}
contains all remaining interactions in the target Hamiltonian.
Assume that the total number of terms $M$ in the remainder Hamiltonian is a polynomial in the system size and that the degree of each system qubit is a constant.
We assume that each $\Gamma_j(s)$ is at most $5$-local and may be written as $\Gamma_j(s) = g_j(s) P_j$, where $P_j$ is a tensor product of at most $5$ non-trivial Pauli operators and $g_j$ is ${\rm C}^2$ with $g_j(0) = g_j(1)$ for all $j$.
For simplicity, we assume that $c_{r,g_j} = O(1)$ for all relevant derivative orders $r$ and all $j$.

\subsection{Family 1 Gadgets}
In each gadget analysis below, we consider an unperturbed Hamiltonian $H_0 = \ketbra{1}_m$ that acts on a mediator qubit $m$.
The ground- and excited-space projectors are therefore $\Pi_- = \ketbra{0}_m$ and $\Pi_+ = \ketbra{1}_m$.
We also define $H_0^{\pattce} = \Pi_+$ and the fixed isometry to be $\V = I_{\rm sys} \otimes \ket{0}_m$ (therefore $\V\V^\dagger = \Pi_-$).

\subsubsection{Subdivision gadget}
\label{app:parameterised-subdivision-gadget}
This gadget is visualised in \cref{fig:subdivision}.
Let the parameterised target Hamiltonian be 
\begin{equation}
    H(s) = \bs{\Gamma}(s) + f(s)A_aB_b.
\end{equation}

Define
\begin{align}
    \mathdcal{Q}(s) &= A_a - f(s)B_b, &
    \mathdcal{C}(s) &= \frac{1}{2}(1 + f^2(s)),
\end{align}
and construct the simulator Hamiltonian $\wt{H}(s) = \Delta H_0 + V(s)$ with $V(s) = \sqrt{\Delta} V_1(s) + V_2(s)$ such that 
\begin{align}
    V_1(s) &= \frac{1}{\sqrt{2}} \mathdcal{Q}(s)X_m, &
    V_2(s) &= \mathdcal{C}(s) + \bs{\Gamma}(s).
\end{align}
A short calculation shows that 
\begin{align}
    \big( V(s) \big)_{--} &= \big(\mathdcal{C}(s) + \bs{\Gamma}(s)\big)\Pi_-, &
    \big( V(s) \big)_{++} &= \big(\mathdcal{C}(s) + \bs{\Gamma}(s)\big)\Pi_+, \\
    \big( V(s) \big)_{-+} &= \sqrt{\frac{\Delta}{2}} \mathdcal{Q}(s)\ketbra{0}{1}_m, &
    \big( V(s) \big)_{+-} &= \sqrt{\frac{\Delta}{2}} \mathdcal{Q}(s)\ketbra{1}{0}_m,
\end{align}
and therefore $\big( V(s) \big)_{\tn{od}} = \sqrt{\Delta/2}\,\mathdcal{Q}(s)X_m$ and $\big( V(s) \big)_{\tn{d}} = \mathdcal{C}(s) + \bs{\Gamma}(s)$.

The second-order effective Hamiltonian is given by
\begin{equation}\label{eq:subdivision-effective-hamiltonian::app}
    H_\tn{eff}^{[2]}(s) = \big( V(s) \big)_{--} - \frac{1}{\Delta} \big( V(s) \big)_{-+} \Pi_+ \big( V(s) \big)_{+-} = \left( \mathdcal{C}(s) + \bs{\Gamma}(s) - \frac{1}{2}\mathdcal{Q}(s)^2\right) \Pi_-.
\end{equation}
Noting that $A_a^2 = B_b^2 = I_{\tn{sys}}$ and $\comm{A_a}{B_b} = 0$, we have $\mathdcal{Q}(s)^2 = (1+f^2(s)) - 2 f(s) A_a B_b$.
Therefore, 
\begin{equation}
    H_\tn{eff}^{[2]}(s) = \left( \bs{\Gamma}(s) + f(s)A_aB_b\right)\Pi_- = H(s)\Pi_- = \V H(s) \V^\dagger.
\end{equation}

By Fact~\ref{fact:c2-periodic-functions}, both $\wt{H}(s)$ and $S^{[2]}(s)$ are closed ${\rm C}^2$ functions of $s$.
Applying \cref{lem:second-order-reduction} gives an $(\eta, \varepsilon, \kappa)$-regular simulation of the effective Hamiltonian up to second order, and \cref{cor:parallel-second-order-simulation-berry-phase} gives criteria for controlling the Berry phase to the desired accuracy.

\subsubsection{Fork gadget}
\label{app:parameterised-fork-gadget}
This gadget is visualised in \cref{fig:fork}.
Let the parameterised target Hamiltonian be
\begin{equation}
    H(s) = \bs{\Gamma}(s) + f(s)A_aB_b + g(s)A_aC_c.
\end{equation}
Define 
\begin{align}
    \mathdcal{Q}(s) &= A_a - f(s) B_b - g(s) C_c, &
    \mathdcal{C}(s) &= \frac{1}{2} \left( 1 + f^2(s) + g^2(s) \right) + f(s)g(s) B_b C_c.
\end{align}
Note that in this case we can decompose both $\mathdcal{Q}$ and $\mathdcal{C}$ into a sum over components, but for brevity we will not do so here.
We construct the simulator Hamiltonian $\wt{H}(s) = \Delta H_0 + V(s)$ with $V(s) = \sqrt{\Delta} V_1(s) + V_2(s)$ such that 
\begin{align}
    V_1(s) &= \frac{1}{\sqrt{2}} \mathdcal{Q}(s) X_m, & 
    V_2(s) &= \mathdcal{C}(s) + \bs\Gamma(s).
\end{align}
A short calculation shows that
\begin{align}
    \big( V(s) \big)_{--} &= \big( \mathdcal{C}(s) + \bs\Gamma(s) \big) \Pi_-, & 
    \big( V(s) \big)_{++} &= \big( \mathdcal{C}(s) + \bs\Gamma(s) \big) \Pi_+, \\
    \big( V(s) \big)_{-+} &= \sqrt{\frac{\Delta}{2}} \mathdcal{Q}(s) \ketbra{0}{1}_m, &
    \big( V(s) \big)_{+-} &= \sqrt{\frac{\Delta}{2}} \mathdcal{Q}(s) \ketbra{1}{0}_m,
\end{align}
and therefore $\big( V(s) \big)_\tn{od} = \sqrt{\Delta/2} \mathdcal{Q}(s) X_m$ and $\big( V(s) \big)_\tn{d} = \mathdcal{C}(s) + \bs\Gamma(s)$.

The second-order effective Hamiltonian is given by the same expression as \cref{eq:subdivision-effective-hamiltonian::app}.
Noting that $A_a^2 = B_b^2 = C_c^2 = I_{\rm sys}$ and $\comm{A_a}{B_b} = \comm{A_a}{C_c} = \comm{B_b}{C_c} = 0$, we have $\mathdcal{Q}(s)^2 = (1 + f^2(s) + g^2(s)) + 2f(s) g(s) B_b C_c - 2f(s) A_a B_b - 2g(s) A_a C_c$.
Therefore, 
\begin{equation}
    H_\tn{eff}^{[2]}(s) = \big( \bs\Gamma(s) + f(s)A_aB_b + g(s)A_aC_c \big) \Pi_- = H(s)\Pi_- = \V H(s) \V^\dagger.
\end{equation}

By Fact~\ref{fact:c2-periodic-functions}, both $\wt{H}(s)$ and $S^{[2]}(s)$ are closed ${\rm C}^2$ functions of $s$.
Applying \cref{lem:second-order-reduction} gives an $(\eta, \varepsilon, \kappa)$-regular simulation of the effective Hamiltonian up to second order, and \cref{cor:parallel-second-order-simulation-berry-phase} gives criteria for controlling the Berry phase to the desired accuracy.

\subsubsection{Cross gadget}
\label{app:parameterised-cross-gadget}
This gadget is visualised in \cref{fig:cross}.
Let the parameterised target Hamiltonian be 
\begin{equation}
    H(s) = \bs{\Gamma}(s) + f(s) A_a B_b + g(s) C_cD_d.
\end{equation}
Define 
\begin{align}\label{eq:cross-gadget-QC-components::app}
    \mathdcal{Q}(s) &= \sum_{\ell = 1}^{4} \mathdcal{Q}^{\ell}(s), & 
    \mathdcal{C}(s) &= \sum_{\ell = 1}^{5} \mathdcal{C}^{\ell}(s),
\end{align}
with the components for $\mathdcal{Q}(s)$ given by
\begin{align}
    \mathdcal{Q}^{1}(s) &= A_a, & 
    \mathdcal{Q}^{2}(s) &= - f(s) B_b, & 
    \mathdcal{Q}^{3}(s) &= C_c, & 
    \mathdcal{Q}^{4}(s) &= - g(s) D_d,
\end{align}
and the components for $\mathdcal{C}(s)$ given by
\begin{align}
    \mathdcal{C}^{1}(s) &= \frac{1}{2}(2+f^2(s)+g^2(s)), &
    \mathdcal{C}^{2}(s) &= A_aC_c, &
    \mathdcal{C}^{3}(s) &= -f(s)B_bC_c, \\
    \mathdcal{C}^{4}(s) &= -g(s)A_aD_d, &
    \mathdcal{C}^{5}(s) &= f(s)g(s)B_bD_d.
\end{align}

We construct the simulator Hamiltonian $\wt{H}(s) = \Delta H_0 + V(s)$ with $V(s) = \sqrt{\Delta} V_1(s) + V_2(s)$ such that 
\begin{align}
    V_1(s) &= \frac{1}{\sqrt{2}} \left( \sum_{\ell = 1}^{4} \mathdcal{Q}^{\ell}(s) \right)X_m, & 
    V_2(s) &= \sum_{\ell = 1}^{5} \mathdcal{C}^{\ell}(s) + \bs\Gamma(s).
\end{align}
A short calculation shows that
\begin{align}
    \big( V(s) \big)_{--} &= \left( \sum_{\ell = 1}^{5} \mathdcal{C}^{\ell}(s) + \bs\Gamma(s) \right) \Pi_-, & 
    \big( V(s) \big)_{++} &= \left( \sum_{\ell = 1}^{5} \mathdcal{C}^{\ell}(s) + \bs\Gamma(s) \right) \Pi_+, \\
    \big( V(s) \big)_{-+} &= \sqrt{\frac{\Delta}{2}}\left( \sum_{\ell = 1}^{4} \mathdcal{Q}^{\ell}(s) \right) \ketbra{0}{1}_m, &
    \big( V(s) \big)_{+-} &= \sqrt{\frac{\Delta}{2}}\left( \sum_{\ell = 1}^{4} \mathdcal{Q}^{\ell}(s) \right) \ketbra{1}{0}_m,
\end{align}
and therefore $\big( V(s) \big)_\tn{od} = \sqrt{\Delta/2} \mathdcal{Q}(s) X_m$ and $\big( V(s) \big)_\tn{d} = \mathdcal{C}(s) + \bs\Gamma(s)$, using \cref{eq:cross-gadget-QC-components::app}.

The second-order effective Hamiltonian is given by the same expression as \cref{eq:subdivision-effective-hamiltonian::app}.
Noting that $A_a^2 = B_b^2 = C_c^2 = D_d^2 = I_{\rm sys}$ and $\comm{A_a}{B_b} = \comm{A_a}{C_c} = \comm{B_b}{C_c} = \comm{A_a}{D_d} = \comm{B_b}{D_d} = \comm{C_c}{D_d} = 0$, we have $\mathdcal{Q}(s)^2 = (2 + f^2(s) + g^2(s)) + 2 ( A_a C_c - f(s) B_b C_c - g(s) A_a D_d + f(s) g(s) B_b D_d ) - 2 ( f(s) A_a B_b + g(s) C_c D_d )$.
Therefore, 
\begin{equation}
    H_\tn{eff}^{[2]}(s) = \big( \bs\Gamma(s) + f(s)A_aB_b + g(s)C_c D_d \big) \Pi_- = H(s)\Pi_- = \V H(s) \V^\dagger.
\end{equation}

By Fact~\ref{fact:c2-periodic-functions}, both $\wt{H}(s)$ and $S^{[2]}(s)$ are closed ${\rm C}^2$ functions of $s$.
Applying \cref{lem:second-order-reduction} gives an $(\eta, \varepsilon, \kappa)$-regular simulation of the effective Hamiltonian up to second order, and \cref{cor:parallel-second-order-simulation-berry-phase} gives criteria for controlling the Berry phase to the desired accuracy.

\subsection{Family 2 Gadgets}
We present the family 2 gadgets in the same mediator-qubit form used by Schuch and Verstraete~\cite{schuch2009computational}, which builds on the second-order perturbative-gadget framework of Oliveira and Terhal~\cite{oliveira2008complexity}.
As with the family 1 gadgets, one could introduce a uniform decomposition of the perturbation into terms of the form
\begin{equation}
    \sum_{a\in[A]}\sum_{\sigma\in\{X,Y,Z\}}\sum_{\ell=1}^{{\sf d}_{a,\mathcal{Q}}(\sigma)}\mathcal{Q}^{a,\ell}(s)\otimes\sigma_{m_a}.
\end{equation}
Such a decomposition is straightforward but obscures the simple structure of the individual constructions, so we instead treat each gadget directly.

Each construction introduces a mediator qubit with a gadget-dependent penalty Hamiltonian $H_0$ and fixed encoding isometry $\mathcal{V}$.
In each case, we construct a simulator of the same form used throughout this work, namely
\begin{equation}
    \wt{H}(s)=\Delta H_0 + \sqrt{\Delta}V_1(s) + V_2(s).
\end{equation}
The mediator operators in $V_1(s)$ are centred with respect to the chosen ground state so that $\bigl(V_1(s)\bigr)_{--}=0$, while $V_2(s)$ contains the carried interaction $\bs{\Gamma}(s)$ together with the scalar and one-local counterterms required to cancel the unwanted second-order contributions.

The gadgets are used in a chain which successively reproduce general Pauli interactions using Ising interactions, Ising interactions using XX interactions, and XX interactions using Heisenberg interactions.
In the corresponding static construction the \textsl{parameter-fixing} gadget uses the mediator qubit to fix the value of a particular coupling, ensuring that the effective interaction has the desired strength.
For a parameterised coupling $f(s)$, this would naturally introduce factors of $\sqrt{f(s)}$, which need not be ${\rm C}^2$ even when $f(s)$ is ${\rm C}^2$.
We avoid this by placing the full factor $f(s)$ on one mediator edge and assigning a fixed weight to the other; consequently, the fixed isometry remains independent of $s$.
The resulting weights of the perturbation are therefore typically unequal.

\subsubsection{Parameter-fixing gadget}
We note that the \textsl{parameter-fixing} gadet name is somewhat misleading in the context with which it is used here.
However, for the sake of consistency with the existing literature, we retain this terminology.

Let the parameterised target Hamiltonian be
\begin{equation}
    H(s) = \bs{\Gamma}(s) + f(s) A_a B_b.
\end{equation}
Define
\begin{align}
    \ket{g_{\pi/4}} &= \frac{1}{\sqrt{2}} (\ket{0} + \e^{\i \pi/4} \ket{1}), &
    \ket{e_{\pi/4}} &= \frac{1}{\sqrt{2}} (\ket{0} - \e^{\i \pi/4} \ket{1}) \\
    \Pi_{-,\pi/4} &= \ket{g_{\pi/4}}\bra{g_{\pi/4}}, &
    \Pi_{+,\pi/4} &= \ket{e_{\pi/4}}\bra{e_{\pi/4}}.
\end{align}
Let the fixed encoding isometry be $\V_{\pi/4} = I_{\rm sys} \otimes \ket{g_{\pi/4}}_m$ for the mediator qubit $m$.

Construct the simulator Hamiltonian $\wt{H}(s) = \Delta \Pi_{+,\pi/4} + V(s)$ with $V(s) = \sqrt{\Delta} V_1(s) + V_2(s)$ such that 
\begin{align}
    V_1(s) &= f(s) A_a \hat{X}_m + B_b \hat{Y}_m, &
    V_2(s) &= \bs{\Gamma}(s) + \frac{1}{2}(1 + f^2(s)),
\end{align}
where we have defined the operators $\hat{\sigma} = \sigma - \frac{1}{\sqrt{2}}I$ for $\sigma \in \{X_m, Y_m\}$.
Define $O_{\alpha\beta} = \Pi_{\alpha,\pi/4} O \Pi_{\beta,\pi/4}$ for $\alpha, \beta \in \{+,-\}$.
A short calculation shows that 
\begin{align}
    \big(V(s)\big)_{--} &= \left(\bs{\Gamma}(s)+\frac{1}{2}\big(1+f^2(s)\big)I_{\rm sys}\right)\Pi_{-,\pi/4}, \\
    \big(V(s)\big)_{++} &= \left(\bs{\Gamma}(s)+\frac{1}{2}\big(1+f^2(s)\big)I_{\rm sys}-\sqrt{2\Delta}\big(f(s)A_a+B_b\big)\right)\Pi_{+,\pi/4}, \\
    \big(V(s)\big)_{-+} &= \i\sqrt{\frac{\Delta}{2}}\big(-f(s)A_a+B_b\big)\ketbra{g_{\pi/4}}{e_{\pi/4}}, \\
    \big(V(s)\big)_{+-} &= -\i\sqrt{\frac{\Delta}{2}}\big(-f(s)A_a+B_b\big)\ketbra{e_{\pi/4}}{g_{\pi/4}}.
\end{align}
and therefore 
\begin{align}
    \big( V(s) \big)_\tn{od} &= \i \sqrt{\frac{\Delta}{2}}\big( -f(s) A_a + B_b \big) \left( \ketbra{g_{\pi/4}}{e_{\pi/4}} - \ketbra{e_{\pi/4}}{g_{\pi/4}} \right), \\
    \big(V(s)\big)_{\tn{d}} &=\left(\bs{\Gamma}(s) + \frac{1}{2}\big( 1 + f^2(s) \big) I_{\rm sys}\right)I_m - \sqrt{2\Delta} \big( f(s)A_a + B_b\big)\Pi_{+,\pi/4}.
\end{align}

The second-order effective Hamiltonian is given by
\begin{equation}\label{eq:subdivision-effective-hamiltonian-v2::app}
    H_\tn{eff}^{[2]}(s) = \big( V(s) \big)_{--} - \frac{1}{\Delta} \big( V(s) \big)_{-+} \Pi_{+,\pi/4} \big( V(s) \big)_{+-}.
\end{equation}
Noting that $A_a^2 = B_b^2 = I$ and $\comm{A_a}{B_b} = 0$, we have 
\begin{equation}
    H_\tn{eff}^{[2]}(s) = \big( \bs{\Gamma}(s) + f(s) A_a B_b \big) \Pi_{-,\pi/4} = H(s) \Pi_{-,\pi/4} = \V_{\pi/4} H(s) \V_{\pi/4}^\dagger.
\end{equation}

By Fact~\ref{fact:c2-periodic-functions}, both $\wt{H}(s)$ and $S^{[2]}(s)$ are closed ${\rm C}^2$ functions of $s$.
Applying \cref{lem:second-order-reduction} gives an $(\eta, \varepsilon, \kappa)$-regular simulation of the effective Hamiltonian up to second order, and \cref{cor:parallel-second-order-simulation-berry-phase} gives criteria for controlling the Berry phase to the desired accuracy.
\subsubsection{Pauli-to-Ising gadget}

Let the parameterised target Hamiltonian be
\begin{equation}
    H(s) = \bs{\Gamma}(s) + f(s)X_aY_b.
\end{equation}
Let the fixed encoding isometry be $\V_{\pi/4} = I_{\rm sys}\otimes\ket{g_{\pi/4}}_m$ for the mediator qubit $m$.

Construct the simulator Hamiltonian $\wt{H}(s) = \Delta\Pi_{+,\pi/4} + V(s)$ with $V(s) = \sqrt{\Delta}V_1(s) + V_2(s)$ such that
\begin{align}
    V_1(s) &= f(s)X_a\hat{X}_m + Y_b\hat{Y}_m, & 
    V_2(s) &= \bs{\Gamma}(s) + \frac{1}{2}\big(1+f^2(s)\big)I_{\rm sys}.
\end{align}
The two-qubit interactions appearing in $V_1(s)$ are of the Ising forms $X_aX_m$ and $Y_bY_m$.
Define $O_{\alpha\beta}=\Pi_{\alpha,\pi/4}O\Pi_{\beta,\pi/4}$ for $\alpha,\beta\in\{+,-\}$.
A short calculation shows that
\begin{align}
    \big( V(s) \big)_{--} &= \left(\bs{\Gamma}(s)+\frac{1}{2}\big(1+f^2(s)\big)I_{\rm sys}\right)\Pi_{-,\pi/4}, \\
    \big( V(s) \big)_{++} &= \left(\bs{\Gamma}(s)+\frac{1}{2}\big(1+f^2(s)\big)I_{\rm sys}-\sqrt{2\Delta}\big(f(s)X_a+Y_b\big)\right)\Pi_{+,\pi/4}, \\
    \big( V(s) \big)_{-+} &= \i\sqrt{\frac{\Delta}{2}}\big(-f(s)X_a+Y_b\big)\ketbra{g_{\pi/4}}{e_{\pi/4}}, \\
    \big( V(s) \big)_{+-} &= -\i\sqrt{\frac{\Delta}{2}}\big(-f(s)X_a+Y_b\big)\ketbra{e_{\pi/4}}{g_{\pi/4}}.
\end{align}
It follows that
\begin{align}
    \big( V(s) \big)_{\tn{od}} &= \i\sqrt{\frac{\Delta}{2}} \big(-f(s)X_a + Y_b\big) (\ketbra{g_{\pi/4}}{e_{\pi/4}} - \ketbra{e_{\pi/4}}{g_{\pi/4}}), \\
    \big( V(s) \big)_{\tn{d}} &= \big(\bs{\Gamma}(s) + \frac{1}{2}\big( 1 + f^2(s) \big)I_{\rm sys} \big)I_m - \sqrt{2\Delta}\big( f(s)X_a + Y_b \big) \Pi_{+,\pi/4}.
\end{align}

The second-order contribution is
\begin{equation}
    \frac{1}{\Delta}\big( V(s) \big)_{-+}\Pi_{+,\pi/4}\big( V(s) \big)_{+-} = \frac{1}{2}\big( -f(s)X_a + Y_b \big)^2 \Pi_{-,\pi/4}.
\end{equation}
Noting that $X_a^2 = Y_b^2 = I$ and $\comm{X_a}{Y_b} = 0$, we have
\begin{equation}
    \frac{1}{2}\big( -f(s)X_a + Y_b \big)^2 = \frac{1}{2}\big( 1 + f^2(s) \big)I_{\rm sys} - f(s)X_aY_b.
\end{equation}
Therefore,
\begin{equation}
    H_{\tn{eff}}^{[2]}(s) = \big( \bs{\Gamma}(s) + f(s)X_aY_b \big)\Pi_{-,\pi/4} = H(s)\Pi_{-,\pi/4} = \V_{\pi/4}H(s)\V_{\pi/4}^{\dagger}.
\end{equation}

By Fact~\ref{fact:c2-periodic-functions}, both $\wt{H}(s)$ and $S^{[2]}(s)$ are closed ${\rm C}^2$ functions of $s$.
Applying \cref{lem:second-order-reduction} gives an $(\eta, \varepsilon, \kappa)$-regular simulation of the effective Hamiltonian up to second order, and \cref{cor:parallel-second-order-simulation-berry-phase} gives criteria for controlling the Berry phase to the desired accuracy.

\subsubsection{Ising-to-XX gadget}
Let the parameterised target Hamiltonian be
\begin{equation}
    H(s) = \bs{\Gamma}(s) + f(s)X_aX_b.
\end{equation}
Define
\begin{align}
    \ket{-i} &= \frac{1}{\sqrt{2}}(\ket{0} - \i\ket{1}), & \ket{i} &= \frac{1}{\sqrt{2}}(\ket{0} + \i\ket{1}), \\
    \Pi_{-,i} &= \ketbra{-i}{-i}, & \Pi_{+,i} &= \ketbra{i}{i}.
\end{align}
Let the fixed encoding isometry be $\V_i = I_{\rm sys}\otimes\ket{-i}_m$ for the mediator qubit $m$.

For $k\in\{a,b\}$, define the shifted XX interaction
\begin{equation}
    \hat{\Psi}_{km}^{XX} = X_k X_m + Y_k (Y_m + I_m).
\end{equation}
The two-qubit part of $\hat{\Psi}_{km}^{XX}$ is the XX interaction $X_kX_m + Y_kY_m$, while the one-local term ensures that its ground-space block vanishes.

Construct the simulator Hamiltonian $\wt{H}(s) = \Delta\Pi_{+,i} + V(s)$ with $V(s) = \sqrt{\Delta}V_1(s) + V_2(s)$ such that
\begin{align}
    V_1(s) &= \frac{1}{\sqrt{2}}\left(f(s)\hat{\Psi}_{am}^{XX} - \hat{\Psi}_{bm}^{XX}\right), & 
    V_2(s) &= \bs{\Gamma}(s) + \frac{1}{2}\big( 1 + f^2(s) \big)I_{\rm sys}.
\end{align}
A short calculation therefore gives
\begin{align}
    \big( V(s) \big)_{--} &= \left(\bs{\Gamma}(s) + \frac{1}{2}\big(1 + f^2(s)\big) I_{\rm sys}\right)\Pi_{-,i}, \\
    \big( V(s) \big)_{++} &= \left(\bs{\Gamma}(s) + \frac{1}{2}\big(1 + f^2(s)\big) I_{\rm sys} + \sqrt{2\Delta}\big(f(s)Y_a - Y_b\big)\right)\Pi_{+,i}, \\
    \big( V(s) \big)_{-+} &= \i\sqrt{\frac{\Delta}{2}}\big(f(s)X_a - X_b\big)\ketbra{-i}{i}, \\
    \big( V(s) \big)_{+-} &= -\i\sqrt{\frac{\Delta}{2}}\big(f(s)X_a - X_b\big)\ketbra{i}{-i}.
\end{align}
It follows that
\begin{align}
    \big(V(s)\big)_{\tn{od}} &= \i\sqrt{\frac{\Delta}{2}}\big(f(s)X_a - X_b\big)\left(\ketbra{-i}{i} - \ketbra{i}{-i}\right), \\
    \big(V(s)\big)_{\tn{d}} &= \big(\bs{\Gamma}(s) + \frac{1}{2}\big(1 + f^2(s)\big)I_{\rm sys}\big)I_m + \sqrt{2\Delta}\big(f(s)Y_a - Y_b\big)\Pi_{+,i}.
\end{align}

The second-order contribution is
\begin{equation}
    \frac{1}{\Delta}\big(V(s)\big)_{-+}\Pi_{+,i}\big(V(s)\big)_{+-} = \frac{1}{2}\big(f(s)X_a - X_b\big)^2\Pi_{-,i}.
\end{equation}
Using $X_a^2 = X_b^2 = I$ and $\comm{X_a}{X_b} = 0$, we obtain
\begin{equation}
    \frac{1}{2}\big(f(s)X_a - X_b\big)^2 = \frac{1}{2}\big(1 + f^2(s)\big)I_{\rm sys} - f(s)X_aX_b.
\end{equation}
Therefore,
\begin{equation}
    H_{\tn{eff}}^{[2]}(s) = \big(\bs{\Gamma}(s) + f(s)X_aX_b\big)\Pi_{-,i} = H(s)\Pi_{-,i} = \V_iH(s)\V_i^{\dagger}.
\end{equation}

The corresponding constructions for $Y_aY_b$ and $Z_aZ_b$ are obtained by applying the cyclic permutation $X\mapsto Y\mapsto Z\mapsto X$ to the target, simulator, and penalty Pauli operators.
By Fact~\ref{fact:c2-periodic-functions}, both $\wt{H}(s)$ and $S^{[2]}(s)$ are closed ${\rm C}^2$ functions of $s$.
Applying \cref{lem:second-order-reduction} gives an $(\eta, \varepsilon, \kappa)$-regular simulation of the effective Hamiltonian up to second order, and \cref{cor:parallel-second-order-simulation-berry-phase} gives criteria for controlling the Berry phase to the desired accuracy.

\subsubsection{XX-to-Heisenberg gadget}

Let the parameterised target Hamiltonian be
\begin{equation}
    H(s) = \bs{\Gamma}(s) + f(s)(X_aX_b+Y_aY_b).
\end{equation}
Let the fixed encoding isometry be $\V = I_{\rm sys}\otimes\ket{0}_m$ for the mediator qubit $m$.

For $k\in\{a,b\}$, define the shifted Heisenberg interaction
\begin{equation}
    \hat{\Psi}_{km}^{\mathrm{Heis}} = X_kX_m + Y_kY_m + Z_k(Z_m - I_m).
\end{equation}
The two-qubit part of $\hat{\Psi}_{km}^{\mathrm{Heis}}$ is the Heisenberg interaction $X_kX_m+Y_kY_m+Z_kZ_m$, while the one-local term ensures that its ground-space block vanishes.

Construct the simulator Hamiltonian $\wt{H}(s)=\Delta\Pi_++V(s)$ with $V(s)=\sqrt{\Delta}V_1(s)+V_2(s)$ such that
\begin{align}
    V_1(s) &= \frac{1}{\sqrt{2}}\big(f(s)\hat{\Psi}_{am}^{\mathrm{Heis}} - \hat{\Psi}_{bm}^{\mathrm{Heis}}\big), & 
    V_2(s) &= \bs{\Gamma}(s)+\big(1 + f^2(s)\big)I_{\rm sys} - f^2(s)Z_a - Z_b.
\end{align}
A short calculation therefore gives
\begin{align}
    \big(V(s)\big)_{--} &= \left(\bs{\Gamma}(s)+\big(1+f^2(s)\big)I_{\rm sys}-f^2(s)Z_a-Z_b\right)\Pi_-, \\
    \big(V(s)\big)_{++} &= \left(\bs{\Gamma}(s)+\big(1+f^2(s)\big)I_{\rm sys}-f^2(s)Z_a-Z_b+\sqrt{2\Delta}\big(-f(s)Z_a+Z_b\big)\right)\Pi_+, \\
    \big(V(s)\big)_{-+} &= \sqrt{\frac{\Delta}{2}}\left(f(s)(X_a-\i Y_a)-(X_b-\i Y_b)\right)\ketbra{0}{1}, \\
    \big(V(s)\big)_{+-} &= \sqrt{\frac{\Delta}{2}}\left(f(s)(X_a+\i Y_a)-(X_b+\i Y_b)\right)\ketbra{1}{0}.
\end{align}
It follows that
\begin{align}
    \big(V(s)\big)_{\tn{od}} &= \sqrt{\frac{\Delta}{2}}\big( f(s)(X_a - \i Y_a) - (X_b - \i Y_b) \big)\ketbra{0}{1} + \sqrt{\frac{\Delta}{2}}\big( f(s)(X_a + \i Y_a) - (X_b + \i Y_b) \big)\ketbra{1}{0}, \\
    \big(V(s)\big)_{\tn{d}} &= \big( \bs{\Gamma}(s) + \big( 1 + f^2(s) \big)I_{\rm sys} - f^2(s)Z_a - Z_b \big)I_m + \sqrt{2\Delta}\big( -f(s)Z_a + Z_b \big)\Pi_+.
\end{align}

To evaluate the second-order contribution, first note that $(X_k-\i Y_k)(X_k+\i Y_k)=2(I-Z_k)$.
The cross terms satisfy $(X_a-\i Y_a)(X_b+\i Y_b)+(X_b-\i Y_b)(X_a+\i Y_a)=2(X_aX_b+Y_aY_b)$.
It follows that
\begin{equation}
    \frac{1}{\Delta}\big(V(s)\big)_{-+}\Pi_+\big(V(s)\big)_{+-}=\left(\big(1+f^2(s)\big)I_{\rm sys}-f^2(s)Z_a-Z_b-f(s)(X_aX_b+Y_aY_b)\right)\Pi_-.
\end{equation}

The second-order effective Hamiltonian is given by the same expression as \cref{eq:subdivision-effective-hamiltonian::app}.
Substituting the preceding expressions gives
\begin{equation}
    H_{\tn{eff}}^{[2]}(s) = \big(\bs{\Gamma}(s) + f(s)(X_aX_b + Y_aY_b)\big)\Pi_- = H(s)\Pi_- = \V H(s)\V^\dagger.
\end{equation}

The analogous constructions for the target interactions $Y_aY_b+Z_aZ_b$ and $Z_aZ_b+X_aX_b$ follow by cyclically permuting the Pauli operators and choosing the penalty projectors in the $X$ and $Y$ eigenbases, respectively.
By Fact~\ref{fact:c2-periodic-functions}, both $\wt{H}(s)$ and $S^{[2]}(s)$ are closed ${\rm C}^2$ functions of $s$.
Applying \cref{lem:second-order-reduction} gives an $(\eta, \varepsilon, \kappa)$-regular simulation of the effective Hamiltonian up to second order, and \cref{cor:parallel-second-order-simulation-berry-phase} gives criteria for controlling the Berry phase to the desired accuracy.